\newenvironment{proof}[1][Proof]
{\par\noindent{\bf #1:} }{\hspace*{\fill}\nolinebreak{$\Box$}\bigskip\par}
\newcommand{\qed}{\hspace*{\fill}\nolinebreak\ensuremath{\Box}}
\newcommand{\complTime}[2]{C(#1,#2)}
\newcommand{\startTime}[2]{s(#1,#2)}
\newtheorem{theorem}{Theorem}
\newtheorem{lemma}{Lemma}[section]
\newtheorem{claim}[lemma]{Claim}
\newtheorem{corollary}[theorem]{Corollary}
\newtheorem{observation}[lemma]{Observation}
\newtheorem{proposition}[theorem]{Proposition}
\newcommand{\cP}{\mathcal{P}}
\newcommand{\reals}{\mathbb{R}}
\newcommand{\st}{\hspace{0.1cm}\bigl|\bigr.\hspace{0.1cm}}
\newcommand{\vect}[1]{\mbox{\boldmath$#1$}}
\newcommand{\jobs}{\mathcal{J}}
\newcommand{\A}{\textup{A}}
\newcommand{\cyclic}[1]{\overset{#1}{\rightsquigarrow}}
\newcommand{\cyclicshift}[1]{\big\langle #1 \big\rangle}
\begin{document}

\title{Structural Properties of an Open Problem in Preemptive Scheduling}

\author{Bo Chen\footnote{Centre for Discrete Mathematics and Its Applications (DIMAP) and Warwick Business School, University of Warwick, Coventry, UK, {bo.chen@wbs.ac.uk}}
 \and Ed Coffman\footnote{Departments of Electrical Engineering and of Computer Science, Columbia University, New York, USA, {coffman@cs.columbia.edu}}
 \and Dariusz Dereniowski\footnote{Faculty of Electronics, Telecommunication and Informatics, Gda{\'n}sk University of Technology, Gda{\'n}sk, Poland, {deren@eti.pg.gda.pl}}
 \and Wies{\l}aw Kubiak\footnote{Faculty of Business Administration, Memorial University, St. John's, Canada, {wkubiak@mun.ca}}
}

\maketitle

\begin{abstract}
Structural properties of optimal preemptive schedules have been studied in a number of recent papers with a primary focus on two structural parameters: the minimum number of preemptions necessary, and a tight lower bound on \emph{shifts}, i.e., the sizes of intervals bounded by the times created by preemptions, job starts, or completions. These two parameters have been investigated for a large class of preemptive scheduling problems, but so far only rough bounds for these parameters have been derived for specific problems. This paper sharpens the bounds on these structural parameters for a well-known open problem in the theory of preemptive scheduling: Instances consist of in-trees of $n$ unit-execution-time jobs with release dates, and the objective is to minimize the total completion time on two processors. This is among the current, tantalizing ``threshold'' problems
of scheduling theory:  Our literature survey reveals that any significant generalization leads to an NP-hard problem, but that any significant simplification leads to tractable problem with a polynomial-time solution.

For the above problem, we show that the number of preemptions necessary for optimality need not exceed $2n-1$; that the number must be of order $\Omega(\log n)$ for some instances;  and that the minimum shift need not be less than $2^{-2n+1}.$   These bounds are obtained by combinatorial analysis of optimal preemptive schedules rather than by the analysis of polytope corners for linear-program formulations of the problem, an approach to be
found in earlier papers. The bounds immediately follow from a fundamental structural property called \emph{normality}, by which minimal shifts of a job are exponentially decreasing functions. In particular, the first interval between a  preempted job's start and its preemption must be a multiple of $1/2$, the second such interval must be a multiple of $1/4$, and in general, the $i$-th preemption must occur at a multiple of $2^{-i}$. We expect the new structural properties to play a prominent role in finally settling a vexing, still-open question of complexity.
\end{abstract}

\textbf{Keywords:} preemption, parallel machines, in-tree, release date, scheduling algorithm, total completion time

\section{Introduction}

We study structural properties of optimal preemptive schedules of a classic problem of scheduling UET (Unit Execution Time) jobs with precedence constraints and release dates on two processors.
Optimal nonpreemptive schedules for this and related problems have been well
researched in the literature for various objective functions and restrictions. Fujii, Kasami and Ninomiya \cite{FKN} present a matching-based algorithm, and Coffman and Graham \cite{CG2} devise a  job-labeling algorithm for minimum-makespan nonpreemptive schedules. Garey and Johnson
introduce $O(n^2)$ and $O(n^{2.81})$ time algorithms for minimizing maximum
lateness for jobs, respectively without  release dates \cite{GareyJohnson-deadlines},
and with release dates  \cite{GareyJohnson-2-proc}.
Gabow \cite{Gabow} designed an almost linear-time algorithm for the
minimum-makespan problem.  Leung, Palem, and Pnueli \cite{LeungPalemPnueli} and Carlier, Hanen, and Munier-Kordon \cite{Alix} extend these results to precedence delays. Baptiste
and Timkovsky \cite{BaptisteTimkovsky_shortest_paths} focus on
minimization of total completion time and present an $O(n^9)$ time
shortest-path optimization algorithm for scheduling jobs with release
dates. They also conjecture that there always exist so-called \textit{ideal} schedules that minimize
both maximum completion time and total completion time for jobs with release
dates. This has been known to hold true
for equal release dates without preemptions (Coffman and Graham \cite{CG2})
and with preemptions (Coffman, Sethuraman and Timkovsky  \cite{ideal_preeemptive}). Coffman, Dereniowski and Kubiak \cite{CoffmanDereniowskiKubiak} prove the Baptiste-Timkovsky conjecture and
give an $O(n^3)$ algorithm for the minimization of total completion time for jobs
with release dates -- a major improvement over the $O(n^9)$ time algorithm in \cite{BaptisteTimkovsky_shortest_paths}.

Optimal preemptive schedules have proven more challenging to compute efficiently, especially for jobs with release dates and the total-completion-time criterion. Coffman, Dereniowski, and Kubiak \cite{CoffmanDereniowskiKubiak} prove that these schedules are not ideal, that is, for some instances any schedule minimizing total completion time will be longer than the schedule minimizing maximum completion time. That holds even for in-tree precedence constraints.
This last result serves as a point of departure for this paper, with its focus on in-tree precedence constraints, release dates, and the criterion of total-completion-time, the problem $P2|pmtn,in\textup{-}tree,r_j,p_j=1|\sum C_j$ in the well-known three-field notation. Despite numerous efforts, the computational complexity of the problem remains open: reducing the number of processors to $m=1$ renders the problem polynomially solvable (Baptiste et al.~\cite{BaptisteBKT04}); and so does dropping the precedence constraints
(Herrbach, Lee and Leung \cite{HeL90}); dropping the release dates (Coffman, Sethuraman and Timkovsky  \cite{ideal_preeemptive}); and
assuming out-trees instead of in-trees (Baptiste and Timkovsky \cite{BaptisteT01}). With this background in mind, we focus on key structural properties of optimal preemptive schedules for the problem $P2|pmtn,in\textup{-}tree,r_j,p_j=1|\sum C_j$.

Sauer and Stone \cite{SS87} study the problem with no release dates and maximum-completion-time (makespan) minimization. They show that, for every optimal preemptive schedule, there is an optimal preemptive schedule with at most $n$ preemptions, where preemptions occur at multiples of $1/2$, and go on to define a  \emph{shift} that is the duration of an interval between two consecutive time points, each of which is a job start, a job end, or a job preemption.
The shortest necessary shift in an optimal schedule is then called its \emph{resolution}.  The minimum resolution over all instances of a given preemptive scheduling problem is called the \emph{problem resolution}.  Following \cite{SS87},  the minimum number of preemptions and the minimum resolution necessary for optimal schedules have become two main structural parameters in preemptive scheduling. They have been investigated for a large class of preemptive scheduling problems by Baptiste et al.\ \cite{BaptisteCKQSS11} who give general bounds for these parameters.

Coffman, Ng and Timkovsky \cite{CNT13} provide bounds on the resolutions of various scheduling problems --- we refer the reader to their work for a comprehensive overview.
In particular, they show upper bounds of $m^{-n/(m+1)}$ and $m^{-(n-1)/(m+1)}$ on resolutions for problems $P|pmtn,in\textup{-}tree,r_j,p_j=1|C_{\max}$ and $P|pmtn,in\textup{-}tree,r_j,p_j=1|\sum C_j$, respectively, where $n$ is the number of jobs and $m$ is the number of processors.
Thus, for the problem $P2|pmtn,in\textup{-}tree,r_j,p_j=1|\sum C_j$ studied in this paper one immediately obtains an upper bound of $2^{-(n-1)/3}$ on its resolution.
As for lower bounds, \cite{CNT13} shows that the resolution of $P|pmtn,prec,r_j|\sum w_iC_j$ is at least $(m+n)^{-(2n+1)/2}$ .

The papers of Sauer and Stone \cite{SS87}, Baptiste et al.\ \cite{BaptisteCKQSS11} and Coffman, Ng and Timkovsky \cite{CNT13} obtain their resolution bounds by analyzing the corners of feasibility regions of linear programs designed for specific problems. Our approach is combinatorial and does not make use of the theory of linear programming. It yields a lower bound of $2^{-2n-1}$ on the problem resolution of $P2|pmtn,in\textup{-}tree,r_j,p_j=1|\sum C_j$, which is a significant improvement over the lower bound of $(n+2)^{-(2n+1)/2}$ that can be derived directly from \cite{CNT13}.

We introduce in this paper the concept of \emph{normal} schedules where shifts decrease as a function of time: The first shift is a multiple of $1/2$, the second one is a multiple of $1/4$, and in general, the $i$-th shift is a multiple if $2^{-i}$. We prove that there exist optimal schedules that are normal for in-trees. However, we conjecture that this is no longer the case for arbitrary precedence constraints, i.e., there are instances for which no optimal schedule is normal. The normality of a schedule implies that each shift is a multiple of $2^{-2n+1}$, which is a much stronger claim than the usual requirement that all shifts are no shorter than the problem resolution. Normality also implies that there exists an optimal schedule with a finite number (in particular, a number not exceeding $2n-1$) of events which are times when jobs start, end, or are preempted. Thus, $2n-1$ is an upper bound on the number of preemptions necessary for optimality.  We also observe that a job may be required to preempt at a point which is neither a start nor the end of another job in order to ensure optimality. These preemption events unrelated to job starts or completions seem to be confined to rather contrived instances; they are more the exception than the rule in preemptive scheduling. We also prove that there exists a sequence of problem instances indexed by $n$ for which the number of
preemptions in the corresponding optimal schedules is $\Omega(\log n)$. 
Thus, a tight upper bound on the number of preemptions required for optimality must be at least logarithmic in $n$.

\section{Our approach and results: A general overview}
\label{sec:normal_preemptions}

We first show that an optimal schedule is a concatenation of \emph{blocks}, each with at most three jobs. No job starts or completes inside
a block but there is at least one job start at the beginning of a block,
and/or at least one job completion at the end of a block. This is done in Sections \ref{subsec:events}
and \ref{subsec:opt_properties}. A block is called $l$-\emph{normal} if each job duration in the block is a multiple of $1/2^{l+1}$, and the block length is a multiple of $1/2^{l}$. In a normal schedule the first block must be $1$-normal, the second $2$-normal and so on. These concepts are introduced in Section \ref{subsec:abnormality_points}, where it is verified that, in a normal schedule with $q$ blocks, each preemption occurs at a multiple of $1/2^{q+1},$ where $q\leq 2n-1$. Our goal is to show that there exists an optimal schedule that is normal. Our proof is by contradiction. We begin by assuming an  optimal schedule that is also \emph{maximal} in the sense
that it has a latest possible \emph{abnormality} point $i$, i.e., a latest block $i$ which is not $i$-normal.   We show that such a block must have exactly three jobs. One completes at the end of the block and has an $(i+1)$-normal duration, but the durations of the other two  are not $(i+1)$-normal, as shown by Lemma~\ref{lem:at_least_two_abnormal}. These two jobs then trigger an \emph{alternating} chain of jobs to which they also belong, as shown in Section~\ref{subsec:alternating_chains}. The completion times of the jobs in the chain are not $(i+1)$-normal, which makes it possible under normal-block circumstances to either extend the chain by one
job or prove that the abnormality point must exceed $i$; this is our main result in Proposition~\ref{pro:infinite_chain}. Thus, we get a contradiction in either case since the number of jobs is finite and the schedule is maximal.
The normal-block circumstances here mean that the alternating chain does not end with a certain structure that we call an \emph{A-configuration}, a configuration that prevents us from extending the alternating chain. However, we show that there always exists a maximal schedule that does not include an $\A$-configuration. This is done in
Section~\ref{subsec:A-configurations}, where the key result is Proposition~\ref{pro:no-A-configurations}.
The main result of the paper follows and states that there is a normal schedule that is optimal for $P2|pmtn,in\textup{-}tree,r_j,p_j=1|\sum C_j$. Finally, in Section~\ref{sec:lower_bound} we exhibit sequences of problem instances indexed by $n$ for which the rate at which the number of preemptions
increases is on the order of $\log n$.

\section{Optimal, normal and maximal schedules}

\subsection{Preliminaries}
\label{subsec:preliminaries}

Let $\jobs$ be a set of $n$ unit UET jobs.
The \emph{release date} for job $a,$ denoted by $r(a),$ is the earliest start time for $a$ in any valid schedule  of $\jobs$. We assume that $r(a)$ is an integer for all $a\in\jobs$.

For two jobs $a$ and $b$, we say that $a$ is a \emph{predecessor} of $b$, and that $b$ is a \emph{successor} of $a,$ if all valid schedules require that $b$ not start until $a$ has finished.
We write $a\prec b$ to denote this relation.  In contrast,
$a\nprec b$ means that $b$ can start prior to the completion time of $a$.
Two jobs $a$ and $b$ are said to be \emph{independent} if $a\nprec b$ and $b\nprec a$.
For $B\subseteq\jobs$, we say that the jobs in $B$ are \emph{independent} if each pair of jobs in $B$ is independent.
This work deals with \emph{in-tree} precedence constraints, i.e., for each job $a$ there exists at most one job $b$ such that $a\prec b$.

The symbol $\reals_+$ denotes the set of nonnegative real numbers.
Given a schedule $\cP$ and a job $a\in\jobs$, define $\startTime{\cP}{a}$ and $\complTime{\cP}{a}$ to be the start and completion times of $a$ in $\cP$, respectively. A job is called \emph{release date pinned} in $\cP$ if it starts at its release date in $\cP$.
The \emph{total completion time} of a schedule $\cP$ of $\jobs$ is given by $\sum_{a\in\jobs}\complTime{\cP}{a}$.
We say that a preemptive schedule $\cP$ is \emph{optimal} if the sum of its job completion times is minimum among all preemptive schedules for $\jobs$.

\subsection{Events, partitions and basic schedule transformations} \label{subsec:events}

For a given schedule $\cP$, define a vector $\vect{e}=(e_1,\ldots,e_q)$, where $0=e_1<e_2<\cdots<e_q$, such that
\[\{e_1,\ldots,e_q\}=\{0\}\cup \left\{\startTime{\cP}{a}\st a\in\jobs\right\}\cup\left\{\complTime{\cP}{a}\st a\in\jobs\right\}.\]
The elements of $\vect{e}$ are called the \emph{events} of $\cP$.
The part of $\cP$ in time interval $[e_i,e_{i+1}]$ is called the \emph{$i$-th block} of $\cP$, or simply a \emph{block} of $\cP$, $i\in\{1,\ldots,q-1\}$.
Given $i\in\{1,\ldots,q-1\}$, let $\xi_i\colon\jobs\to\reals_+$ be a function such that for each $a\in\jobs$, $\xi_i(a)$ is the total length of $a$ executed in the $i$-th block of $\cP$.
Then, $(\xi_1,\ldots,\xi_{q-1})$ is called the \emph{partition} of $\cP$.
Denote by $(\cP,\vect{e},\vect{\xi})$ the schedule $\cP$ with events $\vect{e}$ and partition $\vect{\xi}$.
Unless specified otherwise, it is understood that $\vect{e}$ has $q$ components.
For each $a\in\jobs$, $\tau_{\cP}(a)$ is the integer $i\in\{1,\ldots,q-1\}$ such that $\complTime{\cP}{a}=e_{i+1}$.
In other words, the $\tau_{\cP}(a)$-th block is the last block in which job $a$ appears.
Whenever $\cP$ is clear from context we will simply write $\tau(a)$.
For any function $f\colon\jobs\to\reals_+$, let
\[\jobs(f)=\left\{ a\in\jobs\st f(a)\neq 0 \right\}.\]

In the following we will analyze schedules by investigating their events and partitions.
Informally speaking, the events and the partition of a schedule $\cP$ are insufficient to uniquely reconstruct the schedule $\cP$ but they suffice to build a schedule with the same total completion time as $\cP$.
The schedules built from a list of events and a partition may differ in how pieces of jobs are executed within the blocks.
The main advantage of our approach is that in order to construct a block in $[e_i,e_{i+1}]$ one only needs to solve the problem $P2|p_j,pmtn|C_{\max}$ where the execution time of a job $a$ is $\xi_i(a)$; the proof of Lemma~\ref{lem:xi_to_P} gives more details.
We formalize this observation in the next two lemmas.
\begin{lemma} \label{lem:P_to_xi}
Given a schedule $(\cP,\vect{e},\vect{\xi})$, for each $i\in\{1,\ldots,q-1\}$, the following hold:
\begin{enumerate} [label={\normalfont(\roman*)},leftmargin=*]
 \item\label{it:P_to_xi:1} For each $a\in\jobs(\xi_i)$: $r(a)\leq e_i$;
 \item\label{it:P_to_xi:2} For each $a\in\jobs(\xi_i)$: $\xi_i(a)\leq e_{i+1}-e_i$ and
       $\sum_{a\in\jobs(\xi_i)}\xi_i(a)\leq 2(e_{i+1}-e_i)$;
 \item\label{it:P_to_xi:3} For each $a\in\jobs(\xi_i)$ and $b\in\jobs(\xi_j)$, where $i\leq j<q$: $b\nprec a$.
\end{enumerate}
\end{lemma}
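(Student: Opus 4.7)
The plan is to verify each of the three assertions directly from the definitions of events, blocks, and the partition, all resting on a single preliminary observation. Because the start time $\startTime{\cP}{a}$ and the completion time $\complTime{\cP}{a}$ of every job are themselves members of the event set, the hypothesis $\xi_i(a)>0$ forces $\startTime{\cP}{a}\le e_i$ and $\complTime{\cP}{a}\ge e_{i+1}$. Indeed, if $\startTime{\cP}{a}$ were strictly greater than $e_i$ then, being an event, it would lie in the open interval $(e_i,e_{i+1})$, which is impossible by the strict ordering of consecutive events; a symmetric argument handles $\complTime{\cP}{a}$.

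Given this observation, \ref{it:P_to_xi:1} is immediate: feasibility of $\cP$ gives $r(a)\le \startTime{\cP}{a}$, which combines with $\startTime{\cP}{a}\le e_i$ to yield $r(a)\le e_i$. For \ref{it:P_to_xi:2}, the first inequality is a single-job capacity statement---the job $a$ can occupy at most one of the two processors at any instant, so its total execution within $[e_i,e_{i+1}]$ cannot exceed $e_{i+1}-e_i$; the second inequality is the corresponding two-processor capacity bound, since $\sum_{a\in\jobs(\xi_i)}\xi_i(a)$ is the total processor-time consumed in the block, which on two processors is at most $2(e_{i+1}-e_i)$.

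For \ref{it:P_to_xi:3} I would argue by contradiction. Suppose $a\in\jobs(\xi_i)$ and $b\in\jobs(\xi_j)$ with $i\le j<q$ and $b\prec a$. Feasibility of $\cP$ forces $\complTime{\cP}{b}\le \startTime{\cP}{a}$. Applying the opening observation to $a$ gives $\startTime{\cP}{a}\le e_i$, while applying it to $b$ inside block $j$ gives $\complTime{\cP}{b}\ge e_{j+1}\ge e_{i+1}>e_i$, which already contradicts $\complTime{\cP}{b}\le\startTime{\cP}{a}$.

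The lemma is essentially bookkeeping, so I do not expect a serious obstacle. The one point demanding a little care is the edge case in which $\startTime{\cP}{a}=e_i$ or $\complTime{\cP}{b}=e_{j+1}$: here the proof relies on the strict monotonicity $e_i<e_{i+1}$ of the event sequence together with $\xi_i(a)>0$ to guarantee the strict inequality $e_{i+1}>e_i$ that drives the contradiction in \ref{it:P_to_xi:3}. This is the only spot where one must distinguish weak from strict inequalities, and it is handled uniformly by invoking the opening observation.
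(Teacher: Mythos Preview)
Your proof is correct and follows essentially the same approach as the paper's: the paper also derives \ref{it:P_to_xi:1} from the observation that no job in $\jobs(\xi_i)$ can start inside $(e_i,e_{i+1})$, and then dispatches \ref{it:P_to_xi:2} and \ref{it:P_to_xi:3} by a one-line appeal to feasibility of $\cP$. Your write-up is simply more explicit, spelling out the capacity bounds for \ref{it:P_to_xi:2} and the contradiction for \ref{it:P_to_xi:3} that the paper leaves to the reader.
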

\begin{proof}
Condition \ref{it:P_to_xi:1} follows from the fact that no job in $\jobs(\xi_i)$ starts or completes in $(e_i,e_{i+1})$, $i\in\{1,\ldots,q-1\}$.
(Note that $r(a)>e_i$ is not possible for $a\in\jobs(\xi_i)$ because then we would have $\startTime{\cP}{a}\in(e_i,e_{i+1})$ which would contradict $e_i$ and $e_{i+1}$ being two consecutive events of $\cP$.)
Conditions \ref{it:P_to_xi:2} and \ref{it:P_to_xi:3} follow directly from the fact that $\cP$ is a feasible schedule for $\jobs$.
(Note that \ref{it:P_to_xi:3} in particular implies that the jobs in $\jobs(\xi_i)$ are independent.)
\end{proof}

We often rely on rearrangements of the events $\vect{e}$ of a schedule $\cP$ which result in new schedules $\cP'$ with events that differ from those in  $\vect{e}$. The resulting schedule $\cP'$, however, may still be analyzed in the time intervals $[e_i,e_{i+1}]$, $i\in\{1,\ldots,q-1\}$ defined by the original $\vect{e}$.
For this analysis, we need the following lemma, in which vectors of increasing real numbers beginning with 0 are regarded as  sequences of \emph{time points}.

\begin{lemma} \label{lem:xi_to_P}
If there exist $q$ time points $e_1 < \cdots< e_q$ and $q-1$ functions $\xi_i\colon\jobs\to\reals_+$ ($i=1,\ldots,q-1$)
such that for each $a\in\jobs$, $\sum_{i=1}^{q-1}\xi_i(a)=1$
and conditions \ref{it:P_to_xi:1}--\ref{it:P_to_xi:3} in Lemma~\ref{lem:P_to_xi} are satisfied,
then there exists a schedule $\cP$ such that for each $i\in\{1,\ldots,q-1\}$ and for each $a\in\jobs$ the total
length of all pieces of $a$ executed in $[e_i,e_{i+1}]$ equals $\xi_i(a)$.
\end{lemma}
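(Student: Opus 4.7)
The plan is to construct $\cP$ block-by-block by independently scheduling the jobs of $\jobs(\xi_i)$ inside each interval $[e_i,e_{i+1}]$ via McNaughton's wrap-around rule, and then concatenating the per-block schedules along the timeline. The roles of conditions \ref{it:P_to_xi:1}--\ref{it:P_to_xi:3} split cleanly: condition \ref{it:P_to_xi:2} makes each per-block subproblem feasible as a two-processor preemptive schedule of length $e_{i+1}-e_i$; condition \ref{it:P_to_xi:1} handles release dates at block boundaries; and condition \ref{it:P_to_xi:3} simultaneously guarantees pairwise independence of the jobs scheduled inside a single block (by taking $i=j$) and respect for precedence across blocks.

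Fix $i\in\{1,\ldots,q-1\}$ and set $L_i:=e_{i+1}-e_i$. Applying condition \ref{it:P_to_xi:3} with $i=j$ yields $b\nprec a$ and $a\nprec b$ for every pair $a,b\in\jobs(\xi_i)$, so the $i$-th block needs only to preemptively schedule the independent set $\jobs(\xi_i)$ with processing requirements $\xi_i(a)$. Condition \ref{it:P_to_xi:2} supplies $\xi_i(a)\le L_i$ and $\sum_{a\in\jobs(\xi_i)}\xi_i(a)\le 2L_i$, which are precisely the hypotheses under which McNaughton's rule on two processors returns a preemptive schedule of makespan $L_i$: list the jobs of $\jobs(\xi_i)$ consecutively on processor~1 starting at $e_i$, and whenever the running total first exceeds $L_i$ inside some job $a$, place the tail of $a$ on processor~2 starting at $e_i$ and continue the remaining jobs on processor~2 after it. Because $\xi_i(a)\le L_i$, the portion of $a$ on processor~2 ends no later than the portion of $a$ on processor~1 begins, so $a$ is never scheduled concurrently on both processors, and the total time devoted to each $a$ in $[e_i,e_{i+1}]$ equals $\xi_i(a)$.

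Concatenating these per-block schedules along $[e_1,e_q]$ yields $\cP$. Each job completes since $\sum_{i=1}^{q-1}\xi_i(a)=1$ by hypothesis. Release dates are respected because any $a$ with $\xi_i(a)>0$ has $r(a)\le e_i$ by condition \ref{it:P_to_xi:1}, and $a$ runs only at times $\ge e_i$. For precedence, suppose $a\prec b$ and let $i,j$ be blocks with $\xi_i(a)>0$ and $\xi_j(b)>0$; if we had $j\le i$, then condition \ref{it:P_to_xi:3} applied to the ordered pair $(b,a)$ in positions $(j,i)$ would give $a\nprec b$, a contradiction, so $j>i$ and every piece of $b$ lies strictly after every piece of $a$ in $\cP$. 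The only delicate step is the McNaughton wrap-around inside a single block, whose correctness under the bound $\xi_i(a)\le L_i$ is standard; after that, the lemma follows by pure juxtaposition of the per-block schedules.
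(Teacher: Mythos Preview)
Your proof is correct and follows the same approach as the paper: reduce each block to an instance of $P2\mid p_j,pmtn\mid C_{\max}$, solve it via McNaughton's rule using condition~\ref{it:P_to_xi:2}, and concatenate, with conditions~\ref{it:P_to_xi:1} and~\ref{it:P_to_xi:3} handling release dates and precedence. Your write-up is in fact more explicit than the paper's, which leaves the release-date check and the within-block independence (the $i=j$ case of~\ref{it:P_to_xi:3}) implicit.
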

\begin{proof}
For any given $i\in\{1,\ldots,q-1\}$,
it is enough to construct the part of schedule $\cP$, denoted by $\cP_i$, in the time interval $[e_i,e_{i+1}]$.
By \ref{it:P_to_xi:1} and \ref{it:P_to_xi:2}, this is equivalent to solving the problem $P2|p_j,pmtn|C_{\max}$ where the execution time of each job $a$ is $\xi_i(a)$. It is easy to see that
such a schedule $\cP_i$ exists if and only if the duration of $[e_i,e_{i+1}]$ is at least
the larger of the maximum of the execution times $\xi(a)$ and the sum of these times  averaged over
the two processors, i.e.,
\[e_{i+1}-e_i\geq\max\left\{\frac{1}{2}\sum_{a\in\jobs}\xi_i(a),\max\left\{\xi_i(a)\st a\in\jobs\right\}\right\}.\]
Thus, \ref{it:P_to_xi:2} guarantees that $\cP_i$ exists.
Finally note that \ref{it:P_to_xi:3} guarantees that the precedence constraints between jobs in different blocks are met.
\end{proof}

We close this section by introducing two basic transformations of a given schedule $(\cP,\vect{e},\vect{\xi})$: the \emph {cyclic shift} and the \emph{swapping} of two jobs.
Let $\varepsilon>0$ and $j>0$.
Let $B=\{a_1,\ldots,a_j\}\subseteq\jobs$ be $j$ different jobs and $\{i_1,\ldots,i_j\}\subseteq\{1,\ldots,q-1\}$ be $j$ blocks of $\cP$ such that $\xi_{i_k}(a_k)\geq \varepsilon$ and $\xi_{i_{k+1}}(a_k)\leq e_{i_{k+1}+1}-e_{i_{k+1}}-\varepsilon$ for $k\in\{1,\ldots,j\}$, where $i_{j+1}=i_{1}$.
We define a \emph {cyclic shift} of $B$ by $\varepsilon$ on $\{i_1,\ldots,i_j\}$ in $\cP$, or just a cyclic shift if it is clear from context, as follows.
Let
\[(\vect{e}',\vect{\xi}')=\cyclicshift{\vect{e},\vect{\xi},\varepsilon,(i_1\cyclic{a_1} i_2 \cyclic{a_2} \ldots \cyclic{a_{j-1}} i_j\cyclic{a_j} i_1)}\]
be the events and the partition, respectively, obtained by replacing a piece of $a_{k+1}$ of length $\varepsilon$  in block $j_{k+1}$ of $\cP$ with a piece of $a_{k}$ of length $\varepsilon$ for each $k\in\{1,\ldots,j\}$, where $i_{j+1}=i_{1}$.
This transformation may not result in a feasible schedule because the precedence constraints or release dates may be violated.
However, if neither is violated, then the assumptions of Lemma~\ref{lem:xi_to_P} are met for $\cP'$ with the events $\vect{e}'$, and the partition $\vect{\xi}'$ exists. If $\cP'$ exists, then in addition we assume that the blocks of $\cP'$ enforce the following restrictions:
\begin{itemize}
 \item For each $a_k\in B$, if $\complTime{\cP}{a_k}=e_{i_k+1}$ and $i_{k+1}<i_k$ (taking $i_{j+1}=i_{1}$), then $\complTime{\cP'}{a_k}=e_{i_k+1}-\varepsilon$, which reduces the completion time of job $a_k$ by as much as possible with respect to the cyclic shift.
 \item If $\complTime{\cP}{a_k}\leq e_{i_{k+1}}$ (taking $i_{j+1}=i_{1}$), then $\complTime{\cP'}{a_k}=e_{i_{k+1}}+\varepsilon$, which increases the completion time of job $a_k$ by as little as possible with respect to the cyclic shift.
\end{itemize}
Note that, in general, $\vect{e}$ does not consist of the events of $\cP'$, and the number of events of $\cP'$ may be different than the number of events of $\cP$.

Finally, we introduce the notion of swapping of two jobs which is used in Sections~\ref{subsec:abnormality_points} and~\ref{subsec:alternating_chains} to reduce total completion time of a schedule by applying the shortest processing time (SPT) rule to two jobs that complete in consecutive blocks. Let $\cP$ be a schedule with events $\vect{e}$ and partition $\vect{\xi}$.
Let $a$ and $a'$ be two jobs such that $\complTime{\cP}{a'}=e_{\tau(a)}$, $\startTime{\cP}{a}\leq\startTime{\cP}{a'}$ and $a'$ is independent of any job in  $\jobs(\xi_{\tau(a)})$.
We define a transformation of \emph{swapping $a$ and $a'$} that results in a new schedule $\cP'$ as follows (see Figure~\ref{fig:swapping}).
   \begin{figure*}[hbt]
    \begin{center}
    \includegraphics[scale=1.0]{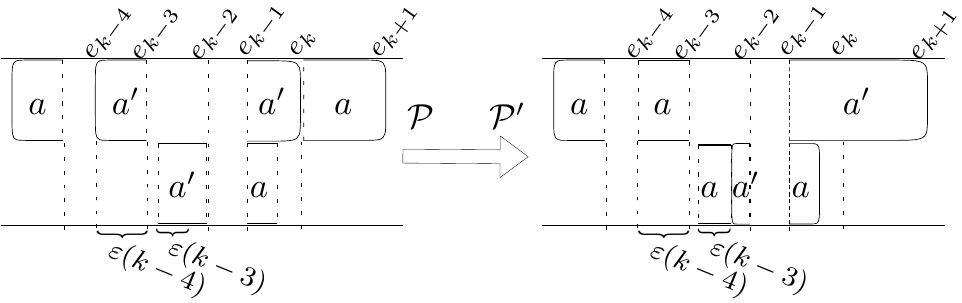}
    \caption{Swapping $a$ and $a'$, where $I=\{k-4,k-3\}$, $k=\tau(a)$, leads in this case to a schedule $\cP'$ with a smaller total completion time}
    \label{fig:swapping}
   \end{center}
    \end{figure*}
Find a set of indices $I\subseteq\{1,\ldots,\tau(a)-1\}$ such that for each $j\in I$,
\[0<\varepsilon(j)\leq\min\{e_{j+1}-e_j-\xi_j(a),\xi_j(a')\},\]
$\varepsilon(\max I)$ is minimum and $\sum_{j\in I}\varepsilon(j)=\xi_{\tau(a)}(a)$.
Such a set $I$ exists because of the constraints imposed on $a$ and $a'$.
The schedule $\cP'$ is obtained by performing the following three steps:
\begin{itemize}
 \item For each $j\in I$, remove a piece of $a'$ of length $\varepsilon(j)$ from the $j$-th block of $\cP$.
 \item Remove the piece of $a$ executing in the $\tau(a)$-th block and add a piece of $a'$ of length $\xi_{\tau(a)}(a)$ to the $\tau(a)$-th block of $\cP$.
 \item Add a piece of $a$ of length $\varepsilon(j)$ to the $j$-th block of $\cP$ for each $j\in I$.
\end{itemize}

\begin{lemma} \label{lem:swapping}
Given schedule $(\cP,\vect{e},\vect{\xi})$,
let $a,a'$ be two jobs such that $\complTime{\cP}{a'}=e_{k}$, $\startTime{\cP}{a}\leq\startTime{\cP}{a'}$ and
$a'$ is independent of any job in  $\jobs(\xi_{k})$, where $k=\tau_{\cP}(a)$.
Then, the schedule $\cP'$ obtained by swapping $a$ and $a'$ in $\cP$ is valid and
$\sum_{a''\in\jobs}\complTime{\cP'}{a''}\leq\sum_{a''\in\jobs}\complTime{\cP}{a''}$ with strict inequality when
$\startTime{\cP}{a}<\startTime{\cP}{a'}$ and $\xi_{k-1}(a)<e_k-e_{k-1}$.
\end{lemma}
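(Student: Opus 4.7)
My plan is to verify that $\cP'$ is a feasible schedule via Lemma~\ref{lem:xi_to_P}, then observe that the completion times of jobs other than $a$ and $a'$ are unaffected, and finally compute the change in total completion time directly.

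For feasibility I check the three conditions of Lemma~\ref{lem:P_to_xi} for the new partition $\vect{\xi}'$. Condition~\ref{it:P_to_xi:1}: for each $j\in I$ the newly inserted piece of $a$ needs $r(a)\leq e_j$, and indeed $r(a)\leq\startTime{\cP}{a}\leq\startTime{\cP}{a'}\leq e_j$, the last inequality because $a'$ was already executing in block $j$; similarly $r(a')\leq e_k$ holds in $\cP$. Condition~\ref{it:P_to_xi:2} is immediate from the defining bounds on $\varepsilon(j)$ and the fact that each swap preserves the total work within every block. For Condition~\ref{it:P_to_xi:3} the key observation is that $a$ and $a'$ are independent: $a'\nprec a$ since $a\in\jobs(\xi_k)$ and $a'$ is independent of every job in $\jobs(\xi_k)$, and $a\nprec a'$ because otherwise $a'$ could not have completed at $e_k<\complTime{\cP}{a}$. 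Predecessors of $a$ were completed by $\startTime{\cP}{a}\leq e_j$, so inserting $a$ into block $j\in I$ is legal; extending $a'$ into block $k$ is safe because no successor of $a'$ lies in $\jobs(\xi_k)$, and successors of $a'$ in later blocks start no earlier than $e_{k+1}$, the new completion time of $a'$.

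Since only $a$ and $a'$ are repositioned, and the piece of $a'$ of length $\xi_k(a)>0$ added to block $k$ can be placed to end at $e_{k+1}$, we get $\complTime{\cP'}{a'}=e_{k+1}=\complTime{\cP}{a}$. The job $a$ is now confined to blocks in $I\subseteq\{1,\ldots,k-1\}$, so $\complTime{\cP'}{a}\leq e_{\max I+1}\leq e_k$. Hence
\[
\sum_{a''\in\jobs}\complTime{\cP'}{a''}-\sum_{a''\in\jobs}\complTime{\cP}{a''}=\complTime{\cP'}{a}-\complTime{\cP}{a'}=\complTime{\cP'}{a}-e_k\leq 0,
\]
which yields the weak inequality.

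For the strict inequality under $\startTime{\cP}{a}<\startTime{\cP}{a'}$ and $\xi_{k-1}(a)<e_k-e_{k-1}$, the goal is to sharpen $\complTime{\cP'}{a}\leq e_k$ to $\complTime{\cP'}{a}<e_k$, and I split on $\max I$. If $\max I\leq k-2$ then trivially $\complTime{\cP'}{a}\leq e_{k-1}<e_k$. If $\max I=k-1$, then $a$'s final piece in block $k-1$ has total length $\xi_{k-1}(a)+\varepsilon(k-1)$; the minimality of $\varepsilon(\max I)$ in the construction of $I$ forces $\varepsilon(k-1)$ to be only the leftover amount after packing as much of $\xi_k(a)$ as possible into blocks with index $<k-1$, and the hypothesis $\xi_{k-1}(a)<e_k-e_{k-1}$ provides the slack that makes $\xi_{k-1}(a)+\varepsilon(k-1)<e_k-e_{k-1}$, so $a$'s final piece can be scheduled on a single processor to finish strictly before $e_k$. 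The role of $\startTime{\cP}{a}<\startTime{\cP}{a'}$ is to exclude the symmetric case in which $a$ and $a'$ would simply trade positions with equal completion times, forcing $\complTime{\cP'}{a}=e_k$. I expect this last case to be the main obstacle, because the precise interplay between the minimality rule for $\varepsilon(\max I)$, the slack in block $k-1$, and the strict start-time inequality requires careful bookkeeping; the feasibility check and the weak inequality, by contrast, follow routinely from the construction.
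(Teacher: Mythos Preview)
Your approach matches the paper's: check feasibility, observe that only the completion times of $a$ and $a'$ can change, and split on whether $k-1$ lies in $I$ for the strict inequality. There is, however, a genuine slip. You write that after the swap ``$a$ is now confined to blocks in $I$'' and deduce $\complTime{\cP'}{a}\le e_{\max I+1}$. That is false: the swap only \emph{adds} $\varepsilon(j)$ units of $a$ to blocks $j\in I$ and removes $a$ from block $k$; the original mass $\xi_j(a)$ in every block $j<k$ stays put. In particular $a$ may well have $\xi_{k-1}(a)>0$ even when $\max I\le k-2$, so the correct bound in that case is $\complTime{\cP'}{a}\le e_{k-1}+\xi_{k-1}(a)<e_k$, using the hypothesis $\xi_{k-1}(a)<e_k-e_{k-1}$, which is exactly what the paper writes. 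The fix is easy and your conclusion survives, but as stated the bound $\complTime{\cP'}{a}\le e_{\max I+1}$ is wrong.

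For the case $k-1\in I$ you correctly locate the crux, namely showing $\xi_{k-1}(a)+\varepsilon(k-1)<e_k-e_{k-1}$, but your justification conflates the two hypotheses. The inequality $\xi_{k-1}(a)<e_k-e_{k-1}$ alone only gives $\xi_{k-1}(a)+\varepsilon(k-1)\le e_k-e_{k-1}$ via the capacity constraint on $\varepsilon(k-1)$; strictness is driven by $\startTime{\cP}{a}<\startTime{\cP}{a'}$. The paper is equally terse here, but the underlying mechanism is this: the strict start-time inequality puts a positive amount $\delta$ of $a$ into blocks preceding $\startTime{\cP}{a'}$, where $a'$ is absent; a short capacity count using $\min\{e_{j+1}-e_j-\xi_j(a),\,\xi_j(a')\}\ge\xi_j(a')-\xi_j(a)$ over the blocks from $\startTime{\cP}{a'}$ to $e_{k-1}$ then shows the total available capacity there strictly exceeds $\xi_k(a)-(e_k-e_{k-1}-\xi_{k-1}(a))$, so the minimal $\varepsilon(k-1)$ is strictly below the slack.
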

\begin{proof}
The fact that $\cP'$ is valid follows directly from its construction.
Suppose that $\startTime{\cP}{a}<\startTime{\cP}{a'}$ and $\xi_{k-1}(a)<e_k-e_{k-1}$.
If $k-1\notin I$, then $\complTime{\cP'}{a}\leq e_{k-1}+\xi_{k-1}(a)<e_k$.
Otherwise, the restriction on taking $\varepsilon(\max I)=\varepsilon(k-1)$ to be minimum implies, due to $\startTime{\cP}{a}<\startTime{\cP}{a'}$, that $\xi_{k-1}(a)+\varepsilon(k-1)<e_k-e_{k-1}$ and hence $\complTime{\cP'}{a}=e_{k-1}+\xi_{k-1}(a)+\varepsilon(k-1)<e_k$.
Thus, the total completion time of $\cP'$ is strictly smaller than that of $\cP$ as required.
\end{proof}

\subsection{Properties of optimal schedules}
\label{subsec:opt_properties}

We now give some key properties of optimal schedules and describe three configurations that are forbidden in optimal
schedules. These results will be used in subsequent sections. The following lemma states that if a job $a$ completes
in the $i$-th block of an optimal schedule $\cP$, i.e., $\tau(a)=i$,
then the part of $a$ that executes in that block spans the block. Such a job $a$ is called a \emph{spanning job in block $i$}.

\begin{lemma} \label{lem:how_jobs_finish}
Given schedule $(\cP,\vect{e},\vect{\xi})$,
each job $a\in\jobs$ is a spanning job in block $\tau(a)$, i.e., $\xi_{\tau(a)}(a)=e_{\tau(a)+1}-e_{\tau(a)}$.
\end{lemma}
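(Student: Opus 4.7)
The plan is to argue by contradiction: assume $\cP$ is optimal but some job $a$ has $\xi_i(a) < \delta$, where $i = \tau(a)$ and $\delta = e_{i+1} - e_i$. I will build a schedule $\cP'$ that coincides with $\cP$ outside block $i$ and differs from it inside block $i$ only through a tiny local swap of width $\varepsilon > 0$, such that $\complTime{\cP'}{a} < \complTime{\cP}{a}$ while every other completion time is preserved, contradicting optimality.

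Since $\complTime{\cP}{a} = e_{i+1}$, the job $a$ runs on some processor $p_a$ immediately before $e_{i+1}$; let $[t_a, e_{i+1}]$ be $a$'s last contiguous slice on $p_a$. Because $\alpha := \xi_i(a) < \delta$, the measure of the set of times in block $i$ on which $a$ runs on neither processor is $\delta - \alpha > 0$, so I can fix a small $\varepsilon > 0$ and a point $u$ with $[u, u+\varepsilon] \subseteq (e_i, t_a)$ on which $a$ is inactive. I then distinguish two cases. In the easy case, where some processor $p$ is idle throughout $[u, u+\varepsilon]$, I shift $a$'s slice $[e_{i+1} - \varepsilon, e_{i+1}]$ on $p_a$ onto $[u, u+\varepsilon]$ on $p$; since $a$ is independent of whatever occupies the other processor at $u$ by Lemma~\ref{lem:P_to_xi}\ref{it:P_to_xi:3}, the resulting schedule is valid, and $\complTime{\cP'}{a} \leq e_{i+1} - \varepsilon$ while all other completion times are untouched.

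In the harder case, both processors are busy throughout $[u, u+\varepsilon]$ with two distinct jobs $c, c' \in \jobs(\xi_i) \setminus \{a\}$. Let $d$ be the job (if any) running on the processor opposite $p_a$ at $e_{i+1}^-$. Since $c \neq c'$, at least one of them, call it $\hat c$, satisfies $\hat c \neq d$; let $p_{\hat c}$ be the processor carrying $\hat c$ during $[u, u+\varepsilon]$. I then swap $a$'s slice at $[e_{i+1} - \varepsilon, e_{i+1}]$ on $p_a$ with $\hat c$'s slice at $[u, u+\varepsilon]$ on $p_{\hat c}$. The condition $\hat c \neq d$ prevents $\hat c$ from running simultaneously on both processors at $e_{i+1}^-$; the condition $u + \varepsilon \leq t_a$ prevents $a$ from running simultaneously on both processors; and precedence is preserved because the jobs in $\jobs(\xi_i)$ are pairwise independent (Lemma~\ref{lem:P_to_xi}\ref{it:P_to_xi:3}). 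After the swap $\complTime{\cP'}{a} = e_{i+1} - \varepsilon$, while $\hat c$'s total time in block $i$ and every block other than $i$ are preserved, so every other completion time is unchanged.

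In both cases the total completion time of $\cP'$ is strictly smaller than that of $\cP$, contradicting optimality. The main obstacle is verifying validity of the swap in the second case, and in particular ensuring that neither $a$ nor $\hat c$ is forced onto both processors simultaneously; this is exactly what the two conditions $\hat c \neq d$ and $u + \varepsilon \leq t_a$ guarantee.
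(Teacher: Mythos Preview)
Your proof is correct; both it and the paper's argument exploit the same underlying observation---if $a$ does not span its final block, one can rearrange jobs inside that block to complete $a$ strictly earlier without delaying any other job---but the implementations differ. The paper works at the partition level: it first argues that some job $b'\in\jobs(\xi_i)$ does \emph{not} run in a small interval $[e_{i+1}-\varepsilon,e_{i+1}]$ at the end of the block (otherwise $|\jobs(\xi_i)|\leq 2$ and $a$ can simply be shifted to the front of the block), then splits block $i$ at $e'=e_{i+1}-\varepsilon'$, moves $\varepsilon'$ units each of $b'$ and of the job (if any) running opposite $a$ near $e_{i+1}$ into $[e',e_{i+1}]$, and invokes Lemma~\ref{lem:xi_to_P} to realise the modified partition with $a$ completing at $e'$. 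You instead locate an explicit sub-interval $[u,u+\varepsilon]\subseteq(e_i,t_a)$ where $a$ is inactive and swap $a$'s terminal slice with whatever occupies that sub-interval, handling the idle and both-busy cases separately. Your approach is more hands-on and avoids the appeal to Lemma~\ref{lem:xi_to_P}, at the cost of a case split and an implicit regularity assumption (that the schedule has only finitely many pieces near the points in question, so that a clean swap interval carrying a single job on each processor can be chosen); the paper's partition-level argument sidesteps this by never committing to a particular realisation of the block.
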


\begin{proof}
The proof is by contradiction. There exists $\varepsilon>0$ such that at most one job executes in
$I=[e_{\tau(a)+1}-\varepsilon,e_{\tau(a)+1}]$ on each machine in $\cP$ and $\varepsilon\leq
e_{\tau(a)+1}-e_{\tau(a)}-\xi_{\tau(a)}(a)$. Let $B$ be the set of the jobs that execute in $I$.
Clearly, $a\in B$ and $1\leq |B|\leq 2$.
There exists a job $b'\in\jobs\setminus B$ such that $\xi_{\tau(a)}(b')\neq 0$.
Indeed, otherwise $a$ could be executed in $[e_{\tau(a)},e_{\tau(a)}+\xi_{\tau(a)}(a)]$ without making any other changes
in the schedule.
Since the new schedule completes $a$ earlier (because $\xi_{\tau(a)}(a)<e_{\tau(a)+1}-e_{\tau(a)}$), this would
contradict the optimality of $\cP$. Then $\complTime{\cP}{b'} > \tau(a)$ and we can use some of the space of
$\xi_{\tau(a)}(b')$ for job $a$ to complete $a$ earlier. More formally,
define $\varepsilon'=\min\{\varepsilon,\xi_{\tau(a)}(b')\}$.
Let $e'=e_{\tau(a)+1}-\varepsilon'$ and for each job $c\in\jobs$ let
\[\xi'(c)=
\begin{cases}
 \xi_{\tau(a)}(c),                       & \textup{if } c\notin\{b'\}\cup(B\setminus\{a\}), \\
 \xi_{\tau(a)}(c)-\varepsilon',  & \textup{if } c\in\{b'\}\cup (B\setminus\{a\}),
\end{cases}
\]
and
\[\xi''(c)=
 \begin{cases}
   0,             & \textup{if } c\notin\{b'\}\cup(B\setminus\{a\}), \\
   \varepsilon',  & \textup{if } c\in\{b'\}\cup (B\setminus\{a\}).
 \end{cases}
\]
By Lemma~\ref{lem:xi_to_P}, there exists a schedule $\cP'$ such that for each $t\in\{1,\ldots,q-1\}\setminus\{\tau(a)\}$ the total
length of all pieces of each job $c\in\jobs$ executed in $[e_t,e_{t+1}]$ is $\xi_t(c)$, the total length of all pieces of each
job $c$ executed in $[e_{\tau(a)},e']$ equals $\xi'(c)$, and the total length of all pieces of each job $c$ executed in
$[e',e_{\tau(a)}+1]$ equals $\xi''(c)$.
However, $\complTime{\cP'}{c}=\complTime{\cP}{c}$ for each $c\in\jobs\setminus\{a\}$ and $\complTime{\cP'}{a}<\complTime{\cP}{a}$,
which contradicts the optimality of $\cP$.
\end{proof}

\begin{lemma} \label{lem:no_idle-time_between_preemptions}
Given schedule $(\cP,\vect{e},\vect{\xi})$, if $a\in\jobs$ is not a spanning job in block $i$ ($i\in\{1,\ldots,q-2\}$),
$\startTime{\cP}{a}\leq e_i$ and $\complTime{\cP}{a}\geq e_{i+1}$, then there is no idle time in the $i$-th block of $\cP$.
\end{lemma}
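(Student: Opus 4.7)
The plan is to argue by contradiction: assume that block $i$ contains idle time of some positive total length $\delta$, and then modify $\cP$ into a strictly better schedule. A first preliminary observation is that $\tau(a)\neq i$: if it were, then Lemma~\ref{lem:how_jobs_finish} would force $\xi_i(a)=e_{i+1}-e_i$ and make $a$ spanning in block $i$, contradicting the hypothesis. Hence $\tau(a)>i$, and in particular $\xi_{\tau(a)}(a)=e_{\tau(a)+1}-e_{\tau(a)}>0$, so there is a genuine piece of $a$ in a strictly later block to pull from. Simultaneously, $\xi_i(a)<e_{i+1}-e_i$ leaves single-processor headroom for $a$ inside block $i$.

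The improvement I have in mind simply transfers a small $\varepsilon>0$ of $a$'s processing from block $\tau(a)$ back into the idle space of block $i$. Concretely, I would define a new partition $\vect{\xi}'$ by setting $\xi'_i(a)=\xi_i(a)+\varepsilon$, $\xi'_{\tau(a)}(a)=\xi_{\tau(a)}(a)-\varepsilon$, and $\xi'_j(c)=\xi_j(c)$ in all other cases, where $0<\varepsilon\leq\min\{\delta,\,e_{i+1}-e_i-\xi_i(a),\,\xi_{\tau(a)}(a)\}$. The real content of the proof is then to verify that $(\vect{e},\vect{\xi}')$ still satisfies the hypotheses of Lemma~\ref{lem:xi_to_P}: once this is granted, the lemma yields a valid schedule $\cP'$ in which $a$ can be placed so as to finish in block $\tau(a)$ at time $e_{\tau(a)+1}-\varepsilon$ (it now occupies only $e_{\tau(a)+1}-e_{\tau(a)}-\varepsilon$ of that block), while every other job keeps its original completion time. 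Total completion time thus drops by $\varepsilon$, contradicting optimality.

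Condition~\ref{it:P_to_xi:1} of Lemma~\ref{lem:P_to_xi} follows from $r(a)\leq\startTime{\cP}{a}\leq e_i$. For condition~\ref{it:P_to_xi:2}, the choice of $\varepsilon$ gives $\xi'_i(a)\leq e_{i+1}-e_i$ and $\sum_c\xi'_i(c)\leq 2(e_{i+1}-e_i)-\delta+\varepsilon\leq 2(e_{i+1}-e_i)$, while block $\tau(a)$ only becomes easier. The main obstacle is condition~\ref{it:P_to_xi:3}: because $\xi_i(a)$ may well be $0$ in the original schedule (job $a$ could be entirely preempted throughout block $i$), inserting $a$ into block $i$ creates new precedence pairs to check. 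I would dispose of this by exploiting that $a$ is alive throughout $[\startTime{\cP}{a},\complTime{\cP}{a}]\supseteq[e_i,e_{i+1}]$ in $\cP$: any predecessor of $a$ completes by $\startTime{\cP}{a}\leq e_i$ and so cannot lie in any $\jobs(\xi_j)$ with $j\geq i$, while any successor of $a$ cannot start before $\complTime{\cP}{a}\geq e_{i+1}$ and so cannot lie in any $\jobs(\xi_j)$ with $j\leq i$. This is precisely the point at which the hypothesis that $a$ passes through block $i$ is used in full strength, rather than merely $r(a)\leq e_i$.
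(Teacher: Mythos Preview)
Your argument is correct and follows essentially the same route as the paper's proof: both derive $\tau(a)>i$ from Lemma~\ref{lem:how_jobs_finish}, choose $\varepsilon\le\min\{\delta,\,e_{i+1}-e_i-\xi_i(a),\,\xi_{\tau(a)}(a)\}$, and invoke Lemma~\ref{lem:xi_to_P} to relocate a piece of $a$ from block $\tau(a)$ into the idle space of block~$i$, strictly reducing $\complTime{\cP}{a}$. Your write-up is in fact more explicit than the paper's about why condition~\ref{it:P_to_xi:3} survives when $\xi_i(a)=0$, which is the only nontrivial point.
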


\begin{proof}
Suppose for a contradiction that there is idle time of length $\varepsilon>0$ in $[e_i,e_{i+1}]$ on one of the processors in
$\cP$. We get a contradiction by obtaining another schedule $\cP'$ such that $\complTime{\cP}{b}=\complTime{\cP'}{b}$ for each
$b\in\jobs\setminus\{a\}$ and $\complTime{\cP'}{a}<\complTime{\cP}{a}$.
Namely, take $\varepsilon'=\min\{\varepsilon,\xi_{\tau(a)}(a),e_{i+1}-e_i-\xi_i(a)\}$.
By Lemma~\ref{lem:how_jobs_finish}, $\tau(a)>i$ and hence $\varepsilon'>0$.
By Lemma~\ref{lem:xi_to_P}, the desired schedule $\cP'$ obtained from $\cP$ by moving the piece of $a$ that executes in
$[\complTime{\cP}{a}-\varepsilon',\complTime{\cP}{a}]$ to the $i$-th block of $\cP$ is valid.
\end{proof}

Given schedule $(\cP,\vect{e},\vect{\xi})$, two jobs $a$ and $b$ with $\tau(a)<\tau(b)$ are said to \emph{interlace} if
job $b$ is not spanning in block $\tau(a)$ and there exists $t<\tau(a)$ such that job $a$ is not spanning in block $t$,
$\xi_t(b)>0$, $r(a)<e_{t+1}$ and $a$ is independent of all jobs in $\jobs(\xi_t)\cup\cdots\cup\jobs(\xi_{\tau(a)})$.
Note that, informally speaking, the above constraints imply that a piece of $a$ executed in
$[\complTime{\cP}{a}-\varepsilon,\complTime{\cP}{a}]$, for some $\varepsilon>0$, can be exchanged with a piece of $b$ of
length $\varepsilon$ executing in the $t$-th block of $\cP$.
We formalize this observation in the next lemma.

\begin{lemma} \label{lem:interlace}
If $\cP$ is an optimal schedule, then no two jobs interlace in $\cP$.
\end{lemma}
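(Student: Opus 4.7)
The plan is a proof by contradiction. Suppose, for contradiction, that an optimal schedule $\cP$ contains two interlacing jobs $a$ and $b$, witnessed by a block index $t<\tau(a)$ (and we have $\tau(a)<\tau(b)$). The natural move is to build a new schedule $\cP'$ by cyclically shifting an $\varepsilon$-piece of $a$ out of block $\tau(a)$ and into block $t$, in exchange for an $\varepsilon$-piece of $b$ making the opposite trip:
\[
(\vect{e}',\vect{\xi}')=\cyclicshift{\vect{e},\vect{\xi},\varepsilon,(\tau(a)\cyclic{a} t \cyclic{b} \tau(a))}.
\]
Intuitively this should cost nothing for $b$ (which finishes later anyway in block $\tau(b)$) while strictly advancing $a$'s completion, contradicting optimality.

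The first step is to pick $\varepsilon>0$ small enough to make the cyclic shift well defined, namely
\[
\varepsilon\le\min\bigl\{\xi_t(b),\,e_{\tau(a)+1}-e_{\tau(a)}-\xi_{\tau(a)}(b),\,e_{t+1}-e_t-\xi_t(a),\,e_{t+1}-r(a)\bigr\}.
\]
The first bound is positive because $\xi_t(b)>0$, the next two by the non-spanning conditions in the definition of interlacing, and the last by $r(a)<e_{t+1}$. Lemma~\ref{lem:how_jobs_finish} gives $\xi_{\tau(a)}(a)=e_{\tau(a)+1}-e_{\tau(a)}\ge\varepsilon$, so all four capacity conditions required by the cyclic-shift definition are met.

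The second step is to verify that $\cP'$ is a valid schedule. For precedence, the independence of $a$ from every job in $\jobs(\xi_t)\cup\cdots\cup\jobs(\xi_{\tau(a)})$ together with Lemma~\ref{lem:P_to_xi}\ref{it:P_to_xi:3} applied to $\cP$ shows that no predecessor or successor of $a$ lies in blocks $t,\ldots,\tau(a)$, so inserting the new piece of $a$ into block $t$ is precedence-safe. For $b$: since $b\in\jobs(\xi_t)$ in $\cP$, all predecessors of $b$ finish by $e_t\le e_{\tau(a)}$, and Lemma~\ref{lem:P_to_xi}\ref{it:P_to_xi:3} implies that no successor of $b$ appears in blocks $t,\ldots,\tau(a)$, so inserting the new piece of $b$ into block $\tau(a)$ is also safe. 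For release dates, $r(b)\le e_t$ follows from Lemma~\ref{lem:P_to_xi}\ref{it:P_to_xi:1}, and when $r(a)\le e_t$ the new partition satisfies the hypotheses of Lemma~\ref{lem:xi_to_P} in block $t$ directly.

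The third step computes the completion times of $\cP'$. Because $\xi'_{\tau(a)}(a)=e_{\tau(a)+1}-e_{\tau(a)}-\varepsilon>0$, the last block of $a$ remains $\tau(a)$; the first restriction in the cyclic-shift definition then gives $\complTime{\cP'}{a}=e_{\tau(a)+1}-\varepsilon<\complTime{\cP}{a}$. On the other hand $\xi'_{\tau(b)}(b)=\xi_{\tau(b)}(b)>0$, so the last block of $b$ is still $\tau(b)$ and $\complTime{\cP'}{b}=\complTime{\cP}{b}$. All other jobs are untouched, so $\sum_{c\in\jobs}\complTime{\cP'}{c}<\sum_{c\in\jobs}\complTime{\cP}{c}$, contradicting the optimality of $\cP$.

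The anticipated obstacle is the release-date subcase $e_t<r(a)<e_{t+1}$: here Lemma~\ref{lem:P_to_xi}\ref{it:P_to_xi:1} fails at $e_t$ for the new partition, so Lemma~\ref{lem:xi_to_P} cannot be invoked in block $t$ verbatim. The fix is concrete: schedule the new $\varepsilon$-piece of $a$ inside $[e_{t+1}-\varepsilon,e_{t+1}]\subseteq[r(a),e_{t+1}]$ on the processor slot freed by the removed piece of $b$ (rearranging the rest of block $t$ via a McNaughton-style sweep if that slot was not already at the end). The remaining verifications are routine bookkeeping on the cyclic-shift construction.
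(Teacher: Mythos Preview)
Your proof is correct and follows essentially the same approach as the paper: set up the cyclic shift $(\tau(a)\cyclic{a} t \cyclic{b} \tau(a))$ (the paper writes it as $(t\cyclic{b}\tau(a)\cyclic{a}t)$, which is the same cycle), verify feasibility using the independence clause for $a$ and the fact that $b$ already appears in block $t$, and conclude that $a$ finishes strictly earlier while $b$ and all other jobs are unaffected. Your treatment of the release-date subcase $e_t<r(a)<e_{t+1}$ is in fact more explicit than the paper's, which handles it in one line by appealing to the convention that the cyclic shift places the new piece of $a$ as late as possible in block $t$; your inclusion of $e_{t+1}-r(a)$ in the bound on $\varepsilon$ and the McNaughton-style placement make the same point concretely.
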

\begin{proof}
Let $\vect{e}$ and $\vect{\xi}$ be the events and the partition of $\cP$, respectively.
Suppose for a contradiction that two jobs $a$ and $b$ with $\tau(a)<\tau(b)$ interlace and $t$ is the block in the definition.
Let
\begin{eqnarray*}
 \varepsilon= & \min\big\{ \xi_t(b),e_{\tau(a)+1}-e_{\tau(a)}-\xi_{\tau(a)}(b), \\
              & \xi_{\tau(a)}(a),e_{t+1}-e_t-\xi_t(a) \big\}.
\end{eqnarray*}
Note that $\varepsilon>0$.
By Lemma~\ref{lem:xi_to_P}, there exists a schedule $\cP'$ with $\vect{e}'$ and partition $\vect{\xi}'$ such that
\[
(\vect{e}',\vect{\xi}')=\cyclicshift{ \vect{e},\vect{\xi},\varepsilon,(t\cyclic{b} \tau(a)\cyclic{a} t) }.
\]
The schedule $\cP'$ is valid for two reasons.
First, $r(a)<e_{t+1}$ implies that if $\startTime{\cP}{a}<e_{t+1}$, then $r(a)\leq e_t$ and if $\startTime{\cP}{a}\geq e_{t+1}$, then $\startTime{\cP'}{a}\geq e_{t+1}-\varepsilon$ according to the definition of the transformation, which implies that $a$ does not start prior to its release date in $\cP'$.
Second, the fact that $a$ is independent of all jobs in $\jobs(\xi_t)\cup\cdots\cup\jobs(\xi_{\tau(a)})$ implies that $a$ does not violate the precedence constraints in $\cP'$.
For each $c\in\jobs\setminus\{a\}$, $\complTime{\cP}{c}=\complTime{\cP'}{c}$ and $\complTime{\cP}{a}>\complTime{\cP'}{a}$.
This contradicts the optimality of $\cP$.
\end{proof}

\begin{lemma} \label{lem:at_most_two_start}
Let $(\cP,\vect{e},\vect{\xi})$ be an optimal schedule.
Let $I=[x,y]$ be an interval and let $B\subseteq\jobs$ be such that $a\in B$ if and only if the total length of job $a$
executing in $I$ is strictly between $0$ and $y-x$.

If jobs in $B$ are independent, $\complTime{\cP}{a}\geq y$ and $r(a)\leq x$ for each $a\in B$, then $|B|\leq 2$.
\end{lemma}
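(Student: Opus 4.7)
I would argue by contradiction: assume $|B|\geq 3$ and construct a valid schedule $\cP'$ with strictly smaller total completion time than $\cP$, contradicting optimality.

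The first step is structural. Any job that executes in $I$ but is not in $B$ has in-$I$ execution exactly $y-x$; since blocks partition $I$ and no preemption occurs in the interior of a block, such a job must span $I$ entirely on a single processor. Two distinct spanning jobs would consume both processors throughout $I$ and force $B$ to be empty, contradicting $|B|\geq 3$; hence at most one spanning job is present and $\sum_{a\in B}\xi_I(a)\leq 2(y-x)$. Moreover, Lemma~\ref{lem:no_idle-time_between_preemptions} applied to any $B$-job (which is released by $x$ and completes no earlier than $y$) implies that no idle processor time remains in $I$: an idle segment could be filled by that job's post-$y$ execution, strictly reducing its completion time. Finally, at most two jobs complete at any single event on two processors, so $|B|\geq 3$ forces some $B$-job---call it $a_3$---to satisfy $\complTime{\cP}{a_3}>y$, meaning that the final block $\tau_{\cP}(a_3)$ of $a_3$ extends strictly past $y$.

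Next I locate a swap site. Since $\xi_I(a_3)<y-x$, some block $i^*\subseteq I$ has $\xi_{i^*}(a_3)=0$; full occupancy of $i^*$, together with at most one spanning job and $a_3$ absent, forces another $B$-job---rename it $a_1$---to run in $i^*$. Thus $a_1$ spans $i^*$, $a_3$ is absent from $i^*$, and by Lemma~\ref{lem:how_jobs_finish} $a_3$ spans $\tau_{\cP}(a_3)$. The pure two-job cyclic shift $(i^*\cyclic{a_1}\tau_{\cP}(a_3)\cyclic{a_3}i^*)$ by $\varepsilon>0$ is feasible (the prerequisites of the cyclic-shift definition are immediate from the above), but applying the completion-time formulas yields the net change $e_{\tau(a_3)}-\complTime{\cP}{a_1}$, which is zero when the two final blocks are adjacent and positive otherwise.

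To produce a strict improvement in the general case, I exploit the third $B$-job $a_2$. I extend to the three-way cyclic shift
\[
\cyclicshift{\vect{e},\vect{\xi},\varepsilon,(i^*\cyclic{a_1}\tau_{\cP}(a_2)\cyclic{a_2}\tau_{\cP}(a_3)\cyclic{a_3}i^*)},
\]
where the blocks can be ordered $\tau_{\cP}(a_1)<\tau_{\cP}(a_2)<\tau_{\cP}(a_3)$ by relabeling, and if this strict order fails the direct two-job shift already suffices. The prerequisites are met since $a_1$ spans $i^*$, $a_2$ spans $\tau_{\cP}(a_2)$, $a_3$ spans $\tau_{\cP}(a_3)$, and each of these three jobs is absent from the block it is being moved into. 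A preliminary sequence of zero-cost swaps of non-$B$ jobs between $\tau_{\cP}(a_1)+1$ and $\tau_{\cP}(a_3)$ can be used to collapse intervening gaps, reducing to the adjacent case where the cyclic-shift formulas give a net change of $-\varepsilon<0$ for small enough $\varepsilon$, contradicting the optimality of $\cP$.

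The main obstacle, and the part requiring the most care, is verifying Lemma~\ref{lem:P_to_xi}\ref{it:P_to_xi:3} for the three-way shift: moving $a_1$ into $\tau_{\cP}(a_2)$ requires $a_1$ to be independent of any other job sharing that block with $a_2$, and similarly for $a_2$ moving into $\tau_{\cP}(a_3)$. The pairwise independence of $B$ handles interactions among $a_1,a_2,a_3$ themselves, and the in-tree structure of precedence---which restricts each job's successors to a single chain---together with the fact that $a_i$ has no successor executing in a block $\geq\tau_{\cP}(a_i)$ in $\cP$, allows one to argue that no new precedence conflict is introduced; in the exceptional case where the incumbent would create a conflict, a different choice of $i^*$ or of the partner $B$-job replaces it.
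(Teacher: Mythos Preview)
Your argument has a genuine gap at its second step, and the workaround you propose in the later steps does not close it.

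You assert that because $\xi_I(a_3)<y-x$, ``some block $i^*\subseteq I$ has $\xi_{i^*}(a_3)=0$.'' This is not true in general: $a_3$ can be present in every block contained in $I$ while spanning none of them (e.g.\ $I$ consists of two unit blocks and $a_3$ runs for half a unit in each). Without a block where $a_3$ is absent, you have no site for your cyclic shift. Even granting such an $i^*$, you then claim that another $B$-job $a_1$ \emph{spans} $i^*$; full occupancy of $i^*$ only tells you that some $B$-job is present there, not that it fills the whole block. Both facts are needed for the shift $(i^*\cyclic{a_1}\tau(a_3)\cyclic{a_3}i^*)$ to be well-defined, and neither is established.

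The three-way shift and the ``preliminary sequence of zero-cost swaps'' are too vague to rescue the argument. Moving $a_1$ into block $\tau_{\cP}(a_2)$ and $a_2$ into $\tau_{\cP}(a_3)$ requires independence of $a_1,a_2$ from \emph{all} jobs in those later blocks, not just from each other; nothing in the hypotheses gives you that, and invoking the in-tree structure is both unjustified here (the lemma is stated and used for arbitrary optimal schedules before in-trees become essential) and, even for in-trees, does not obviously rule out a successor of $a_1$ sitting in $\tau_{\cP}(a_2)$.

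The paper's proof avoids all of this with one preliminary observation you skipped: every $b\in B$ satisfies $\complTime{\cP}{b}>y$ strictly (if $\complTime{\cP}{b}=y$, swap a subinterval of $I$ where $b$ is absent with $[y-\delta,y]$ to finish $b$ earlier). Once that is in hand, take $b\in B$ of minimum completion time; since $|B|\geq 3$ and at most one other job can span block $\tau(b)$, some $b'\in B$ with $\tau(b')>\tau(b)$ is non-spanning there. Now exchange $\varepsilon$ units of $b'$ from $I$ with $\varepsilon$ units of $b$ from block $\tau(b)$, fixing up $I$ by McNaughton's rule. This strictly decreases $\complTime{\cP}{b}$ (because $\tau(b)$ lies past $y$) and leaves $\complTime{\cP}{b'}$ unchanged (because $b'$ finishes even later), giving the contradiction directly---no three-way shift, no chain of zero-cost swaps, and no precedence issues beyond the independence already assumed for jobs executing in $I$.
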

\begin{proof}
It follows from definition of set $B$ that no job completes in $(x,y)$.
We first argue that
\begin{equation} \label{eq:B-complete-late}
\complTime{\cP}{b}>y \textup{ for each }b\in B.
\end{equation}
Suppose for a contradiction that $\complTime{\cP}{b}=y$ for some job $b\in B$.
Since the total length of $b$ in $I$ is less than $y-x$, there exists a non-empty interval $I'\subseteq I$ such that no part of $b$ executes in $I'$.
We obtain a schedule $\cP'$ by exchanging the part of $\cP$ that executes in $I'$ with the part of $\cP$ that executes in $[y-|I'|,y]$.
Since the release date of each job that executes in $I$ is at most $x$ and the jobs whose parts execute in $I$ are independent, we obtain that $\cP'$ is indeed a feasible  schedule.
Then, $\complTime{\cP'}{b}=y-|I'|<y=\complTime{\cP}{b}$ and $\complTime{\cP'}{a}\leq\complTime{\cP}{a}$ for each $a\in\jobs\setminus\{b\}$, which completes the proof of \eqref{eq:B-complete-late}.

We now prove the lemma.
Suppose for a contradiction that $|B|>2$.
Let $b$ be a job in $B$ with minimum completion time in $\cP$.
Since $|B|>2$, Lemma~\ref{lem:how_jobs_finish} implies that there exists $b'\in B$ such that $\tau(b)<\tau(b')$ and $b'$ is not a spanning job in chunk $\tau(b)$.
Define $\varepsilon=\min\{y-x-p,\xi_{\tau(b)}(b),p',e_{\tau(b)+1}-e_{\tau(b)}-\xi_{\tau(b)}(b')\}$, where $p$ and $p'$ are the total lengths of $b$ and $b'$ respectively executing in $I$.
Due to the choice of $b'$, $\varepsilon>0$.
We obtain a schedule $\cP'$ by first exchanging the pieces of $b'$ of total length $\varepsilon$ executing in $I$ with a piece of $b$ of length $\varepsilon$ executing in chunk $\tau(b)$.
The resulting $\cP'$ may not be feasible in $I$,
however, the McNaughton's rule can readily turn this part into a feasible schedule.
This provides a feasible schedule $\cP'$ because the release date of each job whose part executes in $I$ is at most $x$ and the jobs that execute in $I$ in $\cP$ are independent.
By \eqref{eq:B-complete-late}, $\complTime{\cP'}{b}=\complTime{\cP}{b}-\varepsilon$.
Note that if a job completes at $y$ in $\cP$, then the total length of this job in $I$ equals $y-x$; otherwise the job would belong to $B$ contradicting \eqref{eq:B-complete-late}.
Thus, no job completes later in $\cP'$ than in $\cP$ --- a contradiction with the optimality of $\cP$.
\end{proof}

\begin{lemma} \label{lem:one_dominates}
Let schedule $(\cP,\vect{e},\vect{\xi})$ be optimal.
If $\jobs(\xi_i)\neq\emptyset$ ($i\in\{1,\ldots,q-1\}$), then:
\begin{enumerate} [label={\normalfont(\roman*)},leftmargin=*]
 \item\label{it:domin:1} There exists a job in $\jobs$ that is spanning in block $i$;
 \item\label{it:domin:2} $|\jobs(\xi_i)|\leq 3$ and if $|\jobs(\xi_i)|=3$, then some job in $\jobs(\xi_i)$
 completes at $e_{i+1}$ in $\cP$.
\end{enumerate}
\end{lemma}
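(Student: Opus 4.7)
The plan is to establish (i) by contradiction using Lemma~\ref{lem:no_idle-time_between_preemptions} together with Lemma~\ref{lem:at_most_two_start}, derive the cardinality bound in (ii) from (i) by a second application of Lemma~\ref{lem:at_most_two_start}, and finally handle the completion claim in (ii) by a cyclic-shift argument.

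For part (i), I would suppose no job in $\jobs(\xi_i)$ is spanning in block $i$. Then Lemma~\ref{lem:how_jobs_finish} forbids any $a\in\jobs(\xi_i)$ from completing at $e_{i+1}$, so each such $a$ meets the hypotheses of Lemma~\ref{lem:no_idle-time_between_preemptions} (with $\startTime{\cP}{a}\leq e_i$ by Lemma~\ref{lem:P_to_xi}\ref{it:P_to_xi:1}, $\xi_i(a)<e_{i+1}-e_i$, and $\complTime{\cP}{a}\geq e_{i+1}$), ruling out idle time in block $i$. Consequently $\sum_{a\in\jobs(\xi_i)}\xi_i(a)=2(e_{i+1}-e_i)$, with every summand strictly less than $e_{i+1}-e_i$, which forces $|\jobs(\xi_i)|\geq 3$. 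However, Lemma~\ref{lem:at_most_two_start} applied to $I=[e_i,e_{i+1}]$ and $B=\jobs(\xi_i)$ (whose hypotheses are verified through Lemma~\ref{lem:P_to_xi}) gives $|B|\leq 2$, the desired contradiction.

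For the cardinality bound in (ii), part (i) supplies a spanning job $a^*\in\jobs(\xi_i)$. If a second job in $\jobs(\xi_i)$ also spans block $i$, then both processors are full throughout the block and $|\jobs(\xi_i)|=2$. Otherwise every element of $\jobs(\xi_i)\setminus\{a^*\}$ is non-spanning, and Lemma~\ref{lem:at_most_two_start} applied to $B=\jobs(\xi_i)\setminus\{a^*\}$ bounds this set by $2$, so $|\jobs(\xi_i)|\leq 3$.

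For the completion claim, assume $|\jobs(\xi_i)|=3$ with spanning $a^*$ and non-spanning $b_1,b_2$, and suppose for contradiction that none of the three completes at $e_{i+1}$. Since any job completing at $e_{i+1}$ must be executing immediately before it and hence lie in $\jobs(\xi_i)$, no job completes at $e_{i+1}$ at all, so $e_{i+1}$ must be the start of some $c\notin\jobs(\xi_i)$; then $c\in\jobs(\xi_{i+1})$, and standard precedence arguments (if $b_j\prec c$ then $b_j$ would have to finish by $e_{i+1}$, contradicting our assumption; if $c\prec b_j$ then $c$ would have to complete before $\startTime{\cP}{b_j}\leq e_i$, which is impossible since $\complTime{\cP}{c}\geq e_{i+1}$) show that $c$ is independent of both $b_1$ and $b_2$. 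Because $c$ occupies a processor immediately after $e_{i+1}$, at most one of $b_1,b_2$ can be spanning in block $i+1$, so some $b_j$ satisfies $\xi_{i+1}(b_j)<e_{i+2}-e_{i+1}$. I would then apply the cyclic shift $\cyclicshift{\vect{e},\vect{\xi},\varepsilon,(i\cyclic{b_j} i+1 \cyclic{c} i)}$ for a sufficiently small $\varepsilon>0$, which transfers a piece of $c$ into the end of block $i$ and a piece of $b_j$ into block $i+1$; the resulting schedule $\cP'$ has $\complTime{\cP'}{c}<\complTime{\cP}{c}$ while all other completion times are unchanged, contradicting optimality. The main obstacle is feasibility of placing the piece of $c$ in $[e_{i+1}-\varepsilon,e_{i+1}]$, which requires $r(c)\leq e_{i+1}-\varepsilon$: if $r(c)<e_{i+1}$ we can take $\varepsilon$ small enough, but the boundary subcase $r(c)=e_{i+1}$ needs a separate device, such as a longer cyclic shift reaching a later block where $b_j$ or $a^*$ reappears, or an interlacing argument invoking Lemma~\ref{lem:interlace} with a different pair.
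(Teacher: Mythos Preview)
Your arguments for part~\ref{it:domin:1} and for the cardinality bound $|\jobs(\xi_i)|\leq 3$ in part~\ref{it:domin:2} are correct and in fact somewhat cleaner than the paper's organization: you isolate~\ref{it:domin:1} via a direct contradiction (no spanning job $\Rightarrow$ no idle time $\Rightarrow$ at least three jobs, yet Lemma~\ref{lem:at_most_two_start} forces at most two), whereas the paper entangles~\ref{it:domin:1} with the case analysis for~\ref{it:domin:2}.

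The genuine gap is in your treatment of the completion claim when $|\jobs(\xi_i)|=3$. Your cyclic shift $\cyclicshift{\vect{e},\vect{\xi},\varepsilon,(i\cyclic{b_j} i+1 \cyclic{c} i)}$ moves $\varepsilon$ of $b_j$ from block $i$ to block $i{+}1$ and $\varepsilon$ of $c$ from block $i{+}1$ to block $i$; this does \emph{not} reduce $\complTime{\cP'}{c}$ in general. By the paper's own convention for cyclic shifts, the only circumstance under which $c$'s completion time drops is when $\complTime{\cP}{c}=e_{i+2}$, i.e.\ when $\tau(c)=i{+}1$. If $c$ completes in some later block, the last block in which $c$ appears is untouched and $\complTime{\cP'}{c}=\complTime{\cP}{c}$; all other completion times are likewise unchanged, so no contradiction with optimality arises. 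You give no reason why $c$ should complete at $e_{i+2}$, and indeed it need not. Compounding this, you openly leave the subcase $r(c)=e_{i+1}$ unresolved (``needs a separate device''), so even the favourable case is not closed.

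The paper's route for this part is substantially different and avoids cyclic shifts altogether. It works with the set $B$ of non-spanning jobs in block $i$, uses Lemma~\ref{lem:interlace} repeatedly to pin down $|B|=2$ and to constrain where the three jobs $b,b',c'$ of $\jobs(\xi_i)$ complete relative to one another (in particular forcing $b'$ to be spanning in block $\tau(b)$ and $c'$ to complete strictly between $e_{i+1}$ and $e_{\tau(b)}$), and then derives the contradiction via the swapping operation of Lemma~\ref{lem:swapping} applied to a carefully chosen pair of jobs. If you want to repair your approach, you would have to replace the single cyclic shift by an argument that actually locates the last block of one of $b_1,b_2,c$ and exploits it; in practice this leads you back to interlacing and swapping, as in the paper.
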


\begin{proof}
Note that $|\jobs(\xi_i)|>3$ would lead to a contradiction to Lemma~\ref{lem:at_most_two_start} with $I=[e_i,e_{i+1}]$.
Moreover, if $i=\tau(a)$ for some $a\in\jobs$, then by Lemma~\ref{lem:how_jobs_finish}, $a$ is spanning in block $i$ and
the lemma holds.

Thus, assume that no job finishes in the $i$-th block of $\cP$.
If $|\jobs(\xi_i)|\leq 2$, then it remain to prove \ref{it:domin:1}: if no job $a$ is spanning in block $i$, then by
Lemma~\ref{lem:no_idle-time_between_preemptions}, there is no idle time in the $i$-th block of $\cP$, which would violate
$|\jobs(\xi_i)|\leq 2$.
This completes the proof of case $|\jobs(\xi_i)|\leq 2$.
We prove, by contradiction, that $|\jobs(\xi_i)|=3$ is not possible if no job completes at $e_{i+1}$.
Denote $B=\{a\in\jobs\st 0<\xi_i(a)<e_{i+1}-e_i\}$.
Clearly, $|B|>1$.
On the other hand, $|B|<3$ for otherwise the job in $B$ with smallest completion time interlaces with one of the two other
jobs in $B$, which contradicts Lemma~\ref{lem:interlace}. Thus, $|B|=2$.
Denote $B=\{b,b'\}$ and assume without loss of generality that $\complTime{\cP}{b}\leq\complTime{\cP}{b'}$.
According to Lemma~\ref{lem:how_jobs_finish}, job $b$ is spanning in block $\tau(b)$.
Also job $b'$ is spanning in block $\tau(b)$, since otherwise $b$ and $b'$ interlace, which is not possible according to
Lemma~\ref{lem:interlace}. The only job, call it $c'$, in $\jobs(\xi_i)\setminus\{b,b'\}$ completes in $(e_{i+1},e_{\tau(b)})$
for otherwise this job and $b$ interlace --- again a contradiction with Lemma~\ref{lem:interlace}.
Thus, in particular, $\tau(b)>i+1$. This situation is depicted in Figure~\ref{fig:three-jobs}.

   \begin{figure}[htb]
    \begin{center}
    \includegraphics[scale=1.0]{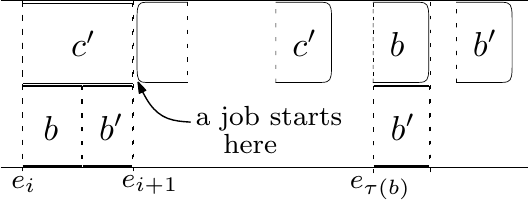}
    \caption{The proof of Lemma~\ref{lem:one_dominates}: the positioning of jobs $b,b'$ and $c'$.}
    \label{fig:three-jobs}
   \end{center}
    \end{figure}

Let $Y=\{a\in\jobs\setminus\jobs(\xi_i)\st e_{i+1}\leq\startTime{\cP}{a}\leq e_{\tau(b)}\}$.
Since $e_{i+1}$ is an event of $\cP$, $Y\neq\emptyset$. By Lemma~\ref{lem:interlace}, if $e_{i+1}\leq\complTime{\cP}{a}\leq e_{\tau(b)}$, then $a\in Y$ or $a=c'$.
If there exists $c\in Y$ such that $\complTime{\cP}{c}=e_{\tau(b)}$, then we obtain a schedule $\cP'$ by swapping $b$ and $c$.
By Lemma~\ref{lem:swapping}, $\cP'$ is feasible.
Moreover, if job $b$ is non-spanning in block $\tau(b)-1$, then the total completion time of $\cP'$ is smaller than that of $\cP$, which completes the proof.
If, on the other hand, job $b$ is spanning in block $\tau(b)-1$, then $\complTime{\cP'}{b}=e_{\tau(b)}$ and
$\xi'_{\tau_{\cP'}(b)-1}(b')=0$, where $\vect{\xi}'$ is the partition of $\cP'$, in which case $b$ and $b'$ interlace in $\cP'$ --- a contradiction with Lemma~\ref{lem:interlace}.
Thus, it remains to consider the situation when no such $c$ exists.
This, since $e_{\tau(b)}$ is an event of $\cP$, implies that $c'$ ends at $e_{\tau(b)}$ in $\cP$. Moreover, $\jobs(\xi_{\tau(c')})\subseteq\{c',b,b'\}$ for otherwise $\cP$ would not be optimal.
Thus, some job $c\in Y$ ends at $e_{\tau(c')}$ because $e_{\tau(c')}$ is an event of $\cP$ and no job in $Y$ can start at $e_{\tau(c')}$.
Therefore, one of jobs $\{c',b\}$ must be non-spanning in block $\tau(c)$.
Swapping this job with $c$ gives, by Lemma~\ref{lem:swapping}, a schedule with smaller total completion time that that of $\cP$, which provides the required contradiction and completes the proof of the lemma.
\end{proof}

The following two lemmas describe additional configurations that cannot be present in an optimal schedule.
The first situation is depicted in Figure~\ref{fig:forbidden-situations}(a), while the statement of Lemma~\ref{lem:between_preemptions} is shown in Figure~\ref{fig:forbidden-situations}(b).

   \begin{figure}[htb]
    \begin{center}
    \includegraphics[scale=1]{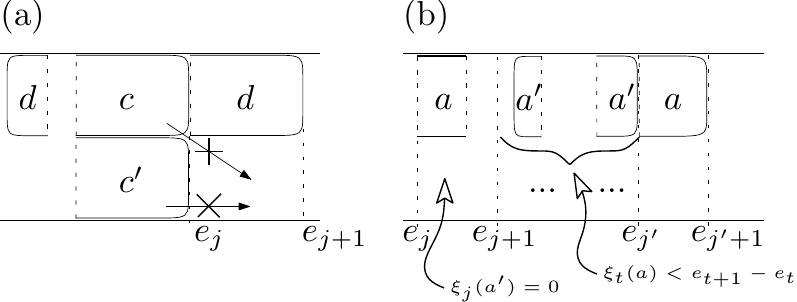}
    \caption{(a) the illustration of Lemma~\ref{lem:ccd}; (b) the illustration of Lemma~\ref{lem:between_preemptions}}
    \label{fig:forbidden-situations}
   \end{center}
    \end{figure}

\begin{lemma}\label{lem:ccd}
Given schedule $(\cP,\vect{e},\vect{\xi})$, let $e_j$ be an event in $\cP$ and jobs $c$, $c'$, and $d$ be such that
\begin{enumerate} [label={\normalfont(\roman*)},leftmargin=*]
 \item\label{it:c1} $\complTime{\cP}{c}=\complTime{\cP}{c'}=e_j$;
 \item\label{it:c2} $\complTime{\cP}{d}=e_{j+1}$ and $\startTime{\cP}{d}<e_j$;
 \item\label{it:c3} Jobs in $\{c,c'\}\cup \jobs(\xi_j)$ are independent.
\end{enumerate}
Then, $\cP$ is not optimal.
\end{lemma}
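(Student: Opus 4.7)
The plan is to assume toward a contradiction that $\cP$ is optimal and construct a rearrangement that strictly reduces the total completion time. By Lemma~\ref{lem:how_jobs_finish}, both $c$ and $c'$ span block $j-1$, so $\jobs(\xi_{j-1})=\{c,c'\}$ occupies both processors throughout $[e_{j-1},e_j]$, and $d$ spans block $j$. Since $\startTime{\cP}{d}<e_j$ while $d$ has no execution in the fully-packed block $j-1$, $d$ must have positive execution in some earlier block $i<j-1$, and consequently $\startTime{\cP}{d}<e_{j-1}$.

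If $e_j-e_{j-1}=1$, then both $c$ and $c'$ have their entire unit of work inside block $j-1$, so $\startTime{\cP}{c}=\startTime{\cP}{c'}=e_{j-1}>\startTime{\cP}{d}$. I would then invoke Lemma~\ref{lem:swapping} with $a=d$ and $a'=c$: one checks $\complTime{\cP}{c}=e_j=e_{\tau_{\cP}(d)}$, that $c$ is independent of every job in $\jobs(\xi_j)$ by hypothesis~(iii), and that $\xi_{\tau_{\cP}(d)-1}(d)=\xi_{j-1}(d)=0<e_j-e_{j-1}$. Combined with $\startTime{\cP}{d}<\startTime{\cP}{c}$, Lemma~\ref{lem:swapping} yields a schedule with strictly smaller total completion time, contradicting optimality. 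The same Lemma~\ref{lem:swapping} argument, applied with $a'$ chosen to be whichever of $c,c'$ starts latest, handles the general case whenever $\startTime{\cP}{d}<\max\{\startTime{\cP}{c},\startTime{\cP}{c'}\}$.

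The residual case, $e_j-e_{j-1}<1$ together with $\startTime{\cP}{d}\geq\max\{\startTime{\cP}{c},\startTime{\cP}{c'}\}$, forces each of $c,c'$ to have positive work in some block $t<j-1$. I would then apply the three-element cyclic shift $\cyclicshift{\vect{e},\vect{\xi},\varepsilon,(j-1\cyclic{c}t\cyclic{c'}j\cyclic{d}j-1)}$ for some such $t$ with $\xi_t(c')>0$ and small $\varepsilon>0$. By the two rules in the definition of cyclic shift, one obtains $\complTime{\cP'}{c}=e_j-\varepsilon$, $\complTime{\cP'}{c'}=e_j+\varepsilon$, and $\complTime{\cP'}{d}=e_{j+1}-\varepsilon$, giving a net improvement of $\varepsilon$ in total completion time and the desired contradiction.

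The principal difficulty is feasibility of the cyclic shift. Hypothesis~(iii) secures independence of $\{c,c'\}$ from $\jobs(\xi_j)$ but not between $c$ and the other jobs appearing in block $t$, and when $\jobs(\xi_j)$ contains a second spanning job the simultaneous completion-time requirements on $c'$ and $d$ inside block $j$ can clash. I would handle these by choosing $t$ as large as possible so that Lemma~\ref{lem:one_dominates} and the absence of interlacing from Lemma~\ref{lem:interlace} tightly restrict the content of $\jobs(\xi_t)$, and by splitting into the subcases $|\jobs(\xi_j)|\in\{1,2,3\}$; in each subcase Lemma~\ref{lem:at_most_two_start} serves to exclude configurations in which the prescribed placement inside blocks $j-1$ and $j$ cannot be realised directly.
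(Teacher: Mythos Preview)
Your initial observations are correct: both $c,c'$ span block $j-1$, so $\jobs(\xi_{j-1})=\{c,c'\}$, $d$ is absent from that block, and hence $\startTime{\cP}{d}<e_{j-1}$. Your swapping argument when $\startTime{\cP}{d}<\max\{\startTime{\cP}{c},\startTime{\cP}{c'}\}$ is also fine and is essentially what the paper does.

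The gap is in the residual case. You propose a three-job cyclic shift and then concede that its feasibility is unresolved: independence of $c$ from the content of block $t$ is not given, and the placement constraints on $c'$ and $d$ inside block $j$ may conflict when $\jobs(\xi_j)$ has a second spanning job. Your sketch of how to patch this (choose $t$ maximal, split on $|\jobs(\xi_j)|$, invoke Lemma~\ref{lem:at_most_two_start}) is not worked out, and it is not clear that it goes through.

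The point you are missing is that the residual case is \emph{vacuous}. Let $e_t=\startTime{\cP}{d}$; you have already shown $t<j-1$. If both $c$ and $c'$ had positive execution in block $t$, then $\{c,c',d\}\subseteq\jobs(\xi_t)$, so by Lemma~\ref{lem:one_dominates}\ref{it:domin:2} one of these three would complete at $e_{t+1}<e_j$, which is impossible. Hence at least one of them, say $c$, satisfies $\xi_t(c)=0$. Now if $\startTime{\cP}{c}\le\startTime{\cP}{d}=e_t$, then in fact $\startTime{\cP}{c}<e_t$ (since $\xi_t(c)=0$), and one checks directly from the definition that $c$ and $d$ interlace (take the witness block to be $t$; independence holds because $c$ is present both in some block before $t$ and in block $j-1$, so Lemma~\ref{lem:P_to_xi}\ref{it:P_to_xi:3} gives $b\nprec c$ and $c\nprec b$ for every $b$ in the intermediate blocks). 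Lemma~\ref{lem:interlace} then contradicts optimality. Thus $\startTime{\cP}{d}<\startTime{\cP}{c}$ always holds for this $c$, and a single application of Lemma~\ref{lem:swapping} finishes the proof. No cyclic shift is needed.
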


\begin{proof}
By Lemma~\ref{lem:one_dominates} (ii), one of jobs $c$ or $c'$, say $c$, satisfies $\xi_t(c)=0$, where $e_t=\startTime{\cP}{ d}$.
We then have $\startTime{\cP}{ d}<\startTime{\cP}{c}$ for otherwise $c$ and $d$ would interlace, thus we get a contradiction by
Lemma~\ref{lem:interlace}. Therefore, we can swap jobs $c$ and $d$.
By Lemma~\ref{lem:swapping}, the resulting schedule is feasible and has smaller total completion time than $\cP$, as required.
\end{proof}

\begin{lemma} \label{lem:between_preemptions}
Let schedule $(\cP,\vect{e},\vect{\xi})$ be optimal and jobs $a$ and $a'$ be such that
\begin{itemize}
 \item $\xi_j(a)>0$, $\xi_{j'}(a)>0$, $j<j'-1$ and job $a$ is spanning in block $t$ for each $t\in\{j+1,\ldots,j'-1\}$;
 \item $\xi_{j}(a')=0$ and $\complTime{\cP}{a'}=e_{j'}$;
 \item No successor of $a'$ starts at $e_{j'}$.
\end{itemize}
Then, $\startTime{\cP}{a'}\geq e_{j+1}$ and $\tau(a)>j'$.
\end{lemma}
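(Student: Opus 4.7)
The approach is proof by contradiction: assume $\cP$ is optimal and satisfies the hypotheses, yet at least one of the two conclusions fails, and in each case either construct a feasible schedule $\cP'$ of strictly smaller total completion time or derive a direct arithmetic contradiction.

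For the claim $\startTime{\cP}{a'}\geq e_{j+1}$, suppose $\startTime{\cP}{a'}<e_{j+1}$. Since $\xi_j(a')=0$ and starts are events, necessarily $\startTime{\cP}{a'}\leq e_{j-1}$, so $a'$ runs in some block $t<j$. By Lemma~\ref{lem:how_jobs_finish} applied to $\complTime{\cP}{a'}=e_{j'}$, $\tau(a')=j'-1$ and $a'$ spans block $j'-1$. The idea is to shift a small amount $\varepsilon>0$ of $a'$ out of its spanning slot in block $j'-1$ into block $j$, yielding $\complTime{\cP'}{a'}=e_{j'}-\varepsilon$ with every other completion time unchanged. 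Feasibility uses $r(a')\leq e_{j-1}\leq e_j$; the independence of $a'$ with every $d\in\jobs(\xi_j)$ (neither $a'\prec d$ nor $d\prec a'$ can hold, by ordering $\startTime{\cP}{a'}\leq e_{j-1}$ against $\complTime{\cP}{d}\geq e_{j+1}$ and $\startTime{\cP}{d}\leq e_j$ against $\complTime{\cP}{a'}=e_{j'}$); and the hypothesis that no successor of $a'$ starts at $e_{j'}$, which combined with the absence of events in $(e_{j'},e_{j'+1})$ forces every successor of $a'$ to start at $e_{j'+1}$ or later and hence to accommodate the earlier completion. If block $j$ is saturated, the shift is augmented to a two-block cyclic shift $\cyclicshift{\vect{e},\vect{\xi},\varepsilon,(j\cyclic{d}j'{-}1\cyclic{a'}j)}$ that moves a companion job $d\in\jobs(\xi_j)\setminus\{a\}$ with $\tau(d)\geq j'$ into block $j'-1$, so $d$'s completion is not delayed and the needed room in block $j$ is freed.

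For the claim $\tau(a)>j'$, suppose $\tau(a)=j'$. If also $\startTime{\cP}{a'}<e_{j+1}$ the argument above already yields a contradiction, so assume $\startTime{\cP}{a'}\geq e_{j+1}$. The combined hypotheses then force a direct arithmetic contradiction. Indeed, $a$ spans blocks $j+1,\ldots,j'$ (the last by Lemma~\ref{lem:how_jobs_finish} since $\tau(a)=j'$), so $\sum_{t=j+1}^{j'}\xi_t(a)=e_{j'+1}-e_{j+1}$; combining with $\xi_j(a)>0$ and the UET identity $\sum_t\xi_t(a)=1$ gives $e_{j'+1}-e_{j+1}\leq 1-\xi_j(a)<1$. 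On the other hand, $a'$ has $\xi_t(a')=0$ for every $t\notin\{j+1,\ldots,j'-1\}$ (using $\xi_j(a')=0$, $\startTime{\cP}{a'}\geq e_{j+1}$, and $\complTime{\cP}{a'}=e_{j'}$), so
\[
1=\sum_{t=j+1}^{j'-1}\xi_t(a')\leq\sum_{t=j+1}^{j'-1}(e_{t+1}-e_t)=e_{j'}-e_{j+1}.
\]
Therefore $e_{j'+1}-e_{j+1}<1\leq e_{j'}-e_{j+1}$, giving $e_{j'+1}<e_{j'}$, which violates the strictly increasing order of events.

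The main obstacle is the block-capacity bookkeeping for the cyclic shift of Part 1 when block $j$ is saturated: one must verify that a companion $d\in\jobs(\xi_j)\setminus\{a\}$ with $\tau(d)\geq j'$ exists, to be done using Lemma~\ref{lem:one_dominates} (bounding $|\jobs(\xi_j)|\leq 3$) together with an analysis of the event structure around $e_{j'}$. Part 2 reduces cleanly to the arithmetic identity once the previous claim is available.
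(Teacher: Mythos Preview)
Your Part~2 argument is correct and takes a genuinely different route from the paper. The paper argues that if $\tau(a)=j'$, then swapping $a$ and $a'$ (via Lemma~\ref{lem:swapping}, using that Part~1 gives $\startTime{\cP}{a}<\startTime{\cP}{a'}$ and that $a'$ is independent of $\jobs(\xi_{j'})$) would strictly lower the total completion time. Your direct arithmetic contradiction from the UET constraint---$a$ spans $[e_{j+1},e_{j'+1}]$ while $a'$ fits entirely inside $[e_{j+1},e_{j'}]$---is more elementary and avoids the swapping machinery altogether; it also makes transparent why the case $\tau(a)=j'$ is simply infeasible once Part~1 is known.

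For Part~1, by contrast, the paper takes a one-line route: if $\startTime{\cP}{a'}<e_j$ then $a$ and $a'$ interlace, contradicting Lemma~\ref{lem:interlace}. Your hand-built cyclic shift is essentially the content of that lemma unpacked, but the obstacle you flag is real and not merely bookkeeping. In the saturated case you need a companion $d\in\jobs(\xi_j)\setminus\{a\}$ with $\tau(d)\geq j'$ that is \emph{not} spanning in block $j'-1$; taking $d=a$ fails precisely because $a$ spans $j'-1$ by hypothesis, and it is a priori possible that the only other job in block $j$ completes at $e_{j+1}$ (e.g.\ $\jobs(\xi_j)=\{a,d\}$ with $\tau(d)=j$, both spanning). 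In that scenario your two-block shift $\bigl\langle\vect{e},\vect{\xi},\varepsilon,(j\cyclic{d}j'{-}1\cyclic{a'}j)\bigr\rangle$ does not go through, and a further structural analysis of the intermediate blocks $j+1,\ldots,j'-2$ (where $a$ spans and $a'$ is either spanning or absent) would be needed to locate a valid displacement target. So as written, your Part~1 is incomplete; the paper's appeal to the interlacing lemma is the intended shortcut past this case analysis.
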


\begin{proof}
If $\startTime{\cP}{a'}<e_{j+1}$, then due to $\xi_j(a')=0$, $\startTime{\cP}{a'}<e_j$.
But then, $a$ and $a'$ would interlace, which is not possible in an optimal schedule according to Lemma~\ref{lem:interlace}.

By assumption, $a'$ is independent of any job in $\jobs(\xi_{j'})$.
Also, $\startTime{\cP}{a}\leq e_{j+1}\leq\startTime{\cP}{a'}$.
Then, $\tau(a)>j'$ follows from an observation that otherwise swapping $a$ and $a'$ in $\cP$ would produce, by Lemma~\ref{lem:swapping}, a schedule with smaller total completion time than that of $\cP$.
\end{proof}

\subsection{Abnormality points and maximal schedules}
\label{subsec:abnormality_points}

We now define normal schedules, abnormality points and maximal schedules. In particular Lemma~\ref{lem:at_least_two_abnormal}
gives key necessary conditions for an abnormality point.

For any $x\in\reals_+$ and nonnegative integer $l$, we say that $x$ is \emph{$l$-normal} if $x=l'/2^l$ for some integer $l'$.
We say that a block of a schedule $\cP$ is \emph{$l$-normal} if the length of the block is $l$-normal and the total execution time
of each job in the block is $(l+1)$-normal. A preemptive schedule $\cP$ with $q$ events is \emph{normal} if the $i$-th block of
$\cP$ is $i$-normal for each $i\in\{1,\ldots,q-1\}$.
If a schedule $\cP$ with $q$ events $\vect{e}$ and partition $\vect{\xi}$ is not normal, then the minimum index
$i\in\{1,\ldots,q-1\}$ such that the $i$-th block of $\cP$ is not $i$-normal is called the \emph{abnormality point of} $\cP$.
If a schedule is normal, then its abnormality point is denoted by $\infty$ for convenience. We have the following simple
observations.

\begin{observation} \label{obs:normal_numbers}
If $x$ is $l$-normal, then $x$ is $l'$-normal for each $l'\geq l$. \qed
\end{observation}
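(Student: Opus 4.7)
The statement is essentially a one-line fact about dyadic rationals, so my plan is to unfold the definition of $l$-normality and re-express the numerator with a larger denominator. First I would write $x = k/2^l$ for some integer $k$, using the definition of $l$-normal. Then for any $l' \geq l$, I would multiply the numerator and denominator by $2^{l'-l}$ to obtain
\[
x = \frac{k}{2^l} = \frac{k \cdot 2^{l'-l}}{2^{l'}}.
\]
Since $l' - l \geq 0$, the factor $2^{l'-l}$ is a positive integer, so $k \cdot 2^{l'-l}$ is an integer, which is exactly what is needed to conclude that $x$ is $l'$-normal.

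There is no real obstacle here — the observation is just recording the monotonicity of the ``normality'' property in the index $l$, which will be used freely in later arguments about blocks (e.g., to say that if a block length is $l$-normal then it is also $(l+1)$-normal when one refines the dyadic scale). The only thing to be careful about is the direction of the inequality: normality at level $l$ is a stronger condition when $l$ is small (coarse dyadic grid), and it automatically propagates to larger $l$ (finer grids), which is exactly what the short calculation above shows.
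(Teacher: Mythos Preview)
Your proposal is correct and matches the paper's treatment: the observation is stated with an immediate \qed\ and no written argument, since it is exactly the trivial dyadic-rational computation you gave.
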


\begin{observation} \label{obs:abnormality_point}
If $i\neq\infty$ is the abnormality point of a schedule $\cP$ with events $\vect{e}$, then $e_i$ is $(i-1)$-normal.
\qed
\end{observation}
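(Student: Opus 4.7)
The plan is to unwind the definitions and observe that $e_i$ is a sum of block lengths over the blocks $1,\dots,i-1$ that \emph{do} satisfy the normality condition, and then apply Observation~\ref{obs:normal_numbers} to pass to a common granularity.

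First I would note that, since $i$ is the abnormality point (and $i\neq\infty$), every block of $\cP$ with index strictly less than $i$ is normal at its own index, i.e., for each $j\in\{1,\dots,i-1\}$, the $j$-th block is $j$-normal. In particular its length $e_{j+1}-e_j$ is $j$-normal, meaning $e_{j+1}-e_j = l_j/2^j$ for some integer $l_j$.

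Next, since $j\leq i-1$, Observation~\ref{obs:normal_numbers} implies that each such length $e_{j+1}-e_j$ is also $(i-1)$-normal; explicitly, $l_j/2^j = (l_j\cdot 2^{i-1-j})/2^{i-1}$. Because $e_1=0$, we have the telescoping identity
\[
e_i \;=\; \sum_{j=1}^{i-1}(e_{j+1}-e_j),
\]
which expresses $e_i$ as a finite sum of $(i-1)$-normal numbers. The set of $(i-1)$-normal numbers (integer multiples of $2^{-(i-1)}$) is closed under addition, so $e_i$ itself is $(i-1)$-normal, as required.

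No real obstacle is anticipated here; the argument is essentially a bookkeeping check that the minimality of the abnormality point leaves all earlier block lengths on a finer (or equal) dyadic grid than $2^{-(i-1)}$, so that their sum cannot have finer denominators than $2^{i-1}$.
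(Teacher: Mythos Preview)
Your argument is correct and is exactly the intended justification: the paper records this observation with an immediate \qed\ and no written proof, treating it as a direct consequence of the definitions together with Observation~\ref{obs:normal_numbers}, which is precisely what you have spelled out.
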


According to our definition, if an $i$-th block of a schedule $\cP$ is $i$-normal, then $\xi_i(a)$ is $(i+1)$-normal for each
$a\in\jobs$, however, this does not necessarily imply that job preemptions occur at $(i+1)$-normal time points in the $i$-th block
of $\cP$. Such job preemptions can possibly take place only strictly between $e_i$ and $e_{i+1}$ since both $e_i$ and $e_{i+1}$
are $i$-normal by assumption. By the next observation, we may assume without loss of generality that $i$-normal blocks have job
preemptions only at $(i+1)$-normal time points.

\begin{observation} \label{obs:normal_schedule}
If the $i$-th block of a schedule $\cP$ is $i$-normal, then there exists a schedule $\cP'$ with the same events, partition and total completion time as that of $\cP$, in which each preemption, resumption, job start and job completion in the $i$-th block occurs at $(i+1)$-normal time point.
\end{observation}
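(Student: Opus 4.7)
The plan is to rebuild the portion of $\cP$ inside the block $[e_i,e_{i+1}]$ via McNaughton's wrap-around rule on the grid of multiples of $1/2^{i+1}$, keeping the schedule outside the block identical to $\cP$. Since the $i$-th block is $i$-normal, $L=e_{i+1}-e_i$ is a multiple of $1/2^i = 2\cdot 2^{-(i+1)}$ and each $\xi_i(a)$ is a multiple of $2^{-(i+1)}$. In every context in which this observation is invoked, the preceding blocks are normal, so by Observation~\ref{obs:normal_numbers} $e_i$ is $(i-1)$-normal and hence $(i+1)$-normal. Consequently every cut point that I will introduce inside the block will be of the form $e_i+k\cdot 2^{-(i+1)}$ for some integer $k$, and hence $(i+1)$-normal.

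For feasibility I invoke Lemma~\ref{lem:P_to_xi}: the jobs in $\jobs(\xi_i)$ have release dates at most $e_i$, are pairwise independent, and satisfy $\xi_i(a)\le L$ together with $\sum_{a\in\jobs(\xi_i)}\xi_i(a)\le 2L$. These are precisely the prerequisites for McNaughton's wrap-around rule. I order the pieces of $\jobs(\xi_i)$ inside the block giving top priority to the (at most two) jobs that are already executing just after $e_i$ in $\cP$, placing them abutting $e_i$ on the two processors, and bottom priority to the jobs with $\tau_{\cP}(a)=i$, placing them abutting $e_{i+1}$; all remaining pieces are packed consecutively in between, spilling from processor~1 onto processor~2 via the standard wrap. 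The bound $\xi_i(a)\le L$ ensures that no job is assigned to both processors simultaneously, so by Lemma~\ref{lem:xi_to_P} the resulting arrangement combined with the unchanged schedule outside the block is a valid schedule $\cP'$.

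It remains to verify that $\cP'$ and $\cP$ share their events, partition and total completion time. The partition is preserved by construction because each $a\in\jobs(\xi_i)$ receives exactly $\xi_i(a)$ units of work inside $[e_i,e_{i+1}]$ in $\cP'$ and nothing is altered outside the block. The ordering above guarantees that every job whose $\cP$-start is $e_i$ still starts at $e_i$ in $\cP'$, every job whose $\cP$-completion is $e_{i+1}$ still completes at $e_{i+1}$, and every other job of $\jobs(\xi_i)$ begins before $e_i$ and ends after $e_{i+1}$ without its $\cP$-start or $\cP$-completion being touched. Hence the event vector $\vect{e}$ and all individual completion times are preserved.

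The main obstacle is precisely the orchestration of the packing so as to avoid spawning new starts or completions inside $(e_i,e_{i+1})$; the key reason this is possible at all is that at each of $e_i$ and $e_{i+1}$ at most two jobs need to be placed at the boundary (because only two processors are active), which matches the two-processor capacity available at the endpoint of the wrap. Once the ordering respects those boundary-continuation constraints, the $(i+1)$-normality of every preemption and every resumption inside the block is an immediate consequence of the fact that all cut points lie on the grid $e_i+k\cdot 2^{-(i+1)}$.
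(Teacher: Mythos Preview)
Your proof is correct and follows the same approach as the paper, whose proof is the single line ``It follows from the McNaughton's algorithm.'' You supply the details the paper omits---placing boundary jobs at $e_i$ and $e_{i+1}$ so that the event set is preserved, and noting that the $(i+1)$-normality of the cut points requires $e_i$ itself to be $(i+1)$-normal (which the paper tacitly assumes from context, just as you do).
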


\begin{proof}
It follows from the McNaughton's algorithm.
\end{proof}

Let us introduce a partial order, denoted by $\trianglelefteq$, to the set of all schedules. For schedules $\cP$ and $\cP'$,
we write $\cP\trianglelefteq\cP'$ if and only if one of the following holds:
\begin{itemize}
 \item $\cP = \cP'$;
 \item $\cP'$ is optimal, while $\cP$ is not;
 \item Both $\cP$ and $\cP'$ are optimal and, additionally, $\cP'$ is normal while $\cP$ is not;
 \item Both $\cP$ and $\cP'$ are optimal, but neither is normal. Additionally $i\leq i'$, where $i$ and $i'$ are the
       abnormality points of $\cP$ and $\cP'$, respectively.
\end{itemize}
Any element in $\jobs$ that is maximal under the partial order is called a \emph{maximal} schedule.

\begin{lemma} \label{lem:remaining_part_i-normal}
Let schedule $(\cP,\vect{e},\vect{\xi})$ have abnormality point $i\neq\infty$.
For each $a\in\jobs$ and for each $i'\leq i$, $\sum_{j=i'}^{q-1}\xi_j(a)$ is $i'$-normal.
\end{lemma}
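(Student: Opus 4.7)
The plan is to use the fact that the total processing time of each job equals $1$ and to rewrite the ``tail sum'' as one minus a ``head sum'' whose normality we already control.

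First, I would recall that jobs are UET, so $\sum_{j=1}^{q-1}\xi_j(a)=1$ for every $a\in\jobs$. Hence
\[
\sum_{j=i'}^{q-1}\xi_j(a) \;=\; 1 \;-\; \sum_{j=1}^{i'-1}\xi_j(a),
\]
and since $1$ is $0$-normal (and therefore $i'$-normal by Observation~\ref{obs:normal_numbers}), it suffices to prove that the head sum $\sum_{j=1}^{i'-1}\xi_j(a)$ is $i'$-normal.

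Next, I would use that $i$ is the abnormality point of $\cP$, so every block $j$ with $j<i$ is $j$-normal; equivalently, $\xi_j(a)$ is $(j+1)$-normal for every $j<i$. Fix $i'\le i$ and take any $j\in\{1,\ldots,i'-1\}$; then $j<i'\le i$, so $\xi_j(a)$ is $(j+1)$-normal, and since $j+1\le i'$, Observation~\ref{obs:normal_numbers} upgrades this to $i'$-normality. Finally, the set of $i'$-normal numbers is closed under finite sums (they are integer multiples of $2^{-i'}$), so $\sum_{j=1}^{i'-1}\xi_j(a)$ is $i'$-normal, and by the displayed identity so is $\sum_{j=i'}^{q-1}\xi_j(a)$.

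There is essentially no obstacle: the entire argument is bookkeeping on dyadic rationals, leaning on the definition of the abnormality point (which guarantees normality of all strictly earlier blocks), Observation~\ref{obs:normal_numbers}, and closure of $i'$-normal numbers under subtraction and summation. The only subtle point to state clearly is the boundary case $i'=i$ itself: even though block $i$ may fail to be $i$-normal, the claim uses only blocks with indices strictly less than $i'=i$, so no appeal is made to any property of the (possibly abnormal) $i$-th block.
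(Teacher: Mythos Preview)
Your proof is correct and follows essentially the same approach as the paper: write the tail sum as $1-\sum_{j=1}^{i'-1}\xi_j(a)$ and use that each $\xi_j(a)$ with $j<i'\le i$ is $(j+1)$-normal, hence $i'$-normal by Observation~\ref{obs:normal_numbers}. Your version is in fact slightly more explicit than the paper's in tracking why the head sum is $i'$-normal rather than merely $i$-normal.
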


\begin{proof}
Let $a\in\jobs$ be selected arbitrarily. Note that
\[
\sum_{j=i'}^{q-1}\xi_j(a)=1-\sum_{j=1}^{i'-1}\xi_j(a).
\]
Since $i\geq i'$ is the abnormality point of $\cP$, Observation~\ref{obs:normal_numbers} implies that $\xi_j(a)$ is $i$-normal
for each $j\in\{1,\ldots,i'-1\}$.
\end{proof}

The next lemma, informally speaking, allows us to further consider only those maximal schedules with abnormality point
$i\neq\infty$ in which the abnormality of the $i$-th block is due to the length of the jobs in this block, and not due to the
length of this block.

\begin{lemma} \label{lem:all_e_i's_are_normal}
Let $\cP$ be a maximal schedule with the events $e_1,\ldots,e_q$.
If $i\neq\infty$ is the abnormality point of $\cP$, then there exists a maximal schedule $\cP'$ with abnormality point $i$ such that $e_1,\ldots,e_i,e_{i+1}',\ldots,e_{q'}'$ are its events and $e_{i+1}'-e_i$ is $i$-normal.
\end{lemma}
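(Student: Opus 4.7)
The plan is to round the length $d := e_{i+1} - e_i$ of block $i$ down to the nearest $i$-normal value and construct $\cP'$ by shifting the displaced work into the start of block $i+1$, while keeping the first $i$ events of $\cP$ intact. The maximality of $\cP$ will then force $\cP'$ to have abnormality point exactly $i$.

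If $d$ is already $i$-normal, take $\cP' := \cP$. Otherwise, by Observation~\ref{obs:abnormality_point} and Observation~\ref{obs:normal_numbers}, $e_i$ is $i$-normal; let $\ell$ be the largest $i$-normal value strictly less than $d$ and set $\delta := d - \ell \in (0,1/2^i)$. (The degenerate case $\ell = 0$, that is, $d < 1/2^i$, requires rounding upward to $\ell + 1/2^i$ instead; the construction is symmetric and we sketch only the downward direction below.) Declare $e_{i+1}' := e_i + \ell$.

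By Lemma~\ref{lem:one_dominates}, $|\jobs(\xi_i)| \leq 3$ and block $i$ of $\cP$ admits a spanning job $a$. We build $\cP'$ with partition $\vect{\xi}'$ equal to $\vect{\xi}$ on every block $j \notin \{i,i+1\}$. On the shortened block $i$ of $\cP'$ we set $\xi_i'(a) := \ell$, keeping $a$ spanning, and move $a$'s surplus $\delta$ into block $i+1$. For any other $b \in \jobs(\xi_i)$ with $\xi_i(b) > \ell$, we likewise shift the excess $\xi_i(b) - \ell \leq \delta$ into block $i+1$, leaving all remaining partition values unchanged. The total work newly placed into block $i+1$ of $\cP'$ is at most $2\delta$, which exactly matches the $2\delta$ of additional capacity that block $i+1$ gained from the leftward shift of the boundary. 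The per-job capacity constraint, together with the release-date and precedence conditions of Lemma~\ref{lem:P_to_xi}, carry over from $\cP$; hence by Lemma~\ref{lem:xi_to_P} a valid schedule $\cP'$ with this partition exists. We position each relocated piece so that every job $c$ with $\tau_{\cP}(c) = i$ still completes at $e_{i+1}$ in $\cP'$, by placing its $\delta$-piece in the interval $[e_{i+1}-\delta, e_{i+1}]$ on an appropriate machine, and filling in the rest of block $i+1$ via McNaughton's rule.

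Completion times are thereby preserved, so $\cP'$ is optimal. Blocks $1,\ldots,i-1$ of $\cP'$ coincide with those of $\cP$ and are $j$-normal, making the abnormality point of $\cP'$ at least $i$; the maximality of $\cP$ excludes, via the partial order $\trianglelefteq$, any optimal schedule with abnormality point strictly greater than $i$, so this point equals $i$, and an identical argument shows $\cP'$ is itself maximal. The main obstacle is the degenerate branch $\ell = 0$: there block $i$ is already shorter than $1/2^i$ and must be enlarged by pulling work backward from block $i+1$ into block $i$ without inflating any completion time, which needs a symmetric but independent case analysis on which job in $\jobs(\xi_{i+1})$ can be advanced without violating a release date or a precedence constraint incident to a successor that begins at $e_{i+1}$ in $\cP$.
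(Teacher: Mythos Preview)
Your construction has a fundamental gap: declaring $e_{i+1}' := e_i + \ell$ does not make $e_{i+1}'$ an event of $\cP'$. Events are, by definition, start or completion times of jobs; you must arrange for some job to start or complete at $e_{i+1}'$. Your shifting of excess work from block $i$ into block $i+1$ does not accomplish this. Indeed, when $d$ is not $i$-normal, Lemma~\ref{lem:how_jobs_finish} together with Lemma~\ref{lem:remaining_part_i-normal} forces $\tau_{\cP}(c)\neq i$ for every job $c$ (if some $c$ completed at $e_{i+1}$ then $\xi_i(c)=d$ would be $i$-normal). So your sentence about ``every job $c$ with $\tau_{\cP}(c)=i$'' is vacuous; the event $e_{i+1}$ is the \emph{start} of some job $a$, and your construction leaves that start time untouched. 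The resulting $\cP'$ has the same events as $\cP$, and nothing has been gained.

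The paper's proof handles this correctly: in its Case~1 it performs a cyclic shift that moves the start of $a$ from $e_{i+1}$ down to the largest $i$-normal point $x\in(e_i,e_{i+1}]$, after verifying $r(a)\leq x$ using the integrality of $r(a)$. Your ``degenerate case'' $\ell=0$ is not symmetric to the main case at all; it is the paper's Case~2, and there the argument is entirely different: one shows by repeated application of Lemma~\ref{lem:at_most_two_start} that either the situation reduces to Case~1 (after swapping two adjacent blocks) or an outright contradiction with optimality arises. Rounding upward, as you suggest, would require pulling a job backward across $e_{i+1}$ without violating release dates or precedence, and you give no mechanism for this.
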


\begin{proof}
If $i=\tau(a)$ for some $a\in\jobs$, then by Lemma~\ref{lem:how_jobs_finish}, $e_{i+1}=e_i+\xi_i(a)$.
By Lemma~\ref{lem:remaining_part_i-normal}, $\xi_i(a)$ is $i$-normal. Thus, $\cP'=\cP$ is the required schedule,
which proves the lemma.
Hence, $i\neq\tau(a)$ for each $a\in\jobs$, i.e., no job completes at $e_{i+1}$.
Note that $e_{i+1}=\startTime{\cP}{a}$ for some $a\in\jobs$.
Suppose for a contradiction that $e_{i+1}-e_i$ is not $i$-normal.
Thus, $e_{i+1}$ is not $i$-normal.
Lemma~\ref{lem:no_idle-time_between_preemptions} implies that there is no idle time in the $i$-th block of $\cP$.
Thus, $|\jobs(\xi_i)|\geq 2$ and therefore there exists $d\in\jobs(\xi_i)$ that is non-spanning in block $i+1$ because
$a$ starts at $e_{i+1}$.

\paragraph{Case 1:}  There is an $i$-normal number in $(e_i,e_{i+1}]$.
  Let $x$ be the maximal $i$-normal number in $(e_i,e_{i+1}]$.
  Then, $r(a)\leq x$ because there is no $i$-normal number in $(x,e_{i+1}]$ and $r(a)\leq e_{i+1}$ is $i$-normal by
  Observation~\ref{obs:normal_numbers}. Let
  \[
  0<\varepsilon\leq\min\{\xi_{i+1}(a),e_{i+1}-x,\xi_i(d),e_{i+2}-e_{i+1}-\xi_{i+1}(d)\}.
  \]
  No job completes at $e_{i+1}$ and therefore the jobs in $\jobs(\xi_i)\cup\jobs(\xi_{i+1})$ are independent.
  Thus, by Lemma~\ref{lem:xi_to_P}, there exists a schedule $\cP'$ with events $\vect{e}'$ and partition $\vect{\xi}'$, where
  \[
  (\vect{e}',\vect{\xi}')=\cyclicshift{ \vect{e},\vect{\xi},\varepsilon,(i\cyclic{d} i+1\cyclic{a} i) }.
  \]
  Moreover, due to the McNaughton's rule, one can assume that $\startTime{\cP'}{a}=x$.
  By Observation~\ref{obs:abnormality_point}, $e_i$ is $(i-1)$-normal and hence, by Observation~\ref{obs:normal_numbers},
  $x-e_i$ is $i$-normal.   Since the first $i-1$ blocks are identical in $\cP$ and $\cP'$, and
  $e'_{i+1}=x$, $\cP'$ is the desired schedule, which completes the proof in this case.

\paragraph{Case 2:} There is no $i$-normal number in $(e_i,e_{i+1}]$.
  By Observation~\ref{obs:normal_numbers}, there is no $(i-1)$-normal number in $(e_i,e_{i+1}]$.
  Let $x>e_i$ be the minimum $(i-1)$-normal number.
  Since $i+1<q$, more than one block intersects $(e_i,x)$.

  Suppose first that exactly two blocks intersect $(e_i,x)$, and there exists a job $b$ such that $\xi_i(b)+\xi_{i+1}(b)=e_{i+2}-e_i$.
  One of the two blocks is of length at least $(x-e_i)/2$.
  By Observation~\ref{obs:abnormality_point}, $e_i+(x-e_i)/2$ is $i$-normal.
  Hence, due to the condition in Case~2, this must be the $(i+1)$-st block.
  However, the schedule with events $(e_1,\ldots,e_{i},e_i+e_{i+2}-e_{i+1},e_{i+2},\ldots,e_q)$ and partition
  $(\xi_1,\ldots,\xi_i,\xi_{i+2},\xi_{i+1},\xi_{i+3},\ldots,\xi_{q-1})$ would satisfy the assumption in Case~1.
  This allows us to construct the desired schedule $\cP'$ as in Case~1.

  Suppose now that exactly two blocks intersect $(e_i,x)$ and there exists no job $b$ such that $\xi_i(b)+\xi_{i+1}(b)=e_{i+2}-e_i$.
  Lemma~\ref{lem:at_most_two_start} applied to $I=[e_i,e_{i+2}]$ gives a contradiction.
  Observe that the corresponding set $\{a,b,d\}\subseteq B$ in Lemma~\ref{lem:at_most_two_start} is of size at least $3$.
  Moreover, since no job completes in $(e_i,e_{i+2})$, $B$ contains only independent jobs.

  Finally, suppose that more than two blocks intersect $(e_i,x)$. Thus, the job $a$ does not complete before $x$.
  Moreover, no job completes at $e_{i+2}$ because otherwise either $\cP$ is not optimal or $e_{i+2}$ is $i$-normal by
  Lemma~\ref{lem:remaining_part_i-normal}.
  Since $e_{i+2}<x$ we get a contradiction in either case.
  Therefore, there is a job $a'$ that starts at $e_{i+2}$.
  Clearly, $a'$ does not complete before $x$.
  Thus, Lemma~\ref{lem:at_most_two_start} for $I=[e_i,\min\{x,e_{i+3}\}]$ again gives a contradiction.
  Observe that the corresponding set $\{a,a',d\}\subseteq B$ in Lemma~\ref{lem:at_most_two_start} is of size at least $3$.
  Moreover, since no job completes in $(e_i,e_{i+3})$, $B$ contains only independent jobs.
\end{proof}

Given schedule $(\cP, \vect{e}, \vect{\xi})$, for $i\in\{1,\ldots,q-1\}$ define
\[
A_i(\cP)=\left\{a\in\jobs \st \xi_i(a) \textup{ is not $(i+1)$-normal}\right\}.
\]

\begin{lemma} \label{lem:at_least_two_abnormal}
Let $\cP$ be a maximal schedule.
If $i\neq\infty$ is the abnormality point of $\cP$, then $|A_i(\cP)|=2$ and $|\jobs(\xi_i)|= 3$.
\end{lemma}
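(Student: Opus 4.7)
The plan is first to invoke Lemma~\ref{lem:all_e_i's_are_normal}, which lets me replace $\cP$ (if necessary) by a maximal schedule with the same abnormality point $i$ whose $i$-th block has $i$-normal length; accordingly, I assume $e_{i+1}-e_i$ is $i$-normal. The abnormality of block $i$ must then originate in a job whose $\xi_i$-value is not $(i+1)$-normal, so $A_i(\cP)\neq\emptyset$. Lemma~\ref{lem:one_dominates}(i) supplies a spanning job $c\in\jobs(\xi_i)$, and $\xi_i(c)=e_{i+1}-e_i$ is $i$-normal, hence $(i+1)$-normal by Observation~\ref{obs:normal_numbers}; thus $c\notin A_i(\cP)$.

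Next I would force $|\jobs(\xi_i)|=3$, starting from the upper bound $|\jobs(\xi_i)|\leq 3$ of Lemma~\ref{lem:one_dominates}(ii). If $|\jobs(\xi_i)|=1$, then $\jobs(\xi_i)=\{c\}$ and $A_i(\cP)=\emptyset$, contradiction. If $\jobs(\xi_i)=\{c,b\}$, I split on whether $b$ spans: if it does, then again $A_i(\cP)=\emptyset$; if not, the contrapositive of Lemma~\ref{lem:how_jobs_finish} gives $\complTime{\cP}{b}\neq e_{i+1}$, hence $\complTime{\cP}{b}>e_{i+1}$, so $b$ straddles the block with $\startTime{\cP}{b}\leq e_i$, and Lemma~\ref{lem:no_idle-time_between_preemptions} eliminates idle time in block $i$; combined with $\xi_i(c)=e_{i+1}-e_i$ this forces $\xi_i(b)=e_{i+1}-e_i$, so $b$ spans after all, a contradiction.

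The last step, with $\jobs(\xi_i)=\{c,a,b\}$, is to identify $A_i(\cP)=\{a,b\}$. First, $c$ is the unique spanning job, since two spanning jobs would fully use both processors and leave no room for a third; by Lemma~\ref{lem:one_dominates}(ii) some job in $\jobs(\xi_i)$ completes at $e_{i+1}$ and by Lemma~\ref{lem:how_jobs_finish} it must be spanning, so it is $c$. The jobs $a,b$ are then non-spanning, satisfy $\startTime{\cP}{\cdot}\leq e_i$ and $\complTime{\cP}{\cdot}>e_{i+1}$, and Lemma~\ref{lem:no_idle-time_between_preemptions} again gives no idle time in block $i$; the work balance yields $\xi_i(a)+\xi_i(b)=e_{i+1}-e_i$. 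Since the right-hand side is $(i+1)$-normal and the $(i+1)$-normal numbers are closed under subtraction, $\xi_i(a)$ is $(i+1)$-normal if and only if $\xi_i(b)$ is; because $A_i(\cP)\neq\emptyset$, both $a,b$ lie in $A_i(\cP)$, giving $|A_i(\cP)|=2$.

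The delicate point I expect is the initial reduction: Lemma~\ref{lem:all_e_i's_are_normal} furnishes a (possibly different) maximal schedule rather than modifying $\cP$ itself, so strictly one should either read the present lemma as referring to such a representative, or handle the case $e_{i+1}-e_i$ not $i$-normal directly. In the latter case, any job completing at a non-$i$-normal $e_{i+1}$ would contradict Lemmas~\ref{lem:how_jobs_finish} and~\ref{lem:remaining_part_i-normal}, so $e_{i+1}$ is a pure start event; an analogous structural analysis, combined with the observation that an entirely idle block $i$ is incompatible with optimality unless $e_{i+1}$ is an integer (hence $i$-normal), then yields the same conclusion.
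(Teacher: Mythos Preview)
Your proposal is correct and follows essentially the same approach as the paper: reduce via Lemma~\ref{lem:all_e_i's_are_normal} to the case where $e_{i+1}-e_i$ is $i$-normal, use Lemma~\ref{lem:one_dominates} to bound $|\jobs(\xi_i)|$ and find a spanning job, and use Lemma~\ref{lem:no_idle-time_between_preemptions} (plus the work-balance identity $\xi_i(a)+\xi_i(b)=e_{i+1}-e_i$) to rule out $|\jobs(\xi_i)|\leq 2$ and to force both non-spanning jobs into $A_i(\cP)$. Your observation about the reduction is apt: the paper's proof also simply writes ``By Lemma~\ref{lem:all_e_i's_are_normal}, $e_{i+1}-e_i$ is $i$-normal'' without addressing that the lemma formally produces a new maximal schedule $\cP'$, so the statement is implicitly read for such a representative.
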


\begin{proof}
Let $\vect{e}$ and $\vect{\xi}$ be the events and the partition of $\cP$, respectively.
By Lemma~\ref{lem:all_e_i's_are_normal}, $e_{i+1}-e_i$ is $i$-normal.
We have $|\jobs(\xi_i)|>2$ because otherwise by Lemmas~\ref{lem:no_idle-time_between_preemptions} and \ref{lem:how_jobs_finish},
$\xi_i(a)=e_{i+1}-e_i$ for each $a\in\jobs(\xi_i)$, which would contradict the fact that $e_{i+1}-e_i$ is $i$-normal.
Lemma~\ref{lem:one_dominates} implies that $|\jobs(\xi_i)|=3$ and there exists $a\in\jobs(\xi_i)$ such that $\xi_i(a)=e_{i+1}-e_i$.
Thus, $a\notin A_i(\cP)$, and we have that $|A_i(\cP)|\leq 2$ because $A_i(\cP)\subseteq\jobs(\xi_i)$.
Also, $|A_i(\cP)|>1$ by Lemma~\ref{lem:no_idle-time_between_preemptions}.
\end{proof}

We finish this section with a remark that directly follows from the definition of normality.
The remark allows us to conclude that the abnormality point of a schedule does not decrease after a certain type of schedule modifications.

\begin{lemma} \label{lem:abnormality_preserved}
Let $\cP$ be a schedule and $\varepsilon=l'/2^{l}$ for some integers $l$ and $l'$.
If $\cP'$ is a schedule that is obtained from $\cP$ by a sequence of modifications, each modification being a removal of a piece
of length that is a multiple of $\varepsilon$ from a $j$-th block and an insertion of this piece into a $j'$-th block, where
$\min\{j,j'\}\geq l-1$, then the abnormality point of $\cP'$ is not smaller than that of $\cP$. \qed
\end{lemma}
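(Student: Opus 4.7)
The plan is to reduce, by induction on the length of the sequence of modifications, to a single modification: remove a piece of length $c\varepsilon$ (where $c$ is a positive integer) from the $j$-th block and insert it into the $j'$-th block, with $\min\{j,j'\}\geq l-1$. Let $i$ denote the abnormality point of $\cP$ (possibly $i=\infty$). It suffices to verify that each block $k\in\{1,\ldots,i-1\}$ of $\cP'$ is still $k$-normal, whence the abnormality point of $\cP'$ is at least $i$.

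First I would split on whether the affected blocks are ``early'' or ``late''. If $i\leq l-1$, every block of $\cP$ with index strictly less than $i$ has index at most $l-2$, and none of these is touched by the modification, since both $j$ and $j'$ are at least $l-1$. Hence blocks $1,\ldots,i-1$ of $\cP'$ coincide with the corresponding blocks of $\cP$ and their $k$-normality is inherited. I may therefore assume $i>l-1$; again blocks $1,\ldots,l-2$ are untouched and coincide, so I only need to analyze block indices $k$ with $l-1\leq k\leq i-1$.

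For such a $k$, the block length $e_{k+1}-e_k$ is unaltered because the modification merely redistributes work among already-existing blocks without moving their boundary events, so length $k$-normality carries over from $\cP$. The only job-piece value that might change inside block $k$ is $\xi_k(a)$ for the moved job $a$, and it changes by an additive amount of $\pm c\varepsilon$. Since $\varepsilon=l'/2^l$ is $l$-normal, $c\varepsilon$ is $l$-normal as well, and by Observation~\ref{obs:normal_numbers} it is then $(k+1)$-normal for every $k+1\geq l$, i.e.\ for every $k\geq l-1$. Adding or subtracting an $(k+1)$-normal number from the $(k+1)$-normal quantity $\xi_k(a)$ yields another $(k+1)$-normal number, so block $k$ of $\cP'$ remains $k$-normal, as required.

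The main obstacle is the bookkeeping around how a modification interacts with the event vector of the schedule: one has to make sure that a block of $\cP'$ corresponds to a block of $\cP$ in the intended sense, with the boundary events preserved and only the internal placement of pieces being recomputed, exactly as for the cyclic shift in Section~\ref{subsec:events}. Once this convention is in place, the statement really is an arithmetic consequence of Observation~\ref{obs:normal_numbers}, which matches the paper's remark that it ``directly follows from the definition of normality''.
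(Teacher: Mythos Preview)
Your proposal is correct and is precisely the argument the paper leaves implicit: the paper gives no proof at all (the statement is closed with a bare \qed and introduced as a ``remark that directly follows from the definition of normality''). Your reduction to a single modification and the arithmetic observation that a multiple of $\varepsilon=l'/2^{l}$ is $(k+1)$-normal for every $k\geq l-1$ (via Observation~\ref{obs:normal_numbers}) is exactly the intended content, and your final paragraph correctly flags the only real subtlety---the convention that the block boundaries are read off $\cP$'s event vector---which the paper also takes for granted.
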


\section{$\A$-configurations}
\label{subsec:A-configurations}

In this section we first define a particular structure that may appear in a schedule; we refer to this structure as an \emph{$\A$-configuration}.
Our proof that there exists a normal optimal schedule for each $\jobs$, given in Section~\ref{subsec:alternating_chains}, relies on the key assumption that there exists a maximal schedule without $\A$-configurations, or \emph{$\A$-free} maximal schedule, for each set of jobs $\jobs$. Therefore,
the main goal of this section is to prove that an $\A$-free maximal schedule exists for each $\jobs$. Our proof is by contradiction: informally speaking, we take a maximal schedule having an $\A$-configuration as early as possible, and, after some schedule transformations, we either obtain a new schedule with smaller total completion time or with an earlier $\A$-configuration. In the former case we clearly obtain a contradiction.
In the latter case, a contradiction occurs only if the new schedule is maximal, i.e., its abnormality point is not smaller than that of the initial schedule.
For this reason, while performing the initial schedule transformations we must ensure that they do not change the abnormality point in the latter case.
The proof works for in-trees, however, it does not for general precedence constraints. The question whether there is an $\A$-free maximal schedule for each $\jobs$ and general precedence constraints remains open.

Let $(\cP,\vect{e},\vect{\xi})$ be a schedule. We say that $\cP$ has an \emph{$\A$-configuration of length $\ell$ ($\ell>0$)
starting at $e_j$} if there exist two jobs $a$ and $b$ such that
\begin{itemize}
\item $\complTime{\cP}{a}=e_{j}$ and $\complTime{\cP}{b}=e_{j'}$ for some $j'>j$;
\item $[e_{j}-\ell, e_{j}]$ is a maximal interval where $a$ executes non-preemptively;
\item $b$ executes non-preemptively in $[e_j,e_{j'}]$, and $b$ does not execute in $[e_{j}-\ell, e_{j}]$;
\item $\startTime{\cP}{b}<e_{j}-\ell$;
\item $a$ and each job in $\jobs(\xi_j)\cup\cdots\cup\jobs(\xi_{j'})$ are independent.
\end{itemize}
We also say that the jobs $a$ and $b$ \emph{form} the $\A$-configuration.
See Figure~\ref{fig:def-A-conf} for an exemplary $\A$-configuration.

   \begin{figure}[htb]
    \begin{center}
    \includegraphics[scale=1.0]{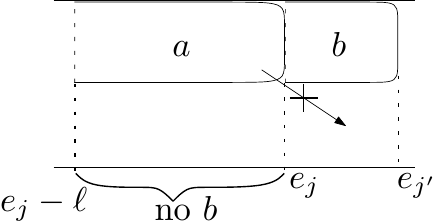}
    \caption{an example of $\A$-configuration}
    \label{fig:def-A-conf}
   \end{center}
    \end{figure}

If no pair of jobs form an $\A$-configuration in $\cP$, then $\cP$ is called \emph{$\A$-free}.
For any time interval $I$, if for any $x\in I\cap\{e_i:i=1,\ldots,q\}$ there is no $\A$-configuration
at $x$ in $\cP$, then $\cP$ is \emph{$\A$-free in $I$}. The main result of this section is the following
proposition.

\begin{proposition} \label{pro:no-A-configurations}
If any maximal schedule has an abnormality point $i\neq\infty$, then there exists an $\A$-free maximal schedule.
\end{proposition}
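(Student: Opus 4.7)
The plan is proof by contradiction. Suppose no $\A$-free maximal schedule exists, so every maximal schedule contains at least one $\A$-configuration. Among all maximal schedules, select one, $(\cP,\vect{e},\vect{\xi})$, for which the earliest $\A$-configuration starts at an event $e_j$ with $j$ as large as possible (such a maximum exists because the number of events of any optimal schedule is bounded). The goal is to build from $\cP$ a new maximal schedule $\cP'$ whose earliest $\A$-configuration starts at an event strictly later than $e_j$ (or has none at all), contradicting this maximality.

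Fix the jobs $a$ and $b$ forming the $\A$-configuration at $e_j$, where $a$ executes non-preemptively on the maximal interval $[e_j-\ell,e_j]$ and $b$ executes non-preemptively on $[e_j,e_{j'}]$ with $\startTime{\cP}{b}<e_j-\ell$ and $b$ absent from $[e_j-\ell,e_j]$. The planned transformation is to swap a suitable amount of $a$'s tail with a corresponding piece of $b$'s head across the divider $e_j$, formalized as a cyclic shift
\[(\vect{e}',\vect{\xi}')=\cyclicshift{\vect{e},\vect{\xi},\varepsilon,(t\cyclic{a} j\cyclic{b} t)}\]
for an appropriate block $t\in\{j,\ldots,j'-1\}$ and step $\varepsilon$ chosen so that $a$ is moved forward past $e_j$ and $b$ is moved backward into $[e_j-\ell,e_j]$. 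Lemma~\ref{lem:xi_to_P} produces the schedule $\cP'$; if $\varepsilon<\ell$, iterate the swap until $a$'s non-preemptive interval ending at $e_j$ is eliminated. Choosing $\varepsilon$ to be a dyadic multiple matching the grid of $\cP$'s abnormality point lets Lemma~\ref{lem:abnormality_preserved} guarantee that the abnormality point does not drop.

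The verification has four parts. \emph{Feasibility} of $\cP'$ follows from $r(a)\leq e_j-\ell$, $r(b)<e_j-\ell$, the fact that $a$ is independent of every job in $\jobs(\xi_j)\cup\cdots\cup\jobs(\xi_{j'})$ (so any successor of $a$ in the in-tree starts at an event $\geq e_{j'+1}$ and $a$'s new pieces after $e_j$ violate no precedence), and the fact that $b$'s predecessors have all completed before $\startTime{\cP}{b}<e_j-\ell$. \emph{Optimality} is preserved because the swap transfers $\varepsilon$ units of work from after $e_j$ (for $b$) to before $e_j$, and $\varepsilon$ units of $a$'s work from before $e_j$ to after $e_j$; the completion-time gain of $b$ exactly offsets the loss of $a$, so $\sum_c\complTime{\cP'}{c}=\sum_c\complTime{\cP}{c}$ (and any sharper decrease would already contradict the optimality of $\cP$). \emph{Maximality} of $\cP'$ is Lemma~\ref{lem:abnormality_preserved} applied to the dyadic step. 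Finally, the \emph{earliest $\A$-configuration in $\cP'$ is strictly later than $e_j$}: by construction $a$ no longer completes at $e_j$, and the prefix of $\cP'$ up to $e_j$ differs from $\cP$ only in that the slot $[e_j-\ell,e_j]$ now contains $b$ rather than $a$. A case analysis on candidate pairs of jobs at events $e_k$ with $k\leq j$ shows that any $\A$-configuration in $\cP'$ at an event $\leq e_j$ would, after undoing the swap, yield an $\A$-configuration in $\cP$ at an event $\leq e_j$ too, contradicting the minimality of $j$ within $\cP$.

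The main obstacle is the last verification: ensuring that no \emph{new} $\A$-configuration emerges at or before $e_j$ in $\cP'$. The move of $a$ past $e_j$ can create a new maximal non-preemptive interval for $a$, and similarly the insertion of $b$ into $[e_j-\ell,e_j]$ can lengthen an existing non-preemptive run of $b$ ending at $e_j$; one must rule out that either of these is paired with some other job to form an $\A$-configuration at an event $\leq e_j$. This is exactly the place where the in-tree hypothesis is used: since each job has at most one successor, the independence condition in the $\A$-configuration definition propagates in a controlled way through the intermediate blocks, forcing any hypothetical new $\A$-configuration to trace back to one already present in $\cP$. The fact that the authors flag failure of the argument for general precedence graphs suggests this in-tree case analysis—most likely split according to whether the two suspect jobs share an ancestor in the tree and whether that ancestor lies in the intermediate block range—will be the technical heart of the proof.
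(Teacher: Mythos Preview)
Your proposal has a genuine gap in the optimality step, and this gap is fatal rather than a detail to be filled in.

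Consider your core transformation: you move $\varepsilon$ units of $a$ from just before $e_j$ to a block $t\in\{j,\ldots,j'-1\}$, and $\varepsilon$ units of $b$ from block $t$ back into $[e_j-\ell,e_j]$. In $\cP$ the job $a$ \emph{completes} at $e_j$; after the shift, $a$ now has a piece in block $t\geq j$, so $\complTime{\cP'}{a}\geq e_{j+1}>e_j$. On the other hand $b$ still has its last piece in block $j'-1$ (you only removed a piece from block $t\leq j'-1$ and $b$ was already spanning there), so $\complTime{\cP'}{b}=e_{j'}=\complTime{\cP}{b}$. There is no ``completion-time gain of $b$'' to offset anything: the total completion time strictly increases, and $\cP'$ is not optimal, hence not maximal. (Also note your cyclic shift $(t\cyclic{a} j\cyclic{b} t)$ with $t\geq j$ is not even well-formed, since it requires $\xi_t(a)\geq\varepsilon$ while $\xi_t(a)=0$ for all $t\geq j$.)

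The paper's proof is substantially more elaborate precisely because no local two-job swap at $e_j$ can be made cost-neutral. Instead, the paper picks the \emph{opposite} extremal (the maximal schedule whose earliest $\A$-configuration is as \emph{early} as possible, the ``$\A$-maximal'' schedule), and the contradiction is obtained by producing an even earlier $\A$-configuration or a strictly better schedule. The cost-neutral move is an $(\varepsilon,d)$-exchange of two entire \emph{chains} $(a_1,\ldots,a_l=a)$ and $(b_1,\ldots,b_{l'}=b)$ rooted at a common time $t$: every $a_k$ advances by $\varepsilon$, every $b_k$ (and possibly an auxiliary job $d$) is delayed by $\varepsilon$, and the chain lengths are arranged so that the number of advanced jobs is at least the number of delayed jobs. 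Establishing this balance requires the structural Lemmas~\ref{lem:acc1d}--\ref{lem:at-start-of-b} and a case split on whether $t_a\geq t_b$ or $t_a<t_b$. The in-tree hypothesis is invoked at exactly one point (Case~2a), where after a swap one needs $c'\nprec b_1$; with in-trees $c'\prec a_{l-l'+1}$ already forces $c'\nprec b_1$ since a job has a unique successor. Your sketch of ``whether the two suspect jobs share an ancestor'' does not correspond to how the hypothesis is actually used.
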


We first provide several technical lemmas before presenting our proof of Proposition~\ref{pro:no-A-configurations}.
A schedule $\cP$ of abnormality point $i$ is said to be \emph{$\A$-maximal} if it is maximal and, unless $i=\infty$, one of
the following two statements is true:
\begin{itemize}
 \item $\cP$ is $\A$-free;
 \item any maximal schedule is not $\A$-free, and $\cP$ has the earliest starting $\A$-configuration among maximal schedules.
\end{itemize}

\begin{lemma} \label{lem:start-of-a-and-b}
Let $\cP$ be $\A$-maximal. If $a$ and $b$ form an $\A$-configuration in $\cP$ with $\complTime{\cP}{a}<\complTime{\cP}{b}$,
then $\startTime{\cP}{a}\leq\startTime{\cP}{b}$.
\end{lemma}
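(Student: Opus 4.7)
My plan is to argue by contradiction: assume $s_a:=\startTime{\cP}{a}>\startTime{\cP}{b}=:s_b$, and exhibit a schedule with strictly smaller total completion time than $\cP$. By Lemma~\ref{lem:how_jobs_finish}, $a$ is spanning in block $j-1$, so $a$ occupies one processor non-preemptively throughout $[e_{j-1},e_j]$; combined with the maximality of $[e_j-\ell,e_j]$, this forces $e_j-\ell\leq e_{j-1}$, so in particular $\xi_{j-1}(b)=0$. Let $t$ be the block index with $e_t=s_b$; since $s_a$ is also an event of $\cP$ and $s_a>s_b$, we have $s_a\geq e_{t+1}$, so $\xi_t(a)=0$ while $\xi_t(b)>0$.

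The main transformation I would apply is the cyclic shift
\[
\cyclicshift{\vect{e},\vect{\xi},\varepsilon,(j-1\cyclic{a} t\cyclic{b} j-1)}
\]
for a small $\varepsilon>0$: a piece of $a$ of length $\varepsilon$ is moved from block $j-1$ into block $t$, and a piece of $b$ of length $\varepsilon$ is moved from block $t$ into block $j-1$. The four room-to-remove and room-to-insert conditions hold immediately because $\xi_{j-1}(a)=e_j-e_{j-1}\geq\varepsilon$, $\xi_t(b)\geq\varepsilon$, $\xi_t(a)=0$, and $\xi_{j-1}(b)=0$. Applying the cyclic-shift completion-time rule with $\complTime{\cP}{a}=e_{(j-1)+1}$ and $t<j-1$ yields $\complTime{\cP'}{a}=e_j-\varepsilon$; meanwhile $b$'s latest execution still lies in block $j'-1$ (unchanged by the shift), so $\complTime{\cP'}{b}=e_{j'}$; no other completion time is affected. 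The total completion time therefore drops by $\varepsilon$, contradicting the optimality of $\cP$.

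The main obstacle is to show that $\cP'$ is a valid schedule via Lemma~\ref{lem:xi_to_P}. The conditions for $b$'s new piece in block $j-1$ are automatic: $r(b)\leq s_b<e_{j-1}$; any predecessor of $b$ lying in a block $\geq j-1$ would complete later than $s_b$, violating its precedence over $b$; and $a$, $b$ are independent by the $\A$-configuration. The non-trivial requirements are (i)~$r(a)\leq e_t=s_b$ and (ii)~no predecessor of $a$ lies in $\jobs(\xi_k)$ for any $k\geq t$. When either fails, I would refine $t$ by taking instead the smallest index $t^{\ast}\geq t$ for which $e_{t^{\ast}}$ dominates both $r(a)$ and every completion time of an $a$-predecessor, and repeat the cyclic-shift argument; the in-tree property---each job has at most one successor, and $a$'s successor is forced by the $\A$-configuration into blocks past $j'$---keeps $a$'s predecessor sub-tree sufficiently rigid that this refinement is manageable. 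Should no valid $t^{\ast}<j-1$ with $\xi_{t^{\ast}}(b)>0$ exist, the positive early execution of $b$ together with $a$'s predecessor chain would instead fit the hypothesis of Lemma~\ref{lem:interlace}, producing an interlacing of $a$ and $b$ that contradicts optimality.
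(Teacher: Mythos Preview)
The paper's proof is a single application of the swapping operation (Lemma~\ref{lem:swapping}), but with the roles assigned opposite to the direction you chose: the lemma's $a$ is taken to be your $b$ and the lemma's $a'$ is your $a$. That swap moves pieces of $a$ \emph{forward} into $b$'s final block and moves $b$ \emph{backward} into blocks currently occupied by $a$. The point is that both feasibility checks are then free: $a$ is allowed in the forward blocks because the $\A$-configuration already stipulates that $a$ is independent of every job in $\jobs(\xi_j)\cup\cdots\cup\jobs(\xi_{j'})$, and $b$ is allowed in the backward blocks because $\startTime{\cP}{b}<\startTime{\cP}{a}$ forces $r(b)$ and every predecessor of $b$ to lie before those blocks. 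The strict-decrease clause of Lemma~\ref{lem:swapping} is triggered by $\startTime{\cP}{b}<\startTime{\cP}{a}$ together with $\xi_{j-1}(b)=0$.

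Your cyclic shift instead pushes $a$ \emph{backward} into the block where $b$ starts, and this is precisely what manufactures the two obligations you then cannot discharge, namely $r(a)\le e_t$ and the absence of $a$-predecessors in blocks $\ge t$. Your fallback to Lemma~\ref{lem:interlace} does not rescue the argument, because that lemma requires $a$ to be independent of all jobs in $\jobs(\xi_t)\cup\cdots\cup\jobs(\xi_{\tau(a)})$---the very same obstruction repackaged. Concretely, suppose some predecessor $c$ of $a$ satisfies $\complTime{\cP}{c}=e_{j-1}$ (forcing $\startTime{\cP}{a}=e_{j-1}$); then no block $t^{\ast}<j-1$ can legally receive $a$, and for every $t<j-1$ the job $c\in\jobs(\xi_{j-2})$ with $c\prec a$ defeats the independence hypothesis of the interlacing lemma. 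In that same scenario the paper's forward swap goes through unobstructed, since moving $a$ into $[e_j,e_{j'}]$ uses only the independence clause already provided by the $\A$-configuration. The fix, in short, is to swap in the other direction.
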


\begin{proof}
Suppose for a contradiction that $\startTime{\cP}{a}>\startTime{\cP}{b}$. Then, swapping jobs $a$ and $b$ in $\cP$ produces, by Lemma~\ref{lem:swapping}, a schedule with smaller total completion time than that of $\cP$--- a contradiction.
\end{proof}

The first of the following two lemmas describes a situation that guarantees an $\A$-configuration, while the second
a situation that cannot happen in an $\A$-maximal schedule with an $\A$-configuration.

\begin{lemma}\label{lem:acc1d}
Given maximal schedule $(\cP,\vect{e},\vect{\xi})$, let $e_j$ be an event in $\cP$ and jobs $a$, $c$, $c_1$, and $d$
be such that
\begin{enumerate} [label={\normalfont(\roman*)},leftmargin=*]
 \item\label{it:a1} $\complTime{\cP}{c_1}=e_{j}$, $\complTime{\cP}{c}=e_{j+1}$, and $\complTime{\cP}{d}=e_{j+2}$;
 \item\label{it:a2}  $\jobs(\xi_{j-1})=\{c,c_1\}$, $\jobs(\xi_{j})=\{a,c\}$ and $\jobs(\xi_{j+1})=\{a,d\}$;
 \item\label{it:a3} $\startTime{\cP}{d}<e_{j-1}$.
\end{enumerate}
Then, jobs $c$ and $d$ form an $\A$-configuration. (See Figure~\ref{fig:A-conf-lemmas}(a) for an illustration.)
\end{lemma}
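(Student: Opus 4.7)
My plan is to exhibit $c$ and $d$ in the roles of $a$ and $b$ in the $\A$-configuration definition, with the completion events $e_{j+1}$ and $e_{j+2}$ playing the roles of $e_j$ and $e_{j'}$. Lemma~\ref{lem:how_jobs_finish} supplies the skeleton: $c_1$ spans block $j-1$, $c$ spans block $j$, and $d$ spans block $j+1$. Hence $c$ runs non-preemptively throughout $[e_j,e_{j+1}]$, yielding a maximal non-preemptive interval $[e_{j+1}-\ell,e_{j+1}]$ for $c$ of some length $\ell\ge e_{j+1}-e_j>0$, while $d$ runs non-preemptively throughout $[e_{j+1},e_{j+2}]$.

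To control $\ell$, I would invoke Lemma~\ref{lem:xi_to_P} to rearrange block $j-1$: place $c$ at the beginning, on the processor opposite the spanning $c_1$, leaving idle time before $e_j$; this is feasible since $\xi_{j-1}(c)\le e_j-e_{j-1}$. Whenever $\xi_{j-1}(c)<e_j-e_{j-1}$, this pins $\ell=e_{j+1}-e_j$, so $[e_{j+1}-\ell,e_{j+1}]$ coincides with block $j$. Since $\jobs(\xi_{j-1})=\{c,c_1\}$ and $\jobs(\xi_j)=\{a,c\}$ both exclude $d$, the requirement that $d$ not execute in $[e_{j+1}-\ell,e_{j+1}]$ holds, and $\startTime{\cP}{d}<e_{j-1}<e_j=e_{j+1}-\ell$ handles the start-time clause.

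The remaining clause is independence of $c$ from every job in $\jobs(\xi_{j+1})=\{a,d\}$. For $c$ and $a$: both lie in $\jobs(\xi_j)$, so Lemma~\ref{lem:P_to_xi}\ref{it:P_to_xi:3} applied with $i=j$ in both orderings yields $a\nprec c$ and $c\nprec a$. For $c$ and $d$: the same lemma applied with $c\in\jobs(\xi_j)$ and $d\in\jobs(\xi_{j+1})$ gives $d\nprec c$; and since $\startTime{\cP}{d}<e_{j-1}$ forces $d\in\jobs(\xi_k)$ for some $k\le j-2<j-1$, pairing this with $c\in\jobs(\xi_{j-1})$ gives $c\nprec d$.

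The main obstacle lies in the boundary case $\xi_{j-1}(c)=e_j-e_{j-1}$, where $c$ spans block $j-1$ and no rearrangement within that block can break its run at $e_j$, so $\ell$ extends across $e_{j-1}$. Handling this requires iterating the rearrangement downward through the earlier blocks $k<j-1$ in which $c$ executes, placing $c$ at the left end of each non-spanning such block, to ensure that the leftmost endpoint $e_{j+1}-\ell$ of the maximal interval lies strictly to the right of $\startTime{\cP}{d}$ and inside a consecutive run of blocks from which $d$ is absent.
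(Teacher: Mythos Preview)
Your setup is correct and your verification of the independence clause is clean. However, the case split you propose is misleading: the ``easy case'' $\xi_{j-1}(c)<e_j-e_{j-1}$ never occurs. Since $\jobs(\xi_{j-1})=\{c,c_1\}$ and $c_1$ spans block $j-1$ by Lemma~\ref{lem:how_jobs_finish}, if $c$ were non-spanning there then Lemma~\ref{lem:no_idle-time_between_preemptions} (applied to $c$, which starts at or before $e_{j-1}$ and completes at $e_{j+1}$) would force no idle time in block $j-1$, yet the two jobs together would occupy strictly less than $2(e_j-e_{j-1})$. So the entire content of the lemma lives in what you call the ``boundary case,'' and your treatment of that case is only a statement of what needs to be shown, not an argument.

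Concretely, you assert that iterating backward through the blocks where $c$ spans will land you ``inside a consecutive run of blocks from which $d$ is absent'' with $\startTime{\cP}{d}$ strictly to the left of that run. But nothing you have written establishes that $d$ is absent from those blocks. This is the crux: the paper proves by induction on $t\in\{1,\dots,j-k\}$ (where $k$ is the first index going backward at which $c$ becomes non-spanning) that each block $j-t$ is occupied by exactly $c$ and some job $c_t$ completing at $e_{j-t+1}$, with $c_t\notin\{a,d\}$. The inductive step uses that $e_{j-t+1}$ is an event, that neither $c$ nor $c_{t-1}$ can start there, and hence some job must complete there; that job cannot be $a$ or $d$ by their completion times. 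A separate swapping argument with $d$ (exploiting optimality of $\cP$) is used to force $\startTime{\cP}{c}<e_{j-t}$. Only after this induction can one conclude $d\notin\jobs(\xi_{j-t})$ for all $t$, which together with $\startTime{\cP}{d}<e_{j-1}$ yields $\startTime{\cP}{d}<e_k$. Your proposal omits this entire mechanism, so as written it does not prove the lemma.
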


   \begin{figure}[htb]
    \begin{center}
    \includegraphics[scale=1.0]{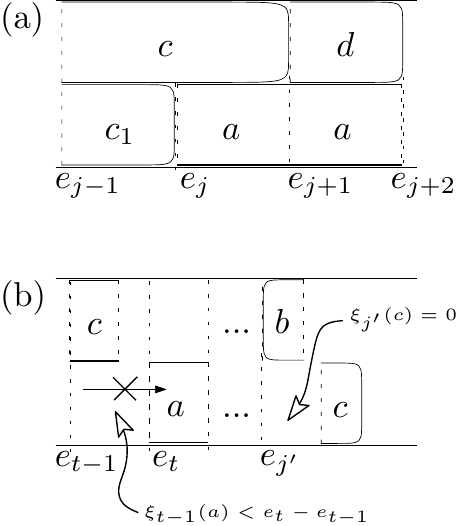}
    \caption{(a) Illustration of Lemma~\ref{lem:acc1d}. (b) Illustration of Lemma~\ref{lem:advance:a}}
    \label{fig:A-conf-lemmas}
   \end{center}
    \end{figure}

\begin{proof}
Let $k<j-1$ be the maximum index such that in block $k-1$ job $c$ is non-spanning but spanning in block $t$ for each
$t\in\{k,\ldots,j-1\}$.
Note that by \ref{it:a1}, \ref{it:a2} and Lemma~\ref{lem:no_idle-time_between_preemptions}, $k$ is well defined.
We prove, by induction on $t\in\{1,\ldots,j-k\}$, that
\begin{equation} \label{eq:i1}
  \begin{split}
\xi_{j-t}(c)=\xi_{j-t}(c_t)=e_{j-t+1}-e_{j-t}\ \land\ \complTime{\cP}{c_t}=e_{j-t+1} \\\textup{ for some }
c_t\in\jobs\setminus\{a,d\} \ \land\ \startTime{\cP}{c}<e_{j-t},
  \end{split}
\end{equation}
which immediately follows from \ref{it:a1}, \ref{it:a2} and Lemma~\ref{lem:no_idle-time_between_preemptions} for $t=1$.
So assume inductively that the claim holds for some $t-1\geq 1$ and we prove it for $t$.
It suffices to argue that some job $c_t$ completes at $e_{j-t+1}$ since then Lemma~\ref{lem:how_jobs_finish} implies
$(c_t)$ is spanning in block $j-t$.
By the induction hypothesis and the fact that all jobs have the same execution time, neither $c_{t-1}$ nor $c$ start at $e_{j-t+1}$.
Since $e_{j-t+1}$ is an event of $\cP$, some job $c_{t}$ completes at $e_{j-t+1}$ as required.
We have $\startTime{\cP}{c}<e_{j-t}$ for otherwise we can swap jobs $c$ and $d$ in $[\startTime{\cP}{c},\complTime{\cP}{d}]$.
The resulting schedule is feasible and has smaller total completion time than $\cP$.
Thus $\cP$ is not optimal --- contradiction.
This proves (\ref{eq:i1}).

If $0<\xi_{k-1}(c)<e_{k}-e_{k-1}$, then by modifying the schedule in block $k-1$ we may without loss of generality assume that $c$ resumes at $e_k$.
Thus, (\ref{eq:i1}) implies that $c$ and $d$ form an $\A$-configuration of length $e_{j+1}-e_k$ at $e_{j+1}$.
\end{proof}

\begin{lemma} \label{lem:advance:a}
Let $(\cP,\vect{e},\vect{\xi})$ be an $\A$-maximal schedule. Suppose that jobs $a$ and $b$ form an $\A$-configuration at $e_j$
in $\cP$ with $\complTime{\cP}{a}<\complTime{\cP}{b}$. Then there exists no $e_t\leq\startTime{\cP}{b}$ such that
(see Figure~\ref{fig:A-conf-lemmas}(b) for an illustration, where it is possible
that job $c$ ends at the start of $b$):
\begin{enumerate} [label={\normalfont(\roman*)},leftmargin=*]
 \item\label{it:advance1}\label{it:advance:first} $r(a)<e_t$, $\xi_t(a)=e_{t+1}-e_t$, job $a$ is non-spanning in block $t-1$;
 \item\label{it:advance2} Jobs in $\jobs(\xi_{t-1})\cup \{a\}$ are independent;
 \item\label{it:advance3} Some job $c$ in $\jobs(\xi_{t-1})$ satisfies $\complTime{\cP}{c}\geq e_{j'}$, $\xi_{j'}(c)=0$, and if
      $\complTime{\cP}{c}=e_{j'}$, then the jobs in $(\jobs(\xi_{j'})\setminus\{b\})\cup\{c\}$ are independent, where
      $e_{j'}=\startTime{\cP}{b}$;
 \item\label{it:advance4}\label{it:advance:last} The abnormality point of $\cP$ is not in $\{t,\ldots,j'\}$.
\end{enumerate}
\end{lemma}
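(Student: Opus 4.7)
The argument is by contradiction. Suppose an event $e_t$ satisfies (i)--(iv). We construct a modified schedule $\cP'$ via a cyclic shift of a small $\varepsilon>0$ and derive a contradiction either with the optimality of $\cP$ (if the total completion time strictly decreases) or with its $\A$-maximality (if the total completion time is preserved but the $\A$-configuration at $e_j$ is destroyed without an earlier one being created). Write $j-1=\tau_\cP(a)$ and $e_{j''}=\complTime{\cP}{b}$ so that $[e_j-\ell,e_j]\subseteq[e_{j'},e_{j''}]$.

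First, I would let $\cP'$ be obtained from $\cP$ by the cyclic shift
\[
(\vect{e}',\vect{\xi}')=\cyclicshift{\vect{e},\vect{\xi},\varepsilon,(j-1\cyclic{a} t-1\cyclic{c} j'\cyclic{b} j-1)}.
\]
Thus a piece of $a$ of length $\varepsilon$ moves from block $j-1$ back to block $t-1$; a piece of $c$ moves from block $t-1$ to block $j'$; and a piece of $b$ closes the cycle by moving from block $j'$ into block $j-1$. Feasibility of each transfer is secured by (i)--(iii): (i) makes room in block $t-1$ for extra mass of $a$ with release-date compatibility $r(a)<e_t$; (ii) gives precedence compatibility of $a$ with $\jobs(\xi_{t-1})$; and (iii) ensures $\xi_{j'}(c)=0$ leaves room for $c$ in block $j'$ and, in the boundary subcase $\complTime{\cP}{c}=e_{j'}$, that $c$ is independent of $\jobs(\xi_{j'})\setminus\{b\}$. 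The placement of $b$ in block $j-1$ is licit because $\xi_{j-1}(b)=0$ (as $b$ avoids $[e_j-\ell,e_j]\supseteq[e_{j-1},e_j]$ by the $\A$-configuration definition) and $b$ is independent of $a$ together with any other job of $\jobs(\xi_{j-1})$ continuing into block $j$, since the $\A$-configuration clause ensures independence with all of $\jobs(\xi_j)\cup\cdots\cup\jobs(\xi_{j''})$.

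Second, I would apply the completion-time conventions of the cyclic shift from Section~\ref{subsec:events}. In every subcase $\complTime{\cP'}{a}=e_j-\varepsilon$, while $\complTime{\cP'}{b}=\complTime{\cP}{b}$ as $b$'s last block is unchanged. If $\complTime{\cP}{c}>e_{j'}$ then $c$'s last block is unchanged as well and $\complTime{\cP'}{c}=\complTime{\cP}{c}$, so the total completion time drops by $\varepsilon$, contradicting optimality. If $\complTime{\cP}{c}=e_{j'}$ then the rule gives $\complTime{\cP'}{c}=e_{j'}+\varepsilon$, exactly cancelling the gain in $a$, and the total completion time is preserved. In this second subcase I would invoke Lemma~\ref{lem:abnormality_preserved}---valid because (iv) places the abnormality point of $\cP$ outside $\{t,\ldots,j'\}$ and the shift touches only blocks $t-1,j',j-1$ at a granularity $\varepsilon$ that can be chosen compatible with the normality level---to conclude $\cP'$ is maximal. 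However, in $\cP'$ the $\A$-configuration formed by $(a,b)$ at $e_j$ is destroyed, since $\complTime{\cP'}{a}<e_j$; and by choosing $\varepsilon$ sufficiently small the local perturbation introduces no $\A$-configuration at any event earlier than $e_j$. This contradicts the $\A$-maximality of $\cP$.

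The main obstacle is the second subcase: one must certify precisely that no new $\A$-configuration is born at an earlier event, and that $b$'s placement in block $j-1$ respects all precedence constraints. Both rely on the global independence clause of the $\A$-configuration, possibly combined with Lemma~\ref{lem:interlace} to exclude pathological interactions with the (at most two) residual jobs of $\jobs(\xi_{j-1})\setminus\{a\}$.
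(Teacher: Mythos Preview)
Your cyclic shift is exactly the paper's transformation, and your Case~1 ($\complTime{\cP}{c}>e_{j'}$) is correct. The real problem is Case~2, where you have the logic of $\A$-maximality inverted. By definition, an $\A$-maximal schedule has the \emph{earliest} starting $\A$-configuration among all maximal schedules. Hence producing a maximal $\cP'$ whose earliest $\A$-configuration is \emph{later} than $e_j$ (or that is $\A$-free in $[0,e_j]$) contradicts nothing: that is perfectly consistent with $\cP$ having the earliest one. What would contradict $\A$-maximality is a maximal $\cP'$ with an $\A$-configuration \emph{strictly earlier} than $e_j$.

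And that is precisely what happens after your shift when $\varepsilon<\ell$: the pair $(a,b)$ still forms an $\A$-configuration in $\cP'$, now at $\complTime{\cP'}{a}=e_j-\varepsilon$. Indeed, $a$ still executes non-preemptively on $[e_j-\ell,\,e_j-\varepsilon]$ (you only removed $a$ from $[e_j-\varepsilon,e_j]$), $b$ now occupies $[e_j-\varepsilon,e_j]$ and continues non-preemptively to $\complTime{\cP}{b}$, while $b$ is still absent from $[e_j-\ell,\,e_j-\varepsilon]$ and still starts before $e_j-\ell$. So your assertion that ``the local perturbation introduces no $\A$-configuration at any event earlier than $e_j$'' is false; and this new, earlier $\A$-configuration is exactly the source of the contradiction. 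The paper's case split is accordingly on $\varepsilon=\ell$ (then $\complTime{\cP'}{a}<e_j-\ell$, giving a strict drop in total completion time regardless of $c$) versus $\varepsilon<\ell$ (then the earlier $\A$-configuration yields the contradiction, after checking the abnormality point is preserved). Your split on $\complTime{\cP}{c}$ does not by itself close Case~2.

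A minor technical point: writing the move of $b$ as ``into block $j-1$'' only works when $\varepsilon\le e_j-e_{j-1}$; since $e_{j-1}\ge e_j-\ell$ this may fail for $\varepsilon$ close to $\ell$. The paper therefore places $b$ into the time interval $[\complTime{\cP}{a}-\varepsilon,\complTime{\cP}{a}]$, which may span several blocks, rather than into a single block.
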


\begin{proof}
Suppose for a contradiction that such an $e_t$ exists.
Let $\ell>0$ be the length of the $\A$-configuration formed by $a$ and $b$.
Define
\[\varepsilon=\min\left\{\xi_{t-1}(c),e_t-r(a),e_{t}-e_{t-1}-\xi_{t-1}(a),\xi_{j'}(b),\ell\right\}.\]
By \ref{it:advance1} and \ref{it:advance3}, we have $\varepsilon>0$.
Let $\cP'$ be a schedule obtained by moving a piece of $c$ of length $\varepsilon$ from the $(t-1)$-st block to the $j'$-th block, a piece of $b$ of length $\varepsilon$ from the $j'$-th block to $[\complTime{\cP}{a}-\varepsilon,\complTime{\cP}{a}]$, and a piece of $a$ from $[\complTime{\cP}{a}-\varepsilon,\complTime{\cP}{a}]$ to the $(t-1)$-st block.
By \ref{it:advance1}, \ref{it:advance2}, \ref{it:advance3} and Lemma~\ref{lem:xi_to_P}, the schedule $\cP'$ is feasible.
This transformation is shown in Figure~\ref{fig:A-conf-contr} when $\varepsilon=e_{t}-e_{t-1}-\xi_{t-1}(a)$ and $j'=t$.
Clearly, $\complTime{\cP}{d}=\complTime{\cP'}{d}$ for each $d\in\jobs\setminus\{a,c\}$ and, by \ref{it:advance3}, $\complTime{\cP'}{c}\leq\complTime{\cP}{c}+\varepsilon$.

    \begin{figure}[htb]
    \begin{center}
    \includegraphics[scale=1.0]{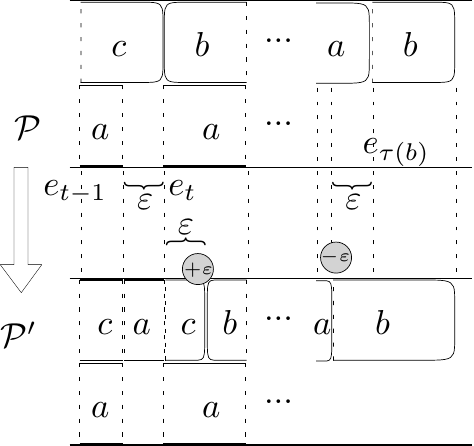}
    \caption{Schedule transformation in the proof of Lemma~\ref{lem:advance:a}}
    \label{fig:A-conf-contr}
    \end{center}
    \end{figure}

If $\varepsilon=\ell$, then the total completion time of $\cP'$ is strictly smaller than that of $\cP$, because $a$ resumes at $\complTime{\cP}{a}-\ell$ in $\cP$, i.e., $\complTime{\cP'}{a}<\complTime{\cP}{a}-\ell$.
We get a contradiction since $\cP$ is optimal.

Otherwise, if $\varepsilon<\ell$, then $a$ and $b$ form an $\A$-configuration in $\cP'$ at $\complTime{\cP}{a}-\varepsilon$.
Also, $\complTime{\cP'}{a}=\complTime{\cP}{a}-\varepsilon$.
Let $i$ be the abnormality point of $\cP$.
If $i\leq t-1$, then the fact that $\cP$ and $\cP'$ are the same in $[0,e_{t-1}]$, we obtain that the abnormality point of $\cP'$ equals $i$ and $\cP'$ is $\A$-maximal.
If $i>t-1$, then by \ref{it:advance4}, $i>j'$ and hence $\varepsilon$ is $t$-normal and, by Lemma~\ref{lem:abnormality_preserved}, $\cP'$ is $\A$-maximal.
Therefore, we obtain a contradiction in both cases, which proves the lemma.
\end{proof}

The following lemma describes how two jobs that form an $\A$-configuration start in an $\A$-maximal schedule. See
Figure~\ref{fig:A-start} for an illustration of the two possible cases: the two jobs can start at different time points,
or at the same time.

    \begin{figure}[htb]
    \begin{center}
    \includegraphics[scale=1.0]{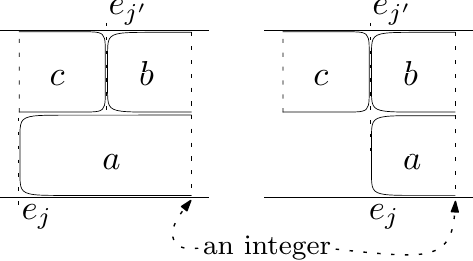}
    \caption{An illustration of Lemma~\ref{lem:at-start-of-b}}
    \label{fig:A-start}
    \end{center}
    \end{figure}

\begin{lemma} \label{lem:at-start-of-b}
Suppose that each $\A$-maximal schedule has an $\A$-configuration. There exists an $\A$-maximal schedule $(\cP,\vect{e},\vect{\xi})$
such that the earliest $\A$-configuration formed by $a$ and $b$ with $\complTime{\cP}{a}<\complTime{\cP}{b}$ satisfies the following
properties:
\begin{enumerate} [label={\normalfont(\roman*)},leftmargin=*]
 \item\label{it:start1} $\jobs(\xi_{j'})=\{a,b\}$, $|\jobs(\xi_j)|=2$ and some job completes at $e_{j'}$, where $e_{j'}=\startTime{\cP}{b}$ and $e_j=\startTime{\cP}{a}$,
 \item\label{it:start2} $0\leq j'-j\leq 1$, and
 \item\label{it:start3} $e_{j'+1}$ is an integer.
\end{enumerate}
\end{lemma}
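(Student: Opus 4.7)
The plan is to start with an $\A$-maximal schedule $\cP$ achieving the earliest $\A$-configuration, formed by $a$ and $b$ with $\complTime{\cP}{a}<\complTime{\cP}{b}$, and modify it in a sequence of steps to enforce (i)--(iii), verifying at each step that (a) $\A$-maximality is preserved (the abnormality point does not decrease, which will follow from Lemma~\ref{lem:abnormality_preserved}) and (b) the starting time of the earliest $\A$-configuration does not move later. Lemma~\ref{lem:start-of-a-and-b} already yields $j\leq j'$, so in (ii) only the upper bound needs work.

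For (i), by Lemma~\ref{lem:one_dominates} we have $|\jobs(\xi_{j'})|,|\jobs(\xi_j)|\leq 3$. To establish $\jobs(\xi_{j'})=\{a,b\}$, suppose some third job $c\in\jobs(\xi_{j'})\setminus\{a,b\}$ exists. A case analysis on whether $c$ spans block $j'$ and whether $c$ completes inside or after block $j'$ leads in each case either to a swap (Lemma~\ref{lem:swapping}) that strictly decreases total completion time (contradicting optimality), or to a cyclic shift producing a schedule with an earlier or less-structured $\A$-configuration (contradicting the choice of $\cP$). The analogous case analysis applied to block $j$ yields $|\jobs(\xi_j)|=2$. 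For the final assertion in (i), if no job completes at $e_{j'}$, then $e_{j'}$ is an event only because $b$ starts there; combining Lemma~\ref{lem:no_idle-time_between_preemptions} (no idle time in block $j'-1$) with a small leftward shift of $b$'s start lets one collapse the events $e_{j'-1}$ and $e_{j'}$ without losing any structural property of the $\A$-configuration, producing a new $\A$-maximal schedule in which the $\A$-configuration starts strictly earlier --- a contradiction.

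For (ii), it suffices to rule out $j'\geq j+2$. Under this assumption, at least one event $e_t$ with $j<t\leq j'$ lies strictly between the starts of $a$ and $b$; either some job completes at $e_t$ or some $d\notin\{a,b\}$ starts there. In either case I would pick $e_t$ and a companion $c\in\jobs(\xi_{t-1})$ satisfying clauses \ref{it:advance1}--\ref{it:advance4} of Lemma~\ref{lem:advance:a}, using the in-tree hypothesis to secure independence of $a$ from intervening jobs; that lemma then gives a contradiction. For (iii), by (i) block $j'$ contains only $\{a,b\}$ and some job completes at $e_{j'}$. Observation~\ref{obs:normal_schedule} together with Lemma~\ref{lem:all_e_i's_are_normal} then allows one to round $e_{j'+1}$ up to the next integer (all release dates being integral already places every competing event at an integer), using a local McNaughton rearrangement that leaves both completion times and the abnormality point intact.

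The chief obstacle is (ii): verifying that a viable witness $c$ for Lemma~\ref{lem:advance:a} can always be extracted from whatever event sits at $e_{j+1}$. This is where the in-tree assumption is indispensable --- in general, a precedence edge from some intervening job to $a$ would block the required independence, which is precisely why the proof does not extend to arbitrary precedence constraints. Throughout, one must also keep careful track of the fact that each advance transformation, while destroying or postponing the current $\A$-configuration, must not create a new $\A$-configuration that starts earlier than the one we began with.
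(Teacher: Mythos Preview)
Your outline misses the actual mechanism behind parts (ii) and (iii), and the order in which you propose to prove them would not work.

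For (iii), your ``rounding'' idea is not viable. The event $e_{j'+1}$ is, by definition, the start or completion time of some job; Observation~\ref{obs:normal_schedule} and Lemma~\ref{lem:all_e_i's_are_normal} say nothing about moving such an event to an integer, and the claim that ``all release dates being integral already places every competing event at an integer'' is false, since completion times need not be integral. The paper's argument is completely different: it first selects, among all $\A$-maximal schedules realising the earliest $\A$-configuration, one in which $\startTime{\cP}{b}$ is \emph{as late as possible}. Then, since $\jobs(\xi_{j'})=\{a,b\}$ and neither $a$ nor $b$ ends at $e_{j'+1}$, some job $c$ must start at $e_{j'+1}$; a cyclic shift $(j'\cyclic{b} j'+1\cyclic{c} j')$ pushes $\startTime{\cP}{b}$ to the right, which either contradicts the extremal choice of $\cP$ or stops at the largest integer $h\in(e_{j'},e_{j'+1})$, giving (iii). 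This extremal choice of $\startTime{\cP}{b}$ is the key idea you are missing, and (iii) must be established \emph{before} (i) and (ii), since the later feasibility checks (e.g.\ that $b$ respects its release date after a transformation) rely on $e_{j'+1}$ being an integer.

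For (ii), one application of Lemma~\ref{lem:advance:a} at an intermediate event $e_t$ is not enough. To rule out $j'-j\geq 2$ the paper has to trace the structure back block by block: first $\xi_{j'-1}(a)=e_{j'}-e_{j'-1}$ and $\jobs(\xi_{j'-1})=\{a,c\}$ with $c$ completing at $e_{j'}$ (this last step uses (iii)); then, assuming $\startTime{\cP}{a}<e_{j'-1}$, one analyses blocks $j'-2$ and $j'-3$ to exhibit jobs $c',c_1$ with a specific pattern, and finally invokes Lemma~\ref{lem:acc1d} to produce an earlier $\A$-configuration formed by $c$ and $c'$, contradicting $\A$-maximality. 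Your sketch does not account for the possibility that $a$ fails to span some intermediate block, nor for how to manufacture the companion $c$ required by Lemma~\ref{lem:advance:a} with $\complTime{\cP}{c}\geq e_{j'}$; the actual obstruction is resolved not by Lemma~\ref{lem:advance:a} alone but by Lemma~\ref{lem:acc1d}. Also, your argument that ``some job completes at $e_{j'}$'' via shifting $b$ leftward confuses the location of the $\A$-configuration: it sits at $\complTime{\cP}{a}$, not at $\startTime{\cP}{b}$, so moving $b$'s start earlier does not make the configuration earlier.
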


\begin{proof}
Let $\cP$ be $\A$-maximal. Let $a$ and $b$ form the earliest $\A$-configuration in $\cP$. By Lemma~\ref{lem:start-of-a-and-b}, $\startTime{\cP}{a}\leq\startTime{\cP}{b}=e_{j'}$.
Without loss of generality assume that $\startTime{\cP}{b}$ is as late as possible.

Then job $a$ is spanning in block $j'$ since otherwise $a$ and $b$ interlace and we get a contradiction by
Lemma~\ref{lem:interlace}. Moreover,
\begin{equation} \label{eq:a-with-b}
\jobs(\xi_{j'})=\{a,b\},
\end{equation}
since otherwise, by Lemma~\ref{lem:one_dominates}, $\jobs(\xi_{j'})=\{a,b,c\}$ and $\complTime{\cP}{c}=e_{j'+1}$. The latter implies, by Lemma~\ref{lem:how_jobs_finish}, that job $c$ is spanning in block $j'$, which contradicts that job $a$ is
spanning in block $j'$ and proves (\ref{eq:a-with-b}).

We prove \ref{it:start3} first.
Suppose for a contradiction that $e_{j'+1}$ is not an integer and let $h$ be the greatest integer smaller than $e_{j'+1}$.
Since $e_{j'+1}$ is an event and, by definition of $\A$-configuration, none of the jobs $a$ and $b$ ends at $e_{j'+1}$, \eqref{eq:a-with-b} implies that some job $c$ starts at $e_{j'+1}$.

We show that $\xi_{j'+1}(b)=0$, which will make our first transformation in (\ref{eq:tr_1}) feasible. This holds for $j'+1=\tau(a)$, since by definition of $\A$-configuration $b\notin \jobs(\xi_{\tau(a)})$.
For $j'+1<\tau(a)$, we have $\xi_{j'+1}(b)=0$ or job $a$ is spanning in block $j'+1$ for otherwise $a$ and $b$ interlace and we get a contradiction by Lemma~\ref{lem:interlace}. However, $\xi_{j'+1}(b)>0$ and job $a$ is spanning in block $j'+1$, which imply
$\jobs(\xi_{j'+1})=\{a,b,c\}$.
Thus, by Lemma~\ref{lem:one_dominates}, some job must finish at $e_{j'+1}$ and since $j'+1<\tau(b)<\tau(b)$, this job must be $c$.
By Lemma~\ref{lem:how_jobs_finish}, job $c$ is spanning in block $j'+1$ --- a contradiction. Therefore,
\begin{equation}\label{eq:no-b}
\xi_{j'+1}(b)=0.
\end{equation}

Since $\tau(a)>j'+1$, we obtain by \eqref{eq:a-with-b} and definition of $\A$-configuration that no job ends at $e_{j'+2}$ and hence Lemma~\ref{lem:one_dominates} implies that job $c$ is spanning in block $j'+1$. Now take
\[
\varepsilon=\min\left\{\xi_{j'+1}(c),e_{j'+1}-\max\{h,e_{j'}\}\right\}
\]
and let $\cP'$ be a schedule with events $\vect{e'}$ and partition $\vect{\xi'}$, where
\begin{equation}\label{eq:tr_1}
(\vect{e'},\vect{\xi'})=\cyclicshift{ \vect{e},\vect{\xi},\varepsilon, (j' \cyclic{b} j'+1\cyclic{c} j') }.
\end{equation}

Figure~\ref{fig:A-conf-init}(a) illustrates the transformation from $\cP$ to $\cP'$ for $\varepsilon=e_{j'+1}-h$, when $h>e_{j'}$.
Observe that \eqref{eq:no-b} and $\xi_{j'}(b)\geq\varepsilon$ ensure the feasibility of $\cP'$.
Also, by \eqref{eq:a-with-b} and Lemma \ref{lem:at_least_two_abnormal}, we have $i\neq j'$, where $i$ is the abnormality point of $\cP$ possibly equal $\infty$.
Clearly, if $i<j'$, then $\cP$ and $\cP'$ have the same abnormality point $i$ since the two schedules are identical in $[0,e_{j'}]$.
If $i>j'$, then by Lemma~\ref{lem:all_e_i's_are_normal}, $\xi_{j'+1}(c)$ is $(j'+1)$-normal and Lemma~\ref{lem:abnormality_preserved} implies the same abnormality point $i$ for both $\cP$ and $\cP'$.
Finally, the $\A$-maximality of $\cP$ implies that $a$ and $b$ form an $\A$-configuration in $\cP'$.
Therefore, if $h\leq e_{j'}$, then $\cP'$ is $\A$-maximal and it can be ensured that $\startTime{\cP'}{b}>\startTime{\cP}{b}$, which contradicts our assumption about $\cP$.
If $h>e_{j'}$, then $\cP'$ is $\A$-maximal and satisfies \ref{it:start3} as required. To simplify notation we set $\cP:=\cP'$ in the reminder of the proof.

    \begin{figure*}[htb]
    \begin{center}
    \includegraphics[scale=1.0]{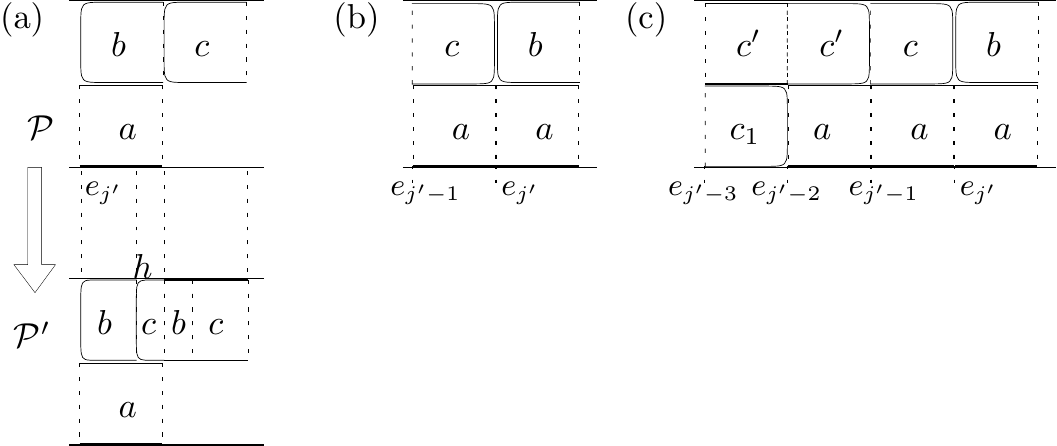}
    \caption{Schedule transformations in the proof of Lemma~\ref{lem:at-start-of-b}}
    \label{fig:A-conf-init}
    \end{center}
    \end{figure*}

We now prove \ref{it:start1} and \ref{it:start2}.  Observe that by \ref{it:start3}, $\startTime{\cP}{b}$ is not an integer and thus
\begin{equation} \label{eq:two}
|\jobs(\xi_{j'-1})|\geq 2
\end{equation}
for otherwise $\cP$ would not be optimal --- a contradiction.

Suppose first that $\startTime{\cP}{a}=\startTime{\cP}{b}=e_{j'}$. If a job $a'$ in $\jobs(\xi_{j'-1})$ does not complete at $e_{j'}$, i.e., $a'$ is preempted at $e_{j'}$, then $\complTime{\cP}{a'}> e_{\tau(b)}$ for otherwise,
by Lemma~\ref{lem:how_jobs_finish}, at most one of jobs $\{a,b\}$ can be spanning in block $\tau(a')$ and thus the other
job in $\{a,b\}$ and $a'$ would interlace, which contradicts Lemma~\ref{lem:interlace}.
However, if $\complTime{\cP}{a'}>e_{\tau(b)}$, then job $a'$ is spanning in block $\tau(b)$ for otherwise $a'$ and $b$ interlace, which again contradicts Lemma~\ref{lem:interlace}. Thus, $|\jobs(\xi_{\tau(b)})|=2$ by Lemma \ref{lem:how_jobs_finish}.
Therefore, a job in $\jobs(\xi_{j'-1})\setminus\{a'\}$ completes at $e_{j'}$.
The other conditions of the lemma trivially follow when $\startTime{\cP}{a}=\startTime{\cP}{b}$.

Now let $\startTime{\cP}{a}\neq\startTime{\cP}{b}$. By assumption, $\startTime{\cP}{a}<\startTime{\cP}{b}$.
Informally, the proof is divided into two stages.
In the first stage we consider block $j'-1$ and we prove that $\jobs(\xi_{j'-1})=\{a,c\}$ and that $\tau(c)=j'-1$ --- see Equations~\eqref{eq:b:a}, \eqref{eq:b:size_of_j-1} and \eqref{eq:b:c_ends} and Figure~\ref{fig:A-conf-init}(b).
In the second stage we prove that $a$ starts at $e_{j'-1}$.
The proof of the latter is done by contradiction, i.e., we suppose that $a$ starts before $e_{j'-1}$.
This assumption implies that $\cP$ looks  as shown in Figure~\ref{fig:A-conf-init}(c) in the interval $[e_{j'-3},e_{j'+1}]$, which allows us to get the desired contradiction thanks to Lemma~\ref{lem:acc1d}.

First we prove by contradiction that
\begin{equation} \label{eq:b:a}
\xi_{j'-1}(a)=e_{j'}-e_{j'-1}.
\end{equation}
By \eqref{eq:two}, $\jobs(\xi_{j'-1})\setminus\{a\}\neq\emptyset$.
Take any $c\in\jobs(\xi_{j'-1})\setminus\{a\}$.
By \eqref{eq:a-with-b}, $\xi_{j'}(c)=0$.
Since job $a$ is non-spanning in block $j'-1$, the conditions \ref{it:advance:first}-\ref{it:advance:last} of  Lemma~\ref{lem:advance:a} are all satisfied by jobs $a$ and $c$, and $t=j'$.
(Condition \ref{it:advance4} holds as $j'$ is not the abnormality point of $\cP$ by (\ref{eq:a-with-b}) and Lemma~\ref{lem:at_least_two_abnormal}.)
Therefore we get a contradictions, and (\ref{eq:b:a}) holds.

Next, we show that
\begin{equation} \label{eq:b:size_of_j-1}
\jobs(\xi_{j'-1})=\{a,c\} \textup{ for some }c\in\jobs.
\end{equation}
If no job completes at $e_{j'}$, then Lemma~\ref{lem:one_dominates} and (\ref{eq:two}) immediately imply \eqref{eq:b:size_of_j-1}.
If some job, say $c$, completes at $e_{j'}$, then Lemma~\ref{lem:how_jobs_finish} implies that job $c$ is spanning in block
$j'-1$. Since $a$ completes after $e_{j'}$, $a\neq c$. This and \eqref{eq:b:a} imply \eqref{eq:b:size_of_j-1}.

Finally to complete the first stage, we prove by contradiction that
\begin{equation} \label{eq:b:c_ends}
\complTime{\cP}{c}=e_{j'}.
\end{equation}
To that end take $\varepsilon=\min\{\xi_{j'-1}(c),\xi_{j'}(b)\}$ and let $\cP'$ be a schedule with events $\vect{e'}$ and partition $\vect{\xi'}$, where
\[
(\vect{e'},\vect{\xi'})=\langle \vect{e},\vect{\xi},\varepsilon, (j'-1\cyclic{c} j'\cyclic{b} j'-1 \rangle.
\]
Note that
\begin{eqnarray*}
 \startTime{\cP'}{b} & \geq & e_{j'-1}=e_{j'+1}-(e_{j'+1}-e_{j'-1})\\
                     &  =   & e_{j'+1}-\xi_{j'-1}(a)-\xi_{j'}(a)\geq e_{j'+1}-1,
\end{eqnarray*}
which, by \ref{it:start3}, implies that $\startTime{\cP'}{b}\geq r(b)$.
Thus, $\cP'$ is feasible and, by assumption, optimal.
Also, by \eqref{eq:a-with-b}, \eqref{eq:b:size_of_j-1} and Lemma \ref{lem:at_least_two_abnormal}, we have $i\notin\{j'-1,j'\}$, where $i$ is the abnormality point of $\cP$.
Thus, as before, $i$ is the abnormality point of $\cP'$.
Indeed, it follows from the fact that the two schedules are identical in $[0,e_{j'}]$ (which covers the case when $i<j'$), and from Lemma~\ref{lem:abnormality_preserved} (that covers the case when $i>j'$).
Moreover, $\cP'$ contains a block that ends at $e_{j'+1}$ and contains the jobs $a,b$ and $c$, none of which completes at $e_{j'+1}$ --- a contradiction with Lemma~\ref{lem:one_dominates}. Therefore, (\ref{eq:b:c_ends}) holds, and thus due
to Equations~\eqref{eq:b:a}, \eqref{eq:b:size_of_j-1} and \eqref{eq:b:c_ends}, the schedule in the interval $[e_{j'-1},e_{j'+1}]$ looks like in Figure~\ref{fig:A-conf-init}(b).

In the second stage we argue that
\begin{equation} \label{eq:b:start-a}
\startTime{\cP}{a}=e_{j'-1}.
\end{equation}
Suppose for a contradiction that this is not the case.
By \eqref{eq:b:a}, $c$ does not start at $e_{j'-1}$.
Since  $a$ does not starts at $e_{j'-1}$ either, there is a job, say $c'$ that ends at $e_{j'-1}$, otherwise $e_{j'-1}$ would not be an event.
By Lemma~\ref{lem:how_jobs_finish},
\begin{equation} \label{eq:j'-2}
\xi_{j'-2}(c')=e_{j'-1}-e_{j'-2},
\end{equation}
which implies
\begin{equation} \label{eq:a_j'-2}
\xi_{j'-2}(a)=e_{j'-1}-e_{j'-2}
\end{equation}
as follows: First we observe that there is no job $d\neq c'$ that completes at $e_{j'-1}$.
Indeed, otherwise Lemma~\ref{lem:ccd} applied to $c=d$, $c'$, $d=c$, and $e_j=e_{j'-1}$ gives the required contradiction.
Now, if $c\in \jobs(\xi_{j'-2})$, then the conditions \ref{it:advance:first}-\ref{it:advance:last} of Lemma~\ref{lem:advance:a} are all satisfied by jobs $a$, $b$ and $c$, and $t=j'-1$ --- a contradiction.
(Condition \ref{it:advance4} holds as neither $j'-1$ nor $j'$ is the abnormality point of $\cP$ by (\ref{eq:a-with-b}), (\ref{eq:b:size_of_j-1}), and Lemma~\ref{lem:at_least_two_abnormal}.)
Therefore, $\xi_{j'-2}(c)=0$.
Thus, $\jobs(\xi_{j'-2})\subseteq \{a,c'\}$, because if a job different than $a$ and $c'$ that does not complete at $e_{j'-1}$ is present in $\jobs(\xi_{j'-2})$ then, by \eqref{eq:b:size_of_j-1} and \eqref{eq:b:c_ends}, this job interlaces with $c$ that contradicts Lemma~\ref{lem:interlace}.
This implies \eqref{eq:a_j'-2} as required.

If job $c'$ is non-spanning in block $j'-3$, then by \eqref{eq:b:size_of_j-1}, \eqref{eq:a_j'-2} and $\complTime{\cP}{c'}=e_{j'-1}$, $\startTime{\cP}{c'}<e_{j'-2}$, which implies that $c$ and $c'$ form an $\A$-configuration of length $e_{j'-1}-e_{j'-2}$ at $e_{j'-1}$, which leads to a contradiction with $\A$-maximality of $\cP$.
Thus we have
\begin{equation}\label{eq:j'-3}
\xi_{j'-3}(c')=e_{j'-2}-e_{j'-3}.
\end{equation}
We prove that
\begin{equation} \label{eq:b:inductiona1}
 \begin{split}
  \xi_{j'-3}(c')=\xi_{j'-3}(c_1)=e_{j'-2}-e_{j'-3}\textup{ and} \\
  \complTime{\cP}{c_1}=e_{j'-2}\textup{ for some }c_1\in\jobs\setminus\{a,c\},
 \end{split}
\end{equation}
i.e., we prove that $\cP$ in the interval $[e_{j'-3},e_{j'+1}]$ is as shown in Figure~\ref{fig:A-conf-init}(c).
First, we have $\xi_{j'-3}(c_1)>0$ for some $c_1\notin \{a,c,c'\}$.
Otherwise $\jobs(\xi_{j'-3})\subseteq \{a,c,c'\}$, and since $e_{j'-2}$ is an event, $\startTime{\cP}{a}=e_{j'-2}$.
Then, however, conditions \ref{it:advance:first}-\ref{it:advance:last} of  Lemma~\ref{lem:advance:a} are all satisfied by jobs $a$, $b$, $c$, and $t=j'-2$ --- a contradiction (observe that $h-\startTime{\cP}{a}<1$, thus \ref{it:advance1} is satisfied; condition \ref{it:advance4} holds as none of $j'-2$, $j'-1$, $j'$ is the abnormality point of $\cP$ by (\ref{eq:a-with-b}), (\ref{eq:b:size_of_j-1}), (\ref{eq:j'-2}), (\ref{eq:a_j'-2}), and Lemma~\ref{lem:at_least_two_abnormal}). Second, each such $c_1$ completes at $e_{j'-2}$
for otherwise, by \eqref{eq:b:size_of_j-1}, \eqref{eq:a_j'-2} and \eqref{eq:j'-3}, $c_1$ and $c$ interlace --- a contradiction by Lemma~\ref{lem:interlace}.
Thus, by Lemma \ref{lem:how_jobs_finish}, job $c_1$ is spanning in block $j'-3$.
This, and \eqref{eq:j'-3} imply \eqref{eq:b:inductiona1}.
Thus, $\cP$ looks in the interval $[e_{j'-3},e_{j'+1}]$ as shown in Figure~\ref{fig:A-conf-init}(c). Finally, by Lemma \ref{lem:acc1d} applied to $c=c'$, $c_1$, $d=c$, $a$, and $e_j=e_{j'-2}$, we obtain that $c$ and $c'$ form
an $\A$-configuration at $e_{j'-1}$.
Thus, again, we get a contradiction since $\cP$ is $\A$-maximal.
Hence, \eqref{eq:b:start-a} holds.
Therefore the lemma follows by (\ref{eq:a-with-b}), (\ref{eq:b:size_of_j-1}), and (\ref{eq:b:start-a}).
\end{proof}

Given schedule $(\cP, \vect{e},\vect{\xi})$, $l\geq 1$ and $\{a_1,\ldots,a_l\}\subseteq \jobs$, job sequence $(a_1,\ldots, a_l)$
is called a \emph{sub-chain starting at} $t$ in $\cP$ if:
\begin{enumerate} [label={\normalfont(S\arabic*)},leftmargin=*]
 \item\label{it:C1} For each $j\in\{1,\ldots,l-1\}$, $a_j\preceq a_{j+1}$;
 \item\label{it:C2} For each $j\in\{1,\ldots,l-1\}$, $\complTime{\cP}{a_j}=\startTime{\cP}{a_{j+1}}$;
 \item\label{it:C3'} Job $a_1$ executes non-preemptively in $[t,\complTime{\cP}{a_1}]$.
\end{enumerate}
Moreover, job sequence $(a_1,\ldots,a_l)$ is a \emph{chain} in $\cP$ if it satisfies conditions \ref{it:C1}, \ref{it:C2}
and additionally
\begin{enumerate} [label={\normalfont(S\arabic*)},leftmargin=*] \setcounter{enumi}{3}
 \item\label{it:C3} Time $t$ is the earliest moment such that $a_1$ executes with no preemption in $[t,\complTime{\cP}{a_1}]$;
 \item\label{it:C4} No predecessor of $a_1$ ends at $t$.
\end{enumerate}

Suppose that jobs $a$ and $b$ form an $\A$-configuration in $\cP$ with $\complTime{\cP}{a}<\complTime{\cP}{b}$.
For $\varepsilon\geq 0$, we define an operation of \emph{$\varepsilon$-exchanging} $a$ and $b$ in an interval $[e_k,\complTime{\cP}{b}]$, $k<\tau(b)$, as follows.
First, all pieces of $a$ and $b$ are removed from the blocks $k,\ldots,\tau(b)$.
Note that the total length of all removed pieces of $a$ and $b$ is $\sum_{t=k}^{\tau(b)}\xi_t(a)$ and $\sum_{t=k}^{\tau(b)}\xi_t(b)$, respectively.
Then, the empty gaps are filled out with the total length $\sum_{t=k}^{\tau(b)}\xi_t(a)-\varepsilon$ of $a$ and the total length of $\sum_{t=k}^{\tau(b)}\xi_t(b)+\varepsilon$ of $b$ in such a way that $b$ completes as early as possible.
Note that the new schedule is valid only if $\varepsilon=0$.
Whenever the transformation of $\varepsilon$-exchanging will be used with $\varepsilon>0$, then some other appropriate changes in the schedule will be made to ensure feasibility.

For $\varepsilon>0$, we extend the operation of $\varepsilon$-exchanging of two jobs to sub-chains as follows.
Let $A=(a_1,\ldots,a_l=a)$ and $B=(b_1,\ldots,b_{l'}=b)$ be two sub-chains in $\cP$ that start at $t$, and such that $a$ and $b$ form an $\A$-configuration in $\cP$, where $\startTime{\cP}{a}\leq\startTime{\cP}{b}$.
(Note that we either have $l=l'$ or $l=l'-1$.)
Let $d$ be any job that executes in $[t-\varepsilon,t]$.
The operation of \emph{$(\varepsilon,d)$-exchanging} of $A$ and $B$ in $\cP$ leads to a schedule $\cP'$ obtained by making the following changes to $\cP$:
\begin{itemize}
 \item For each $j\in\{1,\ldots,l\}$, $a_j$ is executed in $[t_j-\varepsilon,t_{j+1}-\varepsilon]$ in $\cP'$, where take $t_1=t$, $t_j=\startTime{\cP}{a_j}$ for $j\in\{2,\ldots,l\}$ and $t_{l+1}=e_{j'+1}$ such that $e_{j'}=\startTime{\cP}{b}$;
 \item For each $j\in\{1,\ldots,l'\}$, $b_j$ is executed in $[u_j+\varepsilon,u_{j+1}+\varepsilon]$ in $\cP'$, where take $u_1=t$, $u_j=\startTime{\cP}{b_j}$ for $j\in\{2,\ldots,l'\}$ and $u_{l'+1}=e_{j'+1}$;
 \item A piece of $d$ executing in $[t-\varepsilon,t]$ is placed in $[t,t+\varepsilon]$ in $\cP'$ (the ``room'' for this job is made
     by postponing $b_1$);
 \item In the interval $[e_{j'+1},\complTime{\cP}{b}]$ $\varepsilon$-exchanging of $a$ and $b$ is made.
\end{itemize}
The transformation is illustrated in Figure~\ref{fig:exchanging-chains} for $d=b_1$. Note that in this particular case the total
completion times of $\cP$ and $\cP'$ are equal.

    \begin{figure*}[htb]
    \begin{center}
    \includegraphics[scale=1.0]{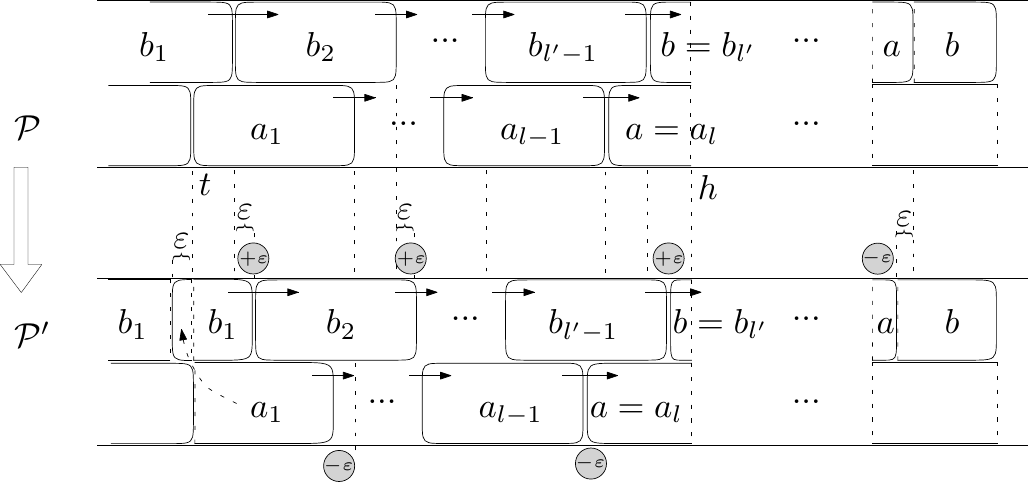}
    \caption{$(\varepsilon,d)$-exchanging of $(a_1,\ldots,a_l)$ and $(b_1,\ldots,b_l)$ when $l'=l+1$, $d=b_1$ and $\startTime{\cP}{a}=t$}
    \label{fig:exchanging-chains}
    \end{center}
    \end{figure*}

The new schedule $\cP'$ is valid under certain conditions. First, the value of $\varepsilon$ must be selected in such a way that
$\varepsilon$-exchanging of $a$ and $b$ is possible in the above-mentioned interval. Second, $d$ should not be a predecessor
of $a_1$. Also, the release dates of jobs $a_1,\ldots,a_l$ need to be respected and $a_1$ must be non-preemptively executed in
$[t,t+\varepsilon]$. We summarize those conditions in the following lemma.

\begin{lemma} \label{lem:swapping_chains}
Let $(a_1,\ldots,a_l=a)$ and $(b_1,\ldots,b_l=b)$, starting at $t$, be two sub-chains in $\cP$ such that $a$ and $b$ form an $\A$-configuration of length $\ell$ in $\cP$.
If $\varepsilon\leq\ell$, $r(a_1)\leq t-\varepsilon$ and $r(a_j)\leq\startTime{\cP}{a_j}-\varepsilon$ for each $j\in\{2,\ldots,l\}$, $a_1$ executes non-preemptively in $[t,t+\varepsilon]$, and some job $d$ that is not a predecessor of $a_1$ executes non-preemptively in $[t-\varepsilon,t]$, then the schedule $\cP'$ obtained by $(\varepsilon,d)$-exchanging of the two sub-chains in $\cP$ is valid.
\qed
\end{lemma}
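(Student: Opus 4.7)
The plan is to verify the three kinds of feasibility conditions required of the transformed schedule $\cP'$: processor occupancy (no job runs on both processors at once, and at most two jobs run in parallel), release dates, and precedence constraints. Since $\cP$ is valid and the operation only touches three kinds of pieces (the leftward $\varepsilon$-shifts of the $a_j$'s, the rightward $\varepsilon$-shifts of the $b_j$'s together with $d$, and the $\varepsilon$-exchange of $a$ with $b$ in $[e_{j'+1},\complTime{\cP}{b}]$), it suffices to check each piece separately.

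First I would verify processor feasibility. On the $a$-side, each $a_j$ occupies the interval $[t_j-\varepsilon,t_{j+1}-\varepsilon]$, and by hypothesis $a_1$ runs non-preemptively in $[t,t+\varepsilon]$ of $\cP$, so the leftmost shifted piece of $a_1$ slides into a slot previously occupied by $a_1$ itself (overlap of the two time windows) and is otherwise free. The chain property \ref{it:C2} together with non-preemption of $a_1$ in $[t,t+\varepsilon]$ ensures that, for $j\geq 2$, the interval $[\startTime{\cP}{a_j}-\varepsilon,\startTime{\cP}{a_j}]$ contains only $a_{j-1}$ (shifted by the same amount). A symmetric argument works for the $b$-chain: the rightward shift of $b_j$ by $\varepsilon$ matches the leftward shift of $a_j$, so where the old schedule used both $a_j$ and $b_j$ the new one uses both shifted copies, preserving the two-processor bound. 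The piece of $d$ relocated from $[t-\varepsilon,t]$ to $[t,t+\varepsilon]$ occupies the processor that $b_1$ used to occupy in $[t,t+\varepsilon]$; since $b_1$ is now moved out of that window, the slot is free. Finally, the $\varepsilon$-exchange in $[e_{j'+1},\complTime{\cP}{b}]$ is feasible because $a$ and $b$ are independent of all jobs in $\jobs(\xi_{j'+1})\cup\cdots\cup\jobs(\xi_{\tau(b)})$ by the definition of an $\A$-configuration, and $\varepsilon\leq\ell$ ensures that the total length of $a$ available to be exchanged is enough (the $\A$-configuration guarantees $a$ runs non-preemptively for a length $\ell$ before its completion).

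Second, I would verify release dates. The only jobs whose start times decrease are $a_1,\ldots,a_l$. For $a_1$ the new start is $t-\varepsilon$, and the hypothesis $r(a_1)\leq t-\varepsilon$ handles this. For $j\geq 2$ the new start is $\startTime{\cP}{a_j}-\varepsilon$, which is at least $r(a_j)$ by hypothesis. All other jobs either keep their start times or are delayed, so their release dates continue to hold.

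Third, I would check precedence. The $a$-chain and $b$-chain are shifted rigidly (with $a_j\preceq a_{j+1}$ and $b_j\preceq b_{j+1}$), so inner precedences within each chain are preserved. The precedence $a\preceq$ anything in $\jobs(\xi_{j'+1})\cup\cdots\cup\jobs(\xi_{\tau(b)})$ cannot be violated because $a$ is independent of all those jobs by the $\A$-configuration definition. The relocation of $d$ into $[t,t+\varepsilon]$ respects precedence because $d$ is not a predecessor of $a_1$ (hypothesis) and because $d$ was already executable in the adjacent interval $[t-\varepsilon,t]$, so any predecessor of $d$ has completed by $t-\varepsilon$. The main obstacle, and the only place where the argument could break, is confirming that the $(\varepsilon,d)$-exchange does not inadvertently stack three jobs on two processors in any short subinterval around $t$; this is handled by the careful coupling of the leftward $a$-shift, the rightward $b$-shift, and the insertion of $d$ into the slot just vacated by $b_1$, all of which are shifts of equal length $\varepsilon$ that preserve the total processor demand at every instant.
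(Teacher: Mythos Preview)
The paper does not give a formal proof of this lemma: it is stated with a trailing \qed, and the preceding paragraph simply lists the conditions that must hold (``the value of $\varepsilon$ must be selected in such a way that $\varepsilon$-exchanging of $a$ and $b$ is possible \ldots\ $d$ should not be a predecessor of $a_1$ \ldots\ the release dates of jobs $a_1,\ldots,a_l$ need to be respected and $a_1$ must be non-preemptively executed in $[t,t+\varepsilon]$''). Your proposal is a correct and reasonable unpacking of exactly this implicit argument, organized along the same three axes (processor occupancy, release dates, precedence) that the paper's setup suggests.

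One small imprecision: in the $\varepsilon$-exchange step you write that ``$a$ and $b$ are independent of all jobs in $\jobs(\xi_{j'+1})\cup\cdots\cup\jobs(\xi_{\tau(b)})$ by the definition of an $\A$-configuration,'' but the definition only guarantees this independence for $a$. For $b$ nothing extra is needed, since $b$ already executes non-preemptively throughout $[e_j,e_{j'}]$ in $\cP$; the exchange merely extends $b$ within slots it already occupies or into slots vacated by $a$. This does not affect the validity of your argument.
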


\subsubsection*{Proof of Proposition~\ref{pro:no-A-configurations}}

Let $(\cP, \vect{e}, \vect{\xi})$ be a maximal schedule that satisfies the properties in Lemma~\ref{lem:at-start-of-b}.
Let $(a_1,\ldots,a_l=a)$ be the chain in $\cP$ that starts at $t_a$ and let $(b_1,\ldots,b_{l'}=b)$ be the chain in $\cP$
that starts at $t_b$. By definition of chains and Lemma~\ref{lem:at-start-of-b} we have
\begin{equation}\label{eg:ta1}
l=1\quad\Rightarrow\quad t_a=\startTime{\cP}{a},
\end{equation}
\begin{equation}\label{eg:ta2}
l\geq 2\quad\Rightarrow\quad \startTime{\cP}{a}-(l-1)\leq t_a<\startTime{\cP}{a}-(l-2),
\end{equation}
\begin{equation}\label{eg:tb1}
l'=1\quad\Rightarrow\quad t_b=\startTime{\cP}{b}, \textup{ and}
\end{equation}
\begin{equation}\label{eg:tb2}
l'\geq 2\quad\Rightarrow\quad \startTime{\cP}{b}-(l'-1)\leq t_b<\startTime{\cP}{b}-(l'-2).
\end{equation}

\paragraph{Case 1: $t_a\geq t_b$.}

 In this case we perform a transformation shown in Figure~\ref{fig:case1} as described below.
 By Lemma~\ref{lem:at-start-of-b}, there exists an integer $h$ such that both jobs $a$ and $b$ execute non-preemptively in $[\startTime{\cP}{a},h]$ and $[\startTime{\cP}{b},h]$, respectively.
 We have $e_p=t_a$ for some event $e_p$.

    \begin{figure*}[htb]
    \begin{center}
    \includegraphics[scale=1.0]{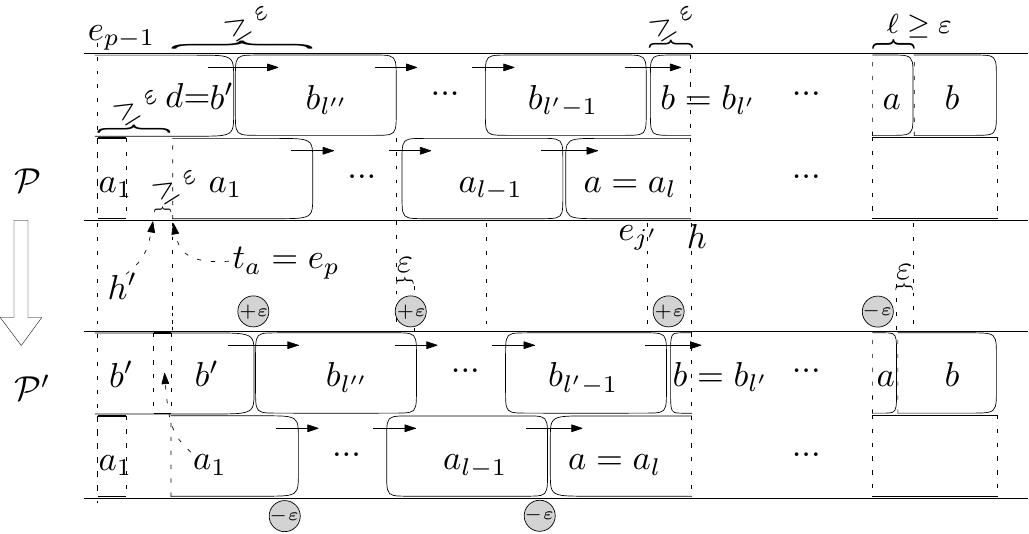}
    \caption{Transformation from $\cP$ to $\cP'$ ($\varepsilon=t_a-h'$, $d=b'$) in Case 1 in proof of Proposition~\ref{pro:no-A-configurations}}
    \label{fig:case1}
    \end{center}
    \end{figure*}

 Let $e_{j'}=\startTime{\cP}{b}$.
 By Lemma~\ref{lem:at-start-of-b}, $h-e_{j'}=\xi_{j'}(b)$.
 Let $\ell$ be the length of the $\A$-configuration formed by $a$ and $b$.
 Clearly $\complTime{\cP}{a}-\ell > h$ by definition of $\A$-configuration and Lemma~\ref{lem:at-start-of-b}.

 Let $A=(a_1,\ldots,a_l)$ and let $B$ be the sub-chain of the chain $(b_1,\ldots,b_{l'}=b)$ that starts at $t_a$ with a job $b'$ and ends with the job $b$.
 By definition of $t_a$, $\xi_{p-1}(a_1)<e_p-e_{p-1}$.
 Thus, $|\jobs(\xi_{p-1})\setminus\{a_1\}|\geq 2$.
 Let $d\in \jobs( \xi_{p-1})\setminus\{a_1\}$ be a job that does not complete at $e_p$ (possibly $b'$), if such a job exists.
 Otherwise, let $d$ be any job in $\jobs( \xi_{p-1})\setminus\{a_1\}$.
 Take
  \[
  \varepsilon=\min\left\{\xi_{p-1}(d),e_p-e_{p-1}-\xi_{p-1}(a_1),y,\xi_{j'}(b),\ell,t_a-h'\right\},
  \]
 where $h'$ is the greatest integer smaller than $t_a$ and
 \[
  y = \begin{cases}
        h-\startTime{\cP}{a}, & \textup{when }l=1, \\
        (\complTime{\cP}{a_1}-e_p)/2, & \textup{when $l>1$ and $\complTime{\cP}{d}=t_a$}, \\
        \complTime{\cP}{a_1}-e_p, & \textup{otherwise.}
      \end{cases}
 \]
 The latter ensures that $d$, if it completes at $t_a$ in $\cP$, does not complete after $\startTime{\cP'}{a_2}$ in $\cP'$.
 Note that, by definition of $t_a$, no predecessor of $a_1$ ends at $t_a$ and $\xi_{p-1}(a_1)<e_p-e_{p-1}$.
 Hence, in particular, $\varepsilon>0$.
 Let $\cP'$ be the schedule obtained by $(\varepsilon,d)$-exchanging of $A$ and $B$ in $\cP$.
 By Lemma~\ref{lem:swapping_chains}, $\cP'$ is feasible.
 If $\varepsilon=\ell$, then the total completion time of $\cP'$ is strictly smaller than that of $\cP$ and we get a contradiction since $\cP$ is optimal.

 Thus, consider $\varepsilon<\ell$.
 Then, the total completion time of $\cP'$ does not exceed that of $\cP$.
 To see that we observe that by (\ref{eg:ta1}) and (\ref{eg:ta2}) we have $\startTime{\cP}{b}-t_a<l$.
 Also, if two jobs in $\jobs(\xi_{p-1})\setminus\{a_1\}$ complete at $t_a$, then either at least one of them is a predecessor of $b'$, which implies that $\startTime{\cP}{b'}=t_a$, or otherwise we obtain from Lemma~\ref{lem:ccd} that $\startTime{\cP}{b'}=t_a$.
 Therefore, no more than $l$ jobs in $\{d,b_1,\ldots,b_{l'}\}$ complete in $[t_a,h]$ in $\cP$.
 Thus, no more than $l$ jobs get delayed by $\varepsilon$ each as a result of the exchange, however, each job in the chain  $(a_1,\ldots,a_l=a)$ completes by $\varepsilon$ earlier at the same time.

 Finally, we show that $\cP$ and $\cP'$ have the same abnormality point.
 Clearly, this holds if $i<p-1$.
 Also, if $i>j'$, then $\varepsilon$ is $p$-normal.
 To see this we observe that $e_p$, $e_{p-1}$, $\xi_{p-1}(a_1)$ and $t_a$ are clearly all $p$-normal.
 By Lemma~\ref{lem:remaining_part_i-normal}, $\complTime{\cP}{a_1}-t_a$ is $p$-normal.
 Also $e_{j'}=t_a+(\complTime{\cP}{b'}-t_a)+k-2$, where $k$ is the number of jobs in $B$, is $p$-normal.
 If $l=1$, then $\startTime{\cP}{a}$ and $h$ are $p$-normal, which implies $p$-normality of $y$.
 For $l>1$, we argue that $y$ is also $p$-normal and for that we need only consider $y=(\complTime{\cP}{a_1}-e_p)/2$.
  Then, $b'$ is not present in the $(p-1)$-st block for otherwise $b'$ would be selected as $d$.
  The length of $(p-1)$-st block, $e_p-e_{p-1}$, is by definition $(p-1)$-normal.
  By Lemma~\ref{lem:how_jobs_finish}, $\xi_{p-1}(d)=e_{p}-e_{p-1}$.
  By Lemma~\ref{lem:interlace}, each job in $\jobs(\xi_{p-1})\setminus\{a_1\}$ must complete at $e_{p}$.
  This proves, again by Lemma~\ref{lem:how_jobs_finish}, that $|\jobs(\xi_{p-1})|=2$.
  By Lemma~\ref{lem:remaining_part_i-normal}, $\xi_{p-1}(a_1)+\xi_p(a_1)+\xi_{p+1}(a_1)=\xi_{p-1}(a_1)+\complTime{\cP}{a_1}-t_a$ is $(p-1)$-normal.
  Since $\xi_{p-1}(a_1)\in\{0,e_p-e_{p-1}\}$, we obtain that $\xi_{p-1}(a_1)$ is $(p-1)$-normal.
  Thus, $(\complTime{\cP}{a_1}-t_a)/2$ is $p$-normal as required.
 Therefore, $\varepsilon$ is $p$-normal and, by Lemma~\ref{lem:abnormality_preserved}, $\cP$ and $\cP'$ have the same abnormality point for $i>j'$.
 Also, by Lemma \ref{lem:at-start-of-b}, and chain definition, we have $|\jobs(\xi_k)|=2$ for each $k\in\{p,\ldots,j'\}$.
 Thus, by Lemma \ref{lem:at_least_two_abnormal}, $i\notin \{p,\ldots,j'\}$.
 Finally, consider $i=p-1$.
 Then, if $i$ is no longer the abnormality point $i'$ of $\cP'$, then $i'>i$ --- a contradiction since $\cP$ is $\A$-maximal.
 Therefore, $i$ is the abnormality point of $\cP'$, and hence we proved that $\cP$ and $\cP'$ have the same abnormality point.
 To complete the case proof we note that $a$ and $b$ form an $\A$-configuration in $\cP'$ at $\complTime{\cP'}{a}=\complTime{\cP}{a}-\varepsilon$, which contradicts  the $\A$-maximality of $\cP$.

\paragraph{Case 2: $t_a<t_b$.}
We first define
\[t_a'=\max\left\{ t< t_b \st t=t_a\textup{ or }t\in\{\startTime{\cP}{a_1},\ldots,\startTime{\cP}{a_l}\} \right\}\]
and $a'$ to be the job from the chain $(a_{1},\ldots,a_{l})$ that starts or resumes at $t'_a$.
 By (\ref{eg:ta1}-\ref{eg:tb2}), it holds $a'=a_{l-l'+1}$ or $a'=a_{l-l'+2}$, and only one job, namely $a'$, from the chain $(a_1,\ldots,a_l)$ is executed in $(t'_a,t_b)$.
By definition $t'_a<t_b$, also we have $t'_a=e_{p}$ for some event $e_p$.

 We first prove that exactly one job that is not in the chain $(a_{1},\ldots,a_{l})$, call it $d$, executes in $[t'_a,t_b]$ and completes at $t_b$.
  Indeed, if $l'=1$, then this follows from Lemma~\ref{lem:at-start-of-b}.
  If $l'>1$, then any job not in the chain $(a_{1},\ldots,a_{l})$ that executes in $[t'_a,t_b]$ completes in $[t'_a,t_b]$, otherwise this job interlaces with $b_1$ --- a contradiction with Lemma~\ref{lem:interlace}.
  Finally,  we show that two or more jobs not in the chain $(a_{1},\ldots,a_{l})$ cannot complete in $[t'_a,t_b]$.
  If there are at least three such jobs, then the last two of them form an $\A$-configuration, which contradicts the $\A$-maximality of $\cP$.
  For exactly two, $c'$ and $c$ completing in this order, by Claim \ref{claim:acc1d1}, $c$ and $c'$ form an $\A$-configuration at $e_{p+1}$ --- a contradiction since $\cP$ is $\A$-maximal.
  Also, observe that for $l'=l+1$, job $b_1$ resumes at $t_b$ and thus $b_1$ and $d$  form an $\A$-configuration at $e_{p+1}$ by Claim \ref{claim:acc1d1} --- a contradiction since $\cP$ is $\A$-maximal.
  Thus, let $l\geq l'$ in the reminder of the lemma.

  Now we prove that our schedule $\cP$ satisfies the following claim that we have used above (see Figure~\ref{fig:claim}(a) for illustration of Claim~\ref{claim:acc1d1}):

\begin{claim}~\label{claim:acc1d1}
Suppose that $t'_a=e_p$ is an event in $\cP$ and that there exist jobs $a'$, $c$, and $d$ such that
\begin{enumerate}[label={\normalfont(\roman*)},leftmargin=*]
 \item \label{it:a11}~$\complTime{\cP}{c}=e_{p+1}$, and $\complTime{\cP}{d}=e_{p+2}$;
 \item \label{it:a21}~$\jobs(\xi_{p})=\{a',c\}$ and $\jobs(\xi_{p+1})=\{a',d\}$;
 \item \label{it:a31}~$\startTime{\cP}{d}<e_{p}$;
 \item \label{it:a41}~if $l'=l+1$, then $d=b_1$; otherwise $\complTime{\cP}{d}=t_b$.
\end{enumerate}
Then jobs $c$ and $d$ form an $\A$-configuration at $e_{p+1}$.
\end{claim}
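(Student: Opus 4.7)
My plan is to mimic the backwards-extending inductive argument in the proof of Lemma~\ref{lem:acc1d}, and then verify each of the five defining conditions of an $\A$-configuration formed by $c$ (in the role of $a$) and $d$ (in the role of $b$) at $e_{p+1}$.

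I would first apply Lemma~\ref{lem:how_jobs_finish} to $c$ in block $p$, which together with hypothesis (ii) gives $\xi_p(c) = e_{p+1} - e_p$, placing $c$ non-preemptively on one processor throughout $[e_p, e_{p+1}]$. Let $k \leq p$ be the largest index such that $c$ is spanning in each block $t \in \{k, \ldots, p\}$ and non-spanning in block $k-1$; existence is standard. For the inductive step (for $t$ going from $p-1$ down to $k$), I would establish, parallel to (\ref{eq:i1}) in the proof of Lemma~\ref{lem:acc1d}, that $\jobs(\xi_t) = \{c, c_{p-t}\}$ for some job $c_{p-t} \notin \{a', c, d\}$ that is spanning in block $t$ with $\complTime{\cP}{c_{p-t}} = e_{t+1}$. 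The key local observations are: $a' \notin \jobs(\xi_t)$ for every $t < p$ because $a'$ starts fresh at $t'_a = e_p$; the event $e_{t+1}$ must be created by some job ending in block $t$ since $c$ (and any previously identified $c_{p-t-1}$) is continuous across $e_{t+1}$; the cardinality bound $|\jobs(\xi_t)| \leq 2$ comes from Lemma~\ref{lem:one_dominates} combined with Lemma~\ref{lem:interlace} (a third non-spanning resident would interlace with $c$); finally $c_{p-t} \neq d$ because $\complTime{\cP}{c_{p-t}} = e_{t+1} < e_{p+2} = \complTime{\cP}{d}$. If $c$ has a partial piece in block $k-1$, I would use the block-$(k-1)$ rearrangement at the end of the proof of Lemma~\ref{lem:acc1d} to assume $c$ resumes at $e_k$, producing the maximal non-preemptive suffix $[e_k, e_{p+1}]$ of $c$ of length $\ell = e_{p+1} - e_k$.

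With $\ell$ fixed, the remaining $\A$-configuration conditions verify as follows. Job $d$ is non-preemptive on $[e_{p+1}, e_{p+2}]$ by Lemma~\ref{lem:how_jobs_finish} in block $p+1$. The inductive conclusion together with hypothesis (ii) gives $d \notin \jobs(\xi_t)$ for every $t \in \{k, \ldots, p\}$, so $d$ does not execute in $[e_k, e_{p+1}]$; combined with hypothesis (iii) this upgrades to $\startTime{\cP}{d} < e_k$. For independence, $c$ and $a'$ share block $p$ and so are independent, while $c$ and $d$ have intersecting execution windows (using (iii)) so that neither can precede the other.

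The main obstacle will be establishing independence of $c$ from every job in $\jobs(\xi_{p+2})$, the block immediately after $d$'s completion. Here I would case-split on hypothesis (iv): when $l' = l+1$ and $d = b_1$, block $p+2$ begins with $b_2$, and the chain relation $b_1 \preceq b_2$ together with the enclosing chain $(b_1, \ldots, b_{l'})$ and the $a$-$b$ $\A$-configuration of the outer schedule control the residents of $\jobs(\xi_{p+2})$ and their independence from $c$; when instead $\complTime{\cP}{d} = t_b$, block $p+2$ begins at $\startTime{\cP}{b_1}$ and the same chain structure forces the needed independence. Once this last item is in place, all five defining properties of an $\A$-configuration formed by $c$ and $d$ at $e_{p+1}$ are verified, and the claim follows.
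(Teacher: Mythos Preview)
Your plan mirrors the induction in Lemma~\ref{lem:acc1d}, and once one knows that block $p-1$ contains, alongside $c$, a second job $c_1\notin\{a',c,d\}$ that completes at $e_p$, the rest goes through essentially as you describe (indeed the paper simply invokes Lemma~\ref{lem:acc1d} at that point rather than repeating its induction). The genuine gap is precisely that first step.

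You argue that ``the event $e_{t+1}$ must be created by some job ending in block $t$'', but for $t=p-1$ this fails: by definition $t'_a=e_p$ is the point where $a'$ starts or resumes, so if $\startTime{\cP}{a'}=e_p$ then $e_p$ is already an event with no completion forced. In that situation nothing prevents $\jobs(\xi_{p-1})=\{c,d\}$: job $c$ spans block $p-1$, while $d$ (with $\startTime{\cP}{d}<e_p$ and $\complTime{\cP}{d}=e_{p+2}$) occupies the second processor without starting or ending at $e_p$. Your deduction ``some job ends at $e_p$, call it $c_1$; then $c_1\neq d$ since $\complTime{\cP}{c_1}=e_p<e_{p+2}$'' therefore has no footing. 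This is exactly the case the paper isolates: assuming no $c_1\notin\{a',c,d\}$ lies in block $p-1$, it shows $\startTime{\cP}{a'}=e_p$ and $\jobs(\xi_{p-1})=\{c,d\}$, and then disposes of that configuration by an $(\varepsilon,d)$-exchanging of the sub-chains $A=(a_1,\ldots,a_l)$ and $B=(d,b_1,\ldots,b_{l'})$, reducing to the analysis of Case~1 of Proposition~\ref{pro:no-A-configurations} to contradict optimality or $\A$-maximality. That transformation genuinely uses the ambient chain structure and hypothesis~(iv), which you have not yet invoked; it cannot be recovered from Lemmas~\ref{lem:one_dominates} and~\ref{lem:interlace} alone.

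A minor point: the independence condition for the $\A$-configuration only concerns the jobs executing between $\complTime{\cP}{c}=e_{p+1}$ and $\complTime{\cP}{d}=e_{p+2}$, namely $\jobs(\xi_{p+1})=\{a',d\}$; both are independent of $c$ for the easy reasons you give. Your case-split on hypothesis~(iv) to handle $\jobs(\xi_{p+2})$ is unnecessary (and note that the paper's own proof of Lemma~\ref{lem:acc1d} does not examine the block following $d$'s completion either).
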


\begin{proof}
If $\xi_{p-1}(c)<e_p-e_{p-1}$, then the jobs $c$ and $d$ form an $\A$-configuration of length $e_{p+1}-e_{p}$ at $e_{p+1}$ --- the lemma holds.
Thus,
\begin{equation} \label{eq:p-1}
  \xi_{p-1}(c)=e_p-e_{p-1}.
\end{equation}
  We now prove that in interval $[e_{p-1},e_{p+2}]$ the schedule $\cP$ is as in Figure~\ref{fig:claim}, i.e., there exists a job $c_1$ such that
  \begin{equation} \label{eq:b:induction1}
  \xi_{p-1}(c_1)=e_{p}-e_{p-1}\ \land\ \complTime{\cP}{c_1}=e_{p} \ \land\ \startTime{\cP}{d}<e_{p-1}.
  \end{equation}

   \begin{figure}[htb]
    \begin{center}
    \includegraphics[scale=1.0]{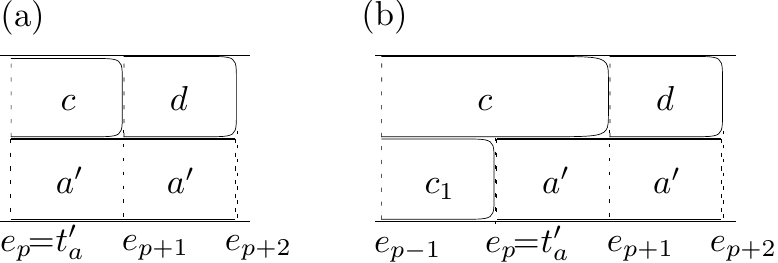}
    \caption{(a) Illustration of Claim~\ref{claim:acc1d1}; (b) Proof of Claim~\ref{claim:acc1d1}}
    \label{fig:claim}
    \end{center}
   \end{figure}

  First, we show that $\xi_{p-1}(c_1)>0$ for some $c_1\notin \{a',c,d\}$.
  Otherwise, by \eqref{eq:p-1} and \ref{it:a21}, $\startTime{\cP}{a'}=e_p$ since $e_p$ is an event.
  Thus, $\jobs(\xi_{p-1})=\{c,d\}$.
  Now, take
  \[
  \varepsilon=\min\big\{\xi_{p-1}(d),\xi_{j'}(b),\ell,t'_a-h'\big\},
  \]
 where $h'$ is the greatest integer smaller than $t'_a$.
 Observe that $\varepsilon>0$.
 Let $A=(a_1,\ldots,a_l)$ and let $B=(d,b_1,\ldots,b_{l'}=b)$ be the sub-chain that starts at $e_{p+1}$.
 Perform the $(\varepsilon,d)$-exchanging of $A$ and $B$ in $\cP$ as in Case 1 (the completion time of $c$ does not change in this transformation when $d\neq b_1$ because $a'$ from the $\tau(a')$-th block is placed in the $(p-1)$-st block and $d$ from $(p-1)$-st block is placed in the $(\tau(d)+1)$-st block) to get a contradiction.
 Observe that, by \ref{it:a41}, $d=b_1$ for $l'=l+1$ and thus the $(\varepsilon,d)$-exchanging of $A$ and $B$ indeed produces schedule $\cP'$ with total completion time that does not exceed that of schedule $\cP$.
 Also, by Lemma~\ref{lem:no_idle-time_between_preemptions}, there is no idle time in the $(p-1)$-st block of $\cP$.
 This implies, by Lemma~\ref{lem:remaining_part_i-normal}, that $\xi_{p-1}(d)=e_{p}-e_{p-1}$ is $(p-1)$-normal, which by arguments in Case~1 implies that the abnormality points of $\cP$ and $\cP'$ are the same.

 Second, by Lemma~\ref{lem:interlace}, $c_1$ and $d$ cannot interlace, which implies $\complTime{\cP}{c_1}=e_{p}$.
 By Lemma~\ref{lem:how_jobs_finish}, job $c_1$ is spanning in block $p-1$.
 Thus, by (\ref{eq:p-1}) we have $\{c_1,c\}=\jobs(\xi_{p-1})$.
 Finally,  $\startTime{\cP}{d}<e_{p-1}$ is due to \ref{it:a31} and $\jobs(\xi_{p-1})=\{c_1,c\}$.
 This completes the proof of \eqref{eq:b:induction1}.
 Equation~\eqref{eq:b:induction1} allows us to apply Lemma~\ref{lem:acc1d} to $c$, $c_1$, $d$, $a=a'$, and $e_j=e_{p}$ to conclude that $c$ and $d$ form an $\A$-configuration at $e_{p+1}$.
 This contradicts the $\A$-maximality of $\cP$ and completes the proof of Claim~\ref{claim:acc1d1}.
\end{proof}

    \begin{figure}[htb]
    \begin{center}
    \includegraphics[scale=1.0]{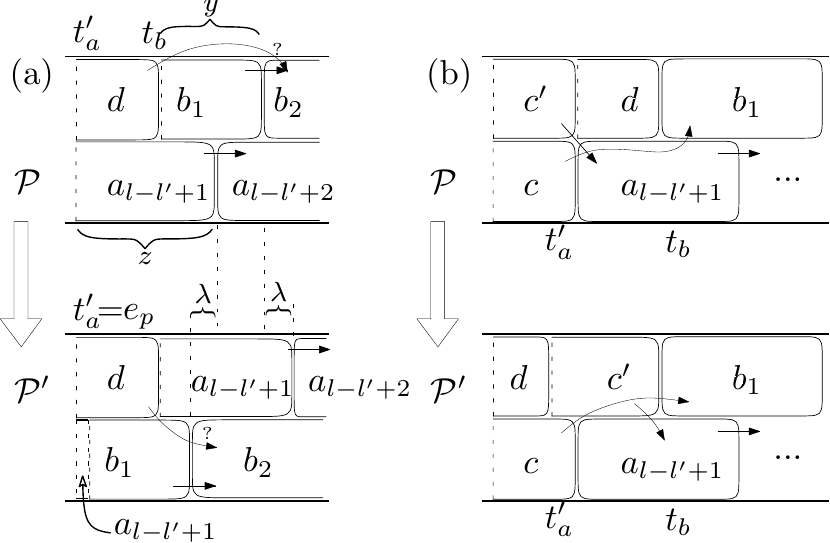}
    \caption{Transformations from $\cP$ to $\cP'$ in Case 2}
    \label{fig:A-conf2a}
    \end{center}
    \end{figure}

  Let $e_{j'}=\startTime{\cP}{b}$. Now, let $y=\sum_{t\geq p}\xi_t(b_1)$ and $z=\sum_{t\geq p}\xi_t(a_{l-l'+1})$.
  First we prove that $z\leq y$.
  This holds due to Lemma~\ref{lem:at-start-of-b} when $l'=1$ and hence let $l\geq l'>1$.
  If $z>y$, then swap $a$ and $b$ and then do the $(\varepsilon,a_{l-l'+1})$-exchanging of $(b_1,\ldots,b_{l'})$ and $(a_{l-l'+1},\ldots,a_l)$ (note the order of the chain, which is important) both starting at $t_b$, where $\varepsilon=\startTime{\cP}{b}-\startTime{\cP}{a} + \lambda$ and $0<\lambda<\min\{\ell,z-y,\xi_{j'}(b)\}$.
  This transformation is shown in Figure~\ref{fig:A-conf2a}(a).
  Let the resulting schedule be $\cP'$.
  The swapping increases the total completion time by $\startTime{\cP}{b}-\startTime{\cP}{a}$ and the $(\varepsilon,a_{l-l'+1})$-exchanging decreases it by $((l'-1)-l')\varepsilon$ --- observe that after the swapping of $a$ and $b$ the completion time of job $a=a_l$ does not change in the exchange.
  Therefore, the overall change equals $-\lambda$ and thus to get a contradiction it suffices to prove that $\cP'$ is feasible.

  Observe that $\startTime{\cP}{b}-\startTime{\cP}{a}=\complTime{\cP}{b_1}-\complTime{\cP}{a_{l-l'+1}}$.
  By Lemma~\ref{lem:at-start-of-b}, $e_{j'+1}$ is an integer.
  Thus, $r(b)<e_{j'+1}$ implies $r(b)\leq e_{j'+1}-1$.
  Moreover, $\startTime{\cP'}{b}\geq e_{j'+1}-1$.
  Therefore, by the definition of a sub-chain, all jobs $b_{2},\ldots,b_{l'}$ respect their release dates in $\cP'$.
  Since $z>y$, we have that $\startTime{\cP}{b_1}<t_a'$ and hence $b_1$ respects its release date in $\cP'$.
  Thus, $z\leq y$ for the reminder of the proof. We consider the following three subcases.

  \begin{description}
  \item{Case 2a:}\label{it:case2a} $t'_a=\startTime{\cP}{a_{l-l'+1}}$.
    (Schedule transformation performed in this case is shown in Figure~\ref{fig:A-conf2a}(b).)
    Then, $z=1$. Since $z\leq y$, we have $y=1$.
    If some job in $\jobs(\xi_{p-1})$ does not complete at $t_a'$, then this case reduces to Case~1.
    Otherwise, two jobs in $\jobs(\xi_{p-1})$ complete at $t_a'$.
    Thus, by Lemma~\ref{lem:one_dominates}, for at least one job in $\jobs(\xi_{p-1})$, say job $c'$, we have $\xi_t(c')=0$, where $e_t=\startTime{\cP}{d}$.
    Therefore, $c'$ and $d$ interlace if $\startTime{\cP}{c'}<\startTime{\cP}{d}$ --- a contradiction by Lemma \ref{lem:interlace}, or we can swap jobs $c'$ and $d$ if $\startTime{\cP}{c'}>\startTime{\cP}{d}$.
    In the latter case the resulting schedule (see Figure~\ref{fig:A-conf2a}(b)) reduces the total completion time of $\cP$ by Lemma \ref{lem:swapping}.
    This schedule is not feasible when $c'\prec a_{l-l'+1}$ and we restore feasibility by applying a $0$-exchanging of $b$ and $a$ in $[e_{j'+1},\complTime{\cP}{b}]$ followed by $(t_b-t_a',a_{l-l'+1})$-exchanging of $(b_1,\ldots,b_{l'})$ and $(a_{l-l'+1},\ldots,a_l)$ both starting at $t_b$.
    The new schedule $\cP'$ is feasible since $c'\nprec b_1$ for in-trees, and since, by Lemma~\ref{lem:at-start-of-b}, $\startTime{\cP'}{b}\geq e_{j'+1}-1\geq r(b)$, which shows that all $b_1,\ldots,b_{l'}$ respect their release dates in $\cP'$. Thus, we get a contradiction since $\cP$ is optimal.

    \item{Case 2b:}\label{it:case2b} $t'_a=\startTime{\cP}{a_{l-l'+2}}$.
    Since $l\geq l'$, $\xi_{p-1}(a_{l-l'+1})=e_p-e_{p-1}$.
    Also, $t_a'<t_b$ implies that $b_1$ resumes at $t_b=e_{p+1}$.
   Thus, by Claim \ref{claim:acc1d1}, $d$ and $b_1$ form an $\A$-configuration at $e_{p+1}$, which contradicts the $\A$-maximality of $\cP$.

    \item{Case 2c:}\label{it:case2c} $t'_a\neq\startTime{\cP}{a_{l-l'+1}}$ and $t'_a\neq\startTime{\cP}{a_{l-l'+2}}$.
    By definition of $t_a'$, we have $t'_a=t_a$ and $a_1$ resumes at $t_a$.
    Since $t_a$ is an event of $\cP$ some job, say $c$, completes at $e_p$.
    By Lemma~\ref{lem:how_jobs_finish}, job $c$ is spanning in block $p-1$.
    If another job completes at $e_p$, then we get a contradiction by Lemma~\ref{lem:ccd}.
    Hence, by Lemma~\ref{lem:interlace}, $\jobs(\xi_{p-1})\subseteq\{c,d,b_1,a_1\}$.
    Note that $z\leq y$, $l\geq l'$ and $t_a<t_b$ imply that $l'=l$.
    By definition of $t_a$, job $a_1$ is non-spanning in block $p-1$.
    By Lemma~\ref{lem:interlace}, $d$ and $b_1$ do not interlace, which implies $\xi_{p-1}(b_1)=0$.
    Therefore, $\xi_{p-1}(d)>0$.
    This allows us to obtain a contradiction by performing an analogous transformation as in Claim~\ref{claim:acc1d1}.
  \end{description}

    Observe that for the proof of Proposition~\ref{pro:no-A-configurations} it is crucial to show that $\cP$ and $\cP'$ have the same abnormality point. This needs to be proven in Case~1, Claim~\ref{claim:acc1d1}, \ref{it:case2a} and \ref{it:case2c}. In \ref{it:case2a} and \ref{it:case2c} the proof reduces to the proof for Case~1 and Claim~\ref{claim:acc1d1}. In Claim~\ref{claim:acc1d1} the proof also reduces to the proof for Case~1 but the $\xi_{p-1}(d)$ is new in definition of $\varepsilon$ as compared to Case~1 so we provide an appropriate comment about this $\xi_{p-1}(d)$ in Claim~\ref{claim:acc1d1}. Finally, in Case~1 we explicitly prove that $\cP$ and $\cP'$ have the same abnormality point.

\section{Alternating chains}
\label{subsec:alternating_chains}

In this section we prove that there always exists a normal schedule that is optimal for $P2|pmtn,in\textup{-}tree,r_j,p_j|\sum C_j$.
Our proof is by contradiction. We show that an abnormality point $i\neq \infty$ in a maximal schedule
implies that there is an alternating chain, see Section~\ref{sec:alt_chains:properties} for its definition, in the schedule.
Each job in that chain completes at the moment which is not $i$-normal.
This fact allows us to either make the alternating chain longer, which is shown in Section~\ref{sec:alt_chains:extending}, or find an
optimal schedule with an abnormality point higher than $i$.
Thus in either case we get a contradiction, in the former, since the number of jobs is finite and we cannot extend the chain
\emph{ad infinitum}, in the latter since the initial schedule is maximal.

\subsection{Basic definitions and properties}
\label{sec:alt_chains:properties}

Given schedule $(\cP,\vect{e},\vect{\xi})$, let $I=\{j,\ldots,j'\}\subseteq\{1,\ldots,q-1\}$. For two jobs $a$ and $a'$,
We say that $a$ \emph{covers $a'$ in $I$} if for each $t\in I$, $\xi_t(a')>0$ implies $\xi_t(a)=e_{t+1}-e_t$.
Let $\cP$ be a maximal schedule of abnormality point $i\neq\infty$.
Job sequence $(d_1,\ldots,d_l)$ is called an \emph{alternating chain} in $\cP$ if $d_1\in A_i(\cP)$ and $d_1$ executes
non-preemptively in $[e_{i+1},\complTime{\cP}{d_1}]$ and additionally, unless if $l=1$, the following conditions are satisfied:
\begin{enumerate} [label={\normalfont(C\arabic*)},leftmargin=*,itemindent=1em]
   \item\label{it:chain0} $A_i(\cP)=\{d_1,d_2\}$ and $\tau(d_1)=i+1$,
   \item\label{it:chain1} $\complTime{\cP}{d_j}<\complTime{\cP}{d_{j+1}}$ for each $j\in\{1,\ldots,l-1\}$.
   \item\label{it:chain2} the job $d_j$ executes non-preemptively in the interval  $[\complTime{\cP}{d_{j-2}},\complTime{\cP}{d_j}]$ for each $j\in\{2,\ldots,l\}$, where $\complTime{\cP}{d_0}=e_{i+1}$.
\end{enumerate}
If $(d_1,\ldots,d_l)$ ($l>2$) satisfies \ref{it:chain0}, \ref{it:chain2} and
\begin{enumerate} [label={\normalfont(C\arabic*')},leftmargin=*,itemindent=1em] \setcounter{enumi}{1}
   \item\label{it:chain1'} $\complTime{\cP}{d_j}<\complTime{\cP}{d_{j+1}}$ for each $j\in\{1,\ldots,l-2\}$ and $\complTime{\cP}{d_{l-1}}\geq\complTime{\cP}{d_l}$,
\end{enumerate}
then $(d_1,\ldots,d_l)$ is said to be \emph{almost alternating}.

\begin{lemma} \label{lem:alt_chain:no-almost}
If $\cP$ is a maximal schedule that is $\A$-free in $[t,\infty)$, then $\cP$ has no almost alternating chain $(d_1,\ldots,d_l)$
such that $l>2$ and $\complTime{\cP}{d_{l-1}}=t$.
\end{lemma}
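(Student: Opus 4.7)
My plan is to argue by contradiction: suppose $\cP$ admits an almost alternating chain $(d_1,\ldots,d_l)$ with $l>2$ and $\complTime{\cP}{d_{l-1}}=t$, and derive an $\A$-configuration at some event in $[t,\infty)$. Writing $t_j=\complTime{\cP}{d_j}$ for brevity, conditions \ref{it:chain0}, \ref{it:chain1'}, and \ref{it:chain2} force the chain jobs to alternate on the two processors: $d_{l-1}$ runs non-preemptively on one processor throughout $[t_{l-3},t]$ (with the convention $t_0=e_{i+1}$), while $d_{l-2}$ and then $d_l$ occupy the other processor, with $d_l$ running non-preemptively in $[t_{l-2},t_l]$ for some $t_l\leq t$.

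Next I would isolate the ``leftover'' execution of $d_l$. Since $d_{l-1}$ is a unit-length job running non-preemptively in $[t_{l-3},t]$ we have $t-t_{l-3}\leq 1$, and hence $t_l-t_{l-2}<1$, so $d_l$ must have positive execution before $t_{l-2}$. Let $[s,t_{l-2}]$ be the maximal non-preemption interval of $d_{l-2}$ ending at $t_{l-2}$, with $s\leq t_{l-4}$ by \ref{it:chain2}. Throughout $[s,t_{l-2}]$ the opposite processor is continuously occupied by alternating chain jobs ($d_{l-3}$ followed by $d_{l-1}$, with an analogous verification in the base case $l=3$ using the jobs of block $i$ together with $d_1$); hence $d_l$ cannot execute anywhere in $[s,t_{l-2}]$ and therefore $\startTime{\cP}{d_l}<s$. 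Combined with the independence of simultaneously-scheduled jobs (Lemma~\ref{lem:P_to_xi}\ref{it:P_to_xi:3}), this already exhibits an $\A$-configuration at $t_{l-2}$ formed by $a=d_{l-2}$ and $b=d_l$. However $t_{l-2}<t$, so by itself this does not violate the $\A$-freeness hypothesis.

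The main obstacle is to transfer this kind of contradiction into $[t,\infty)$. The event $t$ is the completion of $d_{l-1}$, and Lemma~\ref{lem:no_idle-time_between_preemptions} together with the maximality of $\cP$ forces some job $b^\star$ to execute on $d_{l-1}$'s processor immediately after $t$ (with a symmetric treatment on the other processor in the subcase $t_l=t$). Splitting on whether $t_l=t$ or $t_l<t$, I plan to mimic the ``leftover execution'' argument above applied to $b^\star$: using the maximal non-preemption interval of $d_{l-1}$ ending at $t$ (whose start is at most $t_{l-3}$) together with the processor alternation of the chain, I would argue that $b^\star$ must have been executed and preempted strictly before that interval. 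This yields an $\A$-configuration at $t$ with $a=d_{l-1}$ and $b=b^\star$, contradicting $\A$-freeness in $[t,\infty)$. The hardest technical step I anticipate is verifying the independence clause in the definition of $\A$-configuration for $b^\star$, where I expect to lean on the in-tree structure of $\prec$ together with the optimality properties collected in Section~\ref{subsec:opt_properties}.
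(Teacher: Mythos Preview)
The paper's proof is a four-line direct argument that never leaves the chain. From \ref{it:chain0} one has $\startTime{\cP}{d_1},\startTime{\cP}{d_2}<e_{i+1}$; then \ref{it:chain2} and \ref{it:chain1'} propagate this to $\startTime{\cP}{d_{l-1}},\startTime{\cP}{d_l}<e_{i+1}$; and the paper concludes that $d_{l-2}$ and $d_l$ form an $\A$-configuration (with $a=d_{l-2}$, $b=d_l$). No job outside $\{d_1,\dots,d_l\}$ is ever introduced.

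Your middle paragraph already \emph{is} this argument: you correctly exhibit the $\A$-configuration with $a=d_{l-2}$ and $b=d_l$, located at $\complTime{\cP}{d_{l-2}}$. You also correctly notice that this sits at $t_{l-2}<t$, whereas the hypothesis only gives $\A$-freeness on $[t,\infty)$. The paper writes ``at $t$'', but neither $d_{l-2}$ nor $d_l$ completes at $t$, so the literal location of the configuration it produces is $t_{l-2}$; you have spotted a genuine mismatch between the stated hypothesis and the configuration the short proof actually yields.

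Where your proposal breaks down is the attempted repair in the last paragraph. Introducing an external $b^\star$ does not work: Lemma~\ref{lem:no_idle-time_between_preemptions} only forbids idle time in a block that contains a started, non-spanning, not-yet-completed job, and nothing in the hypotheses guarantees such a job exists in the block beginning at $t$ --- the schedule could end there, be idle, or have a \emph{fresh} job start. Even if some $b^\star$ runs immediately after $t$, you have no mechanism to force $\startTime{\cP}{b^\star}$ to lie strictly left of the maximal non-preemption interval of $d_{l-1}$, which is exactly what the $\A$-configuration definition demands. The ``hardest technical step'' you flag (the independence clause) is thus not the bottleneck; the existence and preemption history of $b^\star$ already fail.

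In summary: your middle paragraph reproduces the paper's entire argument; your final paragraph is an attempt to reconcile that argument with the literal hypothesis, but the route via $b^\star$ does not close, and the paper itself offers no alternative device for bridging $t_{l-2}$ to $t$.
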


\begin{proof}
Suppose that $(d_1,\ldots,d_l)$, $l>2$, is an almost alternating chain and $\complTime{\cP}{d_{l-1}}=t$.
By \ref{it:chain0}, it holds $\startTime{\cP}{d_1}<e_{i+1}$ and $\startTime{\cP}{d_2}<e_{i+1}$.
This, \ref{it:chain2} and \ref{it:chain1'} imply that $\startTime{\cP}{d_{l-1}}<e_{i+1}$ and $\startTime{\cP}{d_l}<e_{i+1}$.
Thus, $d_{l-2}$ and $d_l$ form an $\A$-configuration at $t$ in $\cP$, which contradicts that $\cP$ is $\A$-free in $[t,\infty)$.
\end{proof}

The following lemma excludes almost alternating chains with simultaneous completions of $d_{l-1}$ and $d_l$ from maximal schedules. The lemma does not require that the maximal schedules are $\A$-free.

\begin{lemma} \label{lem:alt_chain:normal}
Let $\cP$ be a maximal schedule of abnormality point $i\neq\infty$.
There exists no alternating chain with at least two jobs in which the last two jobs of the chain complete at the same time.
\end{lemma}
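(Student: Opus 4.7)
The plan is to argue by contradiction on the only substantive case $l=2$; for $l>2$ condition~\ref{it:chain1} of the alternating chain definition already demands $\complTime{\cP}{d_{l-1}}<\complTime{\cP}{d_l}$, so equal last completion times are ruled out directly. Suppose then that $(d_1,d_2)$ forms an alternating chain with $\complTime{\cP}{d_1}=\complTime{\cP}{d_2}$, and I will derive a normality contradiction from the abnormality machinery of Section~\ref{subsec:abnormality_points}.

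First I would pin down the block-$(i+1)$ structure. By~\ref{it:chain0}, $\tau(d_1)=i+1$, so $\complTime{\cP}{d_1}=e_{i+2}$ and hence also $\complTime{\cP}{d_2}=e_{i+2}$. The chain definition says $d_1$ executes non-preemptively in $[e_{i+1},e_{i+2}]$, and~\ref{it:chain2} applied to $d_2$ gives the same for $d_2$. Hence $d_1$ and $d_2$ each span the $(i+1)$-st block on a distinct processor, so in particular $\jobs(\xi_{i+1})=\{d_1,d_2\}$.

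Next I would analyze block $i$. Lemma~\ref{lem:at_least_two_abnormal} gives $|\jobs(\xi_i)|=3$ and $A_i(\cP)=\{d_1,d_2\}$, so $\jobs(\xi_i)=\{d_1,d_2,a\}$ with $\xi_i(a)$ being $(i+1)$-normal. By Lemma~\ref{lem:one_dominates}\ref{it:domin:2}, some job of $\jobs(\xi_i)$ completes at $e_{i+1}$; since $d_1,d_2$ both complete at $e_{i+2}$ this must be $a$, and then Lemma~\ref{lem:how_jobs_finish} forces $a$ to span block $i$. The remaining processor is therefore filled by $d_1$ and $d_2$, yielding the key identity $\xi_i(d_1)+\xi_i(d_2)=e_{i+1}-e_i$.

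The decisive step is algebraic. By Lemma~\ref{lem:all_e_i's_are_normal} we may assume $e_{i+1}-e_i$ is $i$-normal. Lemma~\ref{lem:remaining_part_i-normal} applied with $i'=i$, together with $\tau(d_k)=i+1$, yields that $\xi_i(d_k)+(e_{i+2}-e_{i+1})=\sum_{j=i}^{q-1}\xi_j(d_k)$ is $i$-normal for $k\in\{1,2\}$. Summing these and substituting $\xi_i(d_1)+\xi_i(d_2)=e_{i+1}-e_i$ shows that $2(e_{i+2}-e_{i+1})$ is $i$-normal, hence $e_{i+2}-e_{i+1}$ is $(i+1)$-normal. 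But then $\xi_i(d_1)=[\xi_i(d_1)+(e_{i+2}-e_{i+1})]-(e_{i+2}-e_{i+1})$ is a difference of two $(i+1)$-normal numbers, hence $(i+1)$-normal, contradicting $d_1\in A_i(\cP)$. The main obstacle will be extracting the identity $\xi_i(d_1)+\xi_i(d_2)=e_{i+1}-e_i$ cleanly from the block-$i$ structure; once that is in hand, summing the two applications of Lemma~\ref{lem:remaining_part_i-normal} cancels the sum of the two abnormal quantities and exposes $(i+1)$-normality of the block-$(i+1)$ length.
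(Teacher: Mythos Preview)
Your dismissal of the case $l>2$ is the gap. You invoke condition~\ref{it:chain1} to rule out $\complTime{\cP}{d_{l-1}}=\complTime{\cP}{d_l}$ for $l>2$, but that same condition with $j=l-1=1$ would equally rule out $\complTime{\cP}{d_1}=\complTime{\cP}{d_2}$ for $l=2$, making the entire lemma vacuous and your subsequent $l=2$ argument superfluous. This internal inconsistency should signal that the lemma is not meant to be read that literally: as the paper's own proof and the applications (for instance, in Lemma~\ref{lem:partially-free} and in the proof of Proposition~\ref{pro:infinite_chain}) make clear, one is really considering sequences satisfying \ref{it:chain0}, \ref{it:chain2}, and the strict inequalities of \ref{it:chain1} only for $j\le l-2$, with the last pair allowed to coincide. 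Under that intended reading the case $l>2$ is genuinely open and requires work.

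Your treatment of $l=2$ is correct and is essentially the paper's argument for that case, reorganized around the identity $\xi_i(d_1)+\xi_i(d_2)=e_{i+1}-e_i$ and two applications of Lemma~\ref{lem:remaining_part_i-normal}. What you are missing is the argument for $l>2$: there $\tau(d_2)>i+1$, so the remaining part of $d_2$ from block $i$ onward is no longer just $\xi_i(d_2)+(e_{i+2}-e_{i+1})$, and your summation trick does not close. The paper handles this by computing the length $\xi_{i+2}(d_2)$ of the portion of $d_2$ in $[\complTime{\cP}{d_1},\complTime{\cP}{d_l}]$ from the chain structure, showing it is $i$-normal, and then running the same three-sum normality cancellation you used. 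You need to supply that step (or an equivalent one) to complete the proof.
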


\begin{proof}
Suppose for a contradiction that $(d_1,\ldots,d_l)$ is an alternating chain with $\complTime{\cP}{d_{l-1}}=\complTime{\cP}{d_l}$.
Let first $l>3$ and let $U$ be the set of odd indices in $\{3,\ldots,l\}$. Denote by $\vect{\xi}$ the partition of $\cP$.
We fist prove that the total length of $d_2$ executed in $[\complTime{\cP}{d_1},\complTime{\cP}{d_l}]$, namely $\xi_{i+2}(d_2)$,
is $i$-normal. From the definition of alternating chain we know that there is no idle time in this interval and only the jobs from the
chain execute in it. Hence,
\begin{eqnarray*}
\xi_{i+2}(d_2) & = & 2(\complTime{\cP}{d_l}-\complTime{\cP}{d_1}) \\
               &   & -\sum_{j=3}^l(\complTime{\cP}{d_j}-\complTime{\cP}{d_{j-2}}) \\
               & = & 2\sum_{j\in U}\sum_{j'\geq i}\xi_{j'}(d_j) - \sum_{j=3}^l\sum_{j'\geq i}\xi_{j'}(d_j).
\end{eqnarray*}
By Lemma~\ref{lem:remaining_part_i-normal}, $\sum_{j'=1}^i\xi_{j'}(d_j)$ is $i$-normal for each $j\in\{3,\ldots,l\}$.
Therefore, $\xi_{i+2}(d_2)$ is $i$-normal.
This implies, by Lemmas~\ref{lem:remaining_part_i-normal} and~\ref{lem:all_e_i's_are_normal}, that the three following numbers are $i$-normal:
\[
\xi_i(d_1)+\xi_i(d_2),\quad \xi_i(d_1)+\xi_{i+1}(d_1),\quad \xi_i(d_2)+\xi_{i+1}(d_1)
\]
since $\xi_{i+1}(d_1)=\xi_{i+1}(d_2)$. Therefore, $\xi_i(d_1)$ and $\xi_i(d_2)$ are $(i+1)$-normal, and since $e_{i+1}$ is
$i$-normal due to Lemma~\ref{lem:how_jobs_finish}, $i$ is not the abnormality point of $\cP$ --- a contradiction.

Now consider $l=2$. Let $\vect{e}$ be the events of $\cP$. Denote $x=\xi_{i+1}(d_1)=\xi_{i+1}(d_2)$.
By assumption and by definition of alternating chain, $d_1$ and $d_2$ complete at $e_{i+2}$ and hence $\xi_i(d_1)+x=s_1/2^i$ and
$\xi_i(d_2)+x=s_2/2^i$ for some integers $s_1$ and $s_2$.
By definition of alternating chain, $A_i(\cP)=\{d_1,d_2\}$ and hence $x$ is not $(i+1)$-normal.
Thus, $x=s'/2^{i+1}+\varepsilon$ for some $0<\varepsilon<1/2^{i+1}$.
Then, $\xi_i(d_1)=(s_1-s')/2^i-\varepsilon$ and $\xi_i(d_2)=(s_2-s')/2^i-\varepsilon$.
By Lemma~\ref{lem:all_e_i's_are_normal}, $e_{i+1}-e_{i}$ is $i$-normal. By Lemma~\ref{lem:at_least_two_abnormal},
\[
e_{i+1}-e_{i}=\xi_i(d_1)+\xi_i(d_2)=(s_1+s_2-2s')/2^i-2\varepsilon.
\]
Thus, $2\varepsilon$ is $i$-normal, which implies that $\varepsilon$ is $(i+1)$-normal --- a contradiction with
$0<\varepsilon<1/2^{i+1}$.
\end{proof}

\begin{lemma} \label{lem:alt_chain:not_normal}
Let $\cP$ be a maximal schedule of abnormality point $i\neq\infty$. If $(d_1,\ldots,d_l)$ ($l\geq 1$) is an alternating
chain in $\cP$, then $\complTime{\cP}{d_j}$ is not $(i+1)$-normal for each $j\in\{1,\ldots,l\}$.
\end{lemma}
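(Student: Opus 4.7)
I prove the claim by induction on $j$, using the non-preemptive intervals provided by condition \ref{it:chain2} (and the analogous requirement on $d_1$ built into the definition of an alternating chain) to relate each $\complTime{\cP}{d_j}$ either to $e_{i+1}$ (for $j\in\{1,2\}$) or to $\complTime{\cP}{d_{j-2}}$ (for $j\ge 3$). Without loss of generality, by Lemma~\ref{lem:all_e_i's_are_normal}, I take $e_{i+1}-e_i$ to be $i$-normal, so that $e_{i+1}$ is $i$-normal and hence $(i+1)$-normal. The key quantitative tool is Lemma~\ref{lem:remaining_part_i-normal}, which says that the total remaining work $\sum_{j'\ge i}\xi_{j'}(a)$ of any job $a\in\jobs$ at time $e_i$ is $i$-normal.

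\textbf{Base case ($j\in\{1,2\}$).} For $j=1$ the defining property of an alternating chain asserts that $d_1$ runs non-preemptively on $[e_{i+1},\complTime{\cP}{d_1}]$; for $j=2$, condition \ref{it:chain2} with the convention $\complTime{\cP}{d_0}=e_{i+1}$ gives the same for $d_2$. Because a job occupies at most one processor at a time and $d_j$ does no work after $\complTime{\cP}{d_j}$, this yields
\[\sum_{j'\ge i+1}\xi_{j'}(d_j)\;=\;\complTime{\cP}{d_j}-e_{i+1}.\]
Lemma~\ref{lem:remaining_part_i-normal} applied with $i'=i$ says that $\xi_i(d_j)+\bigl(\complTime{\cP}{d_j}-e_{i+1}\bigr)$ is $i$-normal. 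Since $d_j\in A_i(\cP)$, the summand $\xi_i(d_j)$ is not $(i+1)$-normal, so $\complTime{\cP}{d_j}-e_{i+1}$ is not $(i+1)$-normal either, and adding the $(i+1)$-normal quantity $e_{i+1}$ preserves this.

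\textbf{Inductive step ($j\ge 3$).} Assume $\complTime{\cP}{d_{j-2}}$ is not $(i+1)$-normal. By \ref{it:chain2}, $d_j$ runs non-preemptively on $[\complTime{\cP}{d_{j-2}},\complTime{\cP}{d_j}]$. The geometric claim to establish is that $d_j$ does no work in $[e_{i+1},\complTime{\cP}{d_{j-2}}]$: the non-preemptive intervals from \ref{it:chain2} for $d_1,\ldots,d_{j-1}$, combined with the strict completion ordering \ref{it:chain1}, pack the odd-indexed chain jobs $d_1,d_3,\ldots$ end-to-end on one processor and the even-indexed ones $d_2,d_4,\ldots$ on the other, fully occupying both processors throughout $[e_{i+1},\complTime{\cP}{d_{j-2}}]$. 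Therefore
\[\complTime{\cP}{d_j}-\complTime{\cP}{d_{j-2}}\;=\;\sum_{j'\ge i+1}\xi_{j'}(d_j)\;=\;\Bigl(\sum_{j'\ge i}\xi_{j'}(d_j)\Bigr)-\xi_i(d_j).\]
The first summand is $i$-normal by Lemma~\ref{lem:remaining_part_i-normal}, and \ref{it:chain0} gives $A_i(\cP)=\{d_1,d_2\}$, so $\xi_i(d_j)$ is $(i+1)$-normal for $j\ge 3$. Hence $\complTime{\cP}{d_j}-\complTime{\cP}{d_{j-2}}$ is $(i+1)$-normal, and the inductive hypothesis on $\complTime{\cP}{d_{j-2}}$ propagates non-$(i+1)$-normality to $\complTime{\cP}{d_j}$.

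\textbf{Main obstacle.} I expect the only non-routine step to be the machine-occupancy claim in the inductive argument: converting the local non-preemptive intervals \ref{it:chain2} and the strict ordering \ref{it:chain1} into the global statement that both processors are fully covered by chain jobs on $[e_{i+1},\complTime{\cP}{d_{j-2}}]$, leaving no gap where $d_j$ could execute. Once this is established, the remainder is bookkeeping with $i$-normality using Lemma~\ref{lem:remaining_part_i-normal} and the fact that $A_i(\cP)=\{d_1,d_2\}$.
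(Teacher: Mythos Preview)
Your argument is correct and follows essentially the same route as the paper's proof: both rest on the identity that the work done by $d_{j'}$ strictly after $e_{i+1}$ equals the length of its non-preemptive interval $[\complTime{\cP}{d_{j'-2}},\complTime{\cP}{d_{j'}}]$, combined with Lemma~\ref{lem:remaining_part_i-normal} and the fact that $\xi_i(d_u)$ (for $u\in\{1,2\}$) is the sole non-$(i+1)$-normal contribution. The only cosmetic difference is that the paper telescopes directly to the closed form $\complTime{\cP}{d_j}=e_{i+1}-\xi_i(d_u)+\sum_{j'\in U}\bigl(1-\sum_{t<i}\xi_t(d_{j'})\bigr)$, whereas you carry out the same computation as a two-step induction; your explicit processor-occupancy justification for why $d_j$ does no work in $[e_{i+1},\complTime{\cP}{d_{j-2}}]$ is a point the paper leaves implicit.
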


\begin{proof}
Let first $1\leq j\leq\min\{2,l\}$.
Then, $\complTime{\cP}{d_j} = e_{i+1}+1-\xi_i(d_j)-\sum_{t<i}\xi_t(d_j)$.
By Lemma~\ref{lem:remaining_part_i-normal}, the latter sum is $i$-normal.
By Observations~\ref{obs:normal_numbers} and~\ref{obs:abnormality_point} and by Lemma~\ref{lem:all_e_i's_are_normal},
$e_{i+1}$ is $i$-normal. However, $\xi_i(d_j)$ is not $(i+1)$-normal because $d_j\in A_i(\cP)$ and hence, again by
Observation~\ref{obs:normal_numbers}, $\complTime{\cP}{d_j}$ is not $(i+1)$-normal.

For $j>2$, if $j$ is even (respectively, odd), then let $U$ be the set of even (respectively, odd) integers in $\{1,\ldots,j-1\}$.
Let $u=1$ if $j$ is odd and let $u=2$ if $j$ is even. Then,
\begin{eqnarray*}
 \complTime{\cP}{d_j}&=e_{i+1} + \sum_{j'\in U}\left(1-\xi_i(d_{j'})-\sum_{t<i}\xi_t(d_{j'})\right) \\
                     &=e_{i+1} -\xi_i(d_u) + \sum_{j'\in U}\left(1-\sum_{t<i}\xi_t(d_{j'})\right).
\end{eqnarray*}
Again, by Lemma~\ref{lem:remaining_part_i-normal}, $\xi_i(d_u)$ is the only term in the above expression that is not $(i+1)$-normal.
Thus, $\complTime{\cP}{d_j}$ is not $(i+1)$-normal.
\end{proof}

\subsection{Transformations using alternating chains}
\label{sec:alt_chains:trans}

Consider a schedule $\cP$ of abnormality point $i$ and an alternating chain $(d_1,\ldots,d_l)$ ($l>1$) and
$\jobs(\xi_{\tau (d_l)})=\{x,y,d_l\}$. Let $u=2$ if $l$ is odd, and $u=1$ if $l$ is even.
Let $\varepsilon>0$ be the largest $\varepsilon$ such that
\begin{equation} \label{eq:epsilon}
 \begin{split}
\varepsilon\leq \alpha=\xi_i(d_u)\ \land \varepsilon\leq \beta= \frac{1}{2}\min\left\{e_{\tau(d_{j})+1}-e_{\tau(d_j)}\st j\in U\right\}\\
 \land\ \varepsilon\leq\gamma=\min\left\{e_{\tau(d_{j-1})}-r(d_{j})\st j\in U\setminus\{1,2\}\right\},
 \end{split}
\end{equation}
where $U$ is the set of the indices in $\{1,\ldots,l\}$ having the same parity as $l$, and
\begin{equation} \label{eq:epsilon1}
\varepsilon\leq \min\{\xi_{\tau (d_l)}(x), \xi_{\tau (d_l)}(y)\}.
\end{equation}

We define a transformation of \emph{$\varepsilon$-pushing of $d_l$} that produces
a schedule $\cP'$ as follows (see Figure~\ref{fig:altchain-pushing} for an illustration):
\begin{itemize}
 \item the schedules $\cP$ and $\cP'$ are identical in time intervals $[0,e_j]$ and $[\complTime{\cP}{d_l},\infty)$.
 \item
   To obtain the part of $\cP'$ in $[e_i,e_{i+1}]$, we increase (with respect to $\cP$) the amount of $d_{3-u}$ by $\varepsilon$ and decrease the amount of $d_u$ by $\varepsilon$.
   Then, a part of job $d_{3-u}$ executes in $[e_{i+1},\complTime{\cP}{d_{3-u}}-\varepsilon]$ and a part of job $d_u$ executes in $[e_{i+1},\complTime{\cP}{d_u}+\varepsilon]$. This in particular characterizes the execution of $d_1$ and $d_2$ in $\cP'$.
 \item
  For each $j\in U\setminus\{1,2\}$, the part of $d_j$ that executes in $[\complTime{\cP}{d_{j-2}},\complTime{\cP}{d_j}]$ in $\cP$ is executed in $[\complTime{\cP}{d_{j-2}}-\varepsilon,\complTime{\cP}{d_j}-\varepsilon]$ in $\cP'$. In this way we ensure that each job $d_j$, $j\in U$, completes $\varepsilon$ earlier in $\cP'$ than in $\cP$.
 \item
  For each $j\in \{3,\ldots,l\}\setminus U$, the part of $d_j$ that executes in $[\complTime{\cP}{d_{j-2}},\complTime{\cP}{d_j}]$ in $\cP$ is executed in $[\complTime{\cP}{d_{j-2}}+\varepsilon,\complTime{\cP}{d_j}+\varepsilon]$ in $\cP'$. In this way we ensure that each job $d_j$, $j\notin U$, completes $\varepsilon$ later in $\cP'$ than in $\cP$.
 \item
  Finally, the two jobs $x$ and $y$ are executed in the remaining free space in $[\complTime{\cP'}{d_{l-1}},\complTime{\cP}{d_l}]$ on one machine and in $[\complTime{\cP'}{d_l},\complTime{\cP}{d_l}]$ on the other machine.
\end{itemize}

    \begin{figure*}[htb]
    \begin{center}
    \includegraphics[scale=1.0]{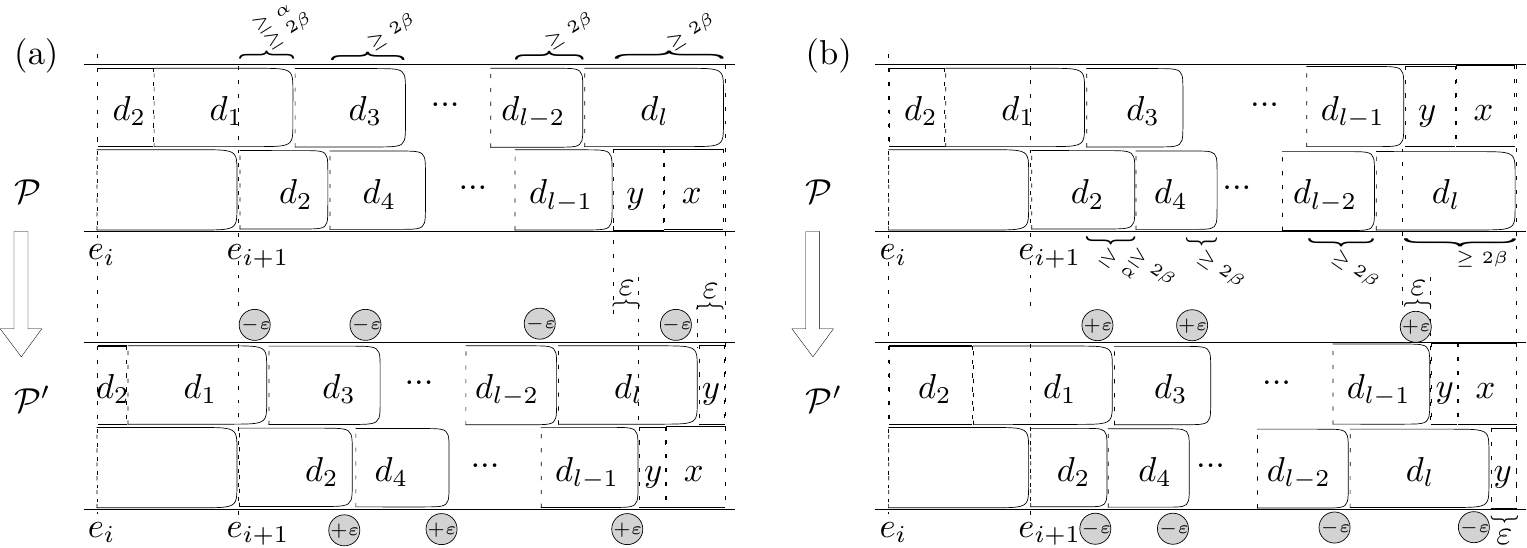}
    \caption{$\varepsilon$-pushing of $d_l$ when: (a) $l$ is odd; (b) $l$ is even}
    \label{fig:altchain-pushing}
    \end{center}
    \end{figure*}

The transformation of $\varepsilon$-pushing of $d_l$ will be a key transformation used to extend an alternating chain of a maximal schedule $\cP$ in the proof of Proposition~\ref{pro:infinite_chain} --- the main result of the next section. The extension, as alluded earlier, requires that $\cP$ is $\A$-free. Actually, it suffices that $\cP$ is $\A$-free in the interval that starts with the completion of $d_l$, the last job of the chain. However, since the $\varepsilon$-pushing of $d_l$ may change $\cP$ itself we need to prove that the resulting schedule is $\A$-free in the interval that starts with the completion of $d_l$, which the transformation may have changed, in order to unable further extensions of the chain. Thus, we need the following lemma.

\begin{lemma} \label{lem:partially-free}
Suppose that $(d_1,\ldots,d_l)$ (where $l>1$) is an alternating chain in a maximal schedule $\cP$ that is $\A$-free in
$[\complTime{\cP}{d_l},\infty)$, and $\jobs(\xi_{\tau (d_l)})=\{x,y,d_l\}$.
If $\varepsilon$ is selected as in \eqref{eq:epsilon} and \eqref{eq:epsilon1}, then the schedule $\cP'$ obtained from $\cP$ by $\varepsilon$-pushing of $d_l$ is maximal, $\A$-free in $[\complTime{\cP'}{d_l},\infty)$, and $(d_1,\ldots,d_l)$ is an alternating chain in $\cP'$.
\end{lemma}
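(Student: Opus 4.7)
The plan is to verify, in order, that $\cP'$ (i)~is feasible, (ii)~has the same total completion time as $\cP$ and is therefore optimal, (iii)~has abnormality point exactly $i$ and is therefore maximal, and (iv)~is $\A$-free in $[\complTime{\cP'}{d_l},\infty)$ with $(d_1,\ldots,d_l)$ still an alternating chain. The bounds \eqref{eq:epsilon} and \eqref{eq:epsilon1} on $\varepsilon$ are tailored exactly so that (i) goes through; the alternating-chain structure then yields (ii)--(iv).

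For (i), the bound $\varepsilon\leq\alpha$ permits shrinking $d_u$'s share in block $i$ without going below zero; $\varepsilon\leq\beta$ ensures that the shifted non-preemptive intervals $[\complTime{\cP}{d_{j-2}}\pm\varepsilon,\complTime{\cP}{d_j}\pm\varepsilon]$ for $j\in U$ fit on two processors without collision and keep the strict ordering required by \ref{it:chain1}; $\varepsilon\leq\gamma$ respects the release dates of the jobs that are shifted earlier; and $\varepsilon\leq\min\{\xi_{\tau(d_l)}(x),\xi_{\tau(d_l)}(y)\}$ leaves enough of $x$ and $y$ to fill the freed tail $[\complTime{\cP'}{d_l},\complTime{\cP}{d_l}]$ of block $\tau(d_l)$. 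Precedence within the chain is preserved because $d_{j-2}$ and $d_j$ share parity and are shifted in the same direction by the same $\pm\varepsilon$, so their handover remains aligned. For (ii), a direct tally shows that each $d_j$ with $j\in U$ advances by $\varepsilon$ and each $d_j$ with $j\in\{1,\ldots,l\}\setminus U$ is delayed by $\varepsilon$, while $x$, $y$, the spanning job of block $i$, and all other jobs keep their completion times. For $l$ even these $l/2$ advances exactly balance the $l/2$ delays; for $l$ odd the objective would strictly decrease, contradicting the optimality of $\cP$, so this subcase cannot arise under the hypotheses.

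For (iii), blocks $1,\ldots,i-1$ of $\cP$ and $\cP'$ coincide, so the abnormality point of $\cP'$ is at least $i$; if it were strictly greater, then together with the optimality from (ii) this would put $\cP'$ strictly above $\cP$ under $\trianglelefteq$, contradicting the maximality of $\cP$. Hence the abnormality point of $\cP'$ equals $i$ and $\cP'$ is maximal. Applying Lemma~\ref{lem:at_least_two_abnormal} to $\cP'$ yields $A_i(\cP')=\{d_1,d_2\}$, giving \ref{it:chain0}; \ref{it:chain1} survives because the parity-consistent $\pm\varepsilon$ shifts preserve strict completion-time ordering (using $\varepsilon\leq\beta$); and \ref{it:chain2} is built directly into the construction.

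The main obstacle will be (iv). Since $\cP$ and $\cP'$ coincide on $[\complTime{\cP}{d_l},\infty)$, the $\A$-freeness hypothesis on $\cP$ transfers there automatically, so the only genuinely new completion event to analyse in the required interval is $\complTime{\cP'}{d_l}=\complTime{\cP}{d_l}-\varepsilon$ itself. Any $\A$-configuration at this event must have $a=d_l$, with maximal non-preemptive tail $[\complTime{\cP'}{d_{l-2}},\complTime{\cP'}{d_l}]$, together with some $b$ running non-preemptively starting at $\complTime{\cP'}{d_l}$, begun strictly before $\complTime{\cP'}{d_{l-2}}$, and absent from $[\complTime{\cP'}{d_{l-2}},\complTime{\cP'}{d_l}]$. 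The only jobs executing in $[\complTime{\cP'}{d_l},\complTime{\cP}{d_l}]$ in $\cP'$ are $x$ and $y$, so the only candidates for $b$ are these two; but the construction places $x$ and $y$ on the opposite processor throughout $[\complTime{\cP'}{d_{l-1}},\complTime{\cP'}{d_l}]$, which is a nonempty sub-interval of $[\complTime{\cP'}{d_{l-2}},\complTime{\cP'}{d_l}]$, so neither $x$ nor $y$ satisfies the absence requirement. Hence no new $\A$-configuration is created and $\cP'$ is $\A$-free in $[\complTime{\cP'}{d_l},\infty)$, completing the proof.
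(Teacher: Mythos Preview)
Your treatment of feasibility, optimality, and maximality in parts (i)--(iii) is essentially correct and matches the paper's reasoning. In particular, your observation that a strict increase of the abnormality point would contradict maximality of $\cP$ cleanly forces $\varepsilon$ to be determined by \eqref{eq:epsilon1} rather than by $\alpha$, $\beta$, or $\gamma$, and hence forces $l$ to be even.

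The genuine gap is in (iv). You assert that ``the construction places $x$ and $y$ on the opposite processor throughout $[\complTime{\cP'}{d_{l-1}},\complTime{\cP'}{d_l}]$'' and conclude that neither can satisfy the absence requirement of an $\A$-configuration. But the pushing construction only says that $x$ and $y$ jointly fill the free space; it does not force \emph{each} of them to appear in $[\complTime{\cP'}{d_{l-1}},\complTime{\cP'}{d_l}]$. Concretely, since $\varepsilon$ is chosen as large as possible, one may have $\varepsilon=\min\{\xi_{\tau(d_l)}(x),\xi_{\tau(d_l)}(y)\}$, say $\varepsilon=\xi_{\tau(d_l)}(y)$. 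Then all of $y$'s share of block $\tau(d_l)$ can sit in $[\complTime{\cP'}{d_l},\complTime{\cP}{d_l}]$ on the processor vacated by $d_l$, so $y$ is entirely absent from $[\complTime{\cP'}{d_{l-2}},\complTime{\cP'}{d_l}]$. If in addition $\startTime{\cP}{y}<\complTime{\cP'}{d_{l-2}}$ (which nothing you have said rules out), then $d_l$ and $y$ do satisfy the absence and early-start requirements of an $\A$-configuration in $\cP'$, and your argument does not reach a contradiction.

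The paper closes this gap by a completely different device: assuming such an $\A$-configuration exists, it runs the pushing \emph{in reverse}, producing a one-parameter family $\cP_{\varepsilon'}$ (for $\varepsilon'\in[0,\lambda)$) of optimal schedules in which $(d_1,\ldots,d_l)$ remains an alternating chain, and analyses what obstruction stops the family at $\varepsilon'=\lambda$. Each possible obstruction---two consecutive $d_j$'s coinciding, loss of $d_1$ or $d_2$ from block $i$, $d_l$ reaching $\complTime{\cP'}{b}$, or a release date becoming tight---is dispatched using Lemmas~\ref{lem:alt_chain:normal} and~\ref{lem:alt_chain:not_normal} or the optimality/maximality of $\cP$. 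This reverse-pushing analysis is the missing ingredient; your direct placement argument cannot replace it without an additional case analysis at $\varepsilon=\min\{\xi_{\tau(d_l)}(x),\xi_{\tau(d_l)}(y)\}$.
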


\begin{proof}
An  \emph{$\varepsilon$}-pushing of $d_l$ results in a feasible schedule $\cP'$ (note that at most one of $d_1$ and $d_2$ can have release date in $[e_i,e_{i+1}]$) with the total completion time not exceeding that of $\cP$.
Thus, $\cP'$ is optimal.
Note that an odd $l$ would results in $\cP'$ having smaller total completion time than that of $\cP$.
Thus, $l$ is even.
If the $\varepsilon$ makes at least one of the tree inequalities in (\ref{eq:epsilon}) an equality, then the $i$-th block becomes $i$-normal and we get a contradiction in case of a maximal $\cP$.
On the other hand, if an $\varepsilon$ makes all three inequalities in (\ref{eq:epsilon}) holding strict, then $(d_1,\ldots,d_l)$, $l>1$, is an alternating chain in $\cP'$.

We prove, by contradiction, that the schedule $\cP'$ is $\A$-free in $[\complTime{\cP'}{d_l},\infty)$.
Suppose that some jobs $a$ and $b$ form an $\A$-configuration at a point $t\geq\complTime{\cP'}{d_l}$ in $\cP'$.
Note that $t\geq \complTime{\cP}{d_l}$ is not possible because $\cP$ and $\cP'$ are identical from $\complTime{\cP}{d_l}$ on and $\cP$ is $\A$-free in $[\complTime{\cP}{d_l},\infty)$ by assumption.
Thus, $t=\complTime{\cP'}{d_l}$.
Therefore, $a=d_{l}$ and $b\in\{x,y\}$.

Let $\lambda\in [0,\complTime{\cP'}{b}-\complTime{\cP'}{d_l}]$ be a maximal real number such that for each $\varepsilon'\in[0,\lambda)$ there exists a feasible schedule $\cP_{\varepsilon'}$ such that $(d_1,\ldots,d_l)$ is an alternating chain in $\cP_{\varepsilon'}$ and $\varepsilon'$-pushing of $d_l$ in $\cP_{\varepsilon'}$ results in $\cP$.
Since $l$ is even, the total completion time of $\cP_{\varepsilon'}$ is the same as the total completion time of $\cP'$ for each $\varepsilon'\in[0,\lambda)$.
Informally speaking, $\cP_{\varepsilon'}$ is obtained by performing a modification that is `opposite' to pushing of $d_l$.
By definition of $\A$-configuration, $d_l$ is independent of any job that executes in the interval $(\complTime{\cP'}{d_l},\complTime{\cP'}{b})$ in $\cP'$.
Thus, the maximality of $\lambda$ implies that taking $\varepsilon'=\lambda$ would result in a schedule $\cP_{\varepsilon'}$ in which one of the following holds:
\begin{itemize}
 \item Job sequence $(d_1,\ldots,d_l)$ is not an alternating chain in $\cP_{\varepsilon'}$.
       Then we have two possibilities. The first possibility is that $t=\complTime{\cP_{\varepsilon'}}{d_j}=\complTime{\cP_{\varepsilon'}}{d_{j+1}}$ for some $j\in\{1,\ldots,l-2\}$.
       Then, $(d_1,\ldots,d_{j+1})$ is an alternating chain in which the two last jobs complete at the same time --- a contradiction with Lemma~\ref{lem:alt_chain:normal}.
       The second possibility is that either $d_1$ or $d_2$ is not present in the $i$-th block of $\cP_{\varepsilon'}$.
       Then, the abnormality point of $\cP_{\varepsilon'}$ is greater than $i$ --- a contradiction with the maximality of $\cP$.
 \item $\complTime{\cP_{\varepsilon'}}{d_l}=\complTime{\cP}{b}$.
       This would imply that $\complTime{\cP_{\varepsilon'}}{b}<\complTime{\cP}{b}$ and this is not possible due to the optimality of $\cP$.
 \item $\startTime{\cP_{\varepsilon'}}{d_j}=r(d_j)$ for some $j\in\{3,\ldots,l-1\}$.
       In this case a contradiction follows from Lemma~\ref{lem:alt_chain:not_normal}.
 \item $\startTime{\cP_{\varepsilon'}}{b}=r(b)$.
       Since $\complTime{\cP_{\varepsilon'}}{d_{l-1}}=\startTime{\cP_{\varepsilon'}}{b}$, we again obtain a contradiction with
       Lemma~\ref{lem:alt_chain:not_normal}.
\end{itemize}
Therefore, the lemma is proved.
\end{proof}

Finally, we observe that the \emph{$\varepsilon$}-pushing of $d_l$ can readily be extended to the case where one of the jobs $x$ or $y$ starts in $(\complTime{\cP}{d_{l-1}},\complTime{\cP}{d_{l}})$ but neither of them completes in that interval.

\subsection{Extending an alternating chain}
\label{sec:alt_chains:extending}

We first prove that a single-job alternating chain is present in each maximal (and thus in $\A$-maximal) schedule
of abnormality point $i\neq\infty$.

\begin{lemma} \label{lem:b+c_form_chain}
If $\cP$ is a maximal schedule of abnormality point $i\neq\infty$, then a job in $A_i(\cP)$ with minimum completion time forms an
alternating chain in $\cP$.
\end{lemma}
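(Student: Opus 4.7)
First I would extract the structure of block $i$ from the earlier lemmas. By Lemma~\ref{lem:at_least_two_abnormal}, $|A_i(\cP)|=2$ and $|\jobs(\xi_i)|=3$, so there is a third job $a_3\in\jobs(\xi_i)\setminus A_i(\cP)$. Since $e_{i+1}-e_i$ is $i$-normal by Lemma~\ref{lem:all_e_i's_are_normal}, any job spanning block $i$ contributes an $(i+1)$-normal amount in that block and therefore cannot belong to $A_i(\cP)$; so the two jobs $d_1,d_2$ of $A_i(\cP)$ are both non-spanning. On the other hand, the $|\jobs(\xi_i)|=3$ sub-case of Lemma~\ref{lem:one_dominates}(ii) forces some job of $\jobs(\xi_i)$ to complete at $e_{i+1}$, and by Lemma~\ref{lem:how_jobs_finish} that job is spanning; so $a_3$ is the spanning job of block $i$ with $\tau(a_3)=i$. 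Invoking Lemma~\ref{lem:no_idle-time_between_preemptions} for the non-spanning $d_1$ then forbids idle time in block $i$, forcing $\xi_i(d_1)+\xi_i(d_2)=e_{i+1}-e_i$, so $a_3$ occupies one machine throughout block $i$ while $d_1$ and $d_2$ share the other. Relabel so that $\complTime{\cP}{d_1}\le\complTime{\cP}{d_2}$.

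The statement requires only that the singleton $(d_1)$ satisfies the definition of an alternating chain, i.e.\ that $d_1$ executes non-preemptively on $[e_{i+1},\complTime{\cP}{d_1}]$; equivalently, that $d_1$ be spanning in every block $j\in\{i+1,\ldots,\tau(d_1)\}$. Block $\tau(d_1)$ is immediate from Lemma~\ref{lem:how_jobs_finish}, and the case $\tau(d_1)=i+1$ is then trivial. For $\tau(d_1)>i+1$ I would argue by contradiction: pick the smallest $j\in\{i+1,\ldots,\tau(d_1)-1\}$ in which $d_1$ is non-spanning. The inequalities $\startTime{\cP}{d_1}\le e_i\le e_j$ and $\complTime{\cP}{d_1}\ge e_{j+1}$ let me apply Lemma~\ref{lem:no_idle-time_between_preemptions} to rule out idle time in block $j$, and Lemma~\ref{lem:one_dominates}(i) then supplies a job $a\in\jobs(\xi_j)$ that spans block $j$ with $a\ne d_1$.

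I would dispose of the possibilities for $a$ as follows. If $a=d_2$, the inequality $\complTime{\cP}{d_2}\ge\complTime{\cP}{d_1}>e_{j+1}$ forbids either of $d_1,d_2$ from completing at $e_{j+1}$; by Lemma~\ref{lem:one_dominates}(ii) there must then be a third job $c\in\jobs(\xi_j)$ completing at $e_{j+1}$, and Lemma~\ref{lem:how_jobs_finish} makes $c$ span block $j$ too. But two spanning jobs $d_2$ and $c$ occupy both machines of block $j$, leaving no room for $d_1$ in $\jobs(\xi_j)$ (or, in the degenerate case $\xi_j(d_1)=0$, producing a pair interlacing with $d_1$ in the sense of Lemma~\ref{lem:interlace}). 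If $a\notin\{d_1,d_2\}$ then $a\notin\jobs(\xi_i)$ (because $a_3$ has already ended at $e_{i+1}$), so $\startTime{\cP}{a}\ge e_{i+1}>e_i\ge\startTime{\cP}{d_1}$; I would then attempt either a swap of $d_1$ and $a$ via Lemma~\ref{lem:swapping} (when $\tau(a)=\tau(d_1)-1$) or an interlace via Lemma~\ref{lem:interlace} with witness block $t=i$ (for which $\xi_t(d_1)>0$ and $a$ is non-spanning) to produce a schedule with strictly smaller total completion time than $\cP$, contradicting its optimality.

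The main obstacle is the final sub-case, namely when $a\notin\{d_1,d_2\}$ and $r(a)=e_{i+1}$: the condition $r(a)<e_{t+1}$ of Lemma~\ref{lem:interlace} fails at $t=i$, so a direct interlace is unavailable. I expect to handle this by walking along the unique in-tree successor path starting at $a$ (using that every non-root job has at most one immediate successor) until a pair is found on which Lemma~\ref{lem:swapping} applies, in the spirit of the sub-chain constructions of Section~\ref{subsec:A-configurations}. The in-tree hypothesis is what keeps this traversal deterministic and preserves both the release-date and the precedence conditions needed to produce the contradictory schedule.
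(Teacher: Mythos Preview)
Your target---showing that $d_1$ is spanning in every block $j\in\{i+1,\ldots,\tau(d_1)\}$---is enough for the lemma, but the case analysis as sketched does not close, and it differs substantially from the paper's argument.

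The paper does not argue block-by-block. It first \emph{modifies} $\cP$ by repeated cyclic shifts of the form $(t\cyclic{d_1}\,i\cyclic{d_2}\,t)$ (which preserve events and completion times) until $d_2$ \emph{covers} $d_1$ on $I=\{i+1,\ldots,\tau(d_1)\}$, i.e.\ $\xi_t(d_1)>0$ forces $\xi_t(d_2)=e_{t+1}-e_t$. If any such shift empties $d_2$ from block $i$, the abnormality point rises above $i$, contradicting \emph{maximality}; this is precisely where maximality, and not merely optimality, is used---and your plan never invokes maximality. With covering in place the paper then derives the stronger conclusion $\tau(d_1)=i+1$ via Lemma~\ref{lem:between_preemptions}, which immediately gives the lemma through Lemma~\ref{lem:how_jobs_finish}.

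In your direct attack there are two concrete gaps. First, from $a\notin\jobs(\xi_i)$ you conclude $\startTime{\cP}{a}\ge e_{i+1}$, but that implication is false in general: a job can appear in some block $t<i$, skip block $i$ entirely, and still span block $j\ge i+1$. Second, the interlace you propose at witness $t=i$ has the roles reversed. For jobs $d_1,a$ with $\tau(d_1)<\tau(a)$, the definition of interlacing requires $\xi_t(a)>0$ at the witness block, yet you have exactly $\xi_i(a)=0$; and for the opposite direction you would need $\tau(a)<\tau(d_1)$ together with $d_1$ non-spanning in block $\tau(a)$, neither of which is available. These are not the release-date edge case you flag at the end---without the covering reduction you simply have no handle on where $a$ sits relative to $d_1$ and $d_2$, and the proposed in-tree walk does not supply one.
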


\begin{proof}
Suppose that schedule $(\cP,\vect{e},\vect{\xi})$ is maximal. According to Observations~\ref{obs:normal_numbers} and
\ref{obs:abnormality_point} and Lemma~\ref{lem:all_e_i's_are_normal}, $e_{i+1}$ is $i$-normal.
By Lemma~\ref{lem:at_least_two_abnormal}, we have $|A_i(\cP)|=2$.
Let $A_i(\cP)=\{b,c\}$, where $\complTime{\cP}{b}\leq\complTime{\cP}{c}$.
Denote $I=\{i+1,\ldots,\tau(b)\}$.
Note that, by Lemma~\ref{lem:remaining_part_i-normal}, $I\neq\emptyset$.

We prove the lemma by contradiction.
More precisely, the assumption that $b$ does not form an alternating chain in $\cP$ allows us to conclude that $\cP$ is not
maximal. We may assume without loss of generality that $c$ covers $b$ in $I$.
Indeed, if this is not the case, then we transform $\cP$ as follows.
Let $t\in I$ be such that $\xi_t(b)>0$ and job $c$ is non-spanning in block $t$.
Take $\varepsilon=\min\{\xi_t(b),\xi_i(c),e_{t+1}-e_t-\xi_t(c)\}$.
Note that $\varepsilon>0$ and, by Lemmas~\ref{lem:no_idle-time_between_preemptions} and
\ref{lem:one_dominates}, $\xi_i(c)=e_{i+1}-e_i-\xi_i(b)$.
The schedule obtained by a transformation
\[
(\vect{e}',\vect{\xi}')=\cyclicshift{ \vect{e},\vect{\xi},\varepsilon,(t\cyclic{b} i\cyclic{c} t) }
\]
has the same total completion time and the same events as $\cP$ and either:
  $\xi_i'(b)=e_{i+1}-e_i$ (which happens when $\varepsilon=\xi_i(c)$); or
  $\xi_t'(b)=0$ (which happens when $\varepsilon=\xi_t(b)$); or
  $\xi_t'(c)=e_{t+1}-e_t$ (which happens when $\varepsilon=e_{t+1}-e_t-\xi_t(c)$).
In the former case we would obtain a schedule with abnormality point greater than $i$, which is not possible due to the maximality of $\cP$.
After repeating this transformation as long as $c$ does not cover $b$ in $I$ we obtain the desired schedule.

Now we prove that $\tau(b)=i+1$. Suppose for a contradiction that $\tau(b)>i+1$. Thus, since $c$ covers $b$ in $I$,
and $e_{\tau(b)}$ is an event of $\cP$, we have $\xi_{\tau(b)-1}(b)=\xi_{\tau(b)-1}(c)=0$ and there exists
$a\in\jobs\setminus\{b,c\}$ such that $\complTime{\cP}{a}=e_{\tau(b)}$.
Find the maximum $j$, $j<\tau(b)-1$, such that $\xi_j(b)\neq 0$. Note that $j\geq i$.
Since job $c$ covers $b$ in $I$, $c$ is spanning in block $j$. By Lemma~\ref{lem:one_dominates}, $a\notin\jobs(\xi_j)$.
Lemma~\ref{lem:between_preemptions} applied to $a=b$, $a'=a$, $j$ and $j'=\tau(b)$ gives $\tau(b)>j'=\tau(b)$ --- a contradiction.
This proves that $(b)$ is an alternating chain in $\cP$.
\end{proof}

\begin{lemma} \label{lem:shifting}
If $(d_1,\ldots,d_l)$ is an alternating chain in a maximal schedule $\cP$, then there is no idle time in block $(\tau(d_{l-1})+1)$
of $\cP$.
\end{lemma}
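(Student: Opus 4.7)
My plan is to proceed by contradiction. Suppose block $k := \tau(d_{l-1})+1$ of $\cP$ contains idle time of length $\delta > 0$. By condition~\ref{it:chain2}, job $d_l$ executes non-preemptively throughout $[\complTime{\cP}{d_{l-2}}, \complTime{\cP}{d_l}]$, an interval containing $[e_k, e_{k+1}]$. Hence $d_l$ is spanning in block $k$ on one processor, and the idle time of length $\delta$ lies entirely on the other processor.

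The first reduction is to show $\jobs(\xi_k) = \{d_l\}$. Take any $a \in \jobs(\xi_k) \setminus \{d_l\}$. Because events lie only at block boundaries, $\startTime{\cP}{a} \leq e_k$. If $a$ is not spanning in block $k$, then $\complTime{\cP}{a} > e_{k+1}$ (since $\complTime{\cP}{a}=e_{k+1}$ would, by Lemma~\ref{lem:how_jobs_finish}, force $a$ to span block $k$), and Lemma~\ref{lem:no_idle-time_between_preemptions} applied to $a$ in block $k$ rules out idle time---a contradiction. If instead $a$ is spanning, it occupies all of $[e_k,e_{k+1}]$ on the processor opposite $d_l$, again precluding idle time. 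Thus $\jobs(\xi_k)=\{d_l\}$ and one processor is entirely idle on $[e_k,e_{k+1}]$.

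Next I locate a job that can absorb the idle slot. Since $e_{k+1}$ is an event of $\cP$ and only $d_l$ executes in block $k$, either (A) $d_l$ completes at $e_{k+1}$, i.e., $\tau(d_l)=k$, or (B) some job $z$ starts at $e_{k+1}$. In Case B, $z$ runs in parallel with $d_l$ after $e_{k+1}$, so $z$ is independent of $d_l$; any predecessor $p$ of $z$ must finish by $e_{k+1}$ and, since $\jobs(\xi_k)=\{d_l\}$ and $p\ne d_l$, cannot end strictly inside $(e_k,e_{k+1}]$, hence $\complTime{\cP}{p}\le e_k$. Therefore $z$ may legally be started at $\max\{r(z),e_k\}$ on the idle processor; if $r(z)<e_{k+1}$, this strictly decreases $\complTime{\cP}{z}$, contradicting the optimality of $\cP$. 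Otherwise $r(z)=e_{k+1}$, forcing $e_{k+1}$ to be an integer and thus $(i+1)$-normal; a small cyclic shift removing a piece of $d_l$ from block $k$ into the idle slot (or equivalently, shortening block $k$ by the idle time) then yields, via Lemmas~\ref{lem:abnormality_preserved} and~\ref{lem:at_least_two_abnormal}, an optimal schedule whose abnormality point strictly exceeds $i$, contradicting the maximality of $\cP$. Case A is handled analogously: at least one of the (at most two) jobs starting at $e_{k+1}$ differs from the unique in-tree successor of $d_l$ and may be shifted as above; the only remaining subcase forces $e_{k+1}=\complTime{\cP}{d_l}$ to be an integer, contradicting Lemma~\ref{lem:alt_chain:not_normal}, which asserts that $\complTime{\cP}{d_l}$ is not $(i+1)$-normal.

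The principal obstacle is the subcase in which the only candidate jobs for filling the idle slot are either the in-tree successor of $d_l$ (pinned at $e_{k+1}$ by precedence) or a job released precisely at $e_{k+1}$; the contradiction here must be extracted from normality considerations---pitting the integrality of $e_{k+1}$ against the non-normality of $\complTime{\cP}{d_l}$ furnished by Lemma~\ref{lem:alt_chain:not_normal}---or from a carefully chosen cyclic shift whose block-indices satisfy the hypotheses of Lemma~\ref{lem:abnormality_preserved}, so that the resulting schedule remains optimal while attaining a strictly larger abnormality point than~$\cP$.
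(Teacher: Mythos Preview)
Your first reduction, showing $\jobs(\xi_k)=\{d_l\}$ with one processor entirely idle on $[e_k,e_{k+1}]$, is correct and matches the paper.  The subsequent argument, however, has a genuine gap in precisely the ``principal obstacle'' subcases you flag, and the fixes you sketch do not work.

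Consider Case~B with $r(z)=e_{k+1}$, or Case~A where the only job starting at $e_{k+1}$ is the in-tree successor of $d_l$ (or nothing starts at all, so $d_l$ is the last job to finish).  In these situations your proposal is to perform a local modification in block $k$ and invoke Lemmas~\ref{lem:abnormality_preserved} and~\ref{lem:at_least_two_abnormal} to conclude that the abnormality point \emph{increases}.  But $k=\tau(d_{l-1})+1>i$, so any change confined to block $k$ (or later) leaves the $i$-th block untouched; Lemma~\ref{lem:abnormality_preserved} guarantees only that the abnormality point does not \emph{decrease}, and Lemma~\ref{lem:at_least_two_abnormal} describes block $i$, not block $k$.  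Neither lemma lets you conclude that the abnormality point strictly exceeds $i$.  In Case~A you also assert that ``the only remaining subcase forces $e_{k+1}$ to be an integer,'' but when the only starter is the successor of $d_l$ (or there is none), no release-date constraint is available and $e_{k+1}$ need not be an integer; the appeal to Lemma~\ref{lem:alt_chain:not_normal} is therefore unavailable.

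The paper's proof avoids all of this by using the $\varepsilon$-pushing transformation of $d_l$ (defined in Section~\ref{sec:alt_chains:trans}) with $\varepsilon=\min\{\alpha,\beta,\gamma,x/2\}$, where $x$ is the length of the idle interval.  The point is that $\varepsilon$-pushing acts along the \emph{entire} alternating chain and in particular modifies the amounts of $d_1$ and $d_2$ in block~$i$; whichever of $\alpha,\beta,\gamma,x/2$ is attained, the resulting schedule has its $i$-th block $i$-normal (either because one of $d_1,d_2$ leaves block~$i$, or because two chain completion times coincide, or because a chain completion time hits an integer release date), contradicting maximality.  Your local ``fill the idle slot'' approach cannot reproduce this, because the contradiction with maximality lives at block $i$, not at block $k$.
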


\begin{proof}
Let $\vect{e}$ be the $q$ events of $\cP$ and let $\vect{\xi}$ be its partition.
Let $i$ be the abnormality point of $\cP$.
Since $\cP$ has an alternating chain, $i\neq\infty$.
Suppose for a contradiction that there is idle time in the $(\tau(d_{l-1})+1)$-st block of $\cP$.
By Lemma~\ref{lem:how_jobs_finish}, at most one job completes in the $(\tau(d_{l-1})+1)$-st block of $\cP$.
By Lemma~\ref{lem:no_idle-time_between_preemptions}, no job that does not complete in the $(\tau(d_{l-1})+1)$-st block can be present in this block.
Thus, $d_l$ is the only job in block $(\tau(d_{l-1})+1)$ and the total length of the idle time is $x=e_{\tau(d_{l-1})+2}-e_{\tau(d_{l-1})+1}$.
Construct a schedule $\cP'$ by performing an $\varepsilon$-pushing of $d_l$ in $\cP$ with
  \[\varepsilon=\min\left\{\alpha, \beta, \gamma,  x/2\right\}.\]
Denote the resulting schedule by $\cP'$.
To complete the proof we observe that $\complTime{\cP'}{d_j}=r(a)$ for some job $a$ and some $j\in\{3,\ldots,l\}$ (when $\varepsilon=\gamma$) or $\complTime{\cP'}{d_{j-1}}=\complTime{\cP'}{d_j}$ for some $j\in\{2,\ldots,l\}$ (when $\varepsilon\in\{\beta,x/2\}$) or there is no $d_j$ in the $i$-th block of $\cP'$ for some $j\in\{1,2\}$ (when $\varepsilon=\alpha$).
Therefore, the choice of $\varepsilon$ always results in $\cP'$ that has all blocks $j$, $j\in\{1,\ldots,i\}$, being $j$-normal --- a contradiction with the lemma assumption that $\cP$ is maximal.
\end{proof}

The next lemma states that if a maximal schedule $\cP$ with an alternating chain $(d_1,\ldots,d_l)$ has no $\A$-configuration
at $t\geq \complTime{\cP}{d_{l-1}}$, then there exists another maximal schedule $\cP'$ with longer alternating chain
$(d_1,\ldots,d_l,d_{l+1})$ with no $\A$-configuration at $t\geq\complTime{\cP'}{d_l}$.

\begin{proposition} \label{pro:infinite_chain}
Let $\cP$ be a maximal schedule of abnormality point $i\neq\infty$.
If $(d_1,\ldots,d_l)$, $l\geq 1$, is an alternating chain in $\cP$ and $\cP$ is $\A$-free in $[\complTime{\cP}{d_{l-1}},\infty)$, where $\complTime{\cP}{d_0}$ is the $(i+1)$-st event of $\cP$, then there exists a job $d_{l+1}$ such that $(d_1,\ldots,d_l,d_{l+1})$ is an alternating chain in some maximal schedule $\cP'$ that is $\A$-free in $[\complTime{\cP'}{d_{l}},\infty)$.
\end{proposition}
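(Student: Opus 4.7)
My approach would be to locate a candidate $d_{l+1}$ among the jobs sharing block $\tau(d_l)$ with $d_l$, then obtain $\cP'$ from $\cP$ by an $\varepsilon$-pushing of $d_l$ (as defined in Section~\ref{sec:alt_chains:trans}) with $\varepsilon$ taken maximal subject to~\eqref{eq:epsilon}--\eqref{eq:epsilon1}. Lemma~\ref{lem:partially-free} already shows that any admissible $\varepsilon$-push yields a maximal schedule $\cP'$ that is $\A$-free in $[\complTime{\cP'}{d_l},\infty)$ and retains $(d_1,\ldots,d_l)$ as an alternating chain, so the real work is to identify $d_{l+1}$ and analyse which constraint in~\eqref{eq:epsilon}--\eqref{eq:epsilon1} is binding.

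First I would analyse block $\tau(d_l)$. By Lemma~\ref{lem:how_jobs_finish} $d_l$ is spanning there, and Lemma~\ref{lem:one_dominates} gives $|\jobs(\xi_{\tau(d_l)})|\leq 3$. If this set has size two, say $\{d_l,z\}$, then Lemma~\ref{lem:no_idle-time_between_preemptions} together with Lemma~\ref{lem:shifting} and the $\A$-freeness hypothesis allow me to argue (possibly after local rearrangements in the spirit of the proof of Lemma~\ref{lem:b+c_form_chain}) that $z$ spans $[\complTime{\cP}{d_{l-1}},\complTime{\cP}{d_l}]$ non-preemptively. Appending $z$ then produces an alternating chain of length $2$ with coincident completion times when $l=1$ (contradicting Lemma~\ref{lem:alt_chain:normal}) or an almost-alternating chain of length $l+1\geq 3$ whose second-to-last completion lies in the $\A$-free region (contradicting Lemma~\ref{lem:alt_chain:no-almost}). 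Hence $|\jobs(\xi_{\tau(d_l)})|=3$; write $\jobs(\xi_{\tau(d_l)})=\{d_l,x,y\}$, and by Lemma~\ref{lem:one_dominates} exactly one of $x,y$—say $x$—completes at $e_{\tau(d_l)+1}$, so $y$ survives and I nominate $d_{l+1}:=y$.

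I would then perform the $\varepsilon$-pushing of $d_l$ with $\varepsilon$ maximal in~\eqref{eq:epsilon}--\eqref{eq:epsilon1}, and split by the binding constraint. If $\varepsilon=\alpha=\xi_i(d_u)$, the push strips $d_u$ from the $i$-th block, making it $i$-normal and lifting the abnormality point strictly above $i$, contradicting the maximality of $\cP$. If $\varepsilon=\gamma$, some chain-job start coincides with its release date, but Lemma~\ref{lem:alt_chain:not_normal} asserts that chain completion times are not $(i+1)$-normal while release dates are integral—a contradiction. If $\varepsilon=\beta$, a block carrying a chain job collapses, forcing two consecutive chain completions in $\cP'$ to coincide, yielding an (almost-)alternating chain whose last two jobs end together and again contradicting Lemma~\ref{lem:alt_chain:normal} or Lemma~\ref{lem:alt_chain:no-almost}. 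The only remaining possibility is that the bound in~\eqref{eq:epsilon1} is tight, so one of $x,y$ disappears from block $\tau(d_l)$ in $\cP'$; a direct inspection of the geometry of the $\varepsilon$-push then shows that $y$ executes non-preemptively in $[\complTime{\cP'}{d_{l-1}},\complTime{\cP'}{y}]$ with $\complTime{\cP'}{d_l}<\complTime{\cP'}{y}$, so $(d_1,\ldots,d_l,y)$ is alternating in $\cP'$; the required maximality and $\A$-freeness properties are supplied by Lemma~\ref{lem:partially-free}.

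The main obstacle I anticipate is the case analysis for the binding constraint: each of $\alpha,\beta,\gamma$ must be excluded by blending the maximality of $\cP$ with Lemmas~\ref{lem:alt_chain:normal}--\ref{lem:alt_chain:not_normal} and the $\A$-freeness hypothesis. The $\beta$-case is the most delicate, since coincident chain completions can arise in several geometrically distinct ways and each branch has to be matched to the appropriate exclusion lemma. Moreover, the subsidiary case $|\jobs(\xi_{\tau(d_l)})|=2$ with $\tau(d_l)>\tau(d_{l-1})+1$ probably needs a separate $\A$-configuration argument—using that $\cP$ is $\A$-free past $\complTime{\cP}{d_{l-1}}$—before it can be reduced to the simple block-adjacent situation handled above.
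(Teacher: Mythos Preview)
Your outline captures the easy half of the argument but misses the hard half entirely.

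The $\varepsilon$-push of $d_l$, by its very definition in Section~\ref{sec:alt_chains:trans}, leaves $\cP$ and $\cP'$ \emph{identical on $[\complTime{\cP}{d_l},\infty)$}. So when the constraint~\eqref{eq:epsilon1} is binding and one of $x,y$ is expelled from block $\tau(d_l)$, this tells you nothing about what $y$ does \emph{after} $\complTime{\cP}{d_l}$. Condition~\ref{it:chain2} for $d_{l+1}$ demands non-preemptive execution all the way from $\complTime{\cP'}{d_{l-1}}$ to $\complTime{\cP'}{d_{l+1}}$; your push at best secures the segment up to $\complTime{\cP'}{d_l}$. If $y$ was preempted somewhere in $(\complTime{\cP}{d_l},\complTime{\cP}{y}]$ before the push, it is still preempted there afterwards, and $(d_1,\ldots,d_l,y)$ is not an alternating chain. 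Your phrase ``a direct inspection of the geometry of the $\varepsilon$-push'' is precisely where the argument fails.

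This is not a minor technicality: the paper devotes the bulk of the proof to this case. After establishing (much as you sketch) that $d_{l+1}$ runs alongside $d_l$ up to $\complTime{\cP}{d_l}$, it confronts the possibility that some other job $a$ starts or resumes at $\complTime{\cP}{d_l}$ and interleaves with $d_{l+1}$ beyond that point. Eliminating this interleaving without destroying maximality or $\A$-freeness requires an iterative procedure (four steps, repeated) that introduces the auxiliary notion of a ``$d_{l+1}$-preempted'' schedule, tracks a sequence of jobs $a_1,a_2,\ldots$ and indices $k_u,j_u$, and is supported by six separate claims. A key structural fact established along the way is that if $d_{l+1}$ remains preempted after the first iteration, then two release-date-pinned jobs $x,y$ occupy a block between $\complTime{\cP}{d_l}$ and $\complTime{\cP}{d_{l+1}}$; this is what eventually forces termination with a contradiction.

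A secondary issue: you select $d_{l+1}$ from block $\tau(d_l)$, whereas the paper selects it from block $\tau(d_{l-1})+1$. These blocks can differ, and the companion of $d_l$ need not be the same job in both. The paper handles the mismatch explicitly (the case $J=\{x,d_l,d_{l+1}\}$ in its proof), again via a push; your size-2 analysis and your own acknowledged ``subsidiary case'' are symptoms of looking at the wrong block.
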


We leave the proof of the proposition to the end of the section.
Proposition~\ref{pro:no-A-configurations} guarantees that a maximal $\A$-free schedule exists for in-trees.
If this schedule is not normal, then we have a single-job alternating chain in it by Lemma~\ref{lem:b+c_form_chain}.
Proposition~\ref{pro:infinite_chain} guarantees that the alternating chain can be always extended by one job. However,
the process of extending the alternating chain may result in a schedule which is not $\A$-free in general ---
the source of this lies in Lemma~\ref{lem:partially-free}. Luckily, we do not need the resulting schedule to be $\A$-free ---
it suffices that it has no $\A$-configuration at a completion of $d_l$, the last job from the alternating chain, or later ---
see the assumptions of Proposition~\ref{pro:infinite_chain}. The above gives a sketch of the proof of the following theorem.

\begin{theorem} \label{thm:normal_exist}
There exists a normal optimal schedule for each instance of problem $P2|pmtn,in\textup{-}tree,r_j,p_j|\sum C_j$.
\end{theorem}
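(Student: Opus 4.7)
The plan is to prove Theorem~\ref{thm:normal_exist} by contradiction using the three main pillars built in Section~\ref{subsec:A-configurations} and Section~\ref{sec:alt_chains:extending}. Suppose, for some instance $\jobs$, no normal optimal schedule exists. Starting from any optimal schedule and climbing the partial order $\trianglelefteq$, one reaches a maximal schedule, and by our assumption every such schedule has finite abnormality point $i\neq\infty$.

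The overall idea is to build an alternating chain $(d_1,d_2,\ldots)$ of unbounded length, which is impossible since $|\jobs|=n<\infty$. First I would invoke Proposition~\ref{pro:no-A-configurations} to obtain an $\A$-free maximal schedule $\cP^{(1)}$; its abnormality point $i$ is finite by the contradiction hypothesis. Lemma~\ref{lem:b+c_form_chain} then supplies a one-job alternating chain $(d_1)$ in $\cP^{(1)}$, where $d_1\in A_i(\cP^{(1)})$ is the job in $A_i$ of minimum completion time. Because $\cP^{(1)}$ is $\A$-free everywhere, it is a fortiori $\A$-free on $[\complTime{\cP^{(1)}}{d_0},\infty)=[e_{i+1},\infty)$, so the hypotheses of Proposition~\ref{pro:infinite_chain} hold with $l=1$.

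Next I would iterate Proposition~\ref{pro:infinite_chain}: each application produces a new maximal schedule $\cP^{(l+1)}$ whose alternating chain is $(d_1,\ldots,d_l,d_{l+1})$ and which is $\A$-free on $[\complTime{\cP^{(l+1)}}{d_l},\infty)$---precisely the data the next iteration needs. Condition \ref{it:chain1} in the definition of alternating chain forces strictly increasing completion times $\complTime{}{d_1}<\complTime{}{d_2}<\cdots$, so the jobs $d_1,d_2,\ldots$ are pairwise distinct. Since $|\jobs|=n$, the chain cannot be extended beyond length $n$, contradicting the fact that Proposition~\ref{pro:infinite_chain} always produces one more job. Hence our assumption fails and a normal optimal schedule must exist.

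Essentially all of the real work is already packaged in Propositions~\ref{pro:no-A-configurations} and~\ref{pro:infinite_chain}, together with Lemma~\ref{lem:b+c_form_chain}; the theorem itself is a short assembly argument. The main conceptual point worth double-checking, rather than any new combinatorial step, is that across the iteration the chain is genuinely a \emph{prefix} of the new chain (so the distinctness argument applies) and that the abnormality point $i$ is preserved from $\cP^{(l)}$ to $\cP^{(l+1)}$ (so $A_i$ and the anchoring of $d_1$ are stable); both are built into the statement of Proposition~\ref{pro:infinite_chain}, so no additional argument is required at this stage.
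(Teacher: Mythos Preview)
Your proposal is correct and follows essentially the same approach as the paper's own proof: assume for contradiction that no maximal schedule is normal, obtain an $\A$-free maximal schedule via Proposition~\ref{pro:no-A-configurations}, seed an alternating chain with Lemma~\ref{lem:b+c_form_chain}, and then iterate Proposition~\ref{pro:infinite_chain} to contradict finiteness of $\jobs$. Your ordering (Proposition~\ref{pro:no-A-configurations} before Lemma~\ref{lem:b+c_form_chain}) is if anything slightly cleaner than the paper's, and your observation that all maximal schedules share the same abnormality point (so $i$ is stable across the iteration) is exactly the point that makes the induction go through.
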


\begin{proof}
Take a maximal schedule $\cP$ and suppose for a contradiction that $\cP$ is not normal. Let $i\neq\infty$ be its abnormality point.
By Lemma~\ref{lem:b+c_form_chain}, $A_i(\cP)=\{d_1,d_2\}$, and $(d_1)$ is an alternating chain in $\cP$.
Next, by Proposition~\ref{pro:no-A-configurations}, there is a maximal schedule $\cP^1$ with its abnormality point $i$ and alternating chain $(d_1)$ which is also $\A$-free.
Thus, in particular, $\cP^1$ is $\A$-free in $[x,\infty)$, where $x$ is the $(i+1)$-st event of $\cP^1$.
Finally, by Proposition~\ref{pro:infinite_chain} and a simple inductive argument, there exists a schedule $\cP^{n+1}$ with an alternating chain of $l=n+1$ jobs, contradicting the fact that the number of jobs equals $n$.
\end{proof}

\begin{corollary} \label{cor:main}
For the given set of $n$ jobs, there exists an optimal schedule for $P2|pmtn,in\textup{-}tree,r_j,p_j|\sum C_j$ such that each job
start, preemption or completion occurs at a time point that is a multiple of $1/2^{2n}$. \qed
\end{corollary}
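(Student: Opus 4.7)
The plan is to derive the corollary as an immediate consequence of Theorem~\ref{thm:normal_exist}, by tracking normality explicitly through the event sequence. First, I would apply the theorem to obtain a normal optimal schedule $\cP$ with events $e_1 < \cdots < e_q$. Because the set of events equals $\{0\} \cup \{\startTime{\cP}{a} : a \in \jobs\} \cup \{\complTime{\cP}{a} : a \in \jobs\}$ and each of the $n$ jobs contributes one start and one completion time, the total number of events satisfies $q \leq 2n+1$.

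Next, I would show by induction on $i$ that every event $e_i$ is $(i-1)$-normal. The base case $e_1 = 0$ is trivial. For the inductive step, normality gives that block $i$ is $i$-normal, so $e_{i+1} - e_i$ is a multiple of $1/2^i$; by Observation~\ref{obs:normal_numbers} the inductive $(i-1)$-normality of $e_i$ upgrades to $i$-normality, and summing yields that $e_{i+1}$ is $i$-normal. Hence $e_q$ is $(q-1)$-normal, and since $q-1 \leq 2n$, Observation~\ref{obs:normal_numbers} gives that every job start time and every job completion time is a multiple of $1/2^{2n}$.

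For preemptions occurring strictly inside a block, I would invoke Observation~\ref{obs:normal_schedule} to place them at $(i+1)$-normal time points in block $i$. Because the largest block index is $q-1$ and the corresponding granularity would be $1/2^q \leq 1/2^{2n+1}$, a naive count is one factor of two too fine; the main obstacle is to close this gap by showing that block $q-1$ admits no internal preemption. I would argue this structurally: $e_q$ must be a completion time (a start there would force a later completion event, contradicting the maximality of $e_q$), and any other job in $\jobs(\xi_{q-1})$ cannot complete strictly inside $(e_{q-1},e_q)$ because that would create an extra event, so it too must complete at $e_q$; by Lemma~\ref{lem:how_jobs_finish} every job in $\jobs(\xi_{q-1})$ spans the block, and with only two processors this forces $|\jobs(\xi_{q-1})| \leq 2$ with both jobs spanning, leaving no room for an internal preemption. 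Consequently, internal preemptions occur only in blocks $i \in \{1,\ldots,q-2\}$, where Observation~\ref{obs:normal_schedule} yields multiples of $1/2^{i+1} \leq 1/2^{q-1} \leq 1/2^{2n}$, completing the proof.
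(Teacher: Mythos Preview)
Your proposal is correct and follows the route the paper has in mind: the corollary is stated with an immediate \qed, since it is meant to fall out of Theorem~\ref{thm:normal_exist} together with the event-count bound and Observation~\ref{obs:normal_schedule}. You have supplied the details the paper leaves implicit, and your additional structural argument that the last block $q-1$ carries no internal preemption (every job in $\jobs(\xi_{q-1})$ must complete at $e_q$ and hence span the block by Lemma~\ref{lem:how_jobs_finish}) is exactly what is needed to avoid the extra factor of $2$ and land on $1/2^{2n}$ rather than $1/2^{2n+1}$. One cosmetic slip: when you write ``$1/2^q \leq 1/2^{2n+1}$'' the inequality is reversed; the point is that the granularity $1/2^q$ can be \emph{as fine as} $1/2^{2n+1}$ when $q=2n+1$, which is why the last-block argument is required.
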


\subsubsection*{Proof of Proposition~\ref{pro:infinite_chain}}

By Lemma~\ref{lem:at_least_two_abnormal}, $|A_i(\cP)|=2$.
If $l=1$, then take $d_{l+1}$ to be the job in $A_i(\cP)\setminus\{d_1\}$.
If $l>1$, then by Lemma~\ref{lem:shifting}, there is no idle time in $(\tau(d_{l-1})+1)$-st block of $\cP$.
Thus, $|\jobs(\xi_{\tau(d_{l-1})+1})|>1$.
Take $d_{l+1}$ to be a job in $\jobs(\xi_{\tau(d_{l-1})+1})\setminus\{d_l\}$ that starts or resumes at $\complTime{\cP}{d_{l-1}}$.

We will perform several schedule modifications leading to some maximal and $(\A,[\complTime{\cP'}{d_l},\infty))$-free schedule $\cP'$ with an alternating chain $(d_1,\ldots,d_{l+1})$.
We point out that some steps of the proof redefine the job $d_{l+1}$ selected above.
By assumption, $d_1\in A_i(\cP)$, and by the choice of $d_2$, $d_2\in A_i(\cP)$.
Thus, \ref{it:chain0} follows for $(d_1,\ldots,d_{l+1})$.

We now prove that
\begin{equation} \label{eq:extend_chain1}
\complTime{\cP}{d_{l+1}}>\complTime{\cP}{d_l}.
\end{equation}
Note that if $l=1$, then \eqref{eq:extend_chain1} follows from Lemma~\ref{lem:b+c_form_chain} and from the choice of $d_1$ and $d_2$. Thus, $l\geq 2$ from now on.

We begin by proving, by contradiction, that $d_{l+1}$ executes non-preemptively in $[\complTime{\cP}{d_{l-1}},\complTime{\cP}{d_{l}}]$.
First, we observe that no job, except for $d_{l}$, completes in the interval $(\complTime{\cP}{d_{l-1}},\complTime{\cP}{d_{l}}]$ for otherwise job $d_{l+1}$
must complete in $(\complTime{\cP}{d_{l-1}},\complTime{\cP}{d_{l}}]$ and thus jobs $d_{l-1}$ and $d_{l+1}$ form an $\A$-configuration in $\cP$ --- a contradiction since $\cP$ is $\A$-free in $[\complTime{\cP}{d_{l-1}},\infty)$.

Second, $|J\setminus\{d_l,d_{l+1}\}|\leq 1$, where $J$ is the set of jobs executed in $(\complTime{\cP}{d_{l-1}},\complTime{\cP}{d_{l}}]$.
Otherwise, there is a pair of jobs in $\{x,y,d_l\}$, where $\{x,y\} \subseteq J\setminus\{d_l,d_{l+1}\}$, that interlace --- contradiction by Lemma \ref{lem:interlace} and by the fact that $d_l$ is preempted in $(\complTime{\cP}{d_{l-1}},\complTime{\cP}{d_{l}}]$.
Indeed, this pair of job consists of a job $z\in\{x,y,d_l\}$ with minimum completion time among those three jobs, and a job in $\{x,y,d_l\}\setminus\{z\}$ that is non-spanning in block $\tau(z)$.

Finally,  we show that without loss of generality $J=\{d_l,d_{l+1}\}$.
Suppose otherwise, i.e., $J=\{x,d_l,d_{l+1}\}$. The \emph{$\varepsilon$}-pushing of $d_l$ with
\begin{eqnarray*}
 \varepsilon & = &\min\big\{\alpha, \beta, \gamma, \\
             &   &\complTime{\cP}{d_{l}}-\complTime{\cP}{d_{l-1}}-\xi_{\tau(d_{l-1})+1}(d_{l+1}), \xi_{\tau(d_{l-1})+1}(d_{l+1})\big\}
\end{eqnarray*}
results in a schedule $\cP'$ that either has all blocks $j$, $j\in\{1,\ldots,i\}$,  being $j$-normal (this happens when $\varepsilon\in\{\alpha,\beta,\gamma\}$) --- a contradiction with the lemma assumption that $\cP$ is maximal; or it has $d_l$ and $d_{l+1}$ as the only two jobs executed in
$[\complTime{\cP'}{d_{l-1}},\complTime{\cP'}{d_{l}}]$; or it has $d_l$ and $x$ as the only two jobs executed in $[\complTime{\cP'}{d_{l-1}},\complTime{\cP'}{d_{l}}]$.
(See Figure~\ref{fig:alt-chains1}(a) for this transformation when $\varepsilon=\complTime{\cP}{d_{l}}-\complTime{\cP}{d_{l-1}}-\xi_{\tau(d_{l-1})+1}(d_{l+1})$.)
In the latter case, i.e., when $\varepsilon=\xi_{\tau(d_{l-1})+1}(d_{l+1})$, we take $x$ as $d_{l+1}$ from now on.
The schedule $\cP'$ is maximal, by Lemma~\ref{lem:partially-free} it is $\A$-free in $[\complTime{\cP}{d_{l}},\infty)$, and $(d_1,\ldots,d_{l})$ is an alternating chain in $\cP'$.
Thus, without loss of generality we can take $\cP$ as being $\cP'$ from now on.
Then, we have that $d_{l+1}$ executes non-preemptively in $[\complTime{\cP}{d_{l-1}},\complTime{\cP}{d_{l}}]$ and, by Lemma~\ref{lem:alt_chain:normal}, \eqref{eq:extend_chain1} holds.
Thus, \ref{it:chain1} holds for $(d_1,\ldots,d_{l+1})$ in $\cP$.

    \begin{figure}[htb]
    \begin{center}
    \includegraphics[scale=1.0]{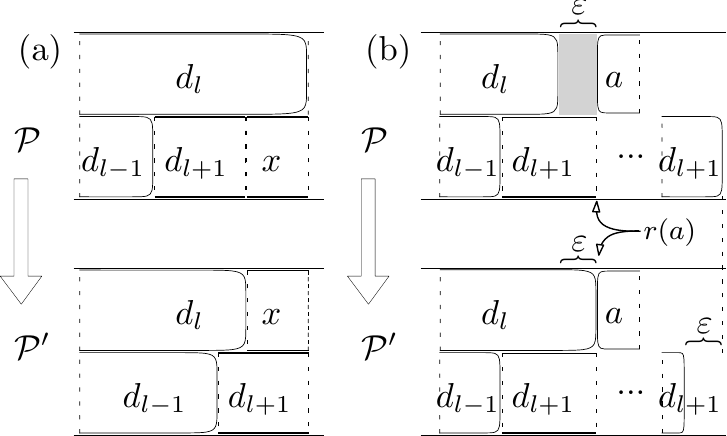}
    \caption{Schedule transformations in the proof of Proposition~\ref{pro:infinite_chain}}
    \label{fig:alt-chains1}
    \end{center}
    \end{figure}

We argue that there is no idle time in the $(\tau(d_l)+1)$-st block of $\cP$.
Our argument is by contradiction.
If there is idle time in the block, then by (\ref{eq:extend_chain1}), $d_{l+1}$ is the only job there.
Thus, $d_{l+1}$ completes at $e_{\tau(d_l)+2}$ or there is a release date pinned job $a$ that starts at $e_{\tau(d_l)+2}$, i.e., $r(a)=e_{\tau(d_l)+2}$.
In the former case we have an alternating chain $(d_1,\ldots,d_{l+1})$ with idle time in block $\tau(d_l)+1$ which contradicts Lemma~\ref{lem:shifting} and completes the proof.
In the latter case we use an extended \emph{$\varepsilon$}-pushing of $d_{l+1}$ (the operation of the $\varepsilon$-pushing can be generalized in a straightforward way to the case when $d_{l+1}$ is preempted in $[\complTime{\cP}{d_{l}},\complTime{\cP}{d_{l+1}}]$, see also Figure~\ref{fig:alt-chains1}(b) --- we omit a formal definition of the extended pushing) with
\[
\varepsilon=\min\left\{\alpha, \beta, \gamma, r(a)-\complTime{\cP}{d_{l}}\right\},
\]
which results in a schedule $\cP'$ that has all blocks $j$, $j\in\{1,\ldots,i\}$, being $j$-normal --- a contradiction with the lemma assumption that $\cP$ is maximal.  (See Figure~\ref{fig:alt-chains1}(b) for an illustration of this schedule transformation when $\varepsilon=r(a)-\complTime{\cP}{d_l}$.)
Therefore, without loss of generality we may assume there is no idle time block $\tau(d_l)+1$, and thus some job $a\neq d_{l+1}$ starts or resumes at $e_{\tau(d_l)+1}$.

\medskip
We next describe a finite iterative process that starts with $\cP$, produces a schedule $\cP_u$ in its $u$-th iteration, $u\geq 1$, and stops after $T\geq 1$ iterations.
We then show that if $T=1$, then the schedule $\cP_1$ satisfies the conditions of the lemma. However, if $T>1$, then we prove that there is a pair of jobs $x$ and $y$ that allows the iterative process to construct maximal schedules $\cP_1$, ..., $\cP_{T-1}$ each with alternating chain $(d_1,\ldots,d_l)$, yet at the same time the pair prevents $d_{l+1}$ from satisfying \ref{it:chain2} in $\cP_1$, \ldots, $\cP_{T-1}$.
However, an exit after $T>1$ iterations is only possible through a schedule $\cP_T$ such that the total completion time of $\cP_T$ is smaller than that of $\cP$ or both schedules have the same total completion times but the abnormality point of $\cP_T$ is greater than $i$, which contradicts the maximality of $\cP$. Therefore the iterative process must exit after exactly one iteration producing the desired schedule $\cP_1$ -- more than one  iteration leads to a contradiction.

In order to describe the iterative process formally, we introduce a key definition and related notation.
We say that a schedule $(\cP_u, \vect{\xi}^u, \vect{e}^u)$ with $\tau_u=\tau_{\cP_{u}}$ is \emph{$d_{l+1}$-preempted} if there
exists a pair of jobs $x$ and $y$ such that the following conditions are satisfied:
\begin{enumerate} [itemsep=0pt,topsep=0pt,label={\normalfont(I\arabic*)},leftmargin=*]
 \item\label{it:I1}\label{it:I:first}
      $\cP_u$ is maximal, $(d_1,\ldots,d_l)$ is an alternating chain in $\cP_u$, $i$ is the abnormality point of $\cP_u$, and
      $d_l$ and $d_{l+1}$ are the only two jobs executed in $[\complTime{\cP_u}{d_{l-1}},\complTime{\cP_u}{d_l}]$, and
      $\complTime{\cP_u}{d_{l+1}}>\complTime{\cP_u}{d_l}$;
 \item\label{it:I2}
      Some job $a_u$ starts or resumes at $\complTime{\cP_u}{d_l}$;
 \item\label{it:I3}
      $d_{l+1}$ covers each job in $\{a_1,\ldots,a_u\}$ in $I_u=\{\tau_u(d_l)+1,\ldots,k_u\}$, where
      $k_u=\min\{\tau_u(a_u),\tau_{u}(d_{l+1})\}$;
 \item\label{it:I4}
      $\complTime{\cP_u}{a_u}>\complTime{\cP_u}{d_{l+1}}$;
 \item\label{it:I5}\label{it:I:last}
      There exists $j_u<\tau_{u}(d_{l+1})$ such that $\jobs(\xi^u_{j_u})=\{x,y\}$ and $\min\{r(x),r(y)\}>\complTime{\cP_u}{d_l}$.
\end{enumerate}

\medskip
In the following we prove that all schedules $\cP_1,\ldots,\cP_{T-1}$ are $d_{l+1}$-preempted when $T>1$.
Let initially $u=1$, and the iterative process is as follows. ($\cP_0$ refers to $\cP$.)

\paragraph{Step 1: Moving $a_{u-1}$ from block $\tau_{u-1}(d_l)+1$ to block $k_{u-1}$.}

If $u=1$, then let $\cP_{u}'=\cP$ and go to Step~2.
If $u>1$, then we construct $\cP_{u}'$ by an extended $\varepsilon$-pushing of $d_{l+1}$ in $\cP_{u-1}$ and moving a piece of $a_{u-1}$ of length $\varepsilon$ from block $\tau_{u-1}(d_l)+1$ to block $k_{u-1}$, where
\begin{eqnarray*}
 \varepsilon = & \min\big\{ & \alpha, \beta, \gamma, \lceil \complTime{\cP_{u-1}}{d_l} \rceil -\complTime{\cP_{u-1}}{d_l},\\
            &  & \xi^{u-1}_{\tau_{u-1}(d_l)+1}(a_{u-1}), e^{u-1}_{k_{u-1}+1}-e^{u-1}_{k_{u-1}}-\xi^{u-1}_{k_{u-1}}(a_{u-1}) \big\}.
\end{eqnarray*}
Denote by $\vect{e}'$ and $\vect{\xi}'$ the events and the partition of $\cP_{u}'$, respectively.
Let for brevity $\tau_{\cP_{u}'}=\tau'$.
Figure~\ref{fig:pushing} depicts the transition from $\cP_{u-1}$ to $\cP_{u}'$ for $u>1$. Note that $\cP_{u}'$ and $\cP_{u-1}$
have the same total completion times.

    \begin{figure}[htb]
    \begin{center}
    \includegraphics[scale=0.95]{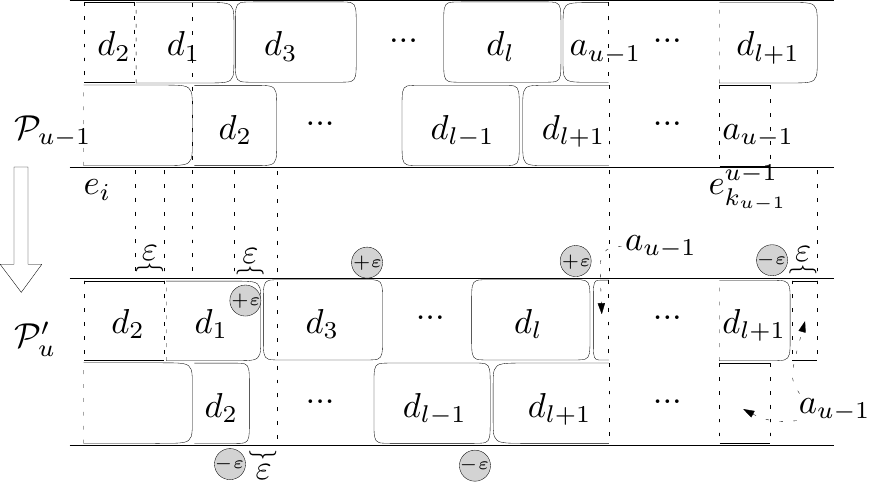}
    \caption{The transformation from $\cP_{u-1}$ to $\cP_{u}'$}
    \label{fig:pushing}
    \end{center}
    \end{figure}

If $\varepsilon\in\{\alpha, \beta, \gamma, \lceil \complTime{\cP'_u}{d_l} \rceil -\complTime{\cP'_u}{d_l}\}$, then the abnormality point of $\cP_u'$ is greater than $i$ and, having the required contradiction, we stop the iterative process with $T=u$.

For the two remaining values we have that the number of blocks in $[\complTime{\cP_{u}'}{d_l},\complTime{\cP_{u}'}{d_{l+1}}]$ in $\cP_{u}'$ is one less than the number of blocks in $[\complTime{\cP_{u-1}}{d_l},\complTime{\cP_{u-1}}{d_{l+1}}]$ in $\cP_{u-1}$. (We give an appropriate argument at the end of the proof of the lemma.)

Also, if $\varepsilon=e^{u-1}_{k_{u-1}+1}-e^{u-1}_{k_{u-1}}-\xi^{u-1}_{k_{u-1}}(a_{u-1})$, then the total completion time of $\cP'_u$ is smaller than the total completion time of $\cP_{u-1}$ provided that ($\xi^{u-1}_{k_{u-1}}(a_{u-1})=0$ and $\xi^{u-1}_{k_{u-1}-1}(d_{l+1})<e^{u-1}_{k_{u-1}}-e^{u-1}_{k_{u-1}-1}$) because $d_{l+1}$ completes in $\cP_u'$ strictly prior to $e_{\tau_u(d_{l+1})}=\complTime{\cP_u}{d_{l+1}}-\varepsilon$.
Then, we stop the iterative process with $T=u$.
Otherwise, set $a_u:=a_{u-1}$ and go to Step~2.

If $\varepsilon=\xi^{u-1}_{\tau_{u-1}(d_l)+1}(a_{u-1})$, then there is a job $a_u$ that starts or resumes at $\complTime{\cP_{u}'}{d_l}$.
Go to Step 2.

\paragraph{Step 2: Making $d_{l+1}$ cover $a_u$.}

Denote for brevity $\tau'=\tau_{\cP_u'}$ and let $\vect{e}'$ and $\vect{\xi}'$ be the events and the partition of $\cP_u'$, respectively.
If $d_{l+1}$ covers $a_u$ in
\[I_u'=\{\tau'(d_l)+1,\ldots,\min\{\tau'(d_{l+1}),\tau'(a_u)\}\},\]
then set $\cP_{u}=\cP_{u}'$
and go to Step~3. Otherwise we obtain $\cP_{u}$ from $\cP'_{u}$ as follows.
Find $t\in I_u'$ such that $\xi_t'(a_u)>0$ and $\xi_t'(d_{l+1})<e_{t+1}'-e_t'$.
By Lemma~\ref{lem:interlace}, $t<\tau'(a_u)$, and by Lemma~\ref{lem:how_jobs_finish}, $t<\tau'(d_{l+1})$. Let
\begin{eqnarray*}
 \varepsilon' &= \min\big\{ & e'_{\tau'(d_l)+1}-r(a_u)-\xi'_{\tau'(d_{l})}(a_u),\\
   & & \xi'_t(a_u),e'_{t+1}-e'_t-\xi'_t(d_{l+1}), \\
   & & (e'_{\tau'(d_l)+1}-e'_{\tau'(d_l)})/2-\xi'_{\tau'(d_{l})}(a_u)\big\}.
\end{eqnarray*}
We construct $\cP_{u}$ with events $\vect{e}^u$ and partition $\vect{\xi}^u$, where
\[
(\vect{e}^u,\vect{\xi}^u)=\cyclicshift{ \vect{e}',\vect{\xi}',\varepsilon', (t\cyclic{a_u}\tau'(d_l)\cyclic{d_{l+1}} t) },
\]
and then:
\begin{enumerate}[itemsep=0pt,label={$\bullet$},topsep=0pt]
 \item If $\varepsilon'\in\{e'_{\tau'(d_l)+1}-r(a_u)-\xi'_{\tau'(d_{l})}(a_u),(e'_{\tau'(d_l)+1}-e'_{\tau'(d_l)})/2-\xi'_{\tau'(d_{l})}(a_u)\}$, then we do \emph{$\varepsilon''$}-pushing of $d_l$ in $\cP'_{u}$ with
     \[\varepsilon''=\min\left\{\alpha, \beta, \gamma, \varepsilon'\right\}\]
     to get a schedule $\cP_{u}$ that has all blocks $j$, $j\in\{1,\ldots,i\}$, being $j$-normal --- in such case we stop the iterative process with $T=u$; (This $\varepsilon''$-pushing is shown in Figure~\ref{fig:alt-chains1}(a) with $x=a_u$ and $\varepsilon'=\varepsilon''$.)
 \item If $\varepsilon'=\xi'_t(a_u)$, then $\xi^{u}_t(a_u)=0$, i.e., $a_u$ is no longer in block $t$ as required;
 \item If $\varepsilon'=e'_{t+1}-e'_t-\xi'_t(d_{l+1})$, then $\xi^{u}_t(d_{l+1})=e^{u}_{t+1}-e^{u}_t$ as required.
\end{enumerate}

\smallskip
If $d_{l+1}$ does not cover $a_u$ in $I_u$ and $\varepsilon'>0$, then repeat Step~2 for $\cP_u$.
Thus, either in the resulting schedule $\cP_{u}$, $d_{l+1}$ covers $a_u$ in $I_u$, in which case go to Step 3,
or  $d_{l+1}$ does not cover $a_u$ in $I_u$ and $\varepsilon'=0$ (then $\xi^u_{\tau_u(d_l)}(a_u)=\xi^u_{\tau_u(d_l)}(d_{l+1})$
or $e^u_{\tau_u(d_l)+1}=r(a_u)$), in which case the abnormality point of $\cP_u$ is greater than $i$ and, having the required
contradiction, we stop the iterative process with $T=u$.

\paragraph{Step 3: Pushing $a_u$ out of  $[\complTime{\cP_u}{d_{l-1}},\complTime{\cP_u}{d_l}]$.}

If a part of $a_u$ executes in $[\complTime{\cP_u}{d_{l-1}},\complTime{\cP_u}{d_l}]$ in $\cP_u$, then perform an
\emph{$\varepsilon''$}-pushing of $d_l$ in $\cP_{u}$ as in Figure~\ref{fig:alt-chains1}(a) with $x=a_u$, where
\[
\varepsilon''=\min\left\{\alpha, \beta, \gamma, \lceil \complTime{\cP_{u}}{d_l}
\rceil -\complTime{\cP_{u}}{d_l},\xi^u_{\tau_u(d_l)}(a_u)\right\}.
\]
If $\varepsilon''=\min\left\{\alpha, \beta, \gamma, \lceil \complTime{\cP_{u}}{d_l} \rceil -\complTime{\cP_{u}}{d_l}\right\}$, then the abnormality point of the resulting schedule is greater than $i$ and, having the required contradiction, we stop the iterative process with $T=u$.
If $(d_1,\ldots,d_{l+1})$ is an alternating chain in $\cP_u$, then also stop with $T=u$.
Otherwise, go to Step~4.

\paragraph{Step 4: Moving to the next iteration.}
Set $u:=u+1$ and return to Step~1.

\medskip
We now briefly sketch the reminder of the proof. For the time being let us assume that the iteration process ends after $T\geq 1$ iterations, and that $\varepsilon> 0$ in Step~1 for $u>1$. We prove these two assumptions at the end of the proof.
In the following $\cP_u'$,  $\cP_u''$, and $\cP_u$ refer to the schedules obtained at the end of Steps 1, 2 and 3, respectively, $u\geq 1$.

Let $u=1$. Note that $\cP_1'=\cP$.
Then either $d_{l+1}$ executes without preemption in $[\complTime{\cP_1''}{d_l},\complTime{\cP_1''}{d_{l+1}}]$ or not.
In the former case, Step~3 ensures that $d_{l+1}$ executes in $\cP_1$ without preemption in $[\complTime{\cP_1}{d_{l-1}},\complTime{\cP_1}{d_{l+1}}]$.
The latter implies that \ref{it:chain2} holds for $\cP_1$ and by Claim~\ref{claim:A-free} below that $\cP_1$ is $\A$-free, which proves the lemma. We then have $T=1$.
If $d_{l+1}$ is preempted in  $[\complTime{\cP_1''}{d_l},\complTime{\cP_1''}{d_{l+1}}]$, then $T>1$ and it suffices to show that in this case we get a contradiction.
To that end we show that, if $T>1$, then $\cP_1$ is $d_{l+1}$-preempted (see Claims \ref{claim:I1-I2-I3}, \ref{claim:I4}, and \ref{claim:cP_1-ok} below), and if $\cP_{u-1}$ is $d_{l+1}$-preempted, then $\cP_u$ is $d_{l+1}$-preempted as well, $u\in\{2,\ldots,T-1\}$ (see Claims \ref{claim:I1-I2-I3}, \ref{claim:I4}, and \ref{claim:I5} below).
This process of generating $d_{l+1}$-preempted schedules cannot continue \emph{ad infinitum} since the process exits after $T$ iterations.
However, any exit schedule $\cP_T$ certifies that $\cP$ is not maximal which gives the required contradiction.

\medskip
We now proceed with details. Note that, for each $u\in\{1,\ldots,T\}$, \ref{it:chain0} and \ref{it:chain1} hold for $(d_1,\ldots,d_{l+1})$ in $\cP_{u}$ and
\begin{equation} \label{eq:a_d_finish0}
\xi^u_{\tau_u(d_l)}(a_u)=0 \land \xi^u_{\tau_u(d_l)+1}(a_u)>0,
\end{equation}
and that \ref{it:I1}, \ref{it:I2} and \ref{it:I3} follow directly from the definition of the iterative process above:
\begin{claim} \label{claim:I1-I2-I3}
Let $T>1$.
If $u=1$, or $u\in\{2,\ldots,T-1\}$ and $\cP_{u-1}$ is $d_{l+1}$-preempted, then $\cP_{u}$ satisfies conditions \ref{it:I1}, \ref{it:I2} and \ref{it:I3}.
\end{claim}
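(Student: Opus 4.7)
The plan is to verify the three conditions \ref{it:I1}, \ref{it:I2}, \ref{it:I3} by unpacking what Steps~1--3 of the iterative process do, splitting into the base case $u=1$ (where $\cP_1'=\cP$) and the inductive case $u\geq 2$ (where the $d_{l+1}$-preempted hypothesis on $\cP_{u-1}$ is available). Throughout, the key principle is that each bound appearing in the definitions of $\varepsilon$ (Step~1), $\varepsilon'$ (Step~2) and $\varepsilon''$ (Step~3) corresponds to exactly one ``bad event'' that would either violate maximality, cause $d_{l+1}$ to catch up with $d_l$, or force a release-date collision; the iteration is designed so that any such bad event triggers an exit with $T=u$, meaning that whenever we actually produce $\cP_u$ rather than stopping, none of these bad events occurs.

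For the base case, I would observe that $\cP_1'=\cP$ already satisfies \ref{it:I1} by the pre-iteration analysis: $(d_1,\ldots,d_l)$ is an alternating chain in the maximal $\cP$, and after the normalization leading to \eqref{eq:extend_chain1} and the subsequent modification that ensures only $d_l$ and $d_{l+1}$ execute in $[\complTime{\cP}{d_{l-1}},\complTime{\cP}{d_l}]$, we also have $\complTime{\cP}{d_{l+1}}>\complTime{\cP}{d_l}$. Since we observed that block $\tau(d_l)+1$ contains no idle time, some job $a_1\neq d_{l+1}$ starts or resumes at $\complTime{\cP}{d_l}$, giving \ref{it:I2}. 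Step~2 then uses cyclic shifts of $a_1$ and $d_{l+1}$ in $I_1'$ to enforce \ref{it:I3}; the bounds on $\varepsilon'$ are precisely those under which maximality might fail (in which case we exit with $T=1$) or the covering is completed. Step~3 is local to a small neighborhood of $\complTime{\cP_1}{d_l}$ and either exits or preserves all three conditions.

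For the inductive step, I would assume $\cP_{u-1}$ is $d_{l+1}$-preempted and follow Step~1: the extended $\varepsilon$-pushing of $d_{l+1}$ together with moving a piece of $a_{u-1}$ from block $\tau_{u-1}(d_l)+1$ to block $k_{u-1}$. The bounds $\alpha$, $\beta$, $\gamma$, and $\lceil\complTime{\cP_{u-1}}{d_l}\rceil-\complTime{\cP_{u-1}}{d_l}$ each correspond to $\varepsilon$-values that would make some block in $\{1,\ldots,i\}$ $i$-normal and thus violate \ref{it:I1}; the process exits with $T=u$ in those cases. The two remaining bounds either completely vacate $a_{u-1}$ from block $\tau_{u-1}(d_l)+1$ (in which case no idle time there, by Lemma~\ref{lem:shifting}, forces some new job $a_u$ to start at $\complTime{\cP_u'}{d_l}$, giving \ref{it:I2}) or completely fill block $k_{u-1}$ with $a_{u-1}$ (in which case we may take $a_u:=a_{u-1}$). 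The covering condition \ref{it:I3} for the old jobs $a_1,\ldots,a_{u-1}$ is inherited from the induction hypothesis because the motion of $a_{u-1}$ happens inside $I_{u-1}$, and Steps~2 and~3 either enforce the new covering of $a_u$ or trigger an exit.

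The main obstacle is verifying that maximality of $\cP_u$ is preserved, since any of the cyclic shifts or pushings could in principle create an $i$-normal event and raise the abnormality point. The saving idea is the explicit case split on which summand realizes the minimum in the definition of $\varepsilon$, $\varepsilon'$ and $\varepsilon''$: each ``dangerous'' summand is paired with an exit clause in the description of the step, and after these exit cases are ruled out the only remaining possibilities correspond to transformations that are confined to the interval $[\complTime{\cP_{u-1}}{d_l},\complTime{\cP_{u-1}}{d_{l+1}}]$ and consequently do not alter the abnormality point established at block $i$.
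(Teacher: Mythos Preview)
Your high-level plan matches the paper's strategy, but two concrete steps in the inductive case ($u\ge 2$) do not go through as written.

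First, for \ref{it:I2} you claim that when $a_{u-1}$ has been completely vacated from block $\tau_{u-1}(d_l)+1$, Lemma~\ref{lem:shifting} guarantees no idle time in the block starting at $\complTime{\cP_u'}{d_l}$. But Lemma~\ref{lem:shifting}, applied to the chain $(d_1,\ldots,d_l)$, only rules out idle time in block $\tau(d_{l-1})+1$, which is the block \emph{between} $\complTime{}{d_{l-1}}$ and $\complTime{}{d_l}$, not the block following $\complTime{}{d_l}$. You cannot apply the lemma to $(d_1,\ldots,d_{l+1})$ either, since that sequence is not yet known to be an alternating chain in $\cP_u'$. The paper instead argues that idle time in block $\tau_u(d_l)+1$ of $\cP_u$ would propagate back to idle time in $[\complTime{\cP_{u-1}}{d_l},\complTime{\cP_{u-1}}{d_{l+1}}]$, and then \ref{it:I3} and \ref{it:I4} for $\cP_{u-1}$ would let $a_{u-1}$ be completed strictly earlier, contradicting optimality of $\cP_{u-1}$.

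Second, you never argue the last clause of \ref{it:I1}, namely $\complTime{\cP_u}{d_{l+1}}>\complTime{\cP_u}{d_l}$, for $u\ge 2$. The exit clauses in Step~1 handle $\alpha,\beta,\gamma$ and the integer-ceiling bound, but neither these bounds nor the two ``remaining'' bounds on $\varepsilon$ directly prevent the completions of $d_l$ and $d_{l+1}$ from meeting after the extended pushing (the pushing moves $d_l$ later and $d_{l+1}$ earlier). The paper's mechanism here is \ref{it:I5} for $\cP_{u-1}$: the block consisting solely of $\{x,y\}$, with $\min\{r(x),r(y)\}>\complTime{\cP_{u-1}}{d_l}$, survives intact in $\cP_u$ and still lies strictly between $\complTime{\cP_u}{d_l}$ (which stays below $\lceil\complTime{\cP_{u-1}}{d_l}\rceil$ by the ceiling bound) and $\complTime{\cP_u}{d_{l+1}}$. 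Your proposal makes no use of \ref{it:I5}, so this separation is unaccounted for.

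A smaller inaccuracy: the claim that the non-exit transformations are ``confined to the interval $[\complTime{\cP_{u-1}}{d_l},\complTime{\cP_{u-1}}{d_{l+1}}]$'' is false---the $\varepsilon$-pushing modifies the $i$-th block itself. Preservation of the abnormality point follows because, in the non-exit branches, the amount moved keeps $\xi_i(d_1)$ and $\xi_i(d_2)$ non-$(i+1)$-normal, not because block $i$ is untouched.
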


\begin{proof}
Note that, by construction, the total completion times and abnormality points of $\cP,\cP_1,\ldots,\cP_{T-1}$ are the same.
Also, thanks to Step~3 and the fact that $u<T$, $d_l$ and $d_{l+1}$ are the only two jobs executed in $[\complTime{\cP_u}{d_{l-1}},\complTime{\cP_u}{d_l}]$ for each $u\in\{1,\ldots,T-1\}$.
Finally, by Lemma~\ref{lem:alt_chain:normal}, we have $\complTime{\cP_u}{d_{l+1}}>\complTime{\cP_u}{d_l}$ for $u=1$, and by \ref{it:I5} for schedule $\cP_{u-1}$ we have $\complTime{\cP_{u-1}}{d_{l}}<\min\{r(x),r(y)\}<\complTime{\cP_{u-1}}{d_{l+1}}$ for $u>1$.
By construction, $\complTime{\cP_{u}}{d_{l}} <\lceil \complTime{\cP_{u-1}}{d_{l}} \rceil$ and $j_u<\tau_{u}(d_{l+1})$ such that $\jobs(\xi^u_{j_u})=\{x,y\}$.
Thus, $\complTime{\cP_{u}}{d_{l}}<\min\{r(x),r(y)\}<\complTime{\cP_{u}}{d_{l+1}}$.
Therefore, $\cP_u$ satisfies \ref{it:I1}.

By Lemma \ref{lem:shifting}, there is no idle time in block $\tau_{\cP}(d_l)+1$ in $\cP$. Thus the choice of $a_1$ ensures that it starts or resumes at $\complTime{\cP_1}{d_l}$ for $u=1$.
For $u>1$, the job $a_u$ always exists because there is no idle time in block $\tau_{u}(d_l)+1$ in $\cP_u$.
This follows from the fact that otherwise, by construction, there would be idle time in $[\complTime{\cP_{u-1}}{d_{l}}, \complTime{\cP_{u-1}}{d_{l+1}}]$ in $\cP_{u-1}$, which, since  \ref{it:I3} and \ref{it:I4} hold for $\cP_{u-1}$, implies that $a_{u-1}$ can be completed earlier in $\cP_{u-1}$, which contradicts its optimality.
Hence, $\cP_u$ satisfies \ref{it:I2}.

Finally, Steps~2 and 3 and $u<T$ ensure that \ref{it:I3} holds for $\cP_u$.
\end{proof}

\begin{claim} \label{claim:I4}
Let $T>1$.
If $u=1$, or $u\in\{2,\ldots,T-1\}$ and $\cP_{u-1}$ is $d_{l+1}$-preempted, then $\cP_{u}$ satisfies condition \ref{it:I4}.
\end{claim}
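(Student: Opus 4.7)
The plan is to proceed by contradiction: assume $\complTime{\cP_u}{a_u} \leq \complTime{\cP_u}{d_{l+1}}$. By Claim~\ref{claim:I1-I2-I3} condition \ref{it:I3} already holds for $\cP_u$, and the assumption forces $k_u = \tau_u(a_u)$, so $d_{l+1}$ is spanning in each block of $\{\tau_u(d_l)+1,\ldots,\tau_u(a_u)\}$. Because $d_l$ and $d_{l+1}$ are the only two jobs executing in $[\complTime{\cP_u}{d_{l-1}}, \complTime{\cP_u}{d_l}]$ (by \ref{it:I1}) and $d_l$ runs non-preemptively there (being the $l$-th job of the alternating chain $(d_1,\ldots,d_l)$ in $\cP_u$), $d_{l+1}$ must occupy the other machine continuously on this interval. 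Concatenating the two non-preemptive runs, $d_{l+1}$ executes non-preemptively on the whole of $[\complTime{\cP_u}{d_{l-1}}, \complTime{\cP_u}{a_u}]$.

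The easy subcase is $\complTime{\cP_u}{a_u} = \complTime{\cP_u}{d_{l+1}}$. Then the non-preemptive run of $d_{l+1}$ extends all the way to its own completion time, so together with the strict monotonicity of completion times from \ref{it:I1} the sequence $(d_1,\ldots,d_l,d_{l+1})$ satisfies \ref{it:chain0}--\ref{it:chain2} and is therefore an alternating chain in $\cP_u$. But Step~3 of the iteration explicitly tests this condition and would have exited with $T = u$, contradicting $u \leq T - 1$.

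The harder subcase---the main obstacle---is $\complTime{\cP_u}{a_u} < \complTime{\cP_u}{d_{l+1}}$. Here I would derive a contradiction to either optimality or maximality of $\cP_u$. Since $d_{l+1}$ runs uninterrupted up to $\complTime{\cP_u}{a_u}$ but must be preempted somewhere in $(\complTime{\cP_u}{a_u}, \complTime{\cP_u}{d_{l+1}})$ (the $\{x,y\}$-only block $j_u$ guaranteed by \ref{it:I5} witnesses this), the first preemption point of $d_{l+1}$ beyond $\complTime{\cP_u}{a_u}$ together with a job that displaces it there ought to provide an $\A$-configuration with $d_{l+1}$ playing the role of the job executing non-preemptively up to its preemption. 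The needed independence should follow from $d_{l+1}$ being independent of $a_u$ together with the in-tree structure and \ref{it:chain0}. If no $\A$-configuration materialises, I would fall back on swapping $a_u$ with a job $z$ completing at $\complTime{\cP_u}{d_{l+1}}$ and invoking Lemma~\ref{lem:swapping}, exploiting $\startTime{\cP_u}{a_u} = \complTime{\cP_u}{d_l} < \startTime{\cP_u}{z}$. Verifying the independence hypotheses cleanly---most likely via Lemmas~\ref{lem:interlace}, \ref{lem:ccd}, and \ref{lem:between_preemptions}---is the delicate point.
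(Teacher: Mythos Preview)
Your argument has two structural gaps that make it diverge from, and fall short of, the paper's proof.

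First, the step ``so $d_{l+1}$ is spanning in each block of $\{\tau_u(d_l)+1,\ldots,\tau_u(a_u)\}$'' does not follow from \ref{it:I3}. Covering only says that $d_{l+1}$ is spanning in blocks where $a_u$ is present; nothing prevents $a_u$ from being absent in some intermediate block $t$, and then $d_{l+1}$ may well be non-spanning there. Consequently your ``easy subcase'' collapses: you cannot conclude that $d_{l+1}$ runs non-preemptively on $[\complTime{\cP_u}{d_{l-1}},\complTime{\cP_u}{a_u}]$, hence you cannot infer that $(d_1,\ldots,d_{l+1})$ is an alternating chain and appeal to Step~3. The paper exploits exactly this possibility: assuming $k_u=\tau_u(a_u)$, it looks at the event $e^u_{k_u}$ (where both $a_u$ and $d_{l+1}$ are spanning by Lemmas~\ref{lem:how_jobs_finish} and~\ref{lem:interlace}), finds a job $a'$ completing there, and in the case $a'\neq d_l$ locates the last block $j<k_u-1$ with $\xi^u_j(a_u)>0$ and $\xi^u_{j'}(a_u)=0$ for $j<j'<k_u$, then derives a contradiction from Lemma~\ref{lem:between_preemptions}. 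The case $a'=d_l$ (i.e.\ $k_u=\tau_u(d_l)+1$) is handled separately: for $u=1$ it yields an $\A$-configuration in the $\A$-free schedule $\cP$, and for $u>1$ it is shown that $a_u$ and $a_{u-1}$ interlace in $\cP_{u-1}$, using \ref{it:I4} for $\cP_{u-1}$ inductively.

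Second, your ``harder subcase'' invokes \ref{it:I5} for $\cP_u$ to locate the $\{x,y\}$-only block, but \ref{it:I5} is established only \emph{after} \ref{it:I4} (in Claims~\ref{claim:cP_1-ok} and~\ref{claim:I5}), so this is circular. Independently of that, the remainder of the subcase is only a sketch (``ought to provide an $\A$-configuration'', ``I would fall back on swapping''); neither route is carried out, and the independence hypotheses needed for Lemma~\ref{lem:swapping} or for an $\A$-configuration are not verified. In short, the contradiction you are aiming for does not use the right lever: the paper never splits on $\complTime{\cP_u}{a_u}<\complTime{\cP_u}{d_{l+1}}$ versus equality, but instead analyses the job $a'$ completing at $e^u_{k_u}$ and reduces to Lemma~\ref{lem:between_preemptions} or to interlacing in $\cP_{u-1}$.
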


\begin{proof}
It suffices to argue that

\begin{equation} \label{eq:a_completes_later}
k_u<\tau_u(a_u).
\end{equation}
Denote for brevity $k=k_u$.
Suppose for a contradiction that $k=\tau_u(a_u)$.
By Lemma~\ref{lem:how_jobs_finish}, $\xi_k^u(a_u)=e^u_{k+1}-e^u_k$. By Claim \ref{claim:I1-I2-I3}, \ref{it:I3} holds for $\cP_u$.
Thus, $d_{l+1}$ covers $a_u$ in $I_u$ and, by Lemma~\ref{lem:interlace}, $\xi_k^u(d_{l+1})=e^u_{k+1}-e^u_k$.
Hence, there exists $a'\in\jobs$ such that $\complTime{\cP_{u}}{a'}=e^u_{k}$ because $e^u_k$ is an event in $\cP_u$.

If $a'\neq d_l$, then $\tau_u(d_l)<k$ and $\xi^u_{k-1}(a_u)=0$ because, again, $d_{l+1}$ covers $a_u$ in $I_u$.
Then, let $j<k-1$ be such that $\xi^u_j(a_u)>0$ and $\xi^u_{j'}(a_u)=0$ for each $j'\in\{j+1,\ldots,k-1\}$.
By \eqref{eq:a_d_finish0} such a $j$ exists.
By Lemma~\ref{lem:one_dominates} and the fact that $d_{l+1}$ covers $a_u$ in $I_u$, $\xi^u_j(a_u)=\xi^u_j(d_{l+1})=e^u_{j+1}-e^u_j$ which implies $\xi^u_j(a')=0$.
Lemma~\ref{lem:between_preemptions} applied to $a=a_u$, $a'$, $j$ and $j'=k$, leads to a contradiction.

It remains to consider the case when $a'=d_l$.
We have that $u>1$ because otherwise the jobs $d_l$ and $a_1$ form an $\A$-configuration in $\cP$ --- a contradiction since $\cP$ is $\A$-free.
We have
\begin{align} \label{eq:from_u-1_to_u}
 \begin{split}
 \complTime{\cP_{u-1}}{a_{u-1}} & >  \complTime{\cP_{u-1}}{d_{l+1}} \hspace*{\fill} \\
                                & \geq  \complTime{\cP_u}{d_{l+1}}\geq\complTime{\cP_u}{a_u}.
 \end{split}
\end{align}
The first inequality follows from \ref{it:I4} for $\cP_{u-1}$, which is $d_{l+1}$-preempted; the second inequality follows by construction of $\cP_u$, while the last one holds by assumption that $k=\tau_u(a_u)$.
Moreover, the construction ensures that the inequality $\complTime{\cP_u}{d_{l+1}}\geq\complTime{\cP_u}{a_u}$ implies
\begin{equation} \label{eq:from_u-1_to_u:2}
\complTime{\cP_{u-1}}{d_{l+1}}\geq\complTime{\cP_{u-1}}{a_u}.
\end{equation}
Thus, by \eqref{eq:from_u-1_to_u}, we have $\complTime{\cP_{u-1}}{a_{u-1}}>\complTime{\cP_{u-1}}{a_u}$.
Therefore, \eqref{eq:from_u-1_to_u:2} and \ref{it:I2}, \ref{it:I3} for $\cP_{u-1}$ imply that $a_{u}$ and $a_{u-1}$ interlace in $\cP_{u-1}$.
Finally, by \ref{it:I1} applied to $\cP_{u-1}$, $\cP_{u-1}$ is optimal and hence we arrive at a contradiction with Lemma~\ref{lem:interlace}.
Hence, \eqref{eq:a_completes_later} follows, which completes the proof of the lemma.
\end{proof}

\begin{claim} \label{claim:A-free}
If the job $d_{l+1}$ executes with no preemption in interval $[\complTime{\cP_1}{d_{l-1}},\complTime{\cP_1}{d_{l+1}}]$ in $\cP_1$, then $\cP_1$ is $\A$-free in $[\complTime{\cP_1}{d_{l}},\infty)$.
\end{claim}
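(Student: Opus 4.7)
My plan is to argue by contradiction: suppose some pair $(a,b)$ forms an $\A$-configuration at time $t \geq \complTime{\cP_1}{d_l}$ in $\cP_1$. Since we are in the first iteration ($u=1$, so $\cP_1'=\cP$), the schedule $\cP_1$ is obtained from $\cP$ by Step~2's cyclic shift of pieces of $a_1$ and $d_{l+1}$ between blocks $\tau'(d_l)$ and some block in $I_1'$, followed by Step~3's $\varepsilon''$-pushing of $d_l$ with $x=a_1$. Both transformations are confined to the interval $[\complTime{\cP}{d_{l-1}}, \complTime{\cP}{d_{l+1}}]$, leaving the schedule identical outside this window. I split the analysis on whether $t$ exceeds $\complTime{\cP_1}{d_{l+1}}$ or not.

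For $t > \complTime{\cP_1}{d_{l+1}}$, the localization of Steps~2 and 3 gives that $\cP_1$ and $\cP$ coincide on $[\complTime{\cP_1}{d_{l+1}}, \infty)$, so the same pair $(a,b)$ forms an $\A$-configuration in $\cP$ at a time $\geq \complTime{\cP}{d_{l-1}}$, contradicting the hypothesis that $\cP$ is $\A$-free in that region. The delicate point is when one of $a, b$ coincides with $d_l$, $d_{l+1}$, or $a_1$, whose completion times may have shifted between $\cP$ and $\cP_1$; I intend to handle this via Claim~\ref{claim:I1-I2-I3} and the hypothesis that $d_{l+1}$ is non-preempted in $[\complTime{\cP_1}{d_{l-1}},\complTime{\cP_1}{d_{l+1}}]$, which pins down the required correspondence between events.

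For $\complTime{\cP_1}{d_l} \leq t \leq \complTime{\cP_1}{d_{l+1}}$, the claim's non-preemption hypothesis forces $d_{l+1}$ to occupy one processor throughout, and by~\ref{it:I1} only $d_l, d_{l+1}$ execute in $[\complTime{\cP_1}{d_{l-1}}, \complTime{\cP_1}{d_l}]$. The candidate job $a$ is then either $d_l$ (at $t=\complTime{\cP_1}{d_l}$), or $d_{l+1}$ (at $t=\complTime{\cP_1}{d_{l+1}}$), or a job running alongside $d_{l+1}$ on the other processor strictly inside $(\complTime{\cP_1}{d_l}, \complTime{\cP_1}{d_{l+1}})$. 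In the first two cases, the maximal non-preemptive interval of $a$ ending at $t$ extends back at least to $\complTime{\cP_1}{d_{l-1}}$; the $\A$-configuration definition then forces $\startTime{\cP_1}{b} < \complTime{\cP_1}{d_{l-1}}$ and $b\notin\{d_l,d_{l+1}\}$, so $b$ executes non-preemptively on $[t,\complTime{\cP_1}{b}]$ alongside the continuing $d_{l+1}$, and the $\A$-configuration's independence requirement combined with the in-tree precedence and Lemma~\ref{lem:interlace} (applied between $b$ and $d_l$ or $d_{l+1}$) produces the contradiction. In the third subcase, the covering of $a_1$ by $d_{l+1}$ from~\ref{it:I3} plus the non-preemption of $d_{l+1}$ pins down the identity of $a$, and the ensuing interlace with one of $\{d_l, d_{l+1}, a_1\}$ again contradicts Lemma~\ref{lem:interlace}.

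The main obstacle will be the bookkeeping in the first case when one of $a$ or $b$ is among $\{d_l, d_{l+1}, a_1\}$---the three jobs whose positions may have been shifted by Steps~2 and~3; a careful invocation of invariants~\ref{it:I1}--\ref{it:I3} together with the precise form of the two transformations is needed to match the $\A$-configuration in $\cP_1$ back to one in $\cP$ without breaking the required independence or non-preemption properties.
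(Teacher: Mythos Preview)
Your localization claim is the main gap. You assert that ``both transformations are confined to the interval $[\complTime{\cP}{d_{l-1}}, \complTime{\cP}{d_{l+1}}]$,'' but this is false for Step~3: the $\varepsilon''$-pushing of $d_l$ (as defined in Section~\ref{sec:alt_chains:trans}) modifies the schedule all the way from block~$i$---the abnormality point---through block $\tau(d_l)$, shifting the completions of every $d_j$ in the alternating chain by $\pm\varepsilon''$. So $\cP_1$ and $\cP$ do \emph{not} coincide on $[0,\complTime{\cP}{d_{l-1}}]$, and your ``same $\A$-configuration transfers back to $\cP$'' argument for $t>\complTime{\cP_1}{d_{l+1}}$ breaks down once the job $b$ in the putative $\A$-configuration has its start or a preemption point inside $[e_i,\complTime{\cP}{d_{l-1}}]$ (which is exactly where the $d_j$'s live).

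The paper avoids this entirely by first invoking Lemma~\ref{lem:partially-free}: that lemma already guarantees that $\varepsilon$-pushing preserves $\A$-freeness in $[\complTime{\cP'}{d_l},\infty)$, so it suffices to prove the claim for $\cP''_1$, the schedule at the end of Step~2 \emph{before} the pushing. At that stage the only jobs whose block occupancies differ from $\cP$ are $a_1$ and $d_{l+1}$, and only in blocks $\tau(d_l),\ldots,k$ (the paper records this as properties (B1)--(B3)). This immediately forces $a_1$ to be one of the two jobs in any new $\A$-configuration, and the remaining analysis is a short block-by-block case split showing that either the same configuration already existed in $\cP$ or some pair interlaces, contradicting Lemma~\ref{lem:interlace}. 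Your second case (\,$t\le\complTime{\cP_1}{d_{l+1}}$\,) gestures at this but never isolates $a_1$ as the necessary participant, and the appeals to ``the ensuing interlace'' are not concrete enough to stand as a proof. The missing reduction via Lemma~\ref{lem:partially-free} is the key idea you need.
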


\begin{proof}
In view of Lemma~\ref{lem:partially-free}, it suffices to prove that $\cP''_{1}$ at the end of Step 2 is $\A$-free in $[\complTime{\cP'}{d_{l}},\infty)$, if the step is not vacuous.
Let  $\tau(a_1)\equiv\tau_{\cP''_1}$ and $k=\min\{\tau(a_1),\tau(d_{l+1})\}$ for convenience.
Let $\vect{e}''$ and $\vect{\xi}''$ be the events and the partition of $\cP_{1}''$, respectively.
We start with an observation concerning the construction of $\cP''_1$, namely, the sequence of events, that is the start and completion times of jobs, is the same in $\cP=\cP'_1$ and $\cP''_{1}$.
More precisely,
\begin{enumerate}[itemsep=0pt,topsep=0pt,label={\normalfont(B\arabic*)},leftmargin=*]
\item\label{it:ll1} $\jobs(\xi_j)=\jobs(\xi''_j)$ for $j\in\{1,\ldots,\tau(d_{l})-1\}$ and $j\geq k+1$;
\item\label{it:ll2} $\jobs(\xi_{\tau_{\cP}(d_{l})})=\{d_l,d_{l+1}\}$ and $\jobs(\xi''_{\tau(d_{l})})=\{d_l,d_{l+1},a_1\}$;
\item\label{it:ll3} $\xi''_j(x)=\xi_j(x)$ for $x\notin \{a_1,d_{l+1}\}$, $\xi''_j(a_1)\leq \xi_j(a_1)$, $\xi''_j(d_{l+1})\geq \xi_j(d_{l+1})$, and $\xi''_j(a_1)+\xi''_j(d_{l+1})=\xi_j(a_1)+\xi_j(d_{l+1})$ for each $j\in\{\tau(d_l)+1,\ldots,k\}$.
\end{enumerate}

Suppose for a contradiction that $\cP''_{1}$ is not $\A$-free in interval $[\complTime{\cP''_{1}}{d_{l}},\infty)$.
Then, by \ref{it:ll1}-\ref{it:ll3} and since the schedule $\cP$ is $\A$-free in $[\complTime{\cP}{d_{l-1}},\infty)$, $a=a_1$ must be one of the two jobs that form an $\A$-configuration in $\cP''_{1}$.

Let $a$ and a job $x$ form an $\A$-configuration at $e^{''}_{\tau(a)}$ in $\cP''_{1}$.
By definition of $\A$-configuration, $x$ is not preempted in $(e''_j, e''_{\tau(a)}]$ for some $j<\tau(a)$ in $\cP''_{1}$ and $\xi''_{j-1}(x)<e''_{j-1}-e''_j$.
By \ref{it:ll2}, $\tau(d_l)+1\leq j$.
Thus, $\tau(d_l)+1\leq j \leq k$ for otherwise, by \ref{it:ll1}, $a$ and $x$ or form an $\A$-configuration in $\cP$ --- a contradiction.
Moreover, there is a block $t\in \{j,\ldots,k\}$ in $\cP$ with $\{a,x\}\subseteq \jobs(\xi_t)$ for otherwise again $a$ and $x$ form an $\A$-configuration in $\cP$ --- a contradiction.
Therefore, if $x=d_{l+1}$,
then $\xi_t(a)>0$ and $\xi_t(d_{l+1})=e_{t+1}-e_t$ in $\cP$ and it remains so in $\cP''_{1}$ which follows from the transformation in Step~1.
Thus, $a$ and $d_{l+1}$ do not form an $\A$-configuration in $\cP''_{1}$ which contradicts our assumption that $x=d_{l+1}$.
If $x\neq d_{l+1}$, then $\xi_t(d_{l+1})<e_{t+1}-e_t$ and hence a job $a'\in\{a,x\}$ with $\xi_{\tau_{\cP}(d_{l+1})}(d_{l+1})<e_{\tau_{\cP}(d_{l+1})+1}-e_{\tau_{\cP}(d_{l+1})}$ and $d_{l+1}$ interlace in $\cP$ when $\tau(a)>\tau(d_{l+1})=k$ --- a contradiction with Lemma~\ref{lem:interlace}.
Thus it remains to consider $\tau(a)\leq \tau(d_{l+1})$.
By Step 3, no preemption of $d_{l+1}$ in $[\complTime{\cP_1}{d_{l-1}},\complTime{\cP_1}{d_{l+1}}]$ in $\cP_1$ implies no preemption of $d_{l+1}$ in $[\complTime{\cP''_1}{d_{l}},\complTime{\cP''_1}{d_{l+1}}]$ in $\cP''_1$.
Therefore, the jobs $a$ and $x$ interlace in $\cP''_1$ if $\tau(d_l)\geq\tau(a)$ --- a contradiction by Lemma~\ref{lem:interlace}.

Now, suppose that $a$ and a job $x$ such that $a$ completes at $e''_{\tau(x)}$ form an $\A$-configuration in $\cP''_{1}$.
Since $d_{l+1}$ covers $a$ in $I_1'$ in $\cP''_{1}$, we have $x\neq d_{l+1}$.
By definition of $\A$-configuration, $a$ is not preempted in $(e''_j, e''_{\tau(x)}]$ for some $j\leq\tau(a)$ in $\cP''_{1}$ and $\xi''_{j-1}(a)<e''_{j-1}-e''_j$. Also, $\tau(d_l)+1\leq j \leq k$ for otherwise, by \ref{it:ll1}, $a$ and $x$ form an $\A$-configuration in $\cP$ --- a contradiction.
Moreover, there is a block $t\in \{j,\ldots,k-1\}$ in $\cP$ with $\{a,x\}\subseteq \jobs(\xi_t)$ for otherwise again $a$ and $x$ form an $\A$-configuration in $\cP$ --- a contradiction. Therefore, $\{d_{l+1},x\}\subseteq \jobs(\xi''_t)$, and
$a$ and $x$ interlace in $\cP''_{1}$ --- a contradiction by Lemma~\ref{lem:interlace}.
\end{proof}

\begin{claim} \label{claim:cP_1-ok}
If $T=1$, then $\cP_1$ is maximal, $\A$-free in $[\complTime{\cP}{d_{l}},\infty)$ and $(d_1,\ldots,d_{l+1})$ is an alternating chain in $\cP_1$. If $T>1$, then $\cP_1$ satisfies condition \ref{it:I5}.
\end{claim}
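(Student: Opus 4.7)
The plan is to handle the two cases $T=1$ and $T>1$ separately, leveraging the iteration's termination rules in Steps~1--3 and the previous claims.

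For $T=1$, the iteration exits at Step~3 precisely when $(d_1,\ldots,d_{l+1})$ is an alternating chain in $\cP_1$, so conditions \ref{it:chain0}--\ref{it:chain2} hold by construction. Each transformation used in Steps~1--3 is either an $\varepsilon$-pushing of $d_l$ or a cyclic shift, all of which preserve total completion time, so $\cP_1$ is optimal. The iteration would have exited earlier (with an abnormality point strictly greater than $i$) in any branch where $\varepsilon$ attains one of the values $\alpha$, $\beta$, $\gamma$, or $\lceil \complTime{\cP_1}{d_l}\rceil-\complTime{\cP_1}{d_l}$; since this did not happen, the blocks $1,\ldots,i$ of $\cP_1$ still coincide with those of $\cP$ and the abnormality point remains $i$, so $\cP_1$ is maximal. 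Finally, the $\A$-freeness of $\cP_1$ in $[\complTime{\cP_1}{d_l},\infty)$ is precisely the conclusion of Claim~\ref{claim:A-free}, whose hypothesis --- that $d_{l+1}$ executes without preemption in $[\complTime{\cP_1}{d_{l-1}},\complTime{\cP_1}{d_{l+1}}]$ --- is exactly what makes $(d_1,\ldots,d_{l+1})$ an alternating chain.

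For $T>1$, Claims~\ref{claim:I1-I2-I3} and \ref{claim:I4} already supply \ref{it:I1}--\ref{it:I4} for $\cP_1$, so only \ref{it:I5} remains. Since the iteration did not exit at $T=1$, the sequence $(d_1,\ldots,d_{l+1})$ is not an alternating chain in $\cP_1$, and the only failing condition must be \ref{it:chain2}; hence $d_{l+1}$ is preempted somewhere in $[\complTime{\cP_1}{d_l},\complTime{\cP_1}{d_{l+1}}]$. Pick a block index $j_1$ with $\tau_1(d_l)<j_1<\tau_1(d_{l+1})$ at which $d_{l+1}$ is non-spanning. By \ref{it:I3} for $\cP_1$, $d_{l+1}$ covers $a_1$ in $I_1$, so $a_1\notin\jobs(\xi^1_{j_1})$. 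Applying Lemma~\ref{lem:no_idle-time_between_preemptions} to the preempted $d_{l+1}$ forces no idle time in block $j_1$, and Lemma~\ref{lem:one_dominates} then yields $\jobs(\xi^1_{j_1})=\{x,y\}$ for two independent jobs $x,y$ distinct from both $a_1$ and $d_{l+1}$.

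It remains to show $\min\{r(x),r(y)\}>\complTime{\cP_1}{d_l}$. Suppose toward a contradiction that $r(x)\leq \complTime{\cP_1}{d_l}$. The transformations in Steps~1--3 only rearrange pieces of $a_1$, $d_l$, and $d_{l+1}$, so the piecewise schedule of $x$ is the same in $\cP$ and $\cP_1$ outside the window $[\complTime{\cP}{d_{l-1}},\complTime{\cP}{d_{l+1}}]$, while inside this window the total length of $x$ is unchanged. Then in $\cP$ the job $x$ either interlaces with $d_l$ or with $d_{l+1}$, contradicting Lemma~\ref{lem:interlace} together with optimality of $\cP$, or $x$ together with $d_l$ or $d_{l+1}$ forms an $\A$-configuration at an event no earlier than $\complTime{\cP}{d_{l-1}}$, contradicting the hypothesis that $\cP$ is $\A$-free in $[\complTime{\cP}{d_{l-1}},\infty)$. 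Either way we obtain the required contradiction, establishing \ref{it:I5}. The main obstacle will be this last step: transferring release-date information and $\A$-freeness from $\cP$ to the preempted block $j_1$ in $\cP_1$, since Steps~1--3 shift block boundaries near the chain; the key is that these transformations are localized to the three jobs $\{a_1,d_l,d_{l+1}\}$, so the relative positions of all other jobs are preserved and the contradiction search reduces to a bounded window controlled by \ref{it:I1}--\ref{it:I4}.
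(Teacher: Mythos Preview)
Your treatment of the case $T=1$ is essentially correct and aligns with the paper.

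In the case $T>1$, however, there are two genuine gaps.

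First, your derivation of $\jobs(\xi^1_{j_1})=\{x,y\}$ is not justified. Lemma~\ref{lem:one_dominates} only yields $|\jobs(\xi^1_{j_1})|\leq 3$, and knowing merely that $d_{l+1}$ is non-spanning in block $j_1$ does not exclude $d_{l+1}\in\jobs(\xi^1_{j_1})$. The paper chooses $j$ to be the \emph{smallest} index in $\{\tau_1(d_l)+1,\ldots,k\}$ with $d_{l+1}$ non-spanning; this minimality is then used (together with $d_{l+1}$ covering $a_1$) in the subsequent argument about predecessors of $x$ and $y$, which you omit entirely.

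Second, and more seriously, your release-date argument does not work. From $r(x)\leq\complTime{\cP_1}{d_l}$ you cannot conclude that $x$ interlaces with $d_l$ or $d_{l+1}$, nor that $x$ forms an $\A$-configuration with one of them: the release date of $x$ being early says nothing about where $x$ actually \emph{starts}, and both interlacing and $\A$-configurations are statements about the schedule, not about release dates. The paper's approach is completely different and constructive: assuming $r(x)\leq\complTime{\cP_1}{d_l}$ and $\xi^1_k(a_1)<e^1_{k+1}-e^1_k$, it performs the cyclic shift
\[
(\vect{e}'',\vect{\xi}'')=\cyclicshift{\vect{e}^1,\vect{\xi}^1,\varepsilon,(\tau_1(d_l)+1\cyclic{a_1} k\cyclic{d_{l+1}} j\cyclic{x}\tau_1(d_l)+1)},
\]
which is feasible precisely because $r(x)$ is early enough, and which strictly reduces $\complTime{}{d_{l+1}}$, contradicting optimality of $\cP_1$. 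The remaining sub-case $\xi^1_k(a_1)=e^1_{k+1}-e^1_k$ is then dispatched by a separate interlacing argument. For the cyclic shift to respect precedence, the paper also first establishes that no predecessor of $x$ (or $y$) executes in $(\complTime{\cP_1}{d_l},e^1_j]$; this step is absent from your proposal. Your attempt to ``transfer'' information back to $\cP$ via the remark that Steps~1--3 are localized to $\{a_1,d_l,d_{l+1}\}$ does not produce the needed contradiction, because the obstruction to small $r(x)$ is an \emph{optimality} argument in $\cP_1$, not a structural one in $\cP$.
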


\begin{proof}
If $T=1$, then by the maximality of $\cP$ we have that $\cP_1$ is maximal and $(d_1,\ldots,d_{l+1})$ is an alternating chain in $\cP_1$.
By Claim~\ref{claim:A-free}, $\cP_1$ is $(\A,[\complTime{\cP}{d_{l}},\infty))$-free, which completes the proof in this case.

Suppose now that $T>1$.
Let for brevity $k=k_1$ and $\tau_1=\tau_{\cP_1}$ in the proof of Claim~\ref{claim:cP_1-ok}.
By Claim~\ref{claim:I4}, $\cP_1$ satisfies \ref{it:I4}.
Hence, $k=\tau_1(d_{l+1})$, and $\xi^1_k(d_{l+1})=e^1_{k+1}-e^1_k$ by Lemma~\ref{lem:how_jobs_finish}.
Note that $\xi^1_j(d_{l+1})=e^1_{j+1}-e^1_j$ for each $j\in\{\tau_1(d_l)+1,\ldots,k\}$ is not possible because then $(d_1,\ldots,d_{l+1})$ is an alternating chain in $\cP_1$ and hence the iterative process would stop with $T=1$ in Step~3.
We show that otherwise we can find the desired jobs $x$ and $y$ in \ref{it:I5}.
The key to finding the jobs is the existence of a block $j\in \{\tau_1(d_1)+1,\ldots,k-1\}$ such that $\xi^1_j(d_{l+1})<e^1_{j+1}-e^1_j$.
Take the smallest such $j$.
Let $\{x,y\}\subseteq \jobs(\xi^1_j)\setminus\{a_1, d_{l+1}\}$.
Such two jobs exist since there is no idle time in block $j$ and $a_1\notin\jobs(\xi^1_j)$ because, by  Claim  \ref{claim:I1-I2-I3}, $d_{l+1}$ covers $a_1$ in $I_1$.
We now prove that
\begin{equation} \label{eq:release_of_x}
r(x)> \complTime{\cP_1}{d_l} \quad\textup{ and }\quad r(y)> \complTime{\cP_1}{d_l}.
\end{equation}
To that end we first argue that no predecessor of $x$ or $y$ is executed in $(\complTime{\cP_1}{d_l},e^1_j]$.
By contradiction, suppose $z$ is a predecessor of $x$ or $y$ that completes in $(\complTime{\cP_1}{d_l},e^1_j]$.
Then, $z$ must also start in $[\complTime{\cP_1}{d_l},e^1_j]$ for otherwise by \eqref{eq:a_d_finish0} and the fact that $d_{l+1}$ covers $a_1$ in $I_1$ (by Claim \ref{claim:I1-I2-I3}) we get that $z$ interlaces either $a_1$ or $d_{l+1}$ in $\cP_1$ --- a contradiction by Lemma ~\ref{lem:interlace}.
Therefore, $z$ starts in $[\complTime{\cP_1}{d_l},e^1_j]$ and thus there is a block $j'\in\{\tau_1(d_l)+1,\ldots,j-1\}$ such that $\xi^1_{j'}(z)>\xi^1_{j'}(d_{l+1})$.
The latter is guaranteed by the fact that $z$ executes in $[\complTime{\cP_1}{d_l},e_j^1]$.
Since, by Claim \ref{claim:I1-I2-I3}, $d_{l+1}$ covers $a_1$ in $I_1$, we have $\xi^1_{j'}(a_1)=0$.
Therefore, there is a job $w$, $w\notin \{a_1,d_{l+1}\}$, such that $\xi^1_{j'}(w)>0$.
Thus, we get a contradiction by our definition of $j$ since $j'<j$.

Second, arguing by contradiction, suppose without loss of generality that $r(x)\leq \complTime{\cP_1}{d_l}$.
If $\xi^1_k(a_1)<e^1_{k+1}-e^1_k$, then take
\begin{eqnarray*}
 \varepsilon &= \min\big\{ & \xi_{\tau_1(d_l)+1}^{1}(a_1),e^1_{k+1}-e^1_k-\xi_k'(a_1),e^1_{j+1}-e^1_j-\xi^1_j(d_{l+1}), \\
             & & \xi^1_j(x),e^1_{\tau_1(d_l)+2}-e^1_{\tau_1(d_l)+1}-\xi^1_{\tau_1(d_l)+1}(x)\big\}
\end{eqnarray*}
and let $\cP''$ be a schedule with events $\vect{e}''$ and partition $\vect{\xi}''$, where
\[(\vect{e}'',\vect{\xi}'')=\cyclicshift{ \vect{e}',\vect{\xi}',\varepsilon, (\tau_1(d_l)+1\cyclic{a_1} k\cyclic{d_{l+1}} j\cyclic{x}\tau_1(d_l)+1) }.\]
Note that $j>\tau_1(d_l)+1$ because $\xi^1_j(a_1)=0$.
The assumption $r(x)\leq\complTime{\cP_1}{d_l}$ implies that $\cP''$ is feasible.
Thus, we get a contradiction since the total completion time of $\cP''$ is smaller than that of $\cP_1$.

On the other hand, if $\xi^1_k(a_1)=e^1_{k+1}-e^1_k$, then $x\notin\jobs(\xi^1_k)$, and \eqref{eq:a_d_finish0} and the fact that $d_{l+1}$ covers $a_1$ in $I_1$ (by Claim \ref{claim:I1-I2-I3}) imply that $\jobs(\xi^1_{\tau_1(d_l)+1})\cap\{x,y\}=\emptyset$.
Thus, $\complTime{\cP_1}{x}\leq e^1_k$ or $\complTime{\cP_1}{x}>e^1_{k+1}$.
In the former case $a_1$ and $x$ interlace, and in the latter case $x$ and $d_{l+1}$ interlace --- contradiction with Lemma~\ref{lem:interlace}.
Therefore, \eqref{eq:release_of_x} holds and $\cP_1$ satisfies \ref{it:I5} as required.
\end{proof}

\begin{claim} \label{claim:I5}
Let $T>1$. Then, $\cP_{u}$ satisfies condition \ref{it:I5} for $u\in\{1,\ldots,T-1\}$.
\end{claim}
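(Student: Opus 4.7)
I would prove the claim by induction on $u$. The base case $u=1$ is exactly the second assertion of Claim~\ref{claim:cP_1-ok}. For the inductive step, fix $u \in \{2,\ldots,T-1\}$ and suppose that $\cP_{u-1}$ is $d_{l+1}$-preempted, i.e., it satisfies \ref{it:I1}--\ref{it:I5} (the first four from Claims~\ref{claim:I1-I2-I3} and~\ref{claim:I4} applied at index $u-1$, and the fifth from the induction hypothesis). Then Claims~\ref{claim:I1-I2-I3} and~\ref{claim:I4} apply to $\cP_u$ as well, giving \ref{it:I1}--\ref{it:I4}; it remains to establish \ref{it:I5} for $\cP_u$.

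The argument mirrors the second half of Claim~\ref{claim:cP_1-ok}. By \ref{it:I4} we have $k_u = \tau_u(d_{l+1})$, so Lemma~\ref{lem:how_jobs_finish} gives $\xi^u_{k_u}(d_{l+1}) = e^u_{k_u+1} - e^u_{k_u}$. Since $u < T$, the iterative process did not terminate at iteration $u$; in particular, Step~3 did not produce an alternating chain $(d_1,\ldots,d_{l+1})$ in $\cP_u$. Combined with the equality above, this forces the existence of a block $j \in \{\tau_u(d_l)+1,\ldots,k_u-1\}$ with $\xi^u_j(d_{l+1}) < e^u_{j+1} - e^u_j$; take the smallest such $j$. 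By \ref{it:I3}, $d_{l+1}$ covers $a_u$ in $I_u$, so $a_u \notin \jobs(\xi^u_j)$, and by Lemma~\ref{lem:no_idle-time_between_preemptions} there is no idle time in block $j$; hence we can choose two jobs $x, y \in \jobs(\xi^u_j) \setminus \{a_u, d_{l+1}\}$.

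For the release-date inequality $\min\{r(x), r(y)\} > \complTime{\cP_u}{d_l}$ I would repeat, verbatim in $\cP_u$, the contradiction argument of Claim~\ref{claim:cP_1-ok}. First, no predecessor of $x$ or $y$ can execute in $(\complTime{\cP_u}{d_l}, e^u_j]$: otherwise coverage of $a_u$ by $d_{l+1}$ (via \ref{it:I3}) together with Lemma~\ref{lem:interlace} would force such a predecessor to populate a block $j' < j$ with $\xi^u_{j'}(z) > \xi^u_{j'}(d_{l+1})$, contradicting the minimality of $j$. Then, assuming $r(x) \leq \complTime{\cP_u}{d_l}$ for contradiction and splitting on whether $\xi^u_{k_u}(a_u) < e^u_{k_u+1} - e^u_{k_u}$ or $\xi^u_{k_u}(a_u) = e^u_{k_u+1} - e^u_{k_u}$, I would either apply the cyclic shift
\[
\cyclicshift{\vect{e}^u, \vect{\xi}^u, \varepsilon, (\tau_u(d_l)+1 \cyclic{a_u} k_u \cyclic{d_{l+1}} j \cyclic{x} \tau_u(d_l)+1)}
\]
for a sufficiently small $\varepsilon > 0$ to obtain a feasible schedule of strictly smaller total completion time (contradicting the optimality guaranteed by \ref{it:I1}), or observe that $\xi^u_{k_u}(a_u) = e^u_{k_u+1} - e^u_{k_u}$ forces $x \notin \jobs(\xi^u_{k_u})$ while coverage of $a_u$ by $d_{l+1}$ forces $x,y \notin \jobs(\xi^u_{\tau_u(d_l)+1})$, producing either $\complTime{\cP_u}{x} \leq e^u_{k_u}$ (whence $x$ interlaces with $a_u$) or $\complTime{\cP_u}{x} > e^u_{k_u+1}$ (whence $x$ interlaces with $d_{l+1}$), both contradicting Lemma~\ref{lem:interlace}.

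The main obstacle is checking that the cyclic-shift construction remains feasible and optimality-contradicting in $\cP_u$ for $u > 1$, since $\cP_u$ is produced by several prior transformations rather than from $\cP$ directly. This reduces to two facts: the jobs participating in the shift are pairwise non-predecessors (an in-tree consequence propagated along the iteration via \ref{it:I3}), and $r(x) \leq \complTime{\cP_u}{d_l}$ licenses placing a piece of $x$ into block $\tau_u(d_l)+1$. Both are encoded in the structural conditions \ref{it:I1}--\ref{it:I4} for $\cP_u$, which are exactly the hooks on which the argument of Claim~\ref{claim:cP_1-ok} relies; hence the proof transfers without essential modification.
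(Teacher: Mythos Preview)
Your proof is correct, but it takes a different route from the paper's.  The paper's argument for Claim~\ref{claim:I5} is a short tracking argument: the jobs $x,y$ found once and for all in Claim~\ref{claim:cP_1-ok} for $\cP_1$ are kept fixed, and one observes inductively that the iterative transformations in Steps~1--3 only move pieces of $a_{u-1}$, $a_u$, $d_{l+1}$ and the chain jobs $d_1,\ldots,d_l$, so the block in which exactly $\{x,y\}$ execute survives unchanged into $\cP_u$ with index $j_u<\tau_u(d_{l+1})$.  The release-date part $\min\{r(x),r(y)\}>\complTime{\cP_u}{d_l}$ then persists because release dates are integers while, by construction of the pushing in Steps~1 and~3, $\complTime{\cP_u}{d_l}<\lceil\complTime{\cP_{u-1}}{d_l}\rceil$; hence the inequality established for $u=1$ propagates.

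Your approach instead re-runs the entire second half of Claim~\ref{claim:cP_1-ok} at every iteration $u$, locating a (possibly new) pair $x,y$ directly in $\cP_u$ from conditions \ref{it:I1}--\ref{it:I4}.  This works, and it has the merit of being self-contained: it does not rely on checking that the specific block with $\jobs(\xi^{u-1}_{j_{u-1}})=\{x,y\}$ is untouched by Steps~1--3.  The cost is that you must redo the interlacing and cyclic-shift contradictions each time, whereas the paper's proof is two sentences.  Either way the inductive bootstrap with Claims~\ref{claim:I1-I2-I3} and~\ref{claim:I4} is the same.
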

\begin{proof}
By Claim \ref{claim:cP_1-ok}, $\cP_1$ satisfies \ref{it:I5}.
By induction on $u=1,\ldots,T-1$, we have $\min\{r(x),r(y)\}>\complTime{\cP_u}{d_l}$. Let $j_u$ be the earliest $j$ such that $\jobs(\xi^u_j)=\{x,y\}$ in $\cP_u$.
Again by induction on  $u=1,\ldots,T-1$ we have $j_u<\tau_u(d_{l+1})$. Therefore, $\cP_{u}$ satisfies condition  \ref{it:I5} for $u\in\{1,\ldots,T-1\}$.
\end{proof}

\begin{claim} \label{claim:epsilon}
For each $u\in\{1,\ldots,T\}$, if $\cP_{u}$ is $d_{l+1}$-preempted, then $\xi^{u}_{k_{u}}(a_{u})<e^{u}_{k_{u}+1}-e^{u}_{k_{u}}$.
\end{claim}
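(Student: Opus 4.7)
The proof proceeds by contradiction: assume $\xi^u_{k_u}(a_u) = e^u_{k_u+1} - e^u_{k_u}$. By property \ref{it:I4} we have $k_u = \tau_u(d_{l+1})$, and Lemma~\ref{lem:how_jobs_finish} forces $\xi^u_{k_u}(d_{l+1}) = e^u_{k_u+1} - e^u_{k_u}$, so combined with the contradiction hypothesis, $\jobs(\xi^u_{k_u}) = \{a_u, d_{l+1}\}$ with both jobs spanning block $k_u$. Property \ref{it:I5} supplies a block $j_u < k_u$ and jobs $x, y$ with $\jobs(\xi^u_{j_u}) = \{x, y\}$ and $\min\{r(x), r(y)\} > \complTime{\cP_u}{d_l}$; since $d_{l+1} \notin \jobs(\xi^u_{j_u})$, the covering property \ref{it:I3} forces $\xi^u_{j_u}(a_u) = 0$, so $a_u$ must have been preempted before $e^u_{j_u}$ and resume no earlier than $e^u_{j_u+1}$.

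I would then verify that $x$ and $y$ are each independent of both $a_u$ and $d_{l+1}$: the relation $x \prec a_u$ fails because $\complTime{\cP_u}{x} \geq r(x) + 1 > \complTime{\cP_u}{d_l} = \startTime{\cP_u}{a_u}$, and $a_u \prec x$ fails because $\complTime{\cP_u}{a_u} > e^u_{k_u+1} > \startTime{\cP_u}{x}$; analogous arguments (using $\startTime{\cP_u}{d_{l+1}} = \complTime{\cP_u}{d_{l-1}} < r(x)$ and $\complTime{\cP_u}{d_{l+1}} > \startTime{\cP_u}{x}$) rule out both directions of precedence between each of $x, y$ and $d_{l+1}$.

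The main construction is the cyclic shift $\cyclicshift{\vect{e}^u, \vect{\xi}^u, \varepsilon, (k_u \cyclic{d_{l+1}} j_u \cyclic{x} k_u)}$ for a small $\varepsilon > 0$, transferring $\varepsilon$ of $d_{l+1}$ from block $k_u$ into block $j_u$ and $\varepsilon$ of $x$ in the reverse direction. Feasibility of the resulting schedule $\cP'$ follows from the independence checks together with $r(x) \leq e^u_{j_u} < e^u_{k_u}$; moreover, in $\cP'$ the job $d_{l+1}$ no longer spans block $k_u$ and can be placed at the start of the block (on the machine complementary to the spanning $a_u$) so that $\complTime{\cP'}{d_{l+1}} = e^u_{k_u+1} - \varepsilon$.

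The principal obstacle is controlling the possible increase from $\complTime{\cP_u}{x}$ to $\complTime{\cP'}{x}$ caused by relocating part of $x$ into the later block $k_u$. I would address this by choosing $\varepsilon$ to be $i$-normal so that Lemma~\ref{lem:abnormality_preserved} applies, and by bounding $\varepsilon$ below the slack needed to render block $i$ fully $i$-normal; this should allow me to conclude that either the net change in total completion time is strictly negative (contradicting the optimality of $\cP_u$ required by \ref{it:I1}) or the abnormality point of the new schedule strictly exceeds $i$ (contradicting the maximality of $\cP_u$ under $\trianglelefteq$), in analogy with the $\varepsilon$-pushing arguments used throughout Proposition~\ref{pro:infinite_chain}.
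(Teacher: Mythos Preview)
Your setup is sound through the second paragraph: from \ref{it:I4} you correctly deduce $k_u=\tau_u(d_{l+1})$, both $a_u$ and $d_{l+1}$ span block $k_u$, and the block $j_u$ from \ref{it:I5} lies strictly between $\tau_u(d_l)$ and $k_u$ with $a_u,d_{l+1}\notin\jobs(\xi^u_{j_u})$. The independence checks are essentially fine.

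The gap is in your final paragraph. Your cyclic shift moves a piece of $x$ from block $j_u$ forward to block $k_u$. If $\complTime{\cP_u}{x}\leq e^u_{k_u}$ --- and nothing you have established rules this out --- then the completion time of $x$ jumps by at least $e^u_{k_u}-\complTime{\cP_u}{x}$, which can dwarf the $\varepsilon$ you save on $d_{l+1}$. Your proposed remedy (choose $\varepsilon$ to be $i$-normal and invoke the abnormality-point alternative) cannot work here: the shift touches only blocks $j_u$ and $k_u$, both strictly later than block $i$, so block $i$ is untouched and the abnormality point of the new schedule is still $i$. Lemma~\ref{lem:abnormality_preserved} only prevents the abnormality point from \emph{decreasing}; it gives no mechanism for it to increase. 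The analogy with $\varepsilon$-pushing is misplaced precisely because $\varepsilon$-pushing modifies the amounts of $d_1,d_2$ inside block $i$ itself, which is what drives the abnormality-point dichotomy there.

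The paper's argument avoids \ref{it:I5} entirely. Since $a_u$ and $d_{l+1}$ both span block $k_u$ and neither starts at $e^u_{k_u}$, the event $e^u_{k_u}$ must be the completion time of some job $a'$; one checks (via interlacing with $a_u$ or $d_{l+1}$) that $a'$ starts after $\complTime{\cP_u}{d_l}$, hence $\startTime{\cP_u}{d_{l+1}}<\startTime{\cP_u}{a'}$. Now swap $a'$ and $d_{l+1}$ in the sense of Lemma~\ref{lem:swapping}. If $d_{l+1}$ is non-spanning in block $k_u-1$, the swap strictly reduces total completion time and you are done; if $d_{l+1}$ spans block $k_u-1$, repeat the argument with the job $a''$ completing at $e^u_{k_u-1}$. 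This descent terminates in a strict improvement because $d_{l+1}$ is absent from block $j_u$ (so eventually a non-spanning block is reached). The point is that swapping exchanges $d_{l+1}$ with a job that already completes at the boundary $e^u_{k_u}$, so no completion time increases --- this is exactly the control your cyclic shift lacks.
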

\begin{proof}
Suppose that $\xi^{u}_{k_{u}}(a_{u})=e^{u}_{k_{u}+1}-e^{u}_{k_{u}}$ and $\cP_u$ is $d_{l+1}$-preempted.
By \ref{it:I4}, it holds $k_u=\tau_u(d_{l+1})$ and by Lemma~\ref{lem:how_jobs_finish}, $\xi^{u}_{k_{u}}(d_{l+1})=e^{u}_{k_{u}+1}-e^{u}_{k_{u}}$.
Then, there is a job $a'$ that completes at $e_{k_{u}}$ because $e_{k_u}$ is an event in $\cP_u$.
The job $a'$ starts after $\complTime{\cP_{u}}{d_l}$ for otherwise $a'$ interlaces with either $a_u$ or $d_{l+1}$ in an optimal $\cP_{u}$ --- contradiction with with Lemma~\ref{lem:interlace}.
Perform swapping of $a'$ and $d_{l+1}$.
By Lemma \ref{lem:swapping}, this leads to a feasible schedule $\cP'$.
If $\xi_{k_u-1}(d_{l+1})<e_{k_u}-e_{k_u-1}$, then the total completion time of the new schedule is smaller than than that of $\cP_{u}$ --- a  contradiction again.
If  $\xi_{k_u-1}(d_{l+1})=e_{k_u}-e_{k_u-1}$, then there is a job $a''$ that completes at $e_{k_{u}-1}$ in $\cP_u$ because $e_{k_u-1}$ is an event in $\cP_u$.
Again, the job $a''$ starts after $\complTime{\cP_{u}}{d_l}$ for otherwise $a''$ interlaces with either $a_u$ or $d_{l+1}$ in an optimal $\cP_{u}$ --- contradiction with with Lemma~\ref{lem:interlace}.
Clearly, $\complTime{\cP_u}{a''}=\complTime{\cP'}{a''}$ and $\startTime{\cP_u}{a''}=\startTime{\cP'}{a''}$.
Take $\cP_u:=\cP'$ and $a':=a''$ and repeat the above swapping.

After a finite number of the above `swappings' we obtain a feasible schedule with lower total completion time than the initial one --- a  contradiction.  This completes the proof of Claim~\ref{claim:epsilon}.
\end{proof}

Now we return to the proof of Proposition~\ref{pro:infinite_chain}.
If $T=1$, then the lemma holds by Claim \ref{claim:cP_1-ok}.
To complete the proof we show that $T>1$ leads to a contradiction.
First, for $T>1$, Claims~\ref{claim:I1-I2-I3},~\ref{claim:I4}, and  \ref{claim:cP_1-ok} imply that the schedule $\cP_1$ is $d_{l+1}$-preempted. Thus, Claims~\ref{claim:I1-I2-I3}, \ref{claim:I4} and~\ref{claim:I5} and an induction on $u\in\{1,\ldots,T-1\}$ give that each of the schedules $\cP_1,\ldots,\cP_{T-1}$ is $d_{l+1}$-preempted.
Since the iterative process exits in iteration $u=T$, producing either $\cP'_T$ or $\cP''_T$, we get a contradiction since either these two exit schedules have smaller total completion times than $\cP_{T-1}$ or they have the same total completion times but their abnormality points are greater than $i$.
Hence, $\cP_{T-1}$ is not maximal, contradicting the fact that it is $d_{l+1}$-preempted.

It remains to show that $T$ exists, i.e., that the number of iterations is finite.
To that end let $c_u$ and $C_u$ be the numbers of jobs executed in $\cP_u$ which are not covered by $d_{l+1}$, and the number of blocks, respectively, in $[\complTime{\cP_u}{d_{l}},\complTime{\cP_u}{d_{l+1}}]$.

By Claim~\ref{claim:epsilon}, $\varepsilon>0$ in Step~1 of the construction of $\cP_u'$ for each $u\in\{2,\ldots,T\}$.
If $\varepsilon=\xi^{u-1}_{\tau_{u-1}(d_l)+1}(a_{u-1})$ in Step~1, then the block $\tau_{u-1}(d_l)+1$ disappears.
Otherwise, $\varepsilon=e^{u-1}_{k_{u-1}+1}-e^{u-1}_{k_{u-1}}-\xi^{u-1}_{k_{u-1}}(a_{u-1})$ in Step~1.
If $\xi^{u-1}_{k_{u-1}}(a_{u-1})>0$, then $a_u=a_{u-1}$ at the end of Step~1 and $\xi^{u}_{k_{u}}(a_{u})=e^{u}_{k_{u}+1}-e^{u}_{k_{u}}$.
Therefore, both Steps~2 and~3 are vacuous and thus $\cP'_u=\cP_u$ is $d_{l+1}$-preempted.
However, $\xi^{u}_{k_{u}}(a_{u})=e^{u}_{k_{u}+1}-e^{u}_{k_{u}}$ contradicts Claim \ref{claim:epsilon}.
Thus, we have $\xi^{u-1}_{k_{u-1}}(a_{u-1})=0$ in Step~1, and therefore the block $k_{u-1}$ disappears.

Next, Step 2 does not change the number of blocks.
Finally, Step~3, if not vacuous, may increase the number of blocks by at most one.
Thus, we have $C_u=C_{u-1}-1$ and $c_u=c_{u-1}$, if Step 3 is vacuous, and $C_u\leq C_{u-1}$ and $c_u=c_{u-1}-1$, if Step 3 is not vacuous.
Consequently $C_u+c_u<C_{u-1}+c_{u-1}$
and $T\leq C_1+c_1\leq 3n$.
Thus, the iterative process indeed stops with some schedule $\cP'_T$ or $\cP''_T$.
This, by Claim~\ref{claim:cP_1-ok}, completes the proof of the proposition.

\section{How many preemptions of a job is required?}
\label{sec:lower_bound}

In this section we show that, for any given number $n$ of jobs, it is sometimes necessary to preempt a job $p=\Omega(\log n)$ times.
Let $A_i$, $i\geq 0$, be a set of four jobs $a_1^i,a_2^i,a_3^i,a_4^i$ such that $r(a_j^i)=2i$ for each $j\in\{1,2,3\}$, $r(a_4^i)=2i+1$ and $a_j^i\prec a_4^i$ for each $j\in\{1,2,3\}$.
Then, define
\[
\jobs_p=\bigcup_{i=0}^{p}A_{i},
\]
where $a_4^{i}\prec a_4^{i+1}$ for each $i\in\{0,\ldots,p-1\}$. (See Figure~\ref{fig:precedence_constraints}.)

\begin{figure}[htb]
\begin{center}
\includegraphics[scale=1]{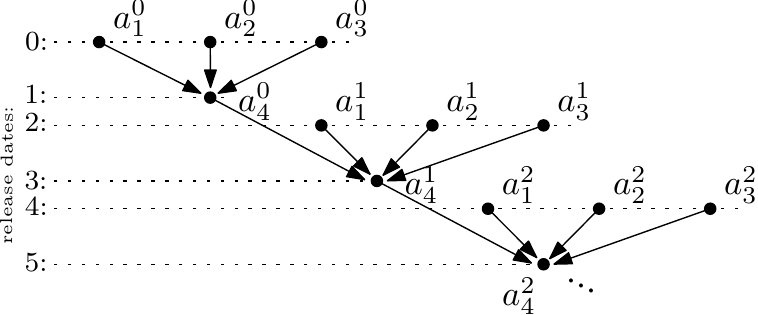}
\caption{The precedence constraints and release dates for $\jobs_p$, $p\geq 2$}
\label{fig:precedence_constraints}
\end{center}
\end{figure}

We prove that the job $a_4^p$ should complete exactly at $2p+3-1/2^{p+1}$ in any optimal schedule.
This is done by first proving that no valid schedule (optimal or not) can complete $a_4^p$ earlier
(cf. Claim~\ref{clm:lower-bound1}), and then by proving that staring $a_4^p$ later leads to a schedule
that cannot be optimal (cf.\ Claim~\ref{clm:lower-bound3}).

\begin{claim} \label{clm:lower-bound1}
Let $p\geq 0$ be any integer.
If $\cP$ is a preemptive schedule for $\jobs_p$, then the total length of the job $a_4^p$ executing in $[2p+2,+\infty)$ is at least $1-1/2^{p+1}$.
\end{claim}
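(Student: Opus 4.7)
My plan is to prove the claim by induction on $p$, combining, in the inductive step, the release-date delay of the three predecessors $a_1^p, a_2^p, a_3^p$ with the residual workload of $a_4^{p-1}$ guaranteed by the inductive hypothesis.

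For the base case $p = 0$, the three predecessors $a_1^0, a_2^0, a_3^0$ have unit length and release date $0$, so completing them on two processors requires at least $3/2$ time units; hence $a_4^0$ starts no earlier than $3/2$, executes at most $1/2$ in $[0, 2]$, and therefore at least $1 - 1/2 = 1 - 1/2^{0+1}$ in $[2, +\infty)$.

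For the inductive step, assume the claim for $p-1$ and let $\cP$ be any preemptive schedule for $\jobs_p$. Deleting from $\cP$ all pieces of the jobs in $A_p$ yields a valid preemptive schedule for $\jobs_{p-1}$ (no release date or precedence among $\jobs_{p-1}$ involves $A_p$), so by the inductive hypothesis $a_4^{p-1}$ has total length at least $1 - 1/2^p$ in $[2p, +\infty)$; since $a_4^{p-1}$ has unit length, at most $1/2^p$ of it executes in $[0, 2p)$. Let $s$ be the start time of the first piece of $a_4^p$ in $\cP$ (such an $s$ exists because $a_4^p$ is completely scheduled). The precedences $a_1^p, a_2^p, a_3^p \prec a_4^p$ and $a_4^{p-1} \prec a_4^p$ force all four of these jobs to complete by time $s$. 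In particular, since $a_1^p, a_2^p, a_3^p$ all have release date $2p$, their combined three units of work lie entirely in $[2p, s]$, and the outstanding $\geq 1 - 1/2^p$ units of $a_4^{p-1}$ also lie in $[2p, s]$. Counting the processor-time available on the two machines over $[2p, s]$ gives
\[
2(s - 2p) \;\geq\; 3 + \bigl(1 - 1/2^p\bigr) \;=\; 4 - 1/2^p,
\]
i.e., $s \geq 2p + 2 - 1/2^{p+1}$. Hence $a_4^p$ executes at most $(2p+2) - s \leq 1/2^{p+1}$ units in $[0, 2p+2)$, leaving at least $1 - 1/2^{p+1}$ units in $[2p+2, +\infty)$, as required.

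The only step requiring care is the conversion of the inductive bound ``at least $1 - 1/2^p$ of $a_4^{p-1}$ executes in $[2p, +\infty)$'' into ``at least $1 - 1/2^p$ of $a_4^{p-1}$ executes in the crucial window $[2p, s]$''; this is immediate because $a_4^{p-1}$ is a single unit-length job whose completion is forced to occur no later than $s$ by the precedence $a_4^{p-1} \prec a_4^p$, so whatever was not done before $2p$ must be done in $[2p, s]$. I anticipate no other obstacle; the argument uses only precedence and release-date feasibility, so it applies to every valid schedule, not just optimal ones.
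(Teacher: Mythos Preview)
Your proof is correct and follows essentially the same inductive argument as the paper: bound the start time $s$ of $a_4^p$ by counting the work of $a_1^p,a_2^p,a_3^p$ plus the residual $1-1/2^p$ of $a_4^{p-1}$ that must fit into $[2p,s]$ on two processors. Your explicit restriction of $\cP$ to $\jobs_{p-1}$ to justify applying the inductive hypothesis is in fact slightly more careful than the paper's version.
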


\begin{proof}
We prove the lemma by induction on $p$.
Let $p=0$. (Note that $\jobs_p=A_0$.)
Executing less than $1/2$ units of $a_4^0$ in $[2,+\infty)$ implies that $a_4^0$ completes at $5/2-\varepsilon$, for some $\varepsilon>0$
This, however, requires completing each job in $\A_0\setminus\{a_4^0\}$ at $3/2-\varepsilon$ or earlier, which is not possible.

Suppose that the lemma holds for integers smaller than $p$ and we prove it for $p$.
Let $\cP$ be a preemptive schedule for $\jobs_p$.
We consider all jobs that must execute in time interval $I=[2p,\startTime{\cP}{a_4^p}]$.
Each job in $A_{p}\setminus\{a_4^{p}\}$ executes in this interval, because $r(a)=2p$ and $a\prec a_4^{p}$ for each $a\in A_{p}\setminus\{a_4^{p}\}$.
By induction hypothesis and by the facts that $a\prec a_4^p$ for each $a\in\jobs_{p-1}$, we obtain that a part of $a_4^{p-1}$ that executes in $I$ is of length at least $1-1/2^{p}$.
Thus, the total length of all jobs that execute in $I$ in $\cP$ is at least $4-1/2^{p}$.
Therefore,
\[\startTime{\cP}{a_4^{p}}\geq 2p+|I|/2=2p+2-1/2^{p+1}.\]
Thus, at least $1-1/2^{p+1}$ units of $a_4^p$ execute in $[2p+2,+\infty)$ as required.
\end{proof}

We iteratively construct a schedule $\cP_p$.
Let $\cP_0$ be such that the jobs $a_1^0,a_2^0,a_3^0$ form a $3/2$ schedule in interval $[0,3/2]$ and $a_4^0$ executes in $[3/2,5/2]$.
For $p>0$, first take $\cP_{p-1}$ and then execute the jobs in $A_p$ as follows:
\begin{eqnarray*}
 a_1^p & \textup{in} & [2p,2p+1],   \\
 a_2^p & \textup{in} & [2p+1-1/2^{p},2p+1-1/2^{p+1}]\\
       &             & \quad\cup [2p+1,2p+2-(1/2^{p}-1/2^{p+1})]  \\
       &             & = [2p+1-1/2^{p},2p+1-1/2^{p+1}] \\
       &             & \quad\cup [2p+1,2p+2-1/2^{p+1}],  \\
 a_3^p & \textup{in} & [2p+1-1/2^{p+1},2p+2-1/2^{p+1}],   \\
 a_4^p & \textup{in} & [2p+2-1/2^{p+1},2p+3-1/2^{p+1}].  
\end{eqnarray*}
(See Figure~\ref{fig:many_preemptions}.)

\begin{figure*}[htb]
\begin{center}
\includegraphics[scale=0.7]{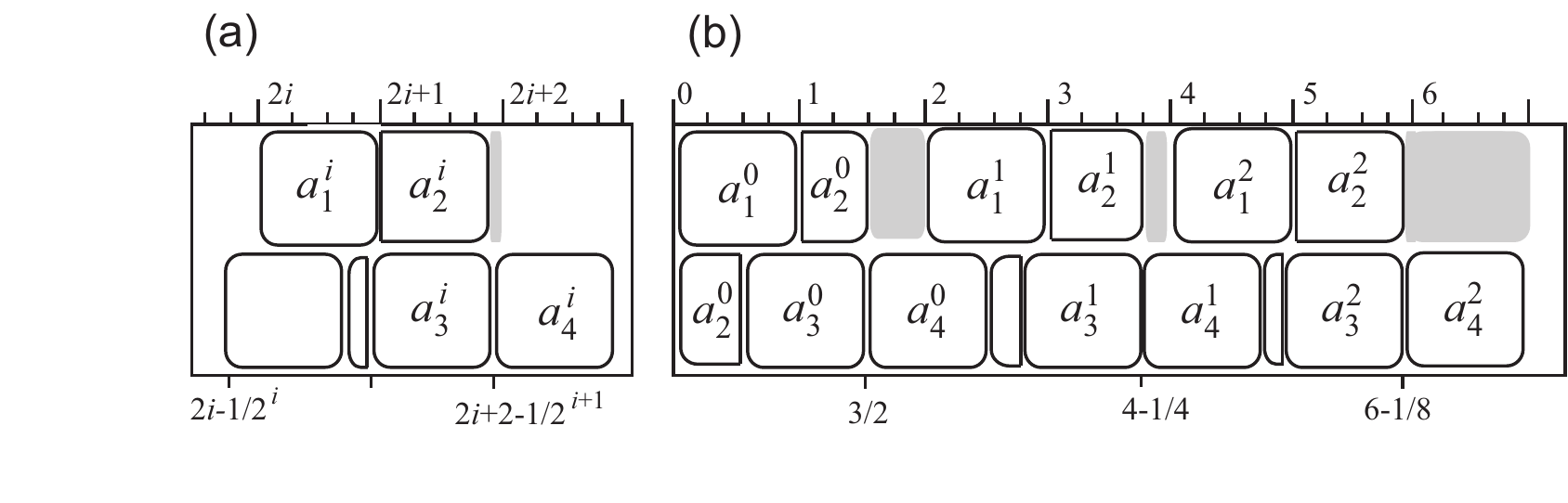}
\caption{(a) the execution of the jobs in $A_i$ in $\cP_p$;
         (b) the schedule $\cP_p$ for $\jobs_2$}
\label{fig:many_preemptions}
\end{center}
\end{figure*}

Note that, when $p>0$, $\complTime{\cP_{p}}{a_4^{p-1}}=\complTime{\cP_{p-1}}{a_4^{p-1}}=2p+1-1/2^{p}$ and hence $\cP_p$ is valid.
This gives the following.

\begin{claim} \label{clm:lower-bound2}
Let $p\geq 0$ be any integer.
There exists a schedule for $\jobs_p$ that completes $a_4^p$ at $2p+3-1/2^{p+1}$. \qed
\end{claim}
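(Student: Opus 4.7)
The plan is to prove Claim~\ref{clm:lower-bound2} by induction on $p$, using the schedule $\cP_p$ constructed immediately before the statement. For the base case $p=0$, the jobs $a_1^0, a_2^0, a_3^0$ are three independent unit jobs released at time 0, so by McNaughton's rule (the total length $3$ satisfies $3 \le 2 \cdot 3/2$ and each individual length satisfies $1 \le 3/2$) they fit preemptively in $[0, 3/2]$; then $a_4^0$ occupies $[3/2, 5/2]$ on one processor, respecting $r(a_4^0) = 1$ and $a_j^0 \prec a_4^0$. Hence $\complTime{\cP_0}{a_4^0} = 5/2 = 2 \cdot 0 + 3 - 1/2^{1}$.

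For the inductive step I would carry a strengthened hypothesis: $\cP_{p-1}$ is valid for $\jobs_{p-1}$, both processors are fully utilised on $[0, 2p - 1/2^{p}]$, and only $a_4^{p-1}$ executes on a single processor in $[2p-1/2^{p}, 2p+1-1/2^{p}]$. Attaching $A_p$ to $\cP_{p-1}$ as prescribed, the verification splits into four routine checks: release dates (each start time in $A_p$ is at least the corresponding $r(\cdot)$, by direct computation using $1/2^{p+1} \le 1/2^{p} \le 1$); the precedences $a_j^p \prec a_4^p$ for $j \in \{1,2,3\}$, which hold because $a_1^p, a_2^p, a_3^p$ all complete by $2p+2-1/2^{p+1}$, the start time of $a_4^p$; the precedence $a_4^{p-1} \prec a_4^p$, which follows from $2p+1-1/2^{p} \le 2p+2-1/2^{p+1}$; and the two-processor capacity constraint.

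The main (only substantive) obstacle is the capacity verification, and the strengthened inductive hypothesis is exactly what makes it clean. In the overlap interval $[2p, 2p+1-1/2^{p}]$ only $a_4^{p-1}$ is carried over from $\cP_{p-1}$, and we add $a_1^p$, so exactly two jobs run. On $[2p+1-1/2^{p}, 2p+1-1/2^{p+1}]$ the jobs $a_1^p, a_2^p$ run; on $[2p+1-1/2^{p+1}, 2p+1]$ the jobs $a_1^p, a_3^p$ run; on $[2p+1, 2p+2-1/2^{p+1}]$ the jobs $a_2^p, a_3^p$ run; and on $[2p+2-1/2^{p+1}, 2p+3-1/2^{p+1}]$ only $a_4^p$ runs. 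In each case capacity is met and each job in $A_p$ accumulates exactly one unit of processing. The strengthened hypothesis is re-established for $\cP_p$: both processors are busy until $2p+2-1/2^{p+1}$ and $a_4^p$ runs alone afterwards until $2p+3-1/2^{p+1}$. The completion time of $a_4^p$ is then $2p+3-1/2^{p+1}$ by construction, closing the induction.
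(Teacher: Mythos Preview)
Your approach matches the paper's: the claim is proved by exhibiting the explicit schedule $\cP_p$ built just before the statement and checking it is valid. The paper simply notes that $\complTime{\cP_p}{a_4^{p-1}}=2p+1-1/2^{p}$ and hence $\cP_p$ is valid; your proposal fleshes out the routine verifications (release dates, precedences, capacity) in a way consistent with that construction.

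There is, however, one small error in your strengthened inductive hypothesis. You assert that in $\cP_{p-1}$ both processors are fully utilised on $[0,\,2p-1/2^{p}]$, and you claim this is re-established for $\cP_p$ on $[0,\,2p+2-1/2^{p+1}]$. It is not: in the interval $[2p-1/2^{p},\,2p]$ only $a_4^{p-1}$ runs (no job of $A_p$ is released before $2p$), so one processor is idle there. This idle time is in fact a feature of the construction (compare Claim~\ref{clm:lower-bound3}, which records idle time of length $1/2^{j+1}$ in each window $[0,2(j+1)]$). Fortunately, the ``fully utilised'' clause is never used in your capacity check; the only part of the hypothesis you actually need, and which \emph{is} correctly propagated, is that nothing from $\jobs_{p-1}$ other than $a_4^{p-1}$ runs after time $2p-1/2^{p}$, and that $a_4^{p-1}$ runs alone on one processor in $[2p-1/2^{p},\,2p+1-1/2^{p}]$. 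Drop the full-utilisation clause and your induction goes through as written.
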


\begin{claim} \label{clm:lower-bound3}
Let $p\geq 0$ be any integer.
Each optimal schedule for $\jobs_p$ completes $a_4^p$ at $2p+3-1/2^{p+1}$ and satisfies the following: the total length of idle time in $[0,2(j+1)]$ is $1-2^{j+1}$ for each $j\in\{0,\ldots,p\}$, and there is no time interval contained in $[0,2(p+1)]$ in which both processors are idle.
\end{claim}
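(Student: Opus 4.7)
The plan is to proceed by induction on $p$, combining Claim~\ref{clm:lower-bound1} for lower bounds on the completion time of $a_4^p$ and on the idle time, Claim~\ref{clm:lower-bound2} for the achievability of the lower bound, and a schedule-shifting argument that uses optimality to force equality.

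The base case $p=0$ is a direct verification: since $a_1^0, a_2^0, a_3^0$ share release $0$ and all precede $a_4^0$, a McNaughton-type argument gives $\complTime{\cP}{a_4^0}\geq 5/2$, and a short case analysis on the possible completion patterns of $\{a_1^0,a_2^0,a_3^0,a_4^0\}$ shows that any schedule with $\complTime{\cP}{a_4^0}>5/2$ has strictly larger $\sum_{a\in\jobs_0}\complTime{\cP}{a}$ than the value $6$ attained in $\cP_0$; the idle-time equality $1/2$ in $[0,2]$ and the absence of any doubly-idle subinterval then follow by counting processor time.

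For the inductive step, assume the claim for every $p'<p$ and let $\cP$ be an optimal schedule for $\jobs_p$. For each $j\in\{0,\ldots,p\}$, since every job in $\jobs_j$ transitively precedes $a_4^j$, the restriction of $\cP$ to $\jobs_j$ is a valid preemptive schedule of $\jobs_j$, so Claim~\ref{clm:lower-bound1} applied to this restriction yields $\complTime{\cP}{a_4^j}\geq 2j+3-1/2^{j+1}$ and shows that at most $1/2^{j+1}$ units of $a_4^j$ are processed in $[0,2(j+1)]$. Since every job outside $\jobs_j$ has release date at least $2(j+1)$, the total work in $[0,2(j+1)]$ is at most $4j+3+1/2^{j+1}$, so the idle time in $[0,2(j+1)]$ is at least $1-1/2^{j+1}$. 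To force equality, I would transform $\cP$ by shifting the final non-preempted piece of $a_4^p$ as far left as possible: whenever $\complTime{\cP}{a_4^p}>2p+3-1/2^{p+1}$, one finds either processor-idle time on the complementary machine between $\startTime{\cP}{a_4^p}$ and $\complTime{\cP}{a_4^p}$ that can absorb the shift, or else a predecessor of $a_4^p$ that can be rescheduled strictly earlier by invoking the inductive hypothesis. In either case the total completion time strictly decreases, contradicting optimality; hence $\complTime{\cP}{a_4^p}=2p+3-1/2^{p+1}$ and the idle-time inequalities become equalities. The absence of doubly-idle subintervals of $[0,2(p+1)]$ then follows by an analogous exchange: such an interval $[t_1,t_2]$ would let us place a fragment of a job executing strictly after $t_2$ onto an idle processor in $[t_1,t_2]$ without violating release dates or precedences, strictly reducing $\sum_{a\in\jobs_p}\complTime{\cP}{a}$.

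\textbf{Main obstacle.} The chief technical difficulty is the shifting step above: whenever $\complTime{\cP}{a_4^p}$ exceeds its Claim~\ref{clm:lower-bound1} lower bound, one must exhibit a concrete exchange that strictly decreases the total completion time while respecting release dates and precedences. The two sub-cases (slack from processor-idle time versus slack from rearranging a predecessor) require distinct arguments, and the predecessor case appeals to the inductive hypothesis at some strictly smaller level $p'<p$ to rearrange a $\jobs_{p'}$-prefix into its canonical form.
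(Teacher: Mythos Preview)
Your inductive skeleton and the lower-bound half are fine: the work-counting argument giving idle time $\geq 1-1/2^{j+1}$ in $[0,2(j+1)]$ is correct, and Claim~\ref{clm:lower-bound1} does give $\complTime{\cP}{a_4^j}\geq 2j+3-1/2^{j+1}$ for every $j$. The gap is in the step that forces equality for $\complTime{\cP}{a_4^p}$.

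Your dichotomy---either absorb a left-shift of $a_4^p$ into idle time, or invoke the inductive hypothesis to rearrange a $\jobs_{p'}$-prefix into canonical form---does not by itself yield a strict decrease in $\sum_a \complTime{\cP}{a}$. Shifting $a_4^p$ earlier forces the four jobs that must precede it in $[2p,\startTime{\cP}{a_4^p}]$ (namely $a_1^p,a_2^p,a_3^p$ and a trailing piece of $a_4^{p-1}$) to rearrange, and you have not argued that their completion times do not rise by at least as much as $\complTime{\cP}{a_4^p}$ falls. And the inductive hypothesis characterises \emph{optimal} schedules for $\jobs_{p'}$; it says nothing directly about the restriction of your $\cP$ to $\jobs_{p'}$, so ``rearranging into canonical form'' needs a separate argument that the replacement does not worsen the objective on $\jobs_p\setminus\jobs_{p'}$.

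The paper closes this gap with one quantitative pivot you are missing: it first shows that in any optimal $\cP$ for $\jobs_p$ the amount $x$ of $a_4^{p-1}$ executed in $[2p,\infty)$ equals \emph{exactly} $1-1/2^{p}$. This is proved by a hybrid schedule---take the canonical $\cP_{p-1}$ on $[0,2p]$ and keep $\cP$ on $[2p,\infty)$, trimming any excess of $a_4^{p-1}$---which would strictly improve the objective if $x>1-1/2^{p}$. With $x$ pinned down, the window $[2p,\complTime{\cP}{a_4^p}]$ contains precisely $a_1^p,a_2^p,a_3^p$, the $1-1/2^p$ units of $a_4^{p-1}$, and $a_4^p$; a short case analysis on this five-piece subproblem then lower-bounds $\sum_{a\in (A_p\cup\{a_4^{p-1}\})\setminus\{a_4^p\}}\complTime{\cP}{a}$ by the value $\cP_p$ attains. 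Combined with $\complTime{\cP}{a_4^p}>\complTime{\cP_p}{a_4^p}$ and the optimality of $\cP_{p-1}$ for the prefix, this makes $\cP_p$ strictly better than $\cP$, the contradiction. Your plan becomes correct if you insert the $x=1-1/2^{p}$ step and replace the local shift by this direct comparison with $\cP_p$ on $[2p,\infty)$; the idle-time equalities for $j<p$ then follow because the established $x$ forces the restriction of $\cP$ to $\jobs_{p-1}$ to be optimal for $\jobs_{p-1}$, so induction applies.
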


\begin{proof}
We prove the lemma by contradiction, i.e., suppose that, for some $p\geq 0$, there exists an optimal schedule $\cP_p'$ such that $\complTime{\cP_p'}{a_4^p}\neq 2p+3-1/2^{p+1}$.
Let, without loss of generality, $p$ be the minimum integer for which this holds.
One can verify the lemma for $p=0$ and hence $p>0$.
By Claim~\ref{clm:lower-bound1},
\begin{equation} \label{eq:a_4-late}
 \complTime{\cP_p'}{a_4^p}>2p+3-1/2^{p+1}.
\end{equation}
By Claim~\ref{clm:lower-bound2} and by the minimality of $p$, there exists an optimal schedule $\cP_{p-1}$ for $\jobs_{p-1}$ that executes at least $1-1/2^{p}$ units of $a_4^{p-1}$ in $[2p,+\infty)$.

We argue that the total length of $a_4^{p-1}$, denoted by $x$, that executes in $\cP_p'$ in interval $[2p,+\infty)$ equals exactly $1-1/2^{p}$.
By Claim~\ref{clm:lower-bound1},
\begin{equation} \label{eq:x-big}
 x\geq 1-1/2^{p}.
\end{equation}
Define a schedule $\cP'$ that equals $\cP_{p-1}$ in the interval $[0,2p]$ and equals $\cP_p'$ in the interval $[2p,+\infty)$.
Note that $\cP'$ is not valid only if the total length of $a_4^p$ executing in $\cP'$ (that equals $1/2^{p}+x$) is greater than $1$.
However,
\begin{equation} \label{eq:bound-on-cp}
\begin{split}
\complTime{\cP'}{a}=\complTime{\cP_p'}{a}\textup{ for each }a\in A_p \quad\textup{and}\\
\sum_{a\in\jobs_{p-1}\setminus\{a_4^{p-1}\}}\complTime{\cP'}{a}\leq\sum_{a\in\jobs_{p-1}\setminus\{a_4^{p-1}\}}\complTime{\cP_p'}{a}.
\end{split}
\end{equation}
We then obtain a schedule $\cP$ by removing the total length of $x-1+1/2^{p}$ of $a_4^{p-1}$ from $\cP'$ in a way that minimizes the completion time of $a_4^{p-1}$ in $\cP$, which gives $\complTime{\cP}{a_4^{p-1}}\leq\complTime{\cP'}{a_4^{p-1}}-(x-1+1/2^{p})$.
The schedule $\cP$ is valid and $\complTime{\cP}{a}=\complTime{\cP'}{a}$ for each $a\in\jobs_p\setminus\{a_4^{p-1}\}$.
Hence, by \eqref{eq:bound-on-cp},
\[\sum_{a\in\jobs_p}\complTime{\cP}{a}\leq-(x-1+1/2^{p})+\sum_{a\in\jobs_p}\complTime{\cP}{a}.\]
Thus, by the optimality of $\cP_p'$, $x-1+1/2^{p}=0$, i.e., $x=1-1/2^{p}$ as required.

In the schedule $\cP_p'$, the jobs that are executed in time interval $[2p,\complTime{\cP_p'}{a_4^p}]$ are the ones in $A_p$ and $x$ units of $a_4^{p-1}$.
By a case analysis one can prove that
\[\sum_{a\in (A_p\cup\{a_4^{p-1}\})\setminus\{a_4^p\}}\complTime{\cP_p'}{a}\geq 4\cdot 2p + 4 + 2(1-1/2^{p}).\]
Note that, by construction,
\[\sum_{a\in (A_p\cup\{a_4^{p-1}\})\setminus\{a_4^p\}}\complTime{\cP_p}{a} = 4\cdot 2p + 4 + 2(1-1/2^{p}).\]
Moreover, $x=1-1/2^{p}$ and the minimality of $p$ imply that
\[\sum_{a\in\jobs_{p-1}\setminus\{a_4^{p-1}\}}\complTime{\cP_p}{a}\leq\sum_{a\in\jobs_{p-1}\setminus\{a_4^{p-1}\}}\complTime{\cP_p'}{a}.\]
This, \eqref{eq:a_4-late} and $\complTime{\cP_p}{a_4^p}=2p+3-1/2^{p+1}$ imply that $\cP_u'$ is not optimal.
This gives the desired contradiction.

\medskip
Note that is follows that in each optimal schedule $\cP$ for $\jobs_p$, $\complTime{\cP}{a_4^j}=2p+3-1/2^{j+1}$.
Since $a_4^j$ is a successor of all jobs in $\jobs_j$, we obtain that there is, in $\cP$, idle time on exactly one processor in interval $(2(j+1)-1/2^{j+1},2(j+1))$.
Also note that, for each $j\in\{0,\ldots,p\}$, no idle time is possible in the interval $I=(2j,2(j+1)-1/2^{j+1})$ because, for $j>0$, $a_1^,a_2^j,a_3^j$ and a part of $a_4^{j-1}$ of length $1-2^j$ must execute in $I$, and for $j=0$, $a_1^0,a_2^0$ and $a_3^0$ must execute in $I$.
This completes the proof of the claim.
\end{proof}

Note that for any given $p\geq 0$, the number of jobs in $\jobs_p$ equals $4(p+1)$.
Claim~\ref{clm:lower-bound3} proved above gives us that each optimal schedule for $\jobs_p$ has a job that completes at a time point that is a multiple of $1/2^{p+1}$ but is not a multiple of $1/2^{p}$.
This gives that exists a set of $n$ jobs $\jobs$ such that there exists no optimal solution to $P2|pmtn,in\textup{-}tree,r_j,p_j=1|\sum C_j$ for $\jobs$ in which each job start, completion and preemption occurs at a time point that is a multiple of $1/2^{n/4-1}$.
We note that this upper bound on the resolution of this problem is slightly weaker than the bound $2^{-(n-1)/3}$ proved in \cite{CNT13}.
Our main result of this section is the following lower bound on the number of preemption of one job in an optimal schedule.

\begin{theorem} \label{arbitrary}
Given any positive integer $p$, there exists an instance of the problem $P2|pmtn,in\textup{-}tree,r_j,p_j=1|\sum C_j$, such
that in any optimal schedule for the instance, this is a job that is preempted at least $p=\Omega(\log |\jobs|)$ times, where
$\jobs$ is the job set of the instance.
\end{theorem}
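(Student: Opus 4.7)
My plan is to use the family $\{\jobs_p\}_{p\geq 1}$ constructed in this section, with $|\jobs_p|=4(p+1)=\Theta(p)$, so that $p=\Omega(\log|\jobs_p|)$ is immediate. By Claim~\ref{clm:lower-bound3}, every optimal schedule $\cP$ for $\jobs_p$ is highly constrained: $a_4^p$ completes exactly at $2p+3-1/2^{p+1}$, the total idle time in $[0,2(j+1)]$ equals $1-1/2^{j+1}$ for each $j\in\{0,\ldots,p\}$, and both processors are never simultaneously idle within $[0,2(p+1)]$.

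First I would subtract consecutive cases of the cumulative idle-time formula to extract a single contiguous idle gap of length exactly $1/2^{j+1}$ inside each slab $[2j,2(j+1)]$, sitting on one processor while the other processor runs a job without interruption across that slab. Because the gap has length $1/2^{j+1}$ in lowest terms, at least one of its two endpoints is a time point whose denominator is exactly $2^{j+1}$. Since the release dates in $\jobs_p$ are all integers, and the only other mandatory events contributed by Claim~\ref{clm:lower-bound3} are the completion times $2j'+3-1/2^{j'+1}$ of the $a_4^{j'}$ jobs (each tied to its own unique level $j'$), an event of $\cP$ at a fresh denominator $2^{j+1}$ inside slab $j$ can only arise by preempting the job currently running opposite the gap.

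Applying this argument to each level $j=1,\ldots,p$ yields at least $p=\Omega(\log|\jobs|)$ forced preemption events in $\cP$. To upgrade to a single job being preempted $p$ times, I would track which job plays the non-preemptive spanning role on the busy processor during each gap: by Claim~\ref{clm:lower-bound1} applied inductively to the nested subinstances $\jobs_0\subset\jobs_1\subset\cdots\subset\jobs_p$ and by following the chain $a_4^0\prec a_4^1\prec\cdots\prec a_4^p$, one shows that a single job -- the one whose remaining processing must straddle the boundary between consecutive slabs -- fills the spanning role across all levels $j\in\{1,\ldots,p\}$, and hence absorbs the preemption event at every gap boundary.

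The main obstacle will be exactly this last step: ruling out that the $p$ forced preemptions are dispersed over many different jobs. I expect to resolve it by an interleaving argument showing that the job executing non-preemptively opposite the gap at level $j$ is necessarily the same candidate that must execute non-preemptively opposite the gap at level $j+1$ (since its residual processing crosses the boundary $2(j+1)$ and meets the next mandatory idle gap of length $1/2^{j+2}$), thereby forcing that single job to be preempted once per level and yielding a job preempted $\Omega(\log|\jobs|)$ times in any optimal schedule.
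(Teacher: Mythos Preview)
Your plan has a genuine gap at exactly the point you flag as the ``main obstacle,'' and your proposed resolution of it is incorrect. In the instance $\jobs_p$ alone, it is \emph{not} true that every optimal schedule contains a single job preempted $p$ times. Indeed, the explicit optimal schedule $\cP_p$ constructed in this section is a counterexample: in $\cP_p$ the preemption at level $i$ happens to the job $a_2^i$ (which is preempted exactly once, at time $2i+1-1/2^{i+1}$), and these are distinct jobs for different $i$. No job in $\cP_p$ is preempted more than once. So your interleaving argument --- that the spanning job opposite the gap at level $j$ must be the same as the one at level $j+1$ --- simply fails; the jobs $a_4^j$ that straddle slab boundaries are themselves distinct for each $j$ and are not preempted in $\cP_p$. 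The fact that $|\jobs_p|=\Theta(p)$ would yield $p=\Theta(|\jobs|)$ rather than $p=\Omega(\log|\jobs|)$ should already have been a warning sign: if a linear bound were available, the paper would not settle for a logarithmic one.

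The paper's proof works differently. It augments $\jobs_p$ with one extra job $a$ (with $a\prec b_1$) and a long chain $b_1\prec\cdots\prec b_l$ with $a_4^p\prec b_1$ and $l=\Theta(2^{2|\jobs_p|})$. The chain is so long that any delay of $a_4^p$ beyond $2p+3-1/2^{p+1}$ --- and by normality any such delay is at least $1/2^{2|\jobs_p|+3}$ --- gets amplified by a factor of $l$ in the objective, overwhelming any savings elsewhere. This forces every optimal schedule to realize the exact idle-time structure of Claim~\ref{clm:lower-bound3}, and then the only place to put the unit of work for job $a$ is in the $p+1$ disjoint idle gaps of lengths $1/2,1/4,\ldots,1/2^{p+1}$ (plus a final piece after $2(p+1)$), forcing $a$ to be preempted $p$ times. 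The price is that $|\jobs|=2^{O(p)}$, which is why the bound is $p=\Omega(\log|\jobs|)$ and not better.
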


\begin{proof}
Let $\cP_p$ be an optimal schedule for $\jobs$. Thus $\cP_p$ satisfies the conditions in Claim~\ref{clm:lower-bound3}.
By Theorem~\ref{thm:normal_exist}, we may assume that $\cP_p$ is normal.
Define $l=\complTime{\cP_p}{a_4^p}\cdot 2^{c}$, where $c=2|\jobs_p|+3$.

Take $\jobs=\{a\}\cup\jobs_p\cup\{b_1,\ldots,b_l\}$.
The precedence constraints between the jobs in $\jobs_p$ are as in Figure \ref{fig:precedence_constraints}. We extend the precedence relation to $\jobs$ by additionally enforcing:
\[a_4^p\prec b_1 \prec b_2 \prec \cdots \prec b_l \quad\textup{and}\quad a\prec b_1.\]

We first construct a schedule $\cP'$ as follows.
Take $\cP_p$ and extend it by executing $a$ so that $\complTime{\cP'}{a}=2(p+1)+1/2^{p+1}$ and executing $b_1,\ldots,b_l$ so that $\complTime{\cP'}{b_i}=\complTime{\cP'}{b_{i-1}}+1$ for each $i\in\{1,\ldots,l\}$, where $b_0=a_4^p$.
By Claim~\ref{clm:lower-bound3}, such a schedule $\cP'$ exists and $a$ is preempted $p$ times in $\cP'$.

\smallskip
Let $\cP$ be an optimal normal schedule for $\jobs$.
By Theorem~\ref{thm:normal_exist}, such a schedule exists.
Suppose for a contradiction that $a$ is preempted at most $p-1$ times in $\cP$.
By Claim~\ref{clm:lower-bound3}, $\complTime{\cP}{a}<\complTime{\cP'}{a}$ and $\complTime{\cP}{a_4^p}>\complTime{\cP'}{a_4^p}$.
The number of events in $\cP$ (respectively in $\cP'$) in interval $[0,\complTime{\cP}{a_4^p}]$ (respectively, $[0,\complTime{\cP'}{a_4^p}]$) is at most $2|\jobs_p|+3$ because each event either equals $0$ or is the start or completion time of a job.
Since both schedules are normal,
\[
 \complTime{\cP}{a_4^p}-\complTime{\cP'}{a_4^p}\geq 1/2^{2|\jobs_p|+3}.
\]
Thus, by definition of $l$,
\begin{align} \label{eq:comple-times:difference}
\begin{split}
\sum_{i=1}^l\complTime{\cP}{b_i} & \geq  \sum_{i=1}^l\complTime{\cP'}{b_i} + l/2^{2|\jobs_p|+3} \\
                                 &  =    \sum_{i=1}^l\complTime{\cP'}{b_i} + \complTime{\cP_p}{a_4^p}.
\end{split}
\end{align}
By Claim~\ref{clm:lower-bound3}, $\cP$ restricted to the jobs in $\jobs_p$ is not optimal for $\jobs_p$.
This in particular implies
\[\sum_{x\in\jobs_p}\complTime{\cP}{x} > \sum_{x\in\jobs_p}\complTime{\cP'}{x}.\]
This, together with \eqref{eq:comple-times:difference}, implies
\begin{eqnarray*}
\sum_{x\in\jobs}\complTime{\cP}{x} & = & \complTime{\cP}{a} + \sum_{x\in\jobs_p}\complTime{\cP}{x} + \sum_{i=1}^l\complTime{\cP}{b_i} \\
     & > & \sum_{x\in\jobs_p}\complTime{\cP'}{x} + \sum_{i=1}^l\complTime{\cP'}{b_i} + \complTime{\cP_p}{a_4^p}
\end{eqnarray*}
Since, by construction of $\cP'$,
\[\complTime{\cP_p}{a_4^p}=\complTime{\cP'}{a_4^p}\geq\complTime{\cP'}{a},\]
we obtain that the total completion time of $\cP'$ is strictly smaller than that of $\cP$, which gives a required contradiction.

Finally, note that $\complTime{\cP_p}{a_4^p}\leq|\jobs_p|\leq c$ and hence $l\leq  2^{2c}$.
Thus, $|\jobs|=|\jobs_p|+1+l=2^{O(p)}$ because $c=O(p)$ and $|\jobs_p|=O(p)$.
This implies that $p=\Omega(\log|\jobs|)$ as required.
\end{proof}

\section{Summary}
In this paper we have provided some structural characterization of the preemptions in optimal schedules for the problem
$P2|pmtn,in\textup{-}tree,r_j,p_j|\sum C_j$. The advantages of our characterization are as follows:
\begin{itemize}
 \item It narrows down a search space of optimal solutions from an infinite one to a finite one.
 \item The understanding of the possible structure of preemptions is a step towards determining the complexity of the problem
       $P2|pmtn,in\textup{-}tree,r_j,p_j|\sum C_j$. On the one hand, the normality of an optimal schedule may lead us to a
       polynomial-time algorithm. On the other hand, the fact that a single job may need many (of the order of $\log n$) preemptions
       as stated in Theorem~\ref{arbitrary} could be useful in proving NP-completeness, although the complexity of the problem $P2|pmtn,in\textup{-}tree,r_j,p_j|\sum C_j$ is left as an interesting and challenging open problem.
 \item It significantly improves the lower bound on the resolution for the problem $P2|pmtn,in\textup{-}tree,r_j,p_j|
       \sum C_j$.
\end{itemize}

Note that we rely on the in-trees precedence constraints between the jobs in our proof.
This assumption is crucial when proving in Section~\ref{subsec:A-configurations} that a maximal $\A$-free schedule exists.
The generalization of our result to arbitrary precedence constraints or providing an example that an analogous statement as the one in Corollary~\ref{cor:main} is false for more general precedence constraints is left as an open problem.
 
\begin{center}
\textbf{Acknowledgements}
\end{center}

This research has been supported by the Natural Sciences and Engineering Research
Council of Canada (NSERC) Grant OPG0105675. Wieslaw Kubiak was also supported by the Polish National Science Center
research grant.
Dariusz Dereniowski was partially supported by Polish National Science Center under contract DEC-2011/02/A/ST6/00201 and a scholarship for outstanding young researchers founded by the Polish Ministry of Science and Higher Education.

\bibliographystyle{plain}
\bibliography{scheduling}

\begin{thebibliography}{10}

\bibitem{BaptisteBKT04}
P.~Baptiste, P.~Brucker, S.~Knust, and V.G. Timkovsky.
\newblock Ten notes on equal-processing-time scheduling.
\newblock {\em 4OR}, 2(2):111--127, 2004.

\bibitem{BaptisteCKQSS11}
P.~Baptiste, J.~Carlier, A.~Kononov, M.~Queyranne, S.~Sevastyanov, and
  M.~Sviridenko.
\newblock Properties of optimal schedules in preemptive shop scheduling.
\newblock {\em Discrete Applied Mathematics}, 159(5):272--280, 2011.

\bibitem{BaptisteT01}
P.~Baptiste and V.G. Timkovsky.
\newblock On preemption redundancy in scheduling unit processing time jobs on
  two parallel machines.
\newblock {\em Oper. Res. Lett.}, 28(5):205--212, 2001.

\bibitem{BaptisteTimkovsky_shortest_paths}
P.~Baptiste and V.G. Timkovsky.
\newblock Shortest path to nonpreemptive schedules of unit-time jobs on two
  identical parallel machines with minimum total completion time.
\newblock {\em Mathematical Methods of Operations Research}, 60(1):145--153,
  2004.

\bibitem{Alix}
A.~Carlier, C.~Hanen, and A.~Munier-Kordon.
\newblock Equivalence of two classical list scheduling algorithms for dependent
  typed tasks with release dates, due dates and precedence delays.
\newblock {\em New Challenges in Scheduling Theory, March 31--April 4, Aussois,
  France}, 2014.

\bibitem{CoffmanDereniowskiKubiak}
E.G. {Coffman Jr.}, D.~Dereniowski, and W.~Kubiak.
\newblock An efficient algorithm for an ideal scheduling problem.
\newblock {\em Acta Inf.}, 6:1--14, 2012.

\bibitem{CG2}
E.G. {Coffman Jr.} and R.L. Graham.
\newblock Optimal scheduling for two-processor systems.
\newblock {\em Acta Inf.}, 1:200--213, 1972.

\bibitem{CNT13}
E.G. {Coffman Jr.}, C.T. Ng, and V.G. Timkovsky.
\newblock How small are shifts required in optimal preemptive schedules?
\newblock {\em Journal of Scheduling}, pages 1--9, 2013.

\bibitem{ideal_preeemptive}
E.G. {Coffman Jr.}, J.~Sethuraman, and V.G. Timkovsky.
\newblock Ideal preemptive schedules on two processors.
\newblock {\em Acta Inf.}, 39(8):597--612, 2003.

\bibitem{FKN}
M.~Fujii, T.~Kasami, and K.~Ninomiya.
\newblock Optimal sequencing of two equivalent processors.
\newblock {\em SIAM J Applied Mathematics}, 17:784--789; Erratum 20 (1971) 141,
  1971.

\bibitem{Gabow}
H.N. Gabow.
\newblock An almost-linear algorithm for two-processor scheduling.
\newblock {\em J. ACM}, 29:766--780, 1982.

\bibitem{GareyJohnson-deadlines}
M.R. Garey and D.S. Johnson.
\newblock Scheduling tasks with nonuniform deadlines on two processors.
\newblock {\em J. ACM}, 23(3):461--467, 1976.

\bibitem{GareyJohnson-2-proc}
M.R. Garey and D.S. Johnson.
\newblock Two-processor scheduling with start-times and deadlines.
\newblock {\em SIAM J. Comput.}, 6(3):416--426, 1977.

\bibitem{HeL90}
Lee~A. Herrbach and Joseph Y.-T. Leung.
\newblock Preemptive scheduling of equal length jobs on two machines to
  minimize mean flow time.
\newblock {\em Operations Research}, 38(3):487--494, 1990.

\bibitem{LeungPalemPnueli}
A.~Leung, K.V. Palem, and A.~Pnueli.
\newblock Scheduling time-constrained instructions on pipelined processors.
\newblock {\em ACM Transactions on Programming Languages and Systems},
  23(1):73--103, January 2001.

\bibitem{SS87}
N.~W. Sauer and M.~G. Stone.
\newblock Rational preemptive scheduling.
\newblock {\em Order}, 4:195--206, 1987.

\end{thebibliography}

\end{document}